\pgfplotsset{compat=newest}
\tikzset{
	default/.style={circle,draw=blue!50,fill=blue!20,thick,
		inner sep=0pt,minimum size=6mm},
	defaultOrange/.style={circle,draw=orange!50,fill=orange!20,thick,
		inner sep=0pt,minimum size=6mm},		
	defaultGreen/.style={circle,draw=ForestGreen!50,fill=ForestGreen!20,thick,
		inner sep=0pt,minimum size=6mm},
	defaultGray/.style={circle,draw=black,fill=Gray!10,thick,
		inner sep=0pt,minimum size=6mm},			
	defaultCyan/.style={circle,draw=cyan!50,fill=cyan!20,thick,
		inner sep=0pt,minimum size=6mm},				
	defaultBlank/.style={circle,draw=white,fill=white,thick,
		inner sep=0pt,minimum size=6mm},	
	paramSetNodes/.style={circle,draw=blue!50,fill=blue!20,thick,
		inner sep=0pt,minimum size=11mm},	
	dataNode/.style={circle,draw=red!50,fill=red!20,thick,
		inner sep=0pt,minimum size=6mm},		
	dataSetNodes/.style={circle,draw=red!50,fill=red!20,thick,
		inner sep=0pt,minimum size=11mm},	
	default2/.style={circle,draw=black,fill=blue!20,thick,
		inner sep=0pt,minimum size=6mm}, 
	edgeStyle/.style={ semithick},
	entriesMatrixZero/.style={rectangle,draw=white,fill=white, 
		inner sep=0pt,minimum size=3mm}, 
	entriesMatrixZeroDashed/.style={rectangle,draw=black,fill=white, dashed,
		inner sep=0pt,minimum size=3mm},
	entriesMatrix/.style={rectangle,draw=black,fill=blue!60!white, 
		inner sep=0pt,minimum size=3mm},
	entriesMatrixRed/.style={rectangle,draw=black,fill=red!60!white, 
		inner sep=0pt,minimum size=3mm},	
	entriesMatrixOrange/.style={rectangle,draw=black,fill=orange!60!white, 
		inner sep=0pt,minimum size=3mm},				
        shadedNode/.style={circle,very thick,draw=orange!50,fill=orange!20,inner sep=0pt,minimum size=6mm},	background/.style={ draw, fill=yellow!30, align=right}												
}
\newcommand{\colorA}{red!40}  
\newcommand{\colorB}{yellow!50} 
\newcommand{\colorS}{orange!40}  
\newcommand{\valueInnerSep}{1em}
\numberwithin{equation}{section}
\theoremstyle{plain}
\newtheorem{thm}{Theorem}[section]
\newtheorem{lem}{Lemma}[section]
\newtheorem{cor}{Corollary}[section]
\theoremstyle{definition}
\newtheorem{definition}{Definition}[section]
\newcommand{\Ex}{\mathbb{E}}                                                   %
\newcommand{\Var}[1]{\underset{#1}{\mathbb{V}\mathrm{ar}}}   %
\newcommand{\Gauss}{\mathcal{N}}                                      %
\newcommand{\push}{_\sharp}                                      %
\newcommand{\orth}{ \perp\!\!\!\perp }  %
\newcommand{\borelm}{\mathscr{M}}
\newcommand{\borelmp}{\borelm_+}
\newcommand{\lebm}{\boldsymbol{\lambda} } %
\newcommand{\genm}{\boldsymbol{\nu} }	  %
\newcommand{\agenm}{\hat{\boldsymbol{\nu}}}	  %
\newcommand{\agengr}{\hat{\Gcb}}	  %
\newcommand{\re}{\mathbb{R}}
\newcommand{\ra}{\rightarrow}
\def\argmin{\mathop{\mathrm{argmin}}}
\newcommand{\Bc}{\mathcal{B}}
\newcommand{\Cc}{\mathcal{C}}
\newcommand{\Vc}{\mathcal{V}}
\newcommand{\Fc}{\mathcal{F}}
\newcommand{\Ec}{\mathcal{E}}
\newcommand{\Ac}{\mathcal{A}}
\newcommand{\Sc}{\mathcal{S}}
\newcommand{\Xb}{\boldsymbol{X}}
\newcommand{\xb}{\boldsymbol{x}}
\newcommand{\Yb}{\boldsymbol{Y}}
\newcommand{\yb}{\boldsymbol{y}}
\newcommand{\Pb}{\boldsymbol{P}}
\newcommand{\Zb}{\boldsymbol{Z}}
\newcommand{\zb}{\boldsymbol{z}}
\newcommand{\eb}{\boldsymbol{e}}
\newcommand{\Fb}{\boldsymbol{F}}
\newcommand{\Hb}{\boldsymbol{H}}
\newcommand{\hb}{\boldsymbol{h}}
\newcommand{\Qb}{\boldsymbol{Q}}
\newcommand{\Rb}{\boldsymbol{R}}
\newcommand{\Gammab}{\boldsymbol{\Gamma}}
\newcommand{\Ab}{\boldsymbol{A}}
\newcommand{\ab}{\boldsymbol{a}}
\newcommand{\Bb}{\boldsymbol{B}}
\newcommand{\Cb}{\boldsymbol{C}}
\newcommand{\cb}{\boldsymbol{c}}
\newcommand{\Jb}{\boldsymbol{J}}
\newcommand{\Sigmab}{\boldsymbol{\Sigma}}
\newcommand{\mub}{\boldsymbol{\mu}}
\newcommand{\phib}{\boldsymbol{\phi}}
\newcommand{\thetab}{\boldsymbol{\theta}}                              %
\newcommand{\Gcb}{\boldsymbol{\mathcal{G}}}
\newcommand{\Ccb}{\boldsymbol{\mathcal{C}}}
\newcommand{\Kcb}{\boldsymbol{\mathcal{K}}}
\newcommand{\Pbb}{\mathbb{P}}
\newcommand{\Nbb}{\mathbb{N}}
\newcommand{\cfrak}{\mathfrak{c}}
\newcommand{\sparse}{ \mathfrak{I} }
\newcommand{\neigh}{ {\rm Nb} }
\newcommand{\fillin}{\textswab{F} } 
\newcommand{\permset}{\mathfrak{P} } 
\newcommand{\vhyp}{\boldsymbol{\Theta}}
\newcommand{\vhyps}{\theta}
\newcommand{\dhyp}{p}
\newcommand{\ddata}{d} %
\newcommand{\submap}{\mathfrak{M}}
\newcommand{\lmap}{L} %
\newcommand{\rmap}{R} %
\newcommand{\decset}{\mathfrak{D}} %
\newcommand{\forop}{ \Fb }
\newcommand{\noised}{ \boldsymbol{\varepsilon} }
\newcommand{\noisedcov}{ \Qb }
\newcommand{\noiseo}{ \boldsymbol{\xi} }
\newcommand{\noiseocov}{ \Rb }
\newcommand{\obsop}{ \Hb }
\newcommand{\Aset}{ \Ac }
\newcommand{\Bset}{ \Bc }
\newcommand{\Sset}{ \Sc }
\newcommand{\Aind}{ \Aset^\perp }
\newcommand{\emap}{T}                          %
\newcommand{\Dkl}{\mathcal{D}_{\rm KL}}         %
\newcommand{\spaceMap}{\mathcal{T}}             %
\newcommand{\spaceMapT}{\mathcal{T}_\triangle}  %
\newcommand{\pull}{^\sharp}  
\newcommand{\hrevone}{\textcolor{black} }
\begin{document}

\title{Inference via low-dimensional couplings}

\author{\name Alessio Spantini \email spantini@mit.edu \\
        \name Daniele Bigoni \email dabi@mit.edu \\
        \name Youssef Marzouk \email ymarz@mit.edu \\
       \addr %
       Massachusetts Institute of Technology\\
       Cambridge, MA 02139 USA}

\editor{}

\maketitle

\begin{abstract}%
We investigate the low-dimensional structure of deterministic
transformations between random variables, i.e., transport maps between
probability measures. 
In the context of statistics and machine learning, these
transformations can be used to couple a tractable ``reference''
measure (e.g., a standard Gaussian) with a target measure of
interest.
Direct simulation from the desired measure can then be achieved by %
pushing forward reference samples through the map. %
Yet characterizing such a map---e.g., representing and evaluating it---grows challenging in high dimensions.
The central contribution of this paper is to establish a link between
the Markov properties of the target measure %
and the existence of low-dimensional couplings, induced by transport maps
that are \textit{sparse} and/or \textit{decomposable}.  Our analysis
not only facilitates the construction of transformations in
high-dimensional settings, but also suggests new inference
methodologies for continuous non-Gaussian graphical models.  For
instance, in the context of nonlinear state-space models, we describe
new variational algorithms for filtering, smoothing, and sequential
parameter inference.
These algorithms can be understood as the natural
generalization---to the non-Gaussian case---of the square-root
Rauch--Tung--Striebel Gaussian smoother.

\end{abstract}

\begin{keywords}
  transport map, 
  variational inference,
  graphical model,
  sparsity,
  joint parameter and state estimation
\end{keywords}

\tableofcontents %
\section{Introduction}
\label{sec:intro}
This paper studies the low-dimensional structure of transformations
between random variables. Such transformations, which can be
understood as transport maps between probability measures \cite{villani2008optimal}, are ubiquitous in statistics and machine learning: %
they can be used for posterior sampling \cite{el2012bayesian}, possibly via deep neural networks \cite{rezende2015variational}; for accelerating Markov chain Monte Carlo or importance sampling algorithms \cite{parno2015transport,han2017stein}; or as the building blocks of implicit generative models \cite{kingma2013auto,goodfellow2014generative,mohamed2016learning} 
and flexible methods for density estimation \cite{tabak2013family,dinh2016density}.

In the context of variational inference \cite{blei2016variational}, a transport map can be used to
define a deterministic coupling between a tractable reference measure $\genm_\eta$ that we can easily simulate (e.g., a standard Gaussian) and an arbitrary target measure $\genm_\pi$ that we wish to
characterize (e.g., a posterior distribution).
Given i.i.d.\ samples $( \Xb_i )$ 
from the reference measure, 
we can evaluate the transport map to obtain i.i.d.\ samples 
$( T(\Xb_i) )$ from the target. 
In other words,
the map
allows any expectation
$\int g \, \rm{d}\genm_\pi$ over the target measure to be rewritten as
an integral over the reference measure,
\begin{equation*}
 \int g(\xb)\,{\rm d}\genm_\pi(\xb) = \int g(T(\xb))\,{\rm d}\genm_\eta(\xb) \,,
\end{equation*}
thus
enabling the use of 
standard integration techniques for the tractable $\genm_\eta$, including Monte Carlo
sampling and deterministic quadrature 
\cite{meng2002warp,wang2016warp,el2012bayesian,dick2016higher}. 

We focus on absolutely continuous measures 
$(\genm_\eta, \genm_\pi)$ on $\re^n$,
for which the existence of %
a transport map $T:\re^n \ra \re^n$ is guaranteed 
\cite{santambrogio2015optimal}.
Such a map, however, is seldom unique.
 Identifying a particular 
 map
 requires imposing additional structure on the problem. Optimal transport maps, for instance, 
 define %
 couplings that 
 minimize a particular integrated \textit{transport cost} expressing the effort required to rearrange samples \cite{villani2008optimal}.
 The analysis of such maps underpins a vast field that links geometry
 and partial differential equations, with applications in fluid
 dynamics, economics, 
 statistics
 \cite{douglas1999applications,kantorovich1965best},
 and beyond. In recent years, several other couplings have been proposed for
 use in statistical problems, e.g., parametric approximations
 \cite{el2012bayesian} of the Knothe--Rosenblatt rearrangement
 \cite{rosenblatt1952remarks,knothe1957contributions}, couplings
 induced by the flows of ODEs %
 \cite{heng2015gibbs,anderes2012general,daum2008particle}, and
 couplings induced by the composition of many simple maps, including deep
 neural networks %
 \cite{liu2016stein,tabak2013family,rezende2015variational}.
Yet the construction, representation, and evaluation of all these 
maps
grows challenging in high dimensions. In the setting considered here, a 
transport map
is a function 
from $\mathbb{R}^n$ onto itself; without specifying further structure, representing
such a map or even realizing its action 
is often
intractable as $n$ increases. 

The central contribution of this paper is to establish a link between
the conditional independence structure of the reference-target pair---the
so-called Markov properties \cite{lauritzen1996graphical} of $\genm_\eta$ and
$\genm_\pi$---and the existence of low-dimensional couplings.
These couplings are induced by transport maps that are \textit{sparse}
and/or \textit{decomposable}.
A sparse map %
consists of 
scalar-valued component functions that each depend only on a few input variables, whereas
a decomposable map
factorizes as the
{\it exact} composition of finitely many functions of low effective dimension
(i.e., $T = T_1 \circ \cdots \circ T_\ell$, where each $T_i$ differs from the
identity map only along a subset of its components).
These properties, and their combinations, dramatically reduce the complexity of 
representing a transport map and can be deduced \textit{before} the
map 
is explicitly computed.

The utility of these results is twofold.
First, they
make the construction of couplings---and hence the characterization of complex probability distributions---tractable for a large class of inference problems. 
In particular, these results can be exploited in
state-of-the-art approaches for the numerical computation of
transport maps, including normalizing flows \cite{rezende2015variational} or Stein variational algorithms
\cite{anderes2012general,liu2016stein,detommaso2018stein}.
Second, %
these results suggest new algorithmic approaches %
for important classes of statistical models. %
For instance, our analysis of sparse triangular maps provides a
general framework for describing continuous and non-Gaussian Markov
random fields, and for exploiting the conditional independence
structure of these fields in computation.  Our analysis of
decomposable transport maps yields new variational algorithms for
sequential inference in nonlinear and non-Gaussian state space
models. These algorithms characterize the full Bayesian solution to
the smoothing and joint state--parameter
inference problems by means of a decomposable
transport map, which is constructed (recursively) in a
{\it single} forward pass using local operations.
These algorithms can be
understood as the natural generalization, to the non-Gaussian case, of
the square-root Rauch-Tung-Striebel Gaussian smoother.  Moreover, the
results presented in this paper underpin %
recent
efforts in structure learning for 
non-Gaussian
graphical models \cite{morrison2017beyond}, and novel approaches to the
filtering of high-dimensional spatiotemporal processes \cite[Chapter
6]{spantini2017inference}.  Overall, we propose a range of techniques
to address  problems of  inference in continuous
non-Gaussian graphical models.

The paper is organized as follows.  
Section \ref{sec:notation} introduces some notation used throughout the paper.
Section \ref{sec:compTransport} reviews the %
Knothe-Rosenblatt rearrangement, a key coupling for our analysis,
while Section \ref{sec:markov} briefly recalls some standard
terminology for Markov random fields and graphical models.
The main results are in Sections \ref{sec:sparse}--\ref{sec:dataAss}:
Section \ref{sec:sparse} addresses the sparsity of triangular
transports, while Section \ref{sec:decomp} introduces and develops the
concept of decomposable transport maps for general Markov networks.
These two sections can be read independently.
Section \ref{sec:dataAss} specializes the theory of Section
\ref{sec:decomp} to state-space models, 
introducing new variational %
algorithms for filtering, smoothing, and parameter inference.
Section \ref{sec:numerics} illustrates aspects of the theory with numerical examples. A final discussion is presented in Section
\ref{sec:discus}.
\hrevone{ 
Appendix \ref{sec:genKR} collects some technical details on the Knothe-Rosenblatt rearrangement and its generalizations. Appendix \ref{sec:proofs} contains the proofs of the main results.
Appendix \ref{sec:algo} provides pseudocode for our variational algorithms applied to state-space models, and additional numerical experiments 
are described in Appendix \ref{sec:add_res}.
Code and all numerical examples are available online.\footnote{\url{http://transportmaps.mit.edu}}
}

\section{Notation}
\label{sec:notation} 
Here, we collect some
useful notation used throughout the paper.
\paragraph{Notation for functions, sets, and graphs}

For a pair of functions $f$ and $g$, we denote their composition by
$f \circ g$.
We denote by $\partial_k f$ the partial derivative of $f$ with
respect to its $k$th input variable.
By $\partial_k f = 0$, we mean
that the function $f$ does not depend on its $k$th input variable.
Depending on the context,
we can  identify a matrix $Q$ with its corresponding linear map, given by
$\xb \mapsto Q \xb$. 

For all $n>0$, we let $\mathbb{N}_n=\{1,\ldots,n\}$ denote the set
of the first $n$ integers.
For any pair of sets, $\Ac \subset \Bc$ means that $\Ac$ 
is a subset of $\Bc$ (including the possibility of $\Ac = \Bc$).
We denote by $\vert \Ac \vert$ the cardinality of $\Ac$.

Given a graph $\Gcb=(\Vc, \Ec)$ with vertices
$\Vc$ and edges $\Ec$, we denote by $\neigh (k,\Gcb)$ the
neighborhood of a node $k$ in $\Gcb$, while for 
any set $\Ac \subset \Vc$, we denote by
$\Gcb_{\Ac}=(\Vc', \Ec')$ the subgraph  given by
$\Vc' = \Ac$ and $\Ec' = \Ec \cap (\Ac \times \Ac)$.
\paragraph{Notation for measures and densities}

In this paper, we mostly consider  probability measures on
$\re^n$
that are absolutely
continuous with respect to the Lebesgue measure, $\lebm$, and that
are fully supported.
We denote the set of such measures by 
$\borelmp(\re^n)$.
The {\it density} of a measure will  always be intended 
with respect to $\lebm$.
For a pair of measures $\genm_1,\genm_2$, 
$\genm_1 \ll \genm_2$ means that $\genm_1$ is
absolutely continuous with respect to $\genm_2$.

For any measure $\genm$ and measurable map $T$, we denote by
$T\push \genm$ 
the pushforward measure given by $\genm \circ T^{-1}$, where for
any set $\Bset$, 
$T^{-1}(\Bset)$ is the set-valued preimage of $\Bset$ under $T$.
Similarly, we denote by
$T\pull \genm$ 
the pullback measure given by $\genm \circ T$.
Given %
a measure $\genm$
with
density $\pi$ and a %
map $T$, 
we denote
by $T\push \pi$ the density of $T\push \genm$, provided it exists (depending on $T$).
We call $T\push \pi$ the \emph{pushforward density} of $\pi$ by $T$.
Similarly, we define the pullback density 
$T\pull \pi$ as the density of $T\pull \genm$, provided it exists.
Whether the map $T$ preserves the absolute continuity of the measure 
depends
on the regularity of $T$.
For instance, if $T:\re^n \ra \re^n$ is a diffeomorphism---i.e., 
a differentiable bijection with differentiable inverse---then
one has: %
\begin{equation} \label{eq:smoothPushPull}
	T\push \pi(\xb) = \pi(T^{-1}(\xb))\,|\det \nabla T^{-1}(\xb)|,
	\qquad
	T\pull \pi(\xb) = \pi(T(\xb))\,|\det \nabla T(\xb)|,
\end{equation}
where $\nabla T(\xb)$ denotes the Jacobian of $T$ at $\xb$.
The regularity assumptions on $T$ can be substantially weakened as long as
one modifies
\eqref{eq:smoothPushPull} appropriately 
\cite{rudin1987real,spivak1965calculus,fremlin2000measure}.
We will give one such example shortly when dealing
with triangular maps 
(see Section \ref{sec:compTransport} or Appendix \ref{sec:genKR}).
We denote by  $\int f(\xb)\, \genm({\rm d}\xb)$ the
integration of a measurable function $f:\re^n \ra \re$ with
respect to a measure $\genm$.
For  
the Lebesgue measure, we simplify our notation as
$\int f(\xb)\, \lebm({\rm d}\xb) = \int f(\xb) \,{\rm d}\xb$. 
Given a pair $\eta,\pi$ of probability densities 
and a 
map $T:\re^n \ra \re^n$,
we say that $T$ \emph{pushes forward} 
$\eta$ to $\pi$ if and only if
$T$ couples the corresponding probability measures, i.e.,
$T\push \genm_{\eta} = \genm_{\pi}$, with
$\genm_{\eta}(\Bset) = \int_{\Bset} \eta(\xb)\,{\rm d}\xb$
and
$\genm_{\pi}(\Bset) = \int_{\Bset} \pi(\xb)\,{\rm d}\xb$ for all
measurable sets $\Bset$. 
(Notice that $T\push \eta$ need not be given  by \eqref{eq:smoothPushPull} since we are
not specifying any regularity on $T$.)

When it is clear from context, we will freely omit the qualifier 
a.e.\ to indicate a property that
holds up to a set of measure zero.
\paragraph{Notation for random variables}

We use boldface capital letters, e.g., $\Xb$, to 
denote random variables on
$\re^n$
with $n>1$, while we write 
scalar-valued
random variables as $X$.
The law of a random variable $\Xb$ defined on a probability space
$(\Omega,\Pbb)$ is given by %
$\Xb \push\Pbb$. 
For a measure $\genm$,
$\Xb \sim \genm$ means that
$\Xb$ has law $\genm$.
If $\Xb=(\Xb_1,\ldots,\Xb_p)$ 
is a collection of 
random variables and $\Ac \subset \mathbb{N}_p$, then 
$\Xb_{\Ac}=(\Xb_i, i\in\Ac)$ denotes a subcollection of $\Xb$.
In the same way, for $j < k$, $\Xb_{j:k}=(\Xb_j,\Xb_{j+1},\ldots, \Xb_k)$.
If $\Xb=(\Xb_1,\ldots,\Xb_p)$ has joint
density $\pi$ and $\Ac \subset \mathbb{N}_p$,  
we denote by $\pi_{\Xb_{\Ac}}$ the marginal of $\pi$ along $\Xb_{\Ac}$, i.e.,
$\pi_{\Xb_{\Ac}}(\xb_{\Ac}) = \int \pi(\xb)\,{\rm d}\xb_{\mathbb{N}_p \setminus \Ac}$.
If %
$\pi$  %
is the density of 
$\Zb=(\Xb, \Yb)$, we denote by 
$\pi_{\Xb \vert \Yb}$
the density
of $\Xb$ given $\Yb$, where
\begin{equation} \label{eq:cond_density}
	\pi_{\Xb \vert \Yb}(\xb \vert \yb) = 
	\begin{cases} 
	\pi_{\Xb,\Yb}(\xb, \yb)/\pi_{\Yb}(\yb) &\mbox{if }  \pi_{\Yb}(\yb) \neq 0  \\ 
	0 & \mbox{otherwise}.
	\end{cases}
\end{equation}
We denote independence of a pair of random variables $\Xb,\Yb$ by 
	$\Xb \orth \Yb$. In the same way, $\Xb \orth \Yb \vert \Rb$ means
	that $\Xb$ and $\Yb$ are independent given a third random variable $\Rb$.

\section{Triangular transport maps: a building block}
\label{sec:compTransport}

An important transport for our analysis is the  Knothe-Rosenblatt (KR)
rearrangement on $\re^n$ 
\cite{rosenblatt1952remarks,knothe1957contributions,bogachev2005triangular}. 
For a pair of measures
$\genm_{\eta}, \genm_{\pi} \in \borelmp(\re^n)$, with
densities $\eta$ and $\pi$, respectively, 
the KR rearrangement is the unique  
monotone increasing lower triangular measurable map %
that pushes forward $\genm_{\eta}$ to $\genm_{\pi}$, i.e.,  
$T\push \genm_{\eta} = \genm_{\pi}$
\cite{carlier2010knothe,bogachev2005triangular}. 
Here, monotonicity is with respect to the 
lexicographic order on $\re^n$, while uniqueness is up to
$\genm_{\eta}$-null sets \cite{bogachev2005triangular}.
A lower triangular map $T:\re^n\ra\re^n$ is a multivariate 
function whose $k$th component depends only on the first $k$ input variables, i.e.,
\begin{equation}   
\emap( \xb ) = \left[\begin{array}{l}
T^1(x_1)\\ 
T^2(x_1,x_2)\\ 
\vdots \\ 
T^n(x_1,x_2,\dots x_n)
\end{array}\right]%
\label{eq:lowerTri}
\end{equation}
for some collection of functions
$(T^k)$ and for all $\xb=(x_1,\ldots,x_n)$.
The distinction between lower, upper, or other
more general forms of triangular map is a matter
of convention. We will revisit this important point
in Section \ref{sec:decomp}.
See Appendix \ref{sec:genKR} for a constructive definition of the KR
rearrangement based on a sequence of one-dimensional transports.
In our hypothesis, the KR rearrangement is always a bijection
on $\re^n$, while
each map %
\begin{equation} \label{eq:componentMap}
  \xi \mapsto T^k( x_1 ,\ldots, x_{k-1}, \xi )
\end{equation}
is
homeomorphic (continuous bijection with
continuous inverse), strictly increasing, and
differentiable 
a.e.\ \cite{santambrogio2015optimal,tao2011introduction}. 
Here, monotonicity with respect to the lexicographic order is equivalent
to %
each \eqref{eq:componentMap} being an increasing function.
The resulting rearrangement $T$ is far from
being a diffeomorphism but is still regular enough to
define a useful change of variables, as the following lemma
proven in \cite{bogachev2005triangular} shows.
\begin{lem}\label{lem:changeVarTriWeak}
If $T$ is a KR rearrangement pushing forward  
$\genm_{\eta}$ to  $\genm_{\pi}$, then $\genm_{\eta}$-a.e., 
\begin{equation} \label{eq:pullbackDensTri}
  T\pull \pi(\xb) = \pi(T(\xb))\,\det \nabla T(\xb) = \eta(\xb),
\end{equation}
where $\det \nabla T \coloneqq \prod_{i=1}^n \partial_k T^k$ exists a.e., and
where $T\pull \pi$ is the density of $T\pull \genm_{\pi}$.
\end{lem}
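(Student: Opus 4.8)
The plan is to reduce the lemma to a single pointwise identity and then prove that identity by induction on the dimension. First, observe that both equalities in the statement follow once we know that $\pi(T(\xb))\det\nabla T(\xb)=\eta(\xb)$ for $\genm_\eta$-a.e.\ $\xb$. Indeed, since $T$ is a bijection of $\re^n$ and $T\push\genm_\eta=\genm_\pi$, for every Borel $A$ we have $T\pull\genm_\pi(A)=\genm_\pi(T(A))=\genm_\eta(T^{-1}(T(A)))=\genm_\eta(A)$, so $T\pull\genm_\pi=\genm_\eta$ as measures; in particular $T\pull\genm_\pi$ is absolutely continuous with density $\eta$, i.e.\ $T\pull\pi=\eta$. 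Combined with the pointwise identity this gives $T\pull\pi(\xb)=\eta(\xb)=\pi(T(\xb))\det\nabla T(\xb)$, which is the claim (the product $\det\nabla T=\prod_{k=1}^{n}\partial_kT^k$ is well defined a.e.\ because each $\xi\mapsto T^k(x_1,\dots,x_{k-1},\xi)$ is monotone, hence a.e.\ differentiable, and by Fubini).

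For the pointwise identity in dimension $n=1$: the rearrangement is $T^1=F_\pi^{-1}\circ F_\eta$, where $F_\eta,F_\pi$ are the distribution functions of $\genm_\eta,\genm_\pi$; they are strictly increasing, continuous, and absolutely continuous, with $F_\eta'=\eta$ and $F_\pi'=\pi$ off a $\lambda$-null set. Rewrite the defining relation as $F_\pi\circ T^1=F_\eta$. The right-hand side is absolutely continuous with a.e.\ derivative $\eta$, and at every point $x$ where $T^1$ is differentiable and $F_\pi$ is differentiable at $T^1(x)$ the ordinary chain rule gives $\pi(T^1(x))\,(T^1)'(x)=(F_\pi\circ T^1)'(x)=\eta(x)$. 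The first condition fails only on a $\lambda$-null set, and the second fails only when $T^1(x)$ lies in the $\lambda$-null set where $F_\pi$ misbehaves; that set has $\genm_\pi$-measure zero, so its $T^1$-preimage has $\genm_\eta$-measure zero by the relation above. Hence the identity holds $\genm_\eta$-a.e.

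For the inductive step, write $\xb=(\xb',x_n)$ with $\xb'\in\re^{n-1}$. By the structural properties of the KR rearrangement (Appendix~\ref{sec:genKR}), the head map $T'=(T^1,\dots,T^{n-1})$ depends on $\xb'$ only and is itself the KR rearrangement pushing the $\xb'$-marginal $\eta'$ of $\genm_\eta$ onto the $\yb'$-marginal $\pi'$ of $\genm_\pi$, while for a.e.\ fixed $\xb'$ the scalar map $\xi\mapsto T^n(\xb',\xi)$ is the one-dimensional rearrangement pushing the conditional $\eta_{n\mid1:n-1}(\cdot\mid\xb')$ onto $\pi_{n\mid1:n-1}(\cdot\mid T'(\xb'))$ (full support is what guarantees all these conditionals are well defined). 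Applying the $n=1$ case to this scalar map and the induction hypothesis to $T'$, and multiplying the two resulting identities, the lower-triangular structure turns the Jacobian product $\det\nabla T'(\xb')\,\partial_nT^n(\xb',x_n)$ into $\det\nabla T(\xb)$, while the factorizations $\pi'(\yb')\,\pi_{n\mid1:n-1}(y_n\mid\yb')=\pi(\yb)$ and $\eta'(\xb')\,\eta_{n\mid1:n-1}(x_n\mid\xb')=\eta(\xb)$ collapse both sides into $\pi(T(\xb))\det\nabla T(\xb)=\eta(\xb)$; a Fubini argument over the disintegration of $\genm_\eta$ into $\eta'$ and the conditionals promotes the per-slice a.e.\ statements to a $\genm_\eta$-a.e.\ statement on $\re^n$.

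The main obstacle is the base case, and specifically the justification of the chain rule for $F_\pi\circ T^1$: $T^1$ is only a.e.\ differentiable and is \emph{not} absolutely continuous in general, so one cannot simply ``change variables'', and the argument hinges on the negligibility---\emph{with respect to $\genm_\eta$}---of the set where the chain rule could fail, which is exactly where the absolute continuity and full support of $\genm_\eta,\genm_\pi$ are used. (If one wanted a Lebesgue-a.e.\ or an integrated change-of-variables statement, the singular part of $T^1$ would have to be controlled directly, by the Banach--Zarecki / Lusin $(N)$ type arguments that form the technical core of the cited result \cite{bogachev2005triangular}; I would either carry these out or simply invoke that theorem.) Everything else is bookkeeping: correctly quoting the marginal/conditional structure of the KR rearrangement and tracking null sets through the Fubini step.
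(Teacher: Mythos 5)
The paper does not actually prove this lemma itself: immediately before the statement it says the result is ``proven in \cite{bogachev2005triangular}'', and the related Lemma~\ref{lem:GenchangeVarTriWeak} in Appendix~\ref{sec:genKR} is likewise ``adapted from'' that reference without proof. So there is no in-paper argument to compare yours against; I can only assess your proposal on its own terms.

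Your outline is sound and essentially reconstructs the technical core of the cited Bogachev--Kolesnikov--Medvedev argument. The reduction is correct: since $T$ is a triangular, strictly monotone bijection of $\re^n$ coupling $\genm_\eta$ to $\genm_\pi$, the identity $T\pull\genm_\pi=\genm_\eta$ holds at the level of measures, so $T\pull\genm_\pi$ has density $\eta$ and the whole lemma reduces to the pointwise identity $\pi(T(\xb))\det\nabla T(\xb)=\eta(\xb)$ a.e. Your base case is handled correctly: you differentiate the exact relation $F_\pi\circ T^1=F_\eta$ via the \emph{pointwise} chain rule rather than an integrated change of variables, and you track the exceptional sets properly, using $T\pull\genm_\pi=\genm_\eta$ to pull the $\lambda$-null set where $F_\pi$ is non-differentiable back to a $\genm_\eta$-null set in the domain. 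The inductive step exploits exactly the recursive structure recorded in Definition~\ref{def:KRrearr} (head map is the KR rearrangement of the marginals, last component is a one-dimensional rearrangement of conditionals), the Jacobians multiply by lower-triangularity, and the marginal-times-conditional factorizations of $\pi$ and $\eta$ collapse both sides; a Fubini/disintegration step lifts the per-slice a.e.\ statements to a $\genm_\eta$-a.e.\ statement.

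Two minor remarks. First, your worry that $T^1$ ``is not absolutely continuous in general'' is more cautious than necessary here: for $\genm_\eta,\genm_\pi\in\borelmp(\re)$ (positive densities, so $\genm_\pi$ is equivalent to Lebesgue measure) the component maps \emph{are} absolutely continuous on compact intervals---this is exactly what the paper itself records in Appendix~\ref{sec:genKR}, citing \cite[Lemma 2.4]{bogachev2005triangular}---so the Banach--Zarecki/Lusin~(N) machinery never becomes load-bearing. Your pointwise chain-rule argument already avoids needing it, but it is useful to know the extra regularity is available. Second, the step $T\pull\genm_\pi(A)=\genm_\pi(T(A))$ presupposes that $T$ maps Borel sets to Borel sets; this holds because $T$ is a homeomorphism under the stated hypotheses, but it deserves an explicit sentence. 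With those caveats filled in, your proposal would be an acceptable self-contained proof, in the same spirit as the external argument the paper invokes.
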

In general, $\det \nabla T$ in \eqref{eq:pullbackDensTri}  
is not the determinant of the
Jacobian of $T$ since the map may not be differentiable, in which case
it would not be possible to define
$\nabla T$ %
in the classical sense; this is
why $\det \nabla T$ is {\it redefined} in the lemma.
Nevertheless, it is  known that $T$ inherits the same
regularity as $\eta$ and $\pi$, but not more
\cite{santambrogio2015optimal, bogachev2005triangular}.
See Appendix \ref{sec:genKR} for additional remarks on the regularity of the map.

An essential feature of the triangular transport map is its {\it
  anisotropic} dependence on the input variables.  That is, even
though each component of the transport map does not depend on all $n$
inputs, the map is still capable of coupling arbitrary probability
distributions.  Informally, we can think of the KR
rearrangement as imposing the {\it sparsest} possible structure that
preserves generality of the coupling---in that the rearrangement is
guaranteed to exist for any 
$\genm_{\eta},\genm_{\pi} \in \borelmp(\re^n)$.
(In fact, the transport can be defined under much weaker conditions \cite{santambrogio2015optimal}.) 
In Section \ref{sec:decomp}, we will show that the
anisotropy of the KR rearrangement is crucial to
proving that certain ``complex'' (and generally non-triangular)
transports can be factorized into compositions of a few
\textit{lower-dimensional} {triangular} maps.  Thus we can think of the
KR rearrangement as the fundamental building block of a more general
class of non-triangular transports.

The KR rearrangement also enjoys many attractive
computational features.  As shown in
\cite{el2012bayesian,marzouk2016introduction}, it can be characterized
as the unique minimizer of the Kullback--Leibler (KL) divergence
$\Dkl(\,T\push \genm_{\eta} \,\vert\vert \, \genm_{\pi} \, )$ over the cone
$\spaceMapT$ of monotone increasing triangular maps.
From the perspective of function approximation, 
parameterizing a monotone triangular map
is straightforward: it suffices to write each
component of the map as\footnote{\hrevone{For computational
efficiency, one may substitue the exponential function with any other strictly positive expression, like a positively shifted square function.
}
}
\begin{equation} \label{eq:monotone}
  T^k(\xb)= a_k(x_1,\ldots,x_{k-1}) + 
  \int_0^{x_k} \exp \left ( b_k(x_1,\ldots,x_{k-1}, t) \right )\, {\rm d} t,
\end{equation}
for some arbitrary functions $a_k:\re^{k-1}\ra\re$ and
$b_k:\re^{k}\ra\re$ \cite{bigoni2016monotone,ramsay1998estimating}.
\hrevone{ 
For example,
one could parameterize each 
$a_k,b_k$ using a 
linear expansion 
\begin{equation}
  a_k(\xb) = \sum_i a_{k,i} \,\psi_i(\xb),\qquad
  b_k(\xb) = \sum_j b_{k,j} \,\psi_j(\xb)
\end{equation}
in terms of multivariate 
Hermite polynomials $(\psi_i)$ and unknown 
coefficients $\cb=(a_{k,i}, b_{k,j})$; alternatively, one could use
a neural network representation \cite{goodfellow2016deep} of
$a_k$ and $b_k$.
The resulting transport map $T[\cb]$---parameterized by  
the coefficients $\cb$---is monotone and invertible
for all choices of $\cb$.  
}
(In
contrast, parameterizing general classes of monotone
\textit{non-triangular} maps is a difficult task.)
The minimization of
$\Dkl(\,T\push \genm_{\eta} \,\vert\vert \, \genm_{\pi} \, )$ for a map in
$\spaceMapT$ and for a pair of nonvanishing target ($\pi$) and
reference ($\eta$) densities  can be rewritten as
\cite{el2012bayesian,marzouk2016introduction}:
\begin{eqnarray}  \label{OptimDirect}
      & \min\limits_T     &  -\Ex \left [ \log \pi(T(\Xb)) + 
      \sum_k \log \partial_k T^k(\Xb)
      - \log \eta(\Xb) \right ]   \\
      & {\rm s.t.}      &  T \in \spaceMapT,       \nonumber
\end{eqnarray}
where 
the expectation is taken with respect to
the reference measure---which is the law of
$\Xb$. 

Two aspects of \eqref{OptimDirect} are particularly important. First,
for the purpose of optimization, the target density can be replaced
with its unnormalized version $\bar{\pi}$. (This replacement is
essential in Bayesian inference, where the posterior normalizing
constant is usually unknown.) 
Second, \eqref{OptimDirect} can be treated as a stochastic program and
solved by means of sample-average approximation (SAA)
\cite{Shapiro2013}
or stochastic approximation
\cite{kushner2003stochastic, spall2005introduction,bottou2016optimization}.  
Recall that the
reference measure is a degree of freedom of the problem and is chosen
precisely to make the integration in \eqref{OptimDirect} feasible
using, for instance, quadrature, Monte Carlo, or quasi-Monte Carlo
methods \cite{davis2007methods,robert2013monte,dick2013high,dick2016higher}.

Assuming some additional regularity for $\pi$ (e.g., at least
differentiability) and using
the monotone parameterization of \eqref{eq:monotone}, then
\eqref{OptimDirect} becomes an unconstrained and differentiable
optimization problem.  In particular, we can use the gradient of
$\log \pi$ to obtain an unbiased estimator for the gradient of
\eqref{OptimDirect}
\cite{ho1983perturbation,glasserman1991gradient,asmussen2007stochastic}.
Alternatively, if $\nabla \log \pi$ is unavailable, we can use the
{\it score method} \cite{glynn1990likelihood,ranganath2013black} to
produce an estimator that is still unbiased, but with higher variance.
For concreteness, consider the realization of an i.i.d.\ sample
$(\xb_i)_{i=1}^M$ from $\genm_\eta$.  Then 
a
SAA of
\eqref{OptimDirect} reads as:
\begin{eqnarray}  \label{OptimDirectSAA}
			& \min\limits_{T}     &  - \sum_{i=1}^M \left
                                                ( \,\log \bar{\pi}(T(\xb_i)) + 
			\sum_k \log \partial_k T^k(\xb_i) 
			- \log \eta(\xb_i)   \right )   \\
			& {\rm s.t.}      &  T \in \spaceMapT,       \nonumber
\end{eqnarray}
which is now amenable to deterministic optimization techniques. 
The numerical solution of
\eqref{OptimDirectSAA} by means of an iterative optimization method
(e.g., BFGS \cite{wright1999numerical}) produces a sequence of maps
$\widetilde{T}_1,\widetilde{T}_2,\ldots$ that are increasingly
better approximations of the KR rearrangement, in the
sense defined by \eqref{OptimDirectSAA}.  In particular, we can
interpret $(\widetilde{T}_k)_k$ as a discrete time flow that pushes
forward the collection of reference samples, $(\xb_i)_{i=1}^M$, to the
target distribution. See Figure \ref{fig:movieToyProblem} for a
simple illustration.
As shown in \cite{el2012bayesian}, the KL divergence
$\Dkl(\,\widetilde{T}\push \genm_{\eta} \,\vert\vert \, \genm_{\pi} \, )$
for an approximate map $\widetilde{T}$ can be estimated as:
\begin{equation} \label{eq:var_diag}
  \Dkl(\,\widetilde{T}\push \genm_{\eta} \,\vert\vert \, \genm_{\pi} \, ) \approx
  \frac{1}{2} \Var \,\left[ \log \bar{\pi}(\widetilde{T}(\Xb)) + 
      \sum_k \log \partial_k \widetilde{T}^k(\Xb)
      - \log \eta(\Xb)   \right],
\end{equation}
up to second-order terms, in the limit of 
$\Dkl(\,\widetilde{T}\push \genm_{\eta} \,\vert\vert \, \genm_{\pi} \, ) \ra 0$, even
if the normalizing constant of $\pi$ is unknown.
This convergence criterion is rather useful for {\it any} variational inference
method, and is usually not available for techniques
like MCMC. %
In the same way, 
\cite{el2012bayesian}
constructs effective 
estimators for the normalizing constant
$\beta \coloneqq \bar{\pi} / \pi$ as
\begin{equation} \label{eq:normConst}
  \hat{\beta} = \exp \Ex\left[\log \bar{\pi}(\widetilde{T}(\Xb)) + 
      \sum_k \log \partial_k \widetilde{T}^k(\Xb) - \log\eta(\Xb)\right].
\end{equation}

We refer the reader to
\cite{parno2015transport,marzouk2016introduction} for an alternative
construction of the transport map that is useful when only \textit{samples
from the target measure} are available.  An interesting application of
the latter construction is the problem of density estimation
\cite{tabak2013family} or Bayesian inference with intractable
likelihoods \cite{wilkinson2011stochastic,Marin2011,Csillery2010}. In
this case, it turns out that the {\it inverse} transport $S = T^{-1}$
can be easily computed via convex optimization
\cite{parno2014transport}.  (Notice that $S$ is just an ordinary
triangular transport map that pushes forward $\genm_\pi$ to $\genm_\eta$.  The 
``inverse'' descriptor will help distinguish $S$ from the
map $T$ that pushes forward the reference to the target distribution.  We
refer to $T$ as the {\it direct} transport.)  We can then invert $S$
at $\xb\in\re^n$ to obtain the evaluation of the direct transport
$T(\xb)$.  Inverting a monotone triangular function is a
computationally trivial task since it requires the solution of a
sequence of one-dimensional root finding problems
\cite{parno2015transport,marzouk2016introduction}.  
In practice, one just needs to invert \eqref{eq:componentMap}
for $k=1,\ldots,n$.
It is also possible to compute the
inverse transport from the unnormalized target density, rather than
from samples; here, it suffices to minimize
$\Dkl(\,\genm_{\eta} \,\vert\vert \, S\push \genm_{\pi} \, )$ for
$S\in\spaceMapT$.  The resulting variational problem is equivalent to
\eqref{OptimDirect} with the identity $S=T^{-1}$.
By symmetry of our formulation, 
$S$ 
has the same regularity as $T$. 
In particular, Lemma \ref{lem:changeVarTriWeak} holds for $S$ as
well, and gives a formula for the pushforward density $T\push \eta$ as:
\begin{equation} \label{eq:pushforDensTri}
  T\push \eta(\zb) = \eta(S(\zb))\,\det \nabla S(\zb) = \pi(\zb),
\end{equation}
where $\det \nabla S \coloneqq \prod_{i=1}^n \partial_k S^k$ exists a.e., and
where $T\push \eta$ is the density of $T\push \genm_{\eta}$.

There is a growing body of literature on the efficient numerical approximation
of transport maps 
(see, e.g., 
\cite{mendoza2018bayesian,rezende2015variational,liu2016stein,tabak2013family,el2012bayesian,bigoni2016monotone}). 
Essentially all of these approaches employ numerical optimization to construct or
realize the action of a map, and thus harness \textit{optimization} to
enhance \textit{integration}.
Yet all these approaches face a fundamental challenge: the
transport map is a function from $\re^n$ onto itself, and in high
dimensions (i.e., for large $n$) the representation and approximation
of such functions becomes increasingly intractable.
In the ensuing sections, on the other hand, we will show that a large
class of transport maps are in fact only superficially high-dimensional;
that is, they possess some \textit{hidden} low-dimensional structure
that can facilitate their fast and reliable computation.  This
low-dimensional structure is linked to the Markov properties of the
target measure, which we briefly review
in the next section.
\newcommand{\scaleToolMovie}{0.5}
\newcommand{\hspaceToolMovie}{15pt}
\newcommand{\hspaceCaptionToolMovie}{9pt}
\captionsetup[subfigure]{labelformat=empty}
\begin{figure}[h]
\begin{center}
\subfloat[\hspace{\hspaceCaptionToolMovie}
$\widetilde{T}_0$]{\includegraphics[scale=\scaleToolMovie]{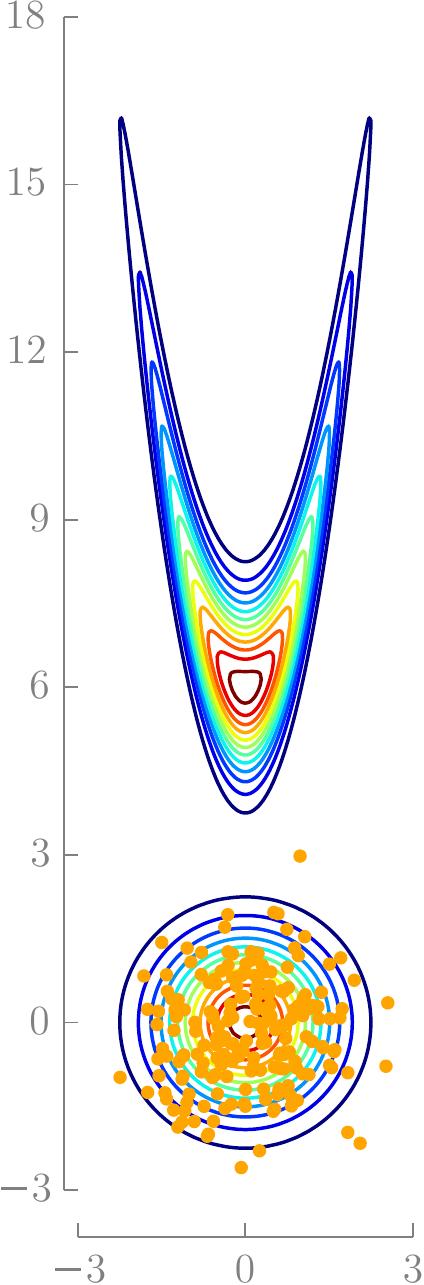}}
\hspace{\hspaceToolMovie}
\subfloat[\hspace{\hspaceCaptionToolMovie}
$\widetilde{T}_1$]{\includegraphics[scale=\scaleToolMovie]{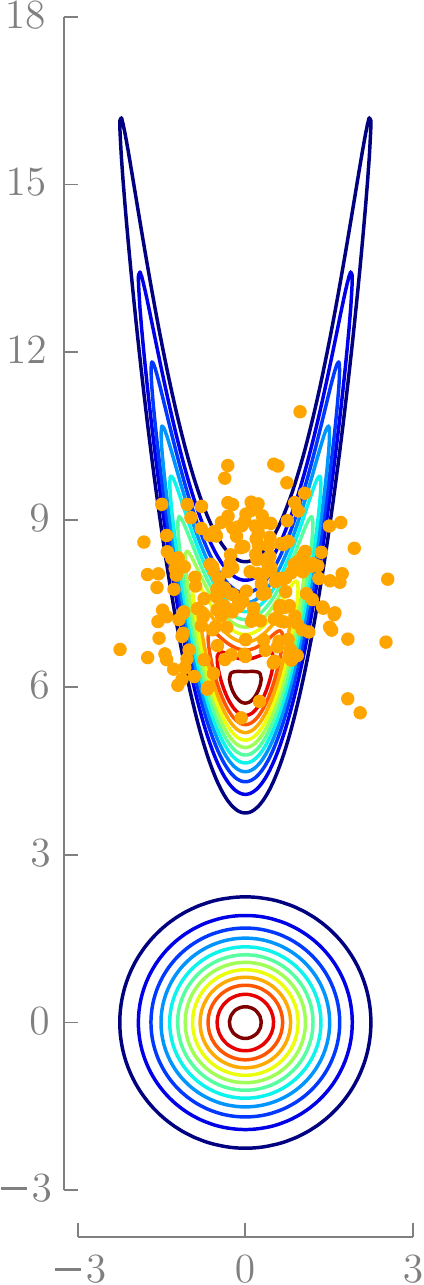}}
\hspace{\hspaceToolMovie}
\subfloat[\hspace{\hspaceCaptionToolMovie}
$\widetilde{T}_2$]{\includegraphics[scale=\scaleToolMovie]{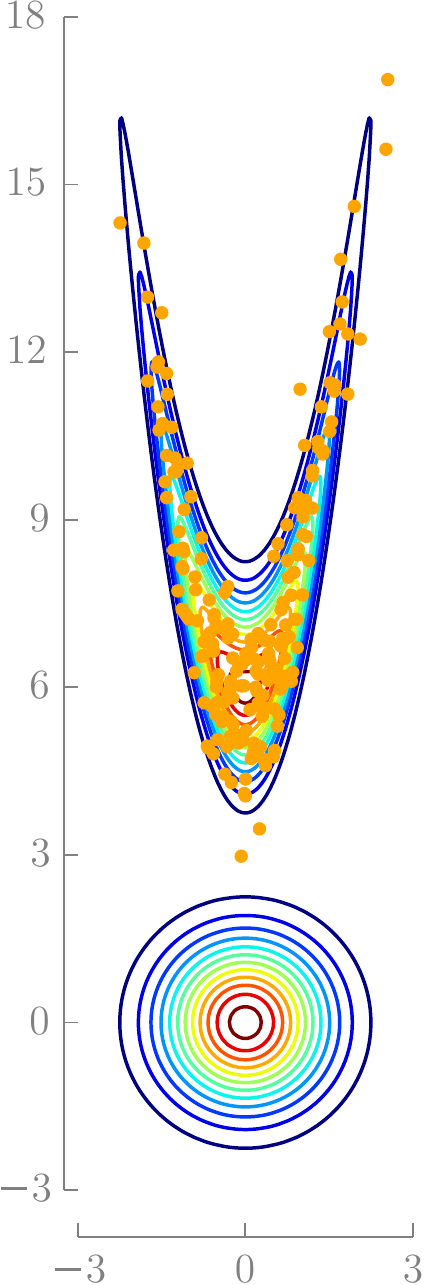}}
\hspace{\hspaceToolMovie}
\subfloat[\hspace{\hspaceCaptionToolMovie}
$\widetilde{T}_3$]{\includegraphics[scale=\scaleToolMovie]{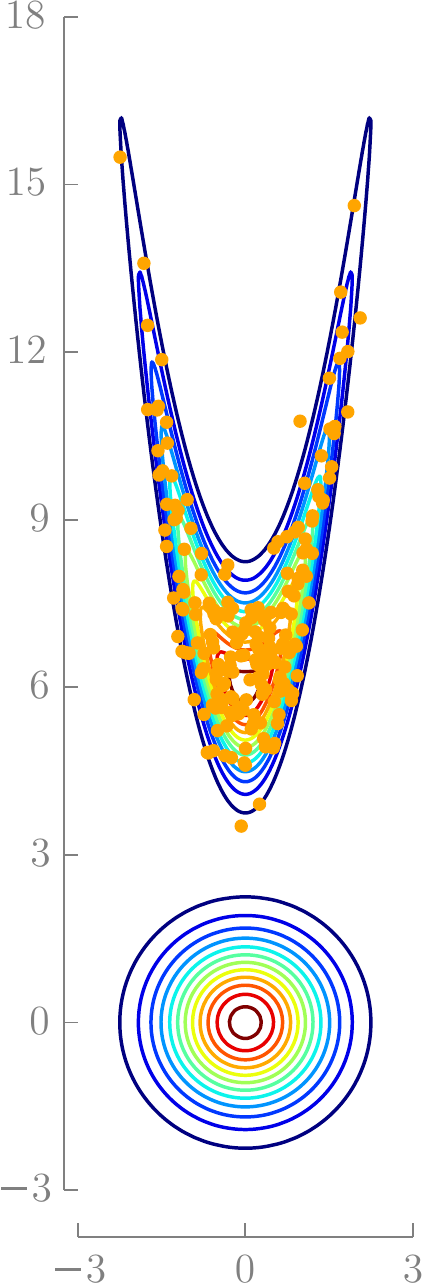}}
\caption[]{Computation of a simple transport map in two dimensions:
  The leftmost figure shows contours of the reference density
  $\eta$, which is a standard Gaussian, and of the target density
  $\pi$, which is a banana-shaped distribution in the tails of $\eta$.
  The target distribution has a nonlinear dependence structure. The
  orange dots in the leftmost figure correspond to 100 samples
  $(\xb_i)$ from $\eta$ and are used to make a sample-average
  approximation of \eqref{OptimDirect}.  We adopt the triangular
  monotone parameterization of \eqref{eq:monotone} for the candidate
  transport map, where the functions $a_k,b_k$ are expanded in a
  multivariate Hermite polynomial basis of total degree two
  \cite{xiu2010numerical}.  The resulting optimization problem is
  solved with a quasi-Newton method (BFGS).  The $k$th figure from the
  left shows the pushforward of the original reference samples through
  the approximate transport map, $\widetilde{T}_k$, after $k$
  iterations of BFGS. The initial map $\widetilde{T}_0$ is chosen to
  be the identity.
  The reference samples flow \emph{collectively} towards the target
  density and eventually settle on the support of $\pi$, capturing its
  structure after just a few iterations.}
 \label{fig:movieToyProblem} 
\end{center}
\end{figure}

\section{Markov networks}
Let $\Zb=(Z_1,\ldots,Z_n)$ be a collection of
random variables with law
$\genm_\pi$ and density $\pi$.
We can represent a list of conditional
independences satisfied by 
$\Zb$---the so-called Markov properties---using a simple
undirected graph $\Gcb=(\Vc,\Ec)$,
where each node %
$k \in \Vc$ %
is associated
with 
a  distinct random variable, $Z_k$,
and where the
edges in $\Ec$  encode
a specific notion of probabilistic interaction
among these random variables \cite{koller2009probabilistic}.
In particular, we say that $\Zb$ is a Markov
network---or a Markov random field (MRF)---with respect to $\Gcb$ if for any triplet 
$\Aset,\Sset,\Bset$ of
disjoint subsets of $\Vc$, where $\Sset$ is a 
separator set for $\Aset$ and $\Bset$,\footnote{
$\Sset$ is a separator set for 
$\Aset$ and $\Bset$ if (1) $\Sset$ is disjoint from
$\Aset$ and $\Bset$ (2)
Every path %
from $\alpha \in \Aset$ to $\beta \in \Bset$ intersects $\Sset$.
If $\Aset$ and $\Bset$ are disconnected components of $\Gcb$, then
$\Sset = \emptyset$ is a separator set for  $\Aset$ and $\Bset$. 
} 
the subcollections
$\Zb_{\Aset} $
and $\Zb_{\Bset}$ 
are conditionally independent given $\Zb_{\Sset}$, i.e.,
\begin{equation} \label{eq:globalMarkov}
 \Zb_{\Aset} \orth	\Zb_{\Bset} \, \vert \, \Zb_{\Sset}.
\end{equation}
The measure $\genm_\pi$
is said to satisfy the global Markov property,
relative to $\Gcb$, if \eqref{eq:globalMarkov}  holds \cite{lauritzen1996graphical}.
We can also say that $\genm_\pi$ is globally Markov with respect to $\Gcb$.
The corresponding graph is then called an independence map (I-map) for 
$\genm_\pi$  \cite{koller2009probabilistic}.
Intuitively, a sparse graph  represents a family of distributions that enjoy many conditional
independence properties. 
I-maps are in general not unique.
Of particular interest are \textit{minimal} I-maps, i.e., the sparsest
graphs compatible with the conditional independence structure of $\genm_\pi$.
Conditional independence is associated with factorization properties of 
$\pi$.
For instance,  $\Zb_{\Aset} \orth     \Zb_{\Bset} \, \vert \, \Zb_{\Sset}$ if and only if
$\pi_{\Zb_{\Aset} ,  \Zb_{\Bset} \vert \Zb_{\Sset}} = 
\pi_{\Zb_{\Aset}  \vert \Zb_{\Sset}} \, 
\pi_{\Zb_{\Bset}   \vert \Zb_{\Sset}} $ a.e.\ \cite{lauritzen1996graphical}.
We then say that $\genm_\pi$ {\it factorizes} according to some graph
$\Gcb$ if there exists a version of the density of $\genm_\pi$ such that
\begin{equation} \label{eq:factHammer}
     \pi(\zb) = \frac{1}{\cfrak} \prod_{\Cc \in\Ccb} \, \psi_{\Cc}(\zb_{\Cc}),
\end{equation}
for some nonnegative functions $(\psi_{\Cc})$ called \emph{potentials}, 
where
$\Ccb$ is the set of maximal cliques\footnote{ 
A clique is a fully connected subset of the vertices, 
whereas a maximal clique is a clique that is not a strict subset of another clique.
}
 of $\Gcb$ and $\cfrak$ is a 
 normalizing constant.
It is immediate to show that if $\genm_\pi$ factorizes according to $\Gcb$, then
$\genm_\pi$ satisfies the global Markov property relative to $\Gcb$
\cite[Proposition 3.8]{lauritzen1996graphical}.
The converse is true only under additional assumptions: 
for instance, if $\genm_\pi$ admits a  continuous and strictly positive density 
(see the Hammersley-Clifford theorem \cite{hammersley1971markov,lauritzen1996graphical}).

A critical question then is how to characterize a suitable I-map for a
given measure. There are several answers. 
First of all, in many applications that involve probabilistic modeling, 
the target distribution is defined 
in terms of %
its potentials, as in 
\eqref{eq:factHammer},
because this is just a more convenient
way to specify a high-dimensional
distribution and to perform inference (or general probabilistic reasoning)
with it \cite{koller2009probabilistic}. 
Finding a graph for which $\genm_\pi$ factorizes is then a trivial task.
See Figure \ref{fig:sparsityVolHyper} ({\it left}) for an example. 
Applications where this commonly holds range from spatial statistics and image analysis to speech recognition \cite{koller2009probabilistic,rue2005gaussian}. 
In Section \ref{sec:dataAss}, for example, we focus exclusively
on discrete-time Markov processes, where the
Markov structure of the problem is self-evident.
\hrevone{ 
More specifically, Section \ref{sec:dataAss} 
tackles the problem of recursive smoothing and
static parameter estimation for a state-space model.
In this context, the target measure $\genm_\pi$ could
represent the joint distribution of state and parameters, conditioned on
all the available observations (see   
Figures \ref{fig:sparsityVolHyper} and \ref{fig:dataAssHyper}).
The reader might want to consider this sequential inference problem
as a guiding application while reading the forthcoming Sections
\ref{sec:sparse} and \ref{sec:decomp}. We emphasize, however, that our
theory is far more general and by no means restricted to any specific Markov structure.
}

In other settings, the graph is unknown and must be estimated. When only samples from 
$\genm_\pi$ are available, this is a question of model learning, as described in \cite[Part III]{koller2009probabilistic}; see also \cite{hyvarinen2005estimation,meinshausen2006high,yuan2007model,lin2015high} for various applications.
In case of a known and smooth target density, we can 
characterize pairwise conditional independence in terms of mixed
second-order partial derivatives, as shown by the following lemma. 
\begin{lem}[Pairwise conditional independence] \label{lem:pairwiseIndep}
If $\Zb \sim \genm_\pi$  for a measure $\genm_\pi$ with smooth and strictly positive
density $\pi$, we have:
\begin{equation}
Z_i \orth Z_j \,\vert\, \Zb_{\Vc\setminus (i,j)} \quad
\Longleftrightarrow \quad
\partial^2_{i,j}\log \pi = 0 %
 \,\, {\text on} \,\, \re^n.
\end{equation}
\end{lem}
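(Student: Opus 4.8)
The plan is to prove the equivalence by connecting the vanishing of the mixed partial $\partial^2_{i,j}\log\pi$ to a factorization of the conditional density $\pi_{Z_i,Z_j\mid \Zb_{\Vc\setminus(i,j)}}$, and then invoking the standard fact (recalled just before the lemma) that such a factorization is equivalent to the stated conditional independence. The key observation is that for a smooth strictly positive density, one has, writing $\rdvs=(x_i,x_j)$ for the two distinguished coordinates and $y$ for the remaining ones,
\begin{equation*}
  \log \pi_{Z_i,Z_j\mid \Zb_{\Vc\setminus(i,j)}}(x_i,x_j\mid y) = \log\pi(\xb) - \log\pi_{\Zb_{\Vc\setminus(i,j)}}(y),
\end{equation*}
and the subtracted term does not depend on $x_i$ or $x_j$, so $\partial^2_{i,j}\log\pi = \partial^2_{i,j}\log\pi_{Z_i,Z_j\mid \Zb_{\Vc\setminus(i,j)}}$ pointwise.

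The forward implication ($\Rightarrow$): if $Z_i\orth Z_j\mid\Zb_{\Vc\setminus(i,j)}$, then by the factorization criterion the conditional density splits as $\pi_{Z_i\mid\Zb_{\Vc\setminus(i,j)}}(x_i\mid y)\,\pi_{Z_j\mid\Zb_{\Vc\setminus(i,j)}}(x_j\mid y)$, hence its logarithm is a sum of a function of $(x_i,y)$ and a function of $(x_j,y)$, whose mixed $\partial_i\partial_j$ derivative is zero; combined with the identity above this gives $\partial^2_{i,j}\log\pi=0$ on $\re^n$.

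The reverse implication ($\Leftarrow$) is where the real work sits. Suppose $\partial^2_{i,j}\log\pi\equiv 0$. Fix $y$ and set $f(x_i,x_j)=\log\pi(\xb)$; since $\partial_i\partial_j f=0$ on all of $\re^2$ (a convex, hence connected and simply connected, domain), integrating in $x_i$ shows $\partial_j f(x_i,x_j)$ is independent of $x_i$, i.e.\ $\partial_j f(x_i,x_j)=h(x_j)$ for some function $h$ (depending on $y$); integrating once more in $x_j$ yields $f(x_i,x_j)=g_1(x_i)+g_2(x_j)$ for functions $g_1,g_2$ depending on $y$. Exponentiating gives $\pi(\xb)=\phi_1(x_i,y)\,\phi_2(x_j,y)$ with $\phi_1,\phi_2>0$. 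This is exactly a factorization of the joint density witnessing the conditional independence: normalizing, one checks $\pi_{Z_i,Z_j\mid\Zb_{\Vc\setminus(i,j)}}$ factorizes into a product of a function of $(x_i,y)$ and a function of $(x_j,y)$, which by the criterion stated before the lemma is equivalent to $Z_i\orth Z_j\mid\Zb_{\Vc\setminus(i,j)}$.

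The main obstacle is making the integration-of-the-PDE step clean: one must be careful that the ``constants of integration'' are genuine measurable functions of $y$ and that the resulting factors are integrable enough to define the relevant marginal and conditional densities (so that all manipulations with $\pi_{\Zb_{\Vc\setminus(i,j)}}$ are legitimate and nonzero, which is guaranteed by strict positivity and smoothness). Since $\re^2$ is simply connected the topological subtlety that could arise on a disconnected domain does not appear, so this reduces to a routine application of the fundamental theorem of calculus plus Fubini/Tonelli to check integrability; I would state it as a short self-contained computation rather than dwelling on it.
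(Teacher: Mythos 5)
Your proposal is correct and follows essentially the same route as the paper's proof: both directions hinge on the Lauritzen factorization criterion, the reverse direction solves $\partial^2_{i,j}\log\pi=0$ by integration to obtain an additive split of $\log\pi$ into a part free of $x_i$ and a part free of $x_j$, and the forward direction differentiates the factored log-density. Your added observation that $\partial^2_{i,j}\log\pi = \partial^2_{i,j}\log\pi_{Z_i,Z_j\mid\Zb_{\Vc\setminus(i,j)}}$ is a harmless convenience, and the care about simple connectedness of $\re^2$ is a nice explicit justification of the integration step that the paper simply asserts as ``the general solution.''
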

Thus, if we can evaluate $\pi$ and its derivatives (up to
a normalizing constant), we can
use Lemma \ref{lem:pairwiseIndep} to assess pairwise
conditional independence and to define a minimal I-map for $\genm_\pi$
as follows: add an edge between every pair
of distinct nodes unless the corresponding random variables
are conditionally independent  
\cite[Theorem 4.5]{koller2009probabilistic}.

Regardless of the many ways to obtain an I-map, 
there is a fundamental connection between Markov properties of
a distribution and the existence
of low-dimensional transport maps.
The rest of the paper will elaborate precisely on this connection.
\label{sec:markov}

\section{Sparsity of triangular transport maps}
\label{sec:sparse}
We begin our investigation of low dimensional structure 
by considering
the notion of sparse transport map.
A sparse  map    
is a multivariate function where each component does not
depend on all of its input variables. According to this definition, a
triangular 
transport 
is already sparse.  In this section, however, we
show that the KR rearrangement can be even 
\textit{sparser}, depending on the Markov structure of the target
distribution.

\subsection{Sparsity bounds} \label{sec:sparsityBounds}
Given a lower triangular function $T$, we define its  
  sparsity pattern,  $\sparse_T$, as the set of all integer pairs $(j,k)$, with
$j<k$, such that the $k$th component of the map does not depend on the
$j$th input variable, i.e., 
$\sparse_T=\{ (j,k) : j<k , \,\partial_j T^k=0 \}$.
(We do not include pairs $j>k$ in the definition of $\sparse_T$  since, for a lower triangular
function, $\partial_j \, T^k = 0 $ for $j>k$ by construction.)

Knowing the sparsity pattern of the KR rearrangement
\textit{before} computing the actual transport has important
computational implications.  For instance, in the variational
characterization of the transport described in \eqref{OptimDirect}, we can
restrict the feasible domain %
to the set of triangular maps with sparsity pattern
given by $\sparse_T$, and still recover the desired KR
rearrangement.  That is, if $(j,k)\in \sparse_T$, %
we can
parameterize any candidate transport map 
by
removing the dependence on the $j$th input variable from the $k$th
component of the map.  
Thus, analyzing the Markov structure of the target distribution
enables the representation and computation of maps in possibly higher-dimensional
settings.

The following theorem, which is the main result of this section,
characterizes \textit{bounds} on the sparsity patterns of triangular
transport maps
given an I-map for the target measure.  In the statement of
the theorem, we denote the direct transport by $T$ and the inverse
transport by $S=T^{-1}$ (see Section \ref{sec:compTransport}).  The
theorem suggests that $S$ and $T$ can have quite different sparsity
patterns.\footnote{
A note: as we already saw, the KR rearrangement is unique up to a
set of measure zero. Theorem \ref{thm:sparsityRosenblatt} 
characterizes the sparsity pattern of a particular {\it version} of the map,
the one given by Definition \ref{def:KRrearr} in Appendix \ref{sec:genKR}.
We will implicitly make this assumption throughout the paper.
} %

\begin{thm}[Sparsity of Knothe--Rosenblatt rearrangements] \label{thm:sparsityRosenblatt}
Let $\Xb \sim \genm_\eta$, $\Zb \sim \genm_\pi$ with 
$\genm_\eta,\genm_\pi \in \borelmp(\re^n)$ and 
$\genm_\eta$ a product measure on 
$\times_{i=1}^n\mathbb{R}$.
Moreover, assume that
$\genm_\pi$ is globally Markov with respect to $\Gcb$,
and
define, recursively, the sequence of graphs $(\Gcb^k)_{k=1}^n$ as:
(1) $\Gcb^n \coloneqq \Gcb$ and (2) for all
$1\le k < n$, $\Gcb^{k-1}$ is obtained from $\Gcb^{k}$ by removing node $k$ and by 
turning its neighborhood $\neigh (k,\Gcb^k)$ into a clique.
Then the following hold:
	\begin{enumerate}
		\item \label{thm:sparsityRosenblatt:inverse} 	
		If $\sparse_S$ is the sparsity pattern
		of the inverse transport map $S$, then 
			\begin{equation} \label{eq:superSetInverse}
			\widehat{\sparse}_S \subset \sparse_S,
			\end{equation}										
		where 	$\widehat{\sparse}_S$ is the set of integer pairs $(j,k)$ 
		such that $j \notin \neigh (k, \Gcb^k) $.				
		\medskip
						  
		\item \label{thm:sparsityRosenblatt:direct} 
		If $\sparse_T$ is the sparsity pattern
		of the direct transport map $T$, then
			\begin{equation}  \label{eq:superSetDirect}
			\widehat{\sparse}_T \subset \sparse_T,
			\end{equation}										
		where 	$\widehat{\sparse}_T$ 
		is defined recursively as follows:
		for $k=2,\ldots,n$ the pair $(j,k) \in \widehat{\sparse}_T$
		if and only if
		$(j,i)\in \widehat{\sparse}_T$
		for all $ i\in \neigh (k, \Gcb^k)$.		
		\smallskip

		\item \label{thm:sparsityRosenblatt:inclusion}  
		The predicted sparsity pattern of $S$ is always greater than or
		equal to that of $T$, i.e.,

			\begin{equation} \label{eq:inclusion}
			\widehat{\sparse}_T \subset \widehat{\sparse}_S. 
			\end{equation}						
	\end{enumerate}				 					 
\end{thm}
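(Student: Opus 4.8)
The plan is to route all three parts through a single structural fact about marginalization in Markov networks and then to read off the conclusions from the constructive, one-dimensional definition of the Knothe--Rosenblatt (KR) map recalled in Appendix~\ref{sec:genKR}. Two elementary facts will be used throughout: (i) if $S$ is a monotone lower triangular bijection then so is $T=S^{-1}$, and $T^k(\xb_{1:k})$ is obtained by solving $S^k(T^1(x_1),\dots,T^{k-1}(\xb_{1:k-1}),z_k)=x_k$ for $z_k$; and (ii) because $\genm_\eta$ is a product measure, the $k$th component of the KR map pushing $\genm_\pi$ to $\genm_\eta$ depends on $\zb_{1:k-1}$ only through the conditional law of $Z_k$ given $(Z_1,\dots,Z_{k-1})$ under $\genm_\pi$ --- concretely $S^k(\zb_{1:k})=F_{\eta_k}^{-1}\big(F_{Z_k\vert\Zb_{1:k-1}}(z_k\vert\zb_{1:k-1})\big)$, since the product structure makes the reference conditional quantiles independent of the conditioning. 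All statements are understood for the version of the KR map fixed in Appendix~\ref{sec:genKR}.

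The crux is to show that for every $k$ the marginal law of $(Z_1,\dots,Z_k)$ is globally Markov with respect to $\Gcb^k$. I would prove this by downward induction on $k$ from $\Gcb^n=\Gcb$, the inductive step being the node-elimination lemma: if a law on $(Z_1,\dots,Z_k)$ is globally Markov w.r.t.\ a graph $H$, then the law of $(Z_1,\dots,Z_{k-1})$ is globally Markov w.r.t.\ the graph $H'$ obtained by deleting node $k$ and completing $\neigh(k,H)$ into a clique. To prove the lemma, take disjoint $\Aset,\Bset,\Sset\subset\{1,\dots,k-1\}$ with $\Sset$ separating $\Aset$ from $\Bset$ in $H'$; since $\neigh(k,H)$ is a clique in $H'$, the set $\neigh(k,H)\setminus\Sset$ lies in a single connected component of $H'\setminus\Sset$, so when node $k$ is restored in $H$ it either sits in a component of $H\setminus\Sset$ meeting neither $\Aset$ nor $\Bset$ (and then $\Sset$ already separates $\Aset$ from $\Bset$ in $H$) or it is absorbed into one side, say the $\Aset$-side (and then $\Sset$ separates $\Aset\cup\{k\}$ from $\Bset$ in $H$). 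Applying the global Markov property of the original law and then the decomposition property of conditional independence gives $\Zb_{\Aset}\orth\Zb_{\Bset}\mid\Zb_{\Sset}$ in the marginal. I expect this step --- in particular the case analysis locating $\neigh(k,H)$ relative to the separator --- to be the only genuine obstacle; everything else is bookkeeping.

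For part~\ref{thm:sparsityRosenblatt:inverse}, fix $(j,k)\in\widehat{\sparse}_S$, so $j<k$ and $j\notin\neigh(k,\Gcb^k)$. In $\Gcb^k$ the only $j$--$k$ path that could avoid $\{1,\dots,k-1\}\setminus\{j\}$ is the edge $\{j,k\}$, which is absent; hence $\{1,\dots,k-1\}\setminus\{j\}$ separates $\{j\}$ from $\{k\}$, and the crux yields $Z_k\orth Z_j\mid\Zb_{\{1,\dots,k-1\}\setminus\{j\}}$, i.e.\ $\pi_{Z_k\vert\Zb_{1:k-1}}$ admits a version not depending on $z_j$. By fact~(ii), $S^k$ then has a version not depending on $x_j$, that is $\partial_j S^k=0$, so $(j,k)\in\sparse_S$, which is \eqref{eq:superSetInverse}.

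For part~\ref{thm:sparsityRosenblatt:direct} I would induct on $k$. By fact~(i), $T^k$ depends on $x_j$ only through those $T^i$ with $i<k$ and $\partial_i S^k\neq0$, and by part~\ref{thm:sparsityRosenblatt:inverse} every such $i$ lies in $\neigh(k,\Gcb^k)$. If $(j,k)\in\widehat{\sparse}_T$ then, by the recursive definition, $(j,i)\in\widehat{\sparse}_T$ for all $i\in\neigh(k,\Gcb^k)$, and since each such $i$ satisfies $i<k$ the induction hypothesis gives $\partial_j T^i=0$; hence the relation defining $T^k$ does not involve $x_j$, so $\partial_j T^k=0$ and $(j,k)\in\sparse_T$, which is \eqref{eq:superSetDirect}. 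Part~\ref{thm:sparsityRosenblatt:inclusion} is then immediate: if $(j,k)\in\widehat{\sparse}_T$ then $(j,i)\in\widehat{\sparse}_T$ for every $i\in\neigh(k,\Gcb^k)$, and membership in $\widehat{\sparse}_T$ forces the first index to be strictly below the second, so $j<i$ --- in particular $j\neq i$ --- for all $i\in\neigh(k,\Gcb^k)$; thus $j\notin\neigh(k,\Gcb^k)$, and together with $j<k$ this says exactly $(j,k)\in\widehat{\sparse}_S$, which is \eqref{eq:inclusion}.
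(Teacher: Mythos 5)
Your proof is correct and reaches the same conclusions by the same overall strategy: establish that each marginal $\pi_{\Zb_{1:k}}$ is globally Markov with respect to the marginal graph $\Gcb^k$, read off that $\pi_{Z_k\vert\Zb_{1:k-1}}$ can be chosen independent of $z_j$ for every $j\notin\neigh(k,\Gcb^k)$, and then exploit the constructive definition of the KR rearrangement (here via the explicit formula $S^k=F_{\eta_k}^{-1}\circ F_{Z_k\vert\Zb_{1:k-1}}$) to bound the sparsity patterns. The one genuine difference is the treatment of the marginal-Markov step: the paper invokes \cite[Proposition 3.17]{lauritzen1996graphical} to obtain both the Markov property of $\pi_{\Zb_{1:k}}$ relative to $\Gcb^k$ and the accompanying factorization identity $\pi_{\Zb_{1:k}}\,\pi_{\Cc}=\pi_{Z_k,\Zb_{\Cc}}\,\pi_{\Zb_{1:k-1}}$ with $\Cc=\neigh(k,\Gcb^k)$, reading off directly that the conditional depends only on $(z_k,\zb_{\Cc})$. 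You instead prove a self-contained node-elimination lemma by downward induction, including the case analysis on whether the component of $k$ in $H\setminus\Sset$ meets $\Aset$ or $\Bset$ --- this is correct (the clique structure of $\neigh(k,H)$ in $H'$ ensures $\neigh(k,H)\setminus\Sset$ lies in one component, so $k$ is absorbed into at most one side) --- and then deduce, coordinate by coordinate, that $\pi_{Z_k\vert\Zb_{1:k-1}}$ is insensitive to $z_j$ via the separator $\{1,\dots,k-1\}\setminus\{j\}$. Your route is more elementary and self-contained, at the cost of a longer argument; the paper's is shorter but leans on a textbook decomposition theorem. Parts~\ref{thm:sparsityRosenblatt:direct} and~\ref{thm:sparsityRosenblatt:inclusion} are proved exactly as in the paper, with the harmless reformulation of the induction step via $T=S^{-1}$ instead of via the direct KR construction of $T^k$.
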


Several remarks are in order. First, we emphasize the fact that
Theorem \ref{thm:sparsityRosenblatt} characterizes sparsity patterns
using \textit{only} an I-map for $\genm_\pi$, without requiring any actual
computation of the 
transports. 
One only needs to perform simple graph
operations on $\Gcb$ to build the sequence of graphs $(\Gcb^k)$. See
Figure \ref{fig:marginalGraph} for an illustration of this procedure,
with the corresponding sparsity patterns in
Figure \ref{fig:sparsityPatt}.
We refer to $(\Gcb^k)$ as the {\it marginal} graphs.  
In fact, the sequence $(\Gcb^k)$ is precisely the set of intermediate graphs 
produced by the variable elimination algorithm \cite[Ch.\ 9]{koller2009probabilistic}, 
when marginalizing 
with elimination ordering $(n, n-1, \ldots, 1)$. 
This should not be
surprising as the KR rearrangement is essentially a
sequence of ordered marginalizations \cite{villani2008optimal}.
The hypothesis that $\genm_\eta$ is a product measure 
is important for the theorem to hold. 
If we pick a reference measure with an arbitrary Markov structure, there need
not exist a sparse transport map coupling $\genm_\eta$ and $\genm_\pi$, even if
$\genm_\pi$ has a sparse I-map.
The role of a reference measure is somewhat peculiar to the world of
couplings and is usually not addressed in classical
treatments of graphical models. %
Nonetheless, this assumption on $\genm_\eta$
is not restrictive in the present framework,
since the reference distribution is considered a degree of freedom
of the problem.
Theorem \ref{thm:sparsityRosenblatt} gives sufficient but not
necessary conditions on $(\genm_\eta, \genm_\pi)$ for the existence of a sparse map.
And it could not be otherwise: if $\genm_\eta = \genm_\pi$ then the identity
map---the sparsest possible map---would be a valid coupling.

We also note that Theorem \ref{thm:sparsityRosenblatt} does not
provide the exact sparsity patterns of the triangular transport maps;
instead, \eqref{eq:superSetInverse} and \eqref{eq:superSetDirect}
provide \textit{subsets} of $\sparse_T$ and $\sparse_S$.
In other words, the actual
transport maps might be sparser than
predicted by the sets $\widehat{\sparse}_S$ and
$\widehat{\sparse}_T$---but, crucially, they cannot be less sparse.
Thus, we can think of Theorem \ref{thm:sparsityRosenblatt} as
providing \textit{bounds} on the sparsity of triangular transports.
An important fact is that, without additional information on $\genm_\pi$,
these
bounds are sharp.  That is, we can always find a pair of measures
$(\genm_\eta, \genm_\pi)$ satisfying the hypotheses of Theorem
\ref{thm:sparsityRosenblatt} and such that the predicted and actual
sparsity patterns coincide, i.e.,
$\widehat{\sparse}_T = \sparse_T $ or
$\widehat{\sparse}_S = \sparse_S$. 

Part \ref{thm:sparsityRosenblatt:inclusion} of Theorem
\ref{thm:sparsityRosenblatt} shows that the predicted sparsity pattern
of the inverse KR rearrangement is always larger than
or equal to that of the direct transport, i.e.,
$\widehat{\sparse}_T \subset \widehat{\sparse}_S $.  This does not
mean that for every pair of measures $(\genm_\eta, \genm_\pi)$, the inverse
triangular transport is always at least as sparse as the direct
transport; in fact, it is possible to provide simple counterexamples.
However, this result does imply that if we are only given an I-map for
$\genm_\pi$, then parameterizing candidate \textit{inverse} triangular transports
allows the imposition of more sparsity constraints than parameterizing
candidate direct transports. 
In general, sparser transports are easier to represent.
See Figure \ref{fig:sparsityVolHyper} ({\it right}) for a nontrivial
example of sparsity patterns for a stochastic volatility model.

Indeed, \eqref{eq:inclusion} hints at a typical trend: inverse
transport maps tend to be sparser (in many practical cases, \textit{much}
sparser) than their direct counterparts. Intuitively, the sparsity of
a direct transport is %
associated with
marginal independence in $\Zb$, 
whereas the inverse transport inherits sparsity from
the conditional independence structure of $\Zb$.
The latter is a weaker condition
than mutual independence;
for instance, the correlation length of a process modeled by a Markov
random field may be much larger than the typical neighborhood size
\cite{rue2005gaussian,blake2011markov}. 
Thus, given a sparse I-map for the target measure, it can be
computationally advantageous to characterize an inverse transport rather
than a direct one,
because the inverse transport can inherit a
larger sparsity pattern.  Given an inverse triangular transport $S$,
we can then easily evaluate the direct transport $T = S^{-1}$ at any point
$\xb\in\re^n$ by inverting $S$ pointwise, as described in Section~\ref{sec:compTransport}. There is no
need to have an explicit representation of the direct transport as
long as it can be implicitly defined through its inverse.

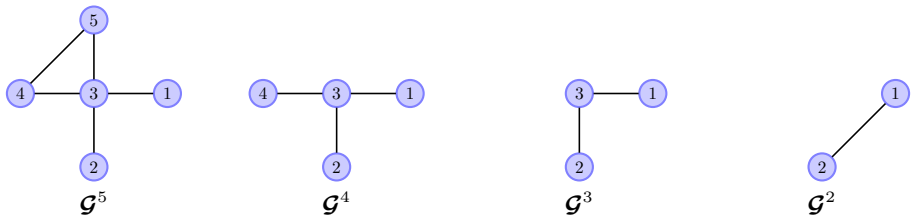
\begin{figure}[h]
\centering
	\newcommand{\wStarGraph}{0.5cm} 
\newcommand{\scaleStarGraph}{.6} 

\subfloat[$\Gcb^5$]{
\begin{tikzpicture}[transform shape, scale = \scaleStarGraph]
	\node[default]  (center) 										{$3$};
 	\node[default]	 (belowCenter)		  		[below=of center]		{$2$} 
 		edge[edgeStyle] (center);
 	\node[default] (aboveCenter)			 		[above=of center]		{$5$}
 		edge[edgeStyle] (center);
 	\node[default] (rightCenter)		  			[right=of center]			{$1$}
 		edge[edgeStyle] (center);
	\node[default] (leftCenter)		  			[left=of center]			{$4$}
 		edge[edgeStyle] (aboveCenter)
 		edge[edgeStyle] (center);		
\end{tikzpicture}
} \hspace{\wStarGraph}
\subfloat[$\Gcb^4$]{
\begin{tikzpicture}[transform shape, scale = \scaleStarGraph]
	\node[default]  (center) 										{$3$};
 	\node[default]	 (belowCenter)		  		[below=of center]		{$2$} 
 		edge[edgeStyle] (center);
 	\node[defaultBlank] (aboveCenter)			 		[above=of center]		{\textcolor{white}{$5$}};
 	\node[default] (rightCenter)		  			[right=of center]			{$1$}
 		edge[edgeStyle] (center);
	\node[default] (leftCenter)		  			[left=of center]			{$4$}
 		edge[edgeStyle] (center);		
\end{tikzpicture}
} \hspace{\wStarGraph}
\subfloat[$\Gcb^3$]{
\begin{tikzpicture}[transform shape, scale = \scaleStarGraph]
	\node[default]  (center) 										{$3$};
 	\node[default]	 (belowCenter)		  		[below=of center]		{$2$} 
 		edge[edgeStyle] (center);
 	\node[defaultBlank] (aboveCenter)			 		[above=of center]		{\textcolor{white}{$5$}};
 	\node[default] (rightCenter)		  			[right=of center]			{$1$}
 		edge[edgeStyle] (center);
	\node[defaultBlank] (leftCenter)		  			
	[left=of center]			{\textcolor{white}{$4$}};
\end{tikzpicture}
} \hspace{\wStarGraph}
\subfloat[$\Gcb^2$]{
\begin{tikzpicture}[transform shape, scale = \scaleStarGraph]
	\node[defaultBlank]  (center) 										{\textcolor{white}{$5$}};
 	\node[default]	 (belowCenter)		  		[below=of center]		{$2$} ;
 	\node[defaultBlank] (aboveCenter)			 		[above=of center]		{\textcolor{white}{$4$}};
 	\node[default] (rightCenter)		  			[right=of center]			{$1$}
 		edge[edgeStyle] (belowCenter)	;	
	\node[defaultBlank] (leftCenter)		  			[left=of center]			{\textcolor{white}{$3$}};
\end{tikzpicture}
}
        \caption[]{Sequence of graphs $(\Gcb^k)$ described in Theorem
          \ref{thm:sparsityRosenblatt} for a
          target measure in $\borelmp(\re^5)$
          with I-map illustrated by the leftmost graph,
          $\Gcb^5$.  Notice that to generate the graph $\Gcb^2$, we
          remove node $3$ from $\Gcb^3$ and turn its neighborhood into
          a clique by adding the edge $(1,2)$.  }
\label{fig:marginalGraph}
\end{figure}
\begin{figure}[h]
\centering
	\newcommand{\wStarGraph}{2cm} 
\newcommand{\scaleStarGraph}{.4} 

\subfloat[${\bf P}_{jk}=\partial_{j}S^k$]{
\begin{tikzpicture}[transform shape, scale = \scaleStarGraph ]
\matrix [draw ,column sep=.15cm, row sep=.15cm, ampersand replacement=\& ]
	{
\node[entriesMatrix] { };\& \node[entriesMatrixZeroDashed]{ };\& \node[entriesMatrixZeroDashed]{ }; \& \node[entriesMatrixZeroDashed] { }; \& \node[entriesMatrixZeroDashed] { };   \\
\node[entriesMatrix] { };\& \node[entriesMatrix]{ };\& \node[entriesMatrixZeroDashed]{ }; \& \node[entriesMatrixZeroDashed] { }; \& \node[entriesMatrixZeroDashed] { };   \\
\node[entriesMatrix] { };\& \node[entriesMatrix]{ };\& \node[entriesMatrix]{ }; \& \node[entriesMatrixZeroDashed] { }; \& \node[entriesMatrixZeroDashed] { };   \\
\node[entriesMatrixZeroDashed] { };\& \node[entriesMatrixZeroDashed]{ };\& \node[entriesMatrix]{ }; \& \node[entriesMatrix] { }; \& \node[entriesMatrixZeroDashed] { };   \\
\node[entriesMatrixZeroDashed] { };  \& \node[entriesMatrixZeroDashed]{ };     \& \node[entriesMatrix]{ }; \& \node[entriesMatrix] { }; \& \node[entriesMatrix] { };    \\
	};
\end{tikzpicture}
}
\hspace{\wStarGraph}
\subfloat[${\bf P}_{jk}=\partial_{j}T^k$]{
\begin{tikzpicture}[transform shape, scale = \scaleStarGraph ]
\matrix [draw ,column sep=.15cm, row sep=.15cm, ampersand replacement=\& ]
	{
\node[entriesMatrix] { };\& \node[entriesMatrixZeroDashed]{ };\& \node[entriesMatrixZeroDashed]{ }; \& \node[entriesMatrixZeroDashed] { }; \& \node[entriesMatrixZeroDashed] { };   \\
\node[entriesMatrix] { };\& \node[entriesMatrix]{ };\& \node[entriesMatrixZeroDashed]{ }; \& \node[entriesMatrixZeroDashed] { }; \& \node[entriesMatrixZeroDashed] { };   \\
\node[entriesMatrix] { };\& \node[entriesMatrix]{ };\& \node[entriesMatrix]{ }; \& \node[entriesMatrixZeroDashed] { }; \& \node[entriesMatrixZeroDashed] { };   \\
\node[entriesMatrix] { };\& \node[entriesMatrix]{ };\& \node[entriesMatrix]{ }; \& \node[entriesMatrix] { }; \& \node[entriesMatrixZeroDashed] { };   \\
\node[entriesMatrix] { };  \& \node[entriesMatrix]{ };     \& \node[entriesMatrix]{ }; \& \node[entriesMatrix] { }; \& \node[entriesMatrix] { };    \\
	};
\end{tikzpicture}
} 
\caption[]{Sparsity patterns predicted by Theorem \ref{thm:sparsityRosenblatt}
for the target measure analyzed in Figure \ref{fig:marginalGraph}. We
represent the sparsity patterns using a symbolic matrix notation: the
$(j,k)$-th entry of the matrix is {\it not} colored if
the $k$th component of the map ($S$ or $T$) does not depend on the $j$th input variable, or,
equivalently, if $(j,k)\in \widehat{\sparse}_S$ (resp. $\widehat{\sparse}_T$)
\eqref{eq:superSetInverse}. (Since we are considering lower triangular
transports,
 all entries $j>k$ are uncolored. Note also that $S^k$ and
$T^k$ are always functions of their $k$th input by strict monotonicity of
the map.) The predicted sparsity pattern for the direct
transport in this example is $\widehat{\sparse}_T = \emptyset$.
}
\label{fig:sparsityPatt}
\end{figure}
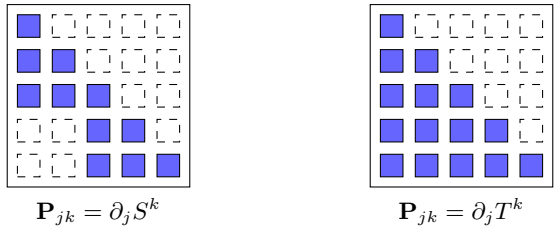

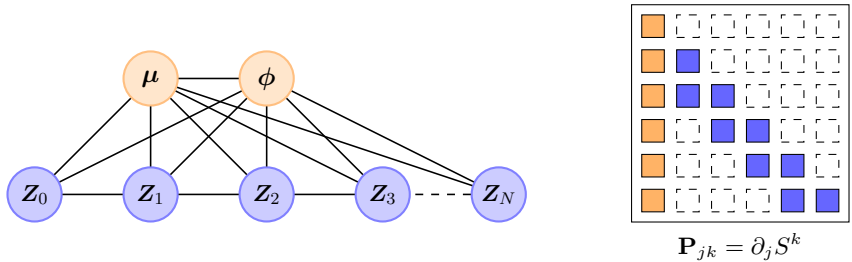
\begin{figure}[h]
  \centering
  \newcommand{\wStarGraph}{1cm} 
\newcommand{\scaleStarGraph}{.8} 
\newcommand{\minsize}{9mm} 
\newcommand{\varState}{\Zb} 

\subfloat{
\begin{tikzpicture}[transform shape, scale = \scaleStarGraph]
	\node[default,minimum size=\minsize]  (nd1) 									{$\varState_0$};
 	\node[default,minimum size=\minsize]	 (nd2)		  		[right=of nd1]		{$\varState_1$} 
 		edge[edgeStyle] (nd1);
 	\node[default,minimum size=\minsize]	 (nd3)		  		[right=of nd2]		{$\varState_2$} 
 		edge[edgeStyle] (nd2);
 	\node[default,minimum size=\minsize]	 (nd4)		  		[right=of nd3]		{$\varState_3$} 
 		edge[edgeStyle] (nd3); 		 		
 	\node[default,minimum size=\minsize]	 (ndT)		  		[right=of nd4]		{$\varState_N$} 
 		edge[edgeStyle,dashed] (nd4); 	 		
 	\node[defaultOrange,minimum size=\minsize]	 (above2)		  		[above=of nd2]	{\large $\mub$} 
 		edge[edgeStyle] (nd1)
 		edge[edgeStyle] (nd2)
 		edge[edgeStyle] (nd3)  	
 		edge[edgeStyle] (nd4)
 		edge[edgeStyle] (ndT);
 	\node[defaultOrange,minimum size=\minsize]	 (above3) [above=of nd3] {\large $\phib$} 
 		edge[edgeStyle] (nd1)
 		edge[edgeStyle] (nd2)
 		edge[edgeStyle] (nd3)  	
 		edge[edgeStyle] (nd4)
 		edge[edgeStyle] (ndT)
 		edge[edgeStyle] (above2); 
\end{tikzpicture}
}  
\hspace{\wStarGraph}
\subfloat[${\bf P}_{jk}=\partial_{j}S^k$]{
\begin{tikzpicture}[transform shape, scale = \scaleStarGraph ]
\matrix [draw ,column sep=.15cm, row sep=.15cm, ampersand replacement=\& ]
	{
\node[entriesMatrixOrange] { };\& \node[entriesMatrixZeroDashed]{ };\& \node[entriesMatrixZeroDashed]{ }; \& \node[entriesMatrixZeroDashed] { }; \& \node[entriesMatrixZeroDashed]{}; \& \node[entriesMatrixZeroDashed] { };   \\
\node[entriesMatrixOrange] { };\& \node[entriesMatrix]{ };\& \node[entriesMatrixZeroDashed]{ }; \& \node[entriesMatrixZeroDashed] { }; \& \node[entriesMatrixZeroDashed]{}; \& \node[entriesMatrixZeroDashed] { };   \\
\node[entriesMatrixOrange] { };\& \node[entriesMatrix]{ };\& \node[entriesMatrix]{ }; \& \node[entriesMatrixZeroDashed] { }; \& \node[entriesMatrixZeroDashed]{}; \& \node[entriesMatrixZeroDashed] { };   \\
\node[entriesMatrixOrange] { };\& \node[entriesMatrixZeroDashed]{ };\& \node[entriesMatrix]{ }; \& \node[entriesMatrix] { }; \& \node[entriesMatrixZeroDashed]{}; \& \node[entriesMatrixZeroDashed] { };   \\
\node[entriesMatrixOrange] { };  \& \node[entriesMatrixZeroDashed]{ };     \& \node[entriesMatrixZeroDashed]{ }; \& \node[entriesMatrix] { }; \& \node[entriesMatrix]{}; \& \node[entriesMatrixZeroDashed] { };    \\
\node[entriesMatrixOrange] { };  \& \node[entriesMatrixZeroDashed]{ };     \& \node[entriesMatrixZeroDashed]{ }; \& \node[entriesMatrixZeroDashed] { }; \& \node[entriesMatrix]{}; \& \node[entriesMatrix] { };    \\
	};
\end{tikzpicture}
}	
  \caption[]{ 
  	({\it left})
  	Markov network for a stochastic volatility model in \cite{kim1998stochastic}. 
  	Blue nodes represent the discrete-time latent log-volatility process $(\Zb_k)_{k=0}^N$, which obeys a simple autoregressive model 
with
  	hyperparameters $\mub,\phib$. 
  	The graph above is a minimal I-map for the posterior density described in 
  	Section \ref{sec:numerics},
        $\pi_{\mub, \phib, \Zb_{0:N} \vert \yb_{0:N}}$, where $\yb_{0:N}$ are some (fixed) observations.
  	({\it right}) The predicted sparsity pattern
    $\widehat{\sparse}_S$ (only the top
    $6 \times 6$ block is shown) for the inverse transport corresponding to
    the model on the left:
	the first
    column/row of the matrix refer jointly to all of the hyperparameters. 
    Each component $S^k$ of the inverse transport
    can depend at most on four input variables, namely
    $\mub,\phib,\Zb_{k-1},\Zb_k$, regardless of the overall
    dimension $N$ of the problem.
    In order to
    apply the results of Theorem \ref{thm:sparsityRosenblatt}, we must
    select an ordering of the input variables; here, we used the
    ordering $(\mub,\phib,\Zb_0,\ldots,\Zb_N)$. 
    Optimal orderings are further discussed in
    Section~\ref{sec_order}. 
  }
\label{fig:sparsityVolHyper}
\end{figure}

\subsection{Connection to Gaussian Markov random fields}
\label{sec_gmrf}

The reader familiar with Gaussian Markov random fields (GMRFs), might
see links between the preceding results and widespread approaches to the
modeling of Gaussian fields. 
In this section, we clarify the extent of these connections.

Many applications (e.g., image analysis, spatial statistics, time
series) involve modeling by means of high-dimensional Gaussian
fields.  Dealing with large and dense covariances, however, is often
impractical; both storage and sampling of the Gaussian field are
problematic.  The usual workaround is to replace or approximate the
Gaussian field with a {\it sparse} GMRF---i.e., a Gaussian Markov
network that enforces locality in the probabilistic interactions among
the underlying random variables.  The minimal I-map for the GMRF is
thus sparse, and so is the precision matrix $\Lambda$ 
of the field \cite{rue2005gaussian}.  The covariance matrix is still
in general dense, but dealing with the sparse precision matrix is much
easier.  If $L L^\top$ is a (sparse) Cholesky decomposition of
$\Lambda$, 
then $L^\top$ represents a linear triangular
transport that pushes forward 
the joint distribution of the GMRF,
$\genm_\pi = \Gauss(0,\Lambda^{-1})$, to
a standard normal,
$\genm_\eta = \Gauss(0,{\bf I})$.
The key point is that for many Markov
structures of interest, the Cholesky factor inherits sparsity from the
underlying graph, so that sampling from $\genm_\pi$ can be achieved at low
cost as follows: if $\Xb$ is a sample from $\genm_\eta$, then we can obtain a sample
$\Zb$ from $\genm_\pi$ simply by solving the sparse triangular linear system
$L^\top\,\Zb = \Xb$.  There is no need to explicitly represent or
store the dense factor $L^{-\top}$, since we can implicitly represent
its action by inverting a sparse triangular function.

Now the connection with Section \ref{sec:sparsityBounds} is clear: 
$L^\top$ is an inverse triangular
transport,\footnote{Actually, this transport is upper rather than
lower triangular. This distinction plays no role in the following
discussion, and the fact that a KR rearrangement is a
lower triangular function is merely a matter of convention.}
while $L^{-\top}$ is a direct one.
Moreover, solving
a triangular linear system is just a particular instance of inverting
a nonlinear triangular function by performing a sequence of
one-dimensional root-findings.
Thus the developments of the previous section, which consider
arbitrary \textit{nonlinear} maps, are a natural generalization---to the
{\it non-Gaussian} case---of modeling and sampling techniques for
high-dimensional GMRFs \cite{rue2005gaussian}.

\subsection{Ordering of triangular maps}
\label{sec_order}
The results of Theorem \ref{thm:sparsityRosenblatt} suggest that the
sparsity of a triangular transport map depends on the ordering of the
input variables.  See Figure \ref{fig:example_badOrder} for a simple
illustration. Indeed, the triangular transport itself
depends anisotropically on the input variables and requires the
definition of a proper ordering.  A natural approach is then to seek
the ordering that promotes the {\it sparsest} transport map possible.
Consider a pair of measures 
$(\genm_\eta, \genm_\pi)$ 
that satisfies the hypotheses of Theorem \ref{thm:sparsityRosenblatt}.
We associate an ordering of the input variables with
a permutation $\sigma$ on $\Nbb_n=\{1,\ldots,n\}$,
and define the {\it reordered} target measure $\genm_{\pi}^\sigma$ 
as the 
pushforward of $\genm_\pi$ by 
the matrix $Q^\sigma$
that represents the permutation $\sigma$.
In particular, $(Q^\sigma)_{ij}=(\eb_{\sigma(i)})_j$, where
$\eb_i$ is the $i$th 
standard basis vector on $\re^n$.
Moreover, if $\Gcb$ is an I-map for $\genm_\pi$, then
we denote an I-map for $\genm_\pi^\sigma$ by $\Gcb^\sigma$. 
Notice that $\Gcb^\sigma$ can be derived from $\Gcb$ simply 
by relabeling its nodes according to the 
permutation $\sigma$. 
Then we can cast a variational problem for the
{\it best} ordering $\sigma^\ast$ as:
\begin{eqnarray}  \label{bestOrdering}
	\sigma^* \in \, & {\rm arg \, max_\sigma} \quad    &  |  \sparse_S |   \\
	& {\rm s.t.} 		 \quad    & S\push \, \genm_\pi^\sigma = \genm_\eta      \nonumber  \\
	&  				    & \sigma \in \permset(\Nbb_n),   \nonumber 
\end{eqnarray}
where $S$ is the KR rearrangement that pushes forward
the reordered target $\genm_\pi^\sigma$ to $\genm_\eta$ and $\permset(\Nbb_n)$ is
the set of permutations of $\mathbb{N}_n$. 
The goal is to maximize the
cardinality of the sparsity pattern of the inverse map,
$| \sparse_S |$. 
We
restrict our attention to the sparsity of the inverse transport, since
we know from Section \ref{sec:sparsityBounds} that the direct transport tends
to be dense, even for the most trivial Markov structures.

Ideally, we would like to determine a good ordering for the map
\textit{before} computing the actual transport, and to use the
resulting information about the sparsity pattern to simplify the
optimization problem for $S$.
However, evaluating the objective function of \eqref{bestOrdering}
requires computing a different inverse transport for each permutation
$\sigma$.  One possible way to relax \eqref{bestOrdering} is to
replace $\sparse_S$ with the predicted sparsity pattern
$\widehat{\sparse}_S$ introduced in \eqref{eq:superSetInverse}.  The
advantage of this approach is that the objective function of the
relaxed problem can now be evaluated in closed form without computing
any transport map, but rather by performing the simple sequence of graph
operations on $\Gcb^\sigma$ %
described by Theorem
\ref{thm:sparsityRosenblatt}.  The caveat is that, in general,
$\widehat{\sparse}_S \subset \sparse_S$, and thus maximizing
$| \widehat{\sparse}_S |$ amounts to seeking the tightest lower bound on
the sparsity pattern of the inverse transport.
From the definition of $\widehat{\sparse}_S$, it follows that the best
ordering $\sigma^*$ for the \textit{relaxed} problem is one that introduces
the fewest edges in the construction of the 
marginal graphs $\Gcb^n,\ldots,\Gcb^1$, 
whenever $\Gcb^n=\Gcb^{\sigma^*}$.
Thus, for a given I-map $\Gcb$, we
denote by $\fillin (\sigma;\Gcb)$ the ${\it fill}$-${\it in}$ produced 
by the ordering $\sigma$.
That is, $\fillin (\sigma;\Gcb)$ is a set containing
all the edges introduced in the construction of the 
 marginal graphs $(\Gcb^k)$ from $\Gcb^\sigma$.
A computationally feasible
relaxation of \eqref{bestOrdering} is then given by:
\begin{eqnarray}  \label{bestOrderingGraph}
	\sigma^* \in \, & {\rm arg \, min_\sigma} \quad    &  | \fillin (\sigma;\Gcb) |   \\
	& {\rm s.t.} 		 \quad    
	& \sigma \in \permset(\Nbb_n)\, .  \nonumber 
\end{eqnarray} 
\eqref{bestOrderingGraph} is a standard problem in graph theory; it
arises in a variety of practical settings, including (most relatedly)
finding the best elimination ordering for variable elimination in
graphical models \cite{koller2009probabilistic}, or finding the
permutation that minimizes the fill-in of the Cholesky factor of a
positive definite matrix \cite{george1989evolution,saad2003iterative}.
From an algorithmic point of view, \eqref{bestOrderingGraph} is
NP-complete \cite{yannakakis1981computing}.  This should not be
surprising, as best--ordering problems are typically combinatorial in
nature.  Nevertheless, given its widespread applicability, a host of
effective polynomial-time heuristics for \eqref{bestOrderingGraph}
have been developed in past years (e.g., min-fill or
weighted-min-fill \cite{koller2009probabilistic}).  Most importantly,
\eqref{bestOrderingGraph} can be solved without 
ever 
touching the target measure
(assuming, of course, that an I-map $\Gcb$ for $\genm_\pi$ is
known).
As a result, the cost of finding a good ordering is often negligible
compared to the cost of characterizing a nonlinear transport map via optimization.

\begin{figure}[]
\centering
	\newcommand{\wStarGraph}{0.4cm} 
\newcommand{\wStarGraphtwo}{1cm} 
\newcommand{\scaleStarGraph}{.6} 
\subfloat[$\Gcb$]{
\begin{tikzpicture}[transform shape, scale = \scaleStarGraph]
	\node[default]  (center) 										{$3$};
 	\node[default]	 (belowCenter)		  		[below=of center]		{$2$} 
 		edge[edgeStyle] (center);
 	\node[default] (aboveCenter)			 		[above=of center]		{$5$}
 		edge[edgeStyle] (center);
 	\node[default] (rightCenter)		  			[right=of center]			{$1$}
 		edge[edgeStyle] (center);
	\node[default] (leftCenter)		  			[left=of center]			{$4$}
 		edge[edgeStyle] (aboveCenter)
 		edge[edgeStyle] (center);		
\end{tikzpicture}
} \hspace{\wStarGraph}
\subfloat[${\bf P}_{ij}=\partial_{i}S^j$]{
\begin{tikzpicture}[transform shape, scale = \scaleStarGraph ]
\matrix [draw ,column sep=.15cm, row sep=.15cm, ampersand replacement=\& ]
	{
\node[entriesMatrix] { };\& \node[entriesMatrixZeroDashed]{ };\& \node[entriesMatrixZeroDashed]{ }; \& \node[entriesMatrixZeroDashed] { }; \& \node[entriesMatrixZeroDashed] { };   \\
\node[entriesMatrix] { };\& \node[entriesMatrix]{ };\& \node[entriesMatrixZeroDashed]{ }; \& \node[entriesMatrixZeroDashed] { }; \& \node[entriesMatrixZeroDashed] { };   \\
\node[entriesMatrix] { };\& \node[entriesMatrix]{ };\& \node[entriesMatrix]{ }; \& \node[entriesMatrixZeroDashed] { }; \& \node[entriesMatrixZeroDashed] { };   \\
\node[entriesMatrixZeroDashed] { };\& \node[entriesMatrixZeroDashed]{ };\& \node[entriesMatrix]{ }; \& \node[entriesMatrix] { }; \& \node[entriesMatrixZeroDashed] { };   \\
\node[entriesMatrixZeroDashed] { };  \& \node[entriesMatrixZeroDashed]{ };     \& \node[entriesMatrix]{ }; \& \node[entriesMatrix] { }; \& \node[entriesMatrix] { };    \\
	};
\end{tikzpicture}
}
\hspace{\wStarGraphtwo}
\subfloat[$\Gcb'$]{
\begin{tikzpicture}[transform shape, scale = \scaleStarGraph]
	\node[default]  (center) 										{$5$};
 	\node[default]	 (belowCenter)		  		[below=of center]		{$2$} 
 		edge[edgeStyle] (center);
 	\node[default] (aboveCenter)			 		[above=of center]		{$3$}
 		edge[edgeStyle] (center);
 	\node[default] (rightCenter)		  			[right=of center]			{$1$}
 		edge[edgeStyle] (center);
	\node[default] (leftCenter)		  			[left=of center]			{$4$}
 		edge[edgeStyle] (aboveCenter)
 		edge[edgeStyle] (center);		
\end{tikzpicture}
} \hspace{\wStarGraph}
\subfloat[${\bf P}_{ij}'=\partial_{i}S^j$]{
\begin{tikzpicture}[transform shape, scale = \scaleStarGraph ]
\matrix [draw ,column sep=.15cm, row sep=.15cm, ampersand replacement=\& ]
	{
\node[entriesMatrix] { };\& \node[entriesMatrixZeroDashed]{ };\& \node[entriesMatrixZeroDashed]{ }; \& \node[entriesMatrixZeroDashed] { }; \& \node[entriesMatrixZeroDashed] { };   \\
\node[entriesMatrix] { };\& \node[entriesMatrix]{ };\& \node[entriesMatrixZeroDashed]{ }; \& \node[entriesMatrixZeroDashed] { }; \& \node[entriesMatrixZeroDashed] { };   \\
\node[entriesMatrix] { };\& \node[entriesMatrix]{ };\& \node[entriesMatrix]{ }; \& \node[entriesMatrixZeroDashed] { }; \& \node[entriesMatrixZeroDashed] { };   \\
\node[entriesMatrix] { };\& \node[entriesMatrix]{ };\& \node[entriesMatrix]{ }; \& \node[entriesMatrix] { }; \& \node[entriesMatrixZeroDashed] { };   \\
\node[entriesMatrix] { };  \& \node[entriesMatrix]{ };     \& \node[entriesMatrix]{ }; \& \node[entriesMatrix] { }; \& \node[entriesMatrix] { };    \\
	};
\end{tikzpicture}
} 
\caption[]{
Illustration of how a simple re-ordering of the input variables can
change the (predicted) sparsity pattern of the inverse map.
On the left, $\Gcb$ represents an I-map for the  
target measure considered in Figure \ref{fig:marginalGraph}, with
ordering $(Z_1,Z_2,Z_3,Z_4,Z_5)$,
together with its sparsity pattern $\widehat{\sparse}_S$. (See Figure
\ref{fig:sparsityPatt} for details on the ``matrix'' representation of
sparsity patterns.) On the right, $\Gcb'$ is an I-map for the same target measure but with
the ordering $(Z_1,Z_2,Z_5,Z_4,Z_3)$. 
The corresponding sparsity pattern $\widehat{\sparse}_{S'}$ is now the empty set.
}
\label{fig:example_badOrder}
\end{figure}
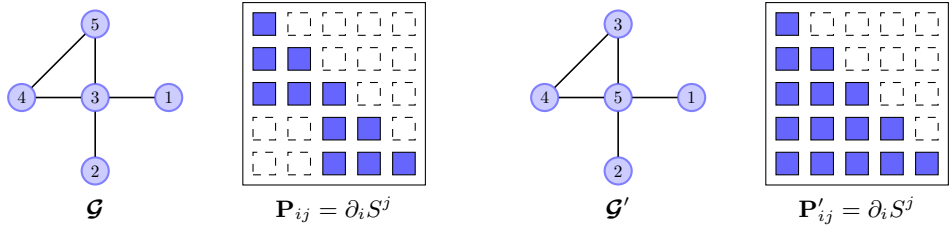

\section{Decomposability of transport maps}
\label{sec:decomp}

Thus far, we have investigated the sparsity of triangular transport maps and
found that inverse transports tend to inherit sparsity from the
underlying Markov structure of the target measure.  Though direct
triangular transports also inherit some sparsity according to
Theorem~\ref{thm:sparsityRosenblatt}, they tend to be more dense.

This section shows that direct transports enjoy a different form of
low-dimensional structure: \textit{decomposability}.
A decomposable transport map is a function that can be written as the
composition of a finite number of low-dimensional maps, e.g., 
$T=T_1\circ \cdots \circ T_\ell$ for some integer $\ell \ge 2$.
We use a very specific notion of low-dimensional map, as follows.
\begin{definition}[Low-dimensional map with respect to a set] \label{def:lowdim}
A map $M : \re^n \ra \re^n$ is low-dimensional with respect to a
nonempty set $\Cc \subset \Vc \simeq \mathbb{N}_n$ if
\begin{enumerate} 
	\item  $M^k(\xb)=x_k$ for $k \in \Cc$
	\smallskip
	\item $\partial_j M^k = 0$ for $j\in \Cc$ and $k \in \Vc \setminus \Cc$.
\end{enumerate}
The effective dimension  of $M$ is the minimum 
cardinality $|\Vc \setminus \Cc|$ over all sets $\Cc$ 
with respect to %
which $M$
is low-dimensional.
\end{definition}
In particular,  
up to a permutation of its components, we can rewrite $M$ as:
\begin{equation}   
M( \xb ) = 
\left[\begin{array}{l}
M^{ \bar{\Cc}}( \xb_{\bar{\Cc}} )\\
\xb_{\Cc} 
\end{array}\right],
\end{equation}
where $\bar{\Cc} = \Vc \setminus \Cc$ denotes the complement of
$\Cc$ in $\Vc$, and where for any map $M$ and set $\Ac=\{a_1,\ldots,a_k\}$, $M^{\Ac}$ denotes the multivariate
function $\xb \mapsto (M^{a_1}(\xb),\ldots,M^{a_k}(\xb))$ obtained by stacking together
the components of $M$ with index in $\Ac$.
Thus $M$ is the trivial embedding of a $|\bar{\Cc}|$-dimensional function  
into the identity map and  has
{\it effective dimension} bounded by $|\bar{\Cc}|<n$. 
It is not surprising, then, that a decomposable transport $T=T_1\circ \cdots \circ T_\ell$
should be easier to represent than an ordinary map.
A perhaps less intuitive feature, however, 
is that the computation of a high-dimensional decomposable
transport can
be broken down into multiple simpler
steps, each associated with the computation of a low-dimensional map
$T_j$ that accounts only for local features of the target measure. %
The forthcoming analysis will consider \textit{general}, and hence
possibly non-triangular, transports. Thus its scope is much broader
than that of Section \ref{sec:sparse}, where we only focused on the %
sparsity of triangular transports.
Yet, we will show that  triangular maps
are the building block of decomposable transports.
The cornerstone of our analysis is
Theorem \ref{thm:decompTrans}, which characterizes the existence and
structure of decomposable transports given only the Markov structure
of the underlying target measure. 

Our discussion will proceed in two stages: first, we show how to
identify direct transports that decompose into two maps, i.e.,
$T=T_1\circ T_2$, and then we explain how to apply this result
recursively to obtain a general decomposition of the form
$T=T_1\circ \cdots \circ T_\ell$.

\subsection{Preliminary notions}
\label{sec:prelnotions}
Before addressing the decomposability of transport maps, we need to
introduce two useful concepts: proper graph decompositions and generalized
triangular functions. The decomposition of a graph is a standard
notion \cite{lauritzen1996graphical}.
\begin{definition}[Proper graph decomposition] \label{def:graphDec}
Given a graph $\Gcb=(\Vc,\Ec)$, a triple 
$(\Aset,\Sset,\Bset)$ of disjoint  subsets of the vertex set $\Vc$ forms a 
proper decomposition of $\Gcb$ if (1)  $\Vc = \Aset \cup \Sset \cup \Bset$, 
(2) $\Aset$ and $\Bset$ are nonempty,
(3) $\Sset$ separates $\Aset$ from $\Bset$, and (4) $\Sset$ is a clique. %
\end{definition}
See Figure
\ref{fig:graphSparsification} (\textit{top left}) for an example of
a decomposition.
Clearly, not every graph admits a proper decomposition; for
instance, a fully connected graph does not have a separator set for %
nonempty $\Aset$ and $\Bset$.  The idea we will
pursue here is that 
graph decompositions
lead to the existence of
decomposable transports.
The notion of a generalized triangular function is perhaps less
standard, but still relatively straightforward:
\begin{definition}[Generalized triangular function] \label{def:genTri} 
  A
  function $T:\re^n\ra\re^n$ is said to be generalized triangular, or
  simply $\sigma$-triangular, if there exists a permutation $\sigma$
  of $\mathbb{N}_n$ such that the $\sigma(k)$th
  component of $T$ depends only on 
  the variables
  $x_{\sigma(1)},\ldots,x_{\sigma(k)}$,
  i.e., $T^{\sigma(k)}(\xb)=T^{\sigma(k)}(x_{\sigma(1)},\ldots,x_{\sigma(k)})$ for all
  $\xb=(x_1,\ldots,x_n)$ and for all $k=1,\ldots,n$.
\end{definition}
We can think of a generalized triangular function as a map that is
lower triangular up to a permutation. 
In particular, if $\sigma$ is the identity on $\mathbb{N}_n$, then a $\sigma$-triangular function
is simply a lower triangular map (see Section \ref{sec:compTransport}).
To represent the permutation $\sigma$,
we use the notation $\sigma(\{ i_1 ,\ldots , i_k \})=\{ \sigma(i_1) , \ldots , \sigma(i_k) \}$
to denote an ordered set that collects the action of the permutation
on the %
elements $(i_j)$. 
For example, if $T:\re^4\ra\re^4$ is a
$\sigma$-triangular map with $\sigma$ defined as $\sigma( \mathbb{N}_4 )=\{ 1,4,2,3 \} $,
then
$T$ will be of the form:
\begin{equation}  \label{eq:genTriMap}
T( \xb ) = \left[\begin{array}{l}
T^1(x_1)\\ 
T^2(x_1,x_4,x_2)\\ 
T^3(x_1,x_4, x_2, x_3) \\ 
T^4(x_1,x_4)
\end{array}\right]%
\end{equation}
for some collection $(T^k)$. 
We regard each component $T^{\sigma(k)}$ as a map %
$\re^k \ra \re$.
We say that a $\sigma$-triangular function 
$T$ is monotone increasing if 
each component $T^k$ is a monotone increasing
function of the input $x_k$.
Moreover, for any 
$\genm_{\eta}, \genm_{\pi} \in \borelmp(\re^n)$
and for any permutation $\sigma$ of $\mathbb{N}_n$, there
exists a 
($\genm_\eta$-unique) monotone increasing $\sigma$-triangular
map---which we call a
$\sigma$-generalized
KR rearrangement---that
pushes forward 
$\genm_{\eta}$ to
$\genm_{\pi}$.
We give a 
constructive
definition for a generalized
KR rearrangement in Appendix \ref{sec:genKR}.
A key
property of a $\sigma$-generalized KR rearrangement
is that it allows different
sparsity patterns to be engineered, depending on $\sigma$, in
a map that is
otherwise fully general---in the sense of being able to couple
arbitrary 
measures in $\borelmp(\re^n)$.
This feature will be essential to characterizing decomposable transport maps.

\subsection{Decomposition and graph sparsification}
\label{sec:decompMainResults}
We now characterize transports that decompose into a pair of
low-dimensional maps, as described in the following theorem.
We formulate the theorem for a generic target measure
$\genm_i$.
Later we will apply the theorem recursively
to a sequence $(\genm_i)$ of different targets.

\begin{thm}[Decomposition of transport maps] \label{thm:decompTrans}
Let $\Xb \sim \genm_\eta$, $\Zb^i \sim \genm_i$, with
$\genm_\eta,\genm_i \in \borelmp(\re^n)$ and
$\genm_\eta$ tensor product measure. %
Denote by $\eta,\pi_i$ a pair of nonvanishing densities for 
$\genm_\eta$ and $\genm_\pi$, respectively, and
assume that
$\genm_i$ factorizes according to a graph 
$\Gcb^i$, which admits a proper decomposition
$(\Aset,\Sset,\Bset)$.
Then the following hold:
\begin{enumerate}
	\item  
		\label{thm:decompTrans_partFactorizationPi}
		There exists a factorization of $\pi_i$ of the form
		\begin{equation} \label{eq:formFactoriz}
		\pi_i(\zb)= 
		\frac{1}{\mathfrak{c}} \,
		\psi_{\Aset \cup \Sset}(\zb_{\Aset \cup \Sset}) \, 
		\psi_{\Sset \cup \Bset }(\zb_{ \Sset \cup \Bset }),
		\end{equation}
		where 
		$\psi_{\Aset \cup \Sset}$ is strictly positive and integrable, 
		with $\mathfrak{c} = \int \psi_{ \Aset \cup \Sset}$.
	\smallskip
		
	\item 
	\label{thm:decompTrans_partT}
	For any factorization \eqref{eq:formFactoriz}	 and for any permutation
		$\sigma$ of $\mathbb{N}_n$ with 
			\begin{equation} \label{eq:formPerm}
			\sigma(k)\in 
			\begin{cases} 
			\Sset &\mbox{if }  k = 1,\ldots,|\Sset|  \\ 
			\Aset & \mbox{if } k = |\Sset|+1,\ldots, |\Aset\cup \Sset| \\
			\Bset & \mbox{otherwise},
			\end{cases}
			\end{equation} 
	there exists a nonempty family, $\decset_i$,
	of decomposable transport maps $T = \lmap_i \circ \rmap$ parameterized
	by $\rmap \in \mathfrak{R}_i$ such that each $T\in \decset_i$
	pushes forward $\genm_\eta$ to $\genm_i$ and where: %
		\begin{enumerate}
			\item 
			\label{thm:decompTrans_partLeftMap}
			$\lmap_i$ is a $\sigma$-generalized KR rearrangement 
			that pushes forward $\genm_\eta$ to a measure with density
			$\psi_{\Aset \cup \Sset}(\zb_{\Aset \cup \Sset}) \, 
			\eta_{\Xb_{\Bset}}(\zb_{\Bset})/\mathfrak{c}$
			and
			is low-dimensional with respect to $\Bset$.

			\item 
			\label{thm:decompTrans_partRightMap}
			$\mathfrak{R}_i$ is the set of maps
			$\re^n\ra \re^n$ that are low-dimensional 
			with respect to $\Aset$ and 
			that push forward $\genm_\eta$ to the pullback 
			$\lmap_i\pull \, \genm_i \in  \borelmp(\re^n)$.
			\item 
			\label{thm:decompTrans_partRVimap}
			If 
			$\Zb^{i+1} \sim \lmap_i\pull \, \genm_i$, 
			then
			$\Zb^{i+1}_{\Aset} \orth \Zb^{i+1}_{\Sset \cup \Bset}$ and %
			$\Zb^{i+1}_{\Aset}=\Xb_{\Aset} $ in distribution.
			\item 
			\label{graphSparsification_partImap}
             	  $\lmap_i\pull \,\genm_i$ factorizes according to
             	  a graph $\Gcb^{i+1}$ that can be derived from $\Gcb^i$
                  as follows: 

            		\begin{itemize}[label={--}]
                		\item Remove any edge from $\Gcb^i$ that is
                		incident to any node in $\Aset$.  

                		\item
                		For any maximal clique $\Cc \subset \Sset \cup \Bset$ with
                		nonempty intersection
                		$\Cc \cap \Sset$, 
                		let $j_{\Cc}$ be the maximum integer $j$ such that 
                		$\sigma(j) \in  \Cc \cap \Sset$
                		and turn $\Cc \cup \{ \sigma(1) ,\ldots, \sigma(j_{\Cc})\}$ into a
                		clique.
            		\end{itemize}	 
		\end{enumerate}
\end{enumerate}
\end{thm}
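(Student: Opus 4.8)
The plan is to prove the four claims in the order stated, with a closed-form expression for the pullback density $\lmap_i\pull\pi_i$ as the linchpin of everything after Part~1.

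\emph{Part 1.} Since $\genm_i$ factorizes over $\Gcb^i$ and $(\Aset,\Sset,\Bset)$ is a proper decomposition, no edge of $\Gcb^i$ joins $\Aset$ to $\Bset$, so every maximal clique of $\Gcb^i$ lies in $\Aset\cup\Sset$ or in $\Sset\cup\Bset$; grouping the clique potentials accordingly (placing those supported on $\Sset$ with the first group) already yields a factorization of the form \eqref{eq:formFactoriz}. To secure strict positivity and integrability of the first factor I would instead take $\psi_{\Aset\cup\Sset}$ to be the marginal density $\pi_{i,\Zb_{\Aset\cup\Sset}}$ and $\psi_{\Sset\cup\Bset}$ the conditional $\pi_{i,\Zb_{\Bset}\mid\Zb_{\Sset}}$: the former is positive because $\pi_i$ is nonvanishing and integrates to $\mathfrak{c}=1$, and the latter depends only on $\zb_{\Sset\cup\Bset}$ by the global Markov property $\Zb_{\Aset}\orth\Zb_{\Bset}\mid\Zb_{\Sset}$, which follows from the factorization \cite[Prop.~3.8]{lauritzen1996graphical}.

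\emph{Part 2, core step.} Fix a factorization \eqref{eq:formFactoriz} and an admissible $\sigma$ as in \eqref{eq:formPerm}. I would first check that $\rho_L(\zb):=\psi_{\Aset\cup\Sset}(\zb_{\Aset\cup\Sset})\,\eta_{\Xb_{\Bset}}(\zb_{\Bset})/\mathfrak{c}$ is a fully supported density (it integrates to one by Fubini and the definition of $\mathfrak{c}$, and is positive since $\eta$ is), so the $\sigma$-generalized KR rearrangement $\lmap_i$ pushing $\genm_\eta$ to the associated measure exists by the construction of Appendix~\ref{sec:genKR}; this is the defining property in claim (a). Because $\sigma$ lists the indices of $\Sset$, then $\Aset$, then $\Bset$, the $\sigma$-triangular form gives $\partial_j\lmap_i^k=0$ whenever $j\in\Bset$, $k\notin\Bset$; and since $\genm_\eta$ is a tensor product, the one-dimensional conditional of $\rho_L$ in each $\Bset$-coordinate (given the preceding coordinates in the $\sigma$-order) equals the corresponding coordinate marginal of $\eta$, so the associated one-dimensional transport is the identity and $\lmap_i^k(\xb)=x_k$ for $k\in\Bset$ --- i.e.\ $\lmap_i$ is low-dimensional with respect to $\Bset$. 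Combining the a.e.\ change-of-variables identity (Lemma~\ref{lem:changeVarTriWeak} applied to $\lmap_i$) with $\lmap_i\push\genm_\eta=\genm_L$ yields
\begin{equation*}
 \lmap_i\pull\pi_i(\xb)\,=\,\eta(\xb)\,\frac{\pi_i(\lmap_i(\xb))}{\rho_L(\lmap_i(\xb))}\,=\,\eta(\xb)\,\frac{\psi_{\Sset\cup\Bset}(\lmap_i(\xb)_{\Sset\cup\Bset})}{\eta_{\Xb_{\Bset}}(\lmap_i(\xb)_{\Bset})}\,.
\end{equation*}
Using $\lmap_i(\xb)_{\Bset}=\xb_{\Bset}$, the fact that $\lmap_i(\xb)_{\Sset}$ depends only on $\xb_{\Sset}$ ($\Sset$ comes first in $\sigma$), and the coordinatewise factorization of $\eta$, this collapses to $\lmap_i\pull\pi_i(\xb)=\eta_{\Xb_{\Aset}}(\xb_{\Aset})\,h(\xb_{\Sset\cup\Bset})$ with $h(\xb_{\Sset\cup\Bset}):=\eta_{\Xb_{\Sset}}(\xb_{\Sset})\,\psi_{\Sset\cup\Bset}(\lmap_i(\xb)_{\Sset\cup\Bset})$. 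Positivity of this density gives $\lmap_i\pull\genm_i\in\borelmp(\re^n)$; its product structure gives $\Zb^{i+1}_{\Aset}\orth\Zb^{i+1}_{\Sset\cup\Bset}$; and matching marginals forces $\Zb^{i+1}_{\Aset}\eqdis\Xb_{\Aset}$, which is claim (c).

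\emph{Assembling the family and claim (d).} For (b), $\mathfrak{R}_i$ is nonempty: by (c) and the tensor-product form of $\genm_\eta$, the blocks $\Aset$ and $\Sset\cup\Bset$ are independent under both $\genm_\eta$ and $\lmap_i\pull\genm_i$, so the map acting as the identity on $\Aset$ and as any KR rearrangement from $\genm_{\eta,\Xb_{\Sset\cup\Bset}}$ to the $(\Sset\cup\Bset)$-marginal of $\lmap_i\pull\genm_i$ is a valid coupling and is low-dimensional with respect to $\Aset$. For any $\rmap\in\mathfrak{R}_i$, bijectivity of $\lmap_i$ gives $(\lmap_i\circ\rmap)\push\genm_\eta=\lmap_i\push(\lmap_i\pull\genm_i)=\genm_i$, and since $\Aset$ and $\Bset$ are nonempty both factors have effective dimension below $n$, so $\decset_i:=\{\lmap_i\circ\rmap:\rmap\in\mathfrak{R}_i\}$ is the required nonempty family of genuinely decomposable maps. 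For (d) I would read the Markov structure of $\lmap_i\pull\pi_i$ off the factorization above: $\eta_{\Xb_{\Aset}}(\xb_{\Aset})$ is a product of singleton potentials, so no edge of $\Gcb^{i+1}$ touches $\Aset$; and (for any admissible factorization) $\psi_{\Sset\cup\Bset}$ can be written as a product of potentials over the maximal cliques of $\Gcb^i_{\Sset\cup\Bset}$ times a factor of $\zb_{\Sset}$ alone, because $\pi_i$ factorizes over $\Gcb^i$ and cliques meeting $\Bset$ avoid $\Aset$. Composing with $\lmap_i$: a clique $\Cc\subset\Bset$ is unaffected (since $\lmap_i$ is the identity on $\Bset$), while a clique $\Cc$ with $\Cc\cap\Sset\neq\emptyset$ produces a potential in the variables $\Cc\cup\{\sigma(1),\dots,\sigma(j_\Cc)\}$, because by $\sigma$-triangularity the $\Cc\cap\Sset$ components of $\lmap_i$ depend only on $x_{\sigma(1)},\dots,x_{\sigma(j_\Cc)}$. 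Hence $\lmap_i\pull\pi_i$ factorizes over exactly the graph produced by the stated recipe.

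\emph{Main obstacle.} I expect the delicate step to be claim (d): carefully tracking which inputs survive when a clique potential of $\psi_{\Sset\cup\Bset}$ is composed with the $\Sset$-block of the $\sigma$-triangular map $\lmap_i$ --- in particular justifying the ``prefix up to $\sigma(j_\Cc)$'' description --- and verifying that the resulting I-map depends only on $(\Gcb^i,\sigma)$ and not on the chosen factorization. A secondary nuisance is the a.e.\ bookkeeping forced by the limited regularity of generalized KR rearrangements, which I would handle throughout by invoking the a.e.\ change-of-variables identity of Lemma~\ref{lem:changeVarTriWeak} together with the canonical version of the rearrangement fixed in Appendix~\ref{sec:genKR}.
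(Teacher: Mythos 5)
Your proposal is correct and follows essentially the same route as the paper's own argument: the marginal/conditional choice for the factorization in Part~1, the computation of $\lmap_i\pull\pi_i$ as $\eta\cdot(\psi_{\Sset\cup\Bset}\circ\lmap_i)/\eta_{\Xb_{\Bset}}$ via the a.e.\ change of variables, the scope-tracking of $\psi_{\Cc}\circ\lmap_i^{\Cc}$ under $\sigma$-triangularity to establish Parts~(c) and~(d), and the construction of a block-structured element of $\mathfrak{R}_i$. The only stylistic deviations are that the paper appeals to its Lemma~\ref{lem:productTarget} for the low-dimensionality claims where you re-derive them directly, and it invokes Lauritzen's Proposition~3.16 to express $\psi_{\Sset\cup\Bset}$ as a clique product before composing with $\lmap_i$ --- both choices equivalent to yours in substance.
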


We first look at the theorem for $i=1$ and let 
$\genm_1 \coloneqq \genm_\pi$ and $\Gcb^1 \coloneqq \Gcb$,
where 
$\genm_\pi$ denotes our usual target measure with I-map $\Gcb$ and where $(\Aset ,\Sset, \Bset)$
denotes a decomposition of $\Gcb$.

Among the infinitely many transport maps from $\genm_\eta$ to  $\genm_\pi$, 
Theorem \ref{thm:decompTrans} identifies a family of decomposable ones. 
The existence of these maps relies exclusively 
on the
Markov structure of $\genm_\pi$: we just require $\Gcb$ to 
admit a (proper) decomposition.\footnote{
To obtain a proper decomposition
of $\Gcb$, one is free to add edges to $\Gcb$ 
in order to turn the separator set $\Sset$ into a clique (see Definition \ref{def:graphDec});
$\genm_\pi$  still factorizes according to any less sparse
version of $\Gcb$.
}

Each transport $T \in \decset_1$ pushes forward $\genm_\eta$ to $\genm_\pi$ and is the composition of two low-dimensional maps, i.e., 
$T = \lmap_1 \circ \rmap$ for a fixed $\lmap_1$ defined in 
Theorem \ref{thm:decompTrans}[Part \ref{thm:decompTrans_partLeftMap}]
and for some $\rmap \in \mathfrak{R}_1$.
(We also write $\decset_1 \coloneqq \lmap_1 \circ \mathfrak{R}_1$.\footnote{
The notation here is intuitive: for a given $g:\re^n \ra \re^n$ and for a 
given set 
of functions $\Fc$ from $\re^n$ to $\re^n$, $g \circ \Fc$ denotes the set of
maps that can be written as $g \circ f$ for some $f \in \Fc$. 	
})
The structure of these low-dimensional maps is quite interesting.  
Up to a reordering of their components, 
Theorem \ref{thm:decompTrans}[Parts \ref{thm:decompTrans_partLeftMap} and 
\ref{thm:decompTrans_partRightMap}] show that
$\lmap_1$ and $\rmap$ have an
intuitive complementary form:
\newcommand{\spacinglines}{2pt}
\begin{equation}  \label{eq:sparsityPatternGenTriMaps}
\lmap_1( \xb ) = \left[\begin{array}{l}
\lmap_1^{\Aset}( \xb_{\Sset} , \xb_{\Aset} )\\[\spacinglines] 
\lmap_1^{\Sset}(\xb_{\Sset})\\[\spacinglines] 
\xb_{\Bset} 
\end{array}\right],\qquad %
\rmap( \xb ) = \left[\begin{array}{l}
\xb_{\Aset}\\[\spacinglines] 
\rmap^{\Sset}(\xb_{\Sset},\xb_{\Bset})\\[\spacinglines] 
\rmap^{\Bset}(\xb_{\Sset},\xb_{\Bset}) 
\end{array}\right].
\end{equation}
(If $\Sset = \emptyset$, one can just remove $\lmap_1^{\Sset}$ and
$\rmap^{\Sset}$ from \eqref{eq:sparsityPatternGenTriMaps}, 
and drop the dependence of the remaining components
on $\xb_{\Sset}$.)
In particular, $\lmap_1$ and $\rmap$ have effective dimensions 
bounded by 
$|\Aset \cup \Sset|$ and $|\Sset \cup \Bset|=|\Vc \setminus \Ac|$, respectively (see Definition \ref{def:lowdim}). 
Even though $\lmap_1$ and $\rmap$ are low-dimensional maps, 
their composition is quite dense---in the sense of Section \ref{sec:sparse}---and is
in general nontriangular:
\begin{equation}   
T( \xb ) = (\lmap_1 \circ \rmap)(\xb) = 
\left[\begin{array}{l}
\lmap_1^{\Aset}(\, \rmap^{\Sset}(\xb_{\Sset},\xb_{\Bset}) , \,\xb_{\Aset} )\\[\spacinglines] 
\lmap_1^{\Sset}(\, \rmap^{\Sset}(\xb_{\Sset},\xb_{\Bset})\,)\\[\spacinglines] 
\rmap^{\Bset}(\xb_{\Sset},\xb_{\Bset}) 
\end{array}\right],
\end{equation}
and thus more difficult to represent and to work with.
The key idea of decomposable transports is that they can be represented
implicitly through the composition of their low-dimensional factors, similar to the way that 
direct transports can be represented implicitly
through their sparse inverses (Section \ref{sec:sparse}).

The sparsity patterns of $\lmap_1$ and $\rmap$ in \eqref{eq:sparsityPatternGenTriMaps}
are needed for the theorem to hold.
In particular, $\lmap_1$ must be a $\sigma$-triangular function with $\sigma$ specified by 
\eqref{eq:formPerm}.
Notice that \eqref{eq:formPerm} does not prescribe an exact
permutation, but just a few constraints on a feasible $\sigma$.  
Intuitively,
these constraints say that $\lmap_1$ should be a function whose components
with indices in $\Sset$ depend only on the variables in $\Sset$ 
(whenever $\Sset \neq \emptyset$), and whose
components with indices in $\Aset$ depend only on the variables in
$\Aset \cup \Sset$.  
Thus,
there is usually some freedom in the choice of $\sigma$.  
Different permutations lead to different families of decomposable transports, and
can induce different sparsity patterns in an 
I-map, $\Gcb^2$, for $\lmap_1\pull\,\genm_\pi$ 
(Theorem \ref{thm:decompTrans}[Part \ref{graphSparsification_partImap}]).
Part \ref{graphSparsification_partImap} of the theorem 
shows how to derive a possible I-map $\Gcb^2$---not necessarily
minimal---by
performing a sequence of graph operations on
$\Gcb$.
There are two steps: one that does not depend on $\sigma$ and one that does.
Let us focus first on the former: the idea is to
remove from $\Gcb$ any edge that is incident to any
node in $\Aset$, effectively disconnecting $\Aset$ from the rest of the graph.
That is, if $\Zb^{2} \sim  \lmap_1\pull \genm_\pi$, then, regardless of $\sigma$, 
$\lmap_1$ makes $\Zb_{\Aset}^2$ marginally independent of
$\Zb_{\Sset \cup \Bset}^2$ by 
acting {\it locally} on $\Gcb$.
And not only that: $\lmap_1$ also ensures that the marginals of $\genm_\eta$ and
$\lmap_1\pull \genm_\pi$ agree along $\Aset$ 
(see Theorem \ref{thm:decompTrans}[Part \ref{thm:decompTrans_partRVimap}]).
Thus we should really interpret $\lmap_1$ as the first step towards a progressive
transport of %
$\genm_\eta$ to $\genm_\pi$.
$\lmap_1$ is a local map: it can
depend nontrivially only upon variables in $\xb_{\Aset \cup \Sset}$.
Indeed, in the most general case, $| \Aset \cup \Sset|$
is the minimum effective dimension of a
low-dimensional map  
necessary to decouple $\Aset$ from the rest of
the graph.  
The more edges incident to $\Aset$, the higher-dimensional a
transport is needed.
This type of {\it graph sparsification} requires a peculiar
``block triangular'' structure for $\lmap_1$ as shown by
\eqref{eq:sparsityPatternGenTriMaps}:
any $\sigma$-triangular function with $\sigma$ given by \eqref{eq:formPerm}
achieves this special structure.
The second step of Part
\ref{graphSparsification_partImap} shows that 
if $\Sset \neq \emptyset$, then it might be necessary to add edges
to the subgraph $\Gcb_{\Sset \cup \Bset}$, depending on $\sigma$.\footnote{
This is not always the case. 	
For instance, if $\Sset$ is a   subset of every maximal
clique of $\Gcb$ in $\Sset \cup \Bset$ that has nonempty intersection with $\Sset$, then,
by Theorem \ref{thm:decompTrans}[Part \ref{graphSparsification_partImap}], 
no edges need to be added.
}  
The relevant aspect of $\sigma$ for this discussion is
the definition of the permutation onto the first $|\Sset|$ integers. 
In general, there are  $|\Sset|!$ different permutations that could
induce different sparsity patterns in $\Gcb^2$.
We shall see that permutations that add the fewest edges possible are
of particular relevance.

\subsection{Recursive decompositions}
\label{sec:rec_dec}
The sparsity of $\Gcb^2$ is important because 
it affects the ``complexity'' of the maps in $\mathfrak{R}_1$:
each $R \in \mathfrak{R}_1$ pushes forward 
$\genm_\eta$ to $\lmap_1\pull\,\genm_\pi$.
More specifically, by the previous discussion, we can see how
the role of each $R \in \mathfrak{R}_1$ is
really only that of matching the marginals of $\genm_\eta$ and $\lmap_1\pull\,\genm_\pi$ 
along $\Vc \setminus \Ac$.
A natural question then is whether we can break this matching step into %
simpler tasks, or, in the language of this section, whether
 $\mathfrak{R}_1$ contains transports that are
further decomposable. %
Intuitively, we are seeking a finer-grained representation for some of the 
transports in $\mathfrak{R}_1$.
The following lemma (for $i=1$) provides a positive answer
to this question as long as
$\Vc \setminus \Ac$
is not fully connected in $\Gcb^2$.
From now on, we denote $(\Aset ,\Sset, \Bset)$ by 
$(\Aset_1 ,\Sset_1, \Bset_1)$, since we will be
dealing with a sequence of different graph decompositions.

\begin{lem}[Recursive decompositions] \label{lem:recursiveDecomp}
Let $\genm_\eta , \genm_i,\Gcb^i$ be defined as in the
assumptions of  Theorem \ref{thm:decompTrans}
for a proper decomposition 
$(\Aset_i,\Sset_i,\Bset_i)$ of  $\Gcb^i$, while let
$\Gcb^{i+1}$ and  
$\decset_i =  \lmap_i \circ \mathfrak{R}_i$ %
be the resulting  graph 
(Part \ref{graphSparsification_partImap}) and
family of decomposable transports,\footnote{
Whenever we do not specify a permutation $\sigma_i$ or a 
factorization \eqref{eq:formFactoriz} in the definition of $\lmap_i$, it means
that the claim holds true for any feasible choice of these
parameters.
}
respectively. 
Then there are two possibilities:
\begin{enumerate}
	\item 
	\label{lem:recursiveDecomp_partPos}
	$\Sset_i \cup \Bset_i$ is not a clique in $\Gcb^{i+1}$. 
	In this
	case, it is possible to identify a
	proper decomposition $(\Aset_{i+1},\Sset_{i+1},\Bset_{i+1})$ of $\Gcb^{i+1}$
	for some $\Aset_{i+1}$ that is a strict superset of $\Aset_i$ by
	(possibly) adding edges to $\Gcb^{i+1}$ in order
	to turn $\Sset_{i+1}$ into a clique. 
	Let $\decset_{i+1} = \lmap_{i+1} \circ \mathfrak{R}_{i+1}$ be defined as in 
	Theorem \ref{thm:decompTrans}
	for the pair of measures $\genm_\eta , \genm_{i+1} \coloneqq \lmap_i\pull \,\genm_i$ and \
	$(\Aset_{i+1},\Sset_{i+1},\Bset_{i+1})$. 
		Then the following hold:
		\begin{enumerate}
			\item 
			\label{lem:recursiveDecomp_partInclusion}
			$\mathfrak{R}_i \supset \decset_{i+1}$  
			and   
			$\lmap_i \circ \mathfrak{R}_i \supset 
			\lmap_i \circ \lmap_{i+1} \circ \mathfrak{R}_{i+1}$.

			\item
			\label{lem:recursiveDecomp_lmap}
			$\lmap_{i+1}$ is low-dimensional with respect to 
			$\Aset_{i} \cup \Bset_{i+1}$ and has effective dimension  bounded by
			$|(\Aset_{i+1} \setminus \Aset_{i}) \cup \Sset_{i+1}|$.
			\item 
			\label{lem:recursiveDecomp_rmap}
			Each $\rmap \in 
			\mathfrak{R}_{i+1}$ has effective dimension bounded by
			$|\Vc \setminus \Aset_{i+1}|$.
		\end{enumerate}

	\item
	\label{lem:recursiveDecomp_partNeg}
	$\Sset_i \cup \Bset_i$ is a clique in $\Gcb^{i+1}$.
	In this case, %
	the decomposition of Part \ref{lem:recursiveDecomp_partPos} does not exist.
\end{enumerate}

\end{lem}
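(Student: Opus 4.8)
The plan is to first unpack the structure of $\Gcb^{i+1}$ handed to us by Theorem~\ref{thm:decompTrans}[Part~\ref{graphSparsification_partImap}], then establish the existence claim in Part~\ref{lem:recursiveDecomp_partPos}, then prove its sub-items~\ref{lem:recursiveDecomp_partInclusion}--\ref{lem:recursiveDecomp_rmap}, and finally dispose of Part~\ref{lem:recursiveDecomp_partNeg}. By construction, $\Gcb^{i+1}$ is obtained from $\Gcb^i$ by deleting every edge incident to a node of $\Aset_i$; hence in $\Gcb^{i+1}$ the vertices of $\Aset_i$ are isolated and every surviving edge lies inside $\Sset_i\cup\Bset_i=\Vc\setminus\Aset_i$. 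I will also use repeatedly that, by Theorem~\ref{thm:decompTrans}[Part~\ref{thm:decompTrans_partRVimap}], the $\Aset_i$-marginal of $\genm_{i+1}=\lmap_i\pull\genm_i$ equals that of $\genm_\eta$ and $\Zb^{i+1}_{\Aset_i}\orth\Zb^{i+1}_{\Vc\setminus\Aset_i}$. For Part~\ref{lem:recursiveDecomp_partPos}, assume $\Sset_i\cup\Bset_i$ is not a clique in $\Gcb^{i+1}$. If $\Gcb^{i+1}_{\Sset_i\cup\Bset_i}$ is disconnected I pick one component $\Cc$ and set $\Aset_{i+1}=\Aset_i\cup\Cc$, $\Sset_{i+1}=\emptyset$, $\Bset_{i+1}=\Vc\setminus\Aset_{i+1}$; if it is connected I take two non-adjacent vertices $u,v$ in it, let $\Sset_{i+1}$ be a minimal $u$--$v$ separator of $\Gcb^{i+1}_{\Sset_i\cup\Bset_i}$, let $\Cc$ be the component of $u$ in $\Gcb^{i+1}_{(\Sset_i\cup\Bset_i)\setminus\Sset_{i+1}}$, and set $\Aset_{i+1}=\Aset_i\cup\Cc$, $\Bset_{i+1}=\Vc\setminus(\Aset_{i+1}\cup\Sset_{i+1})$. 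Since $\Aset_i$ is isolated, $\Sset_{i+1}$ separates $\Aset_{i+1}$ from $\Bset_{i+1}$ in $\Gcb^{i+1}$, both blocks are nonempty, and $\Aset_{i+1}\supsetneq\Aset_i$; adding edges inside $\Sset_{i+1}$ to make it a clique preserves separation and leaves $\genm_{i+1}$ factorizing according to the resulting less sparse graph, so Theorem~\ref{thm:decompTrans} applies to $(\genm_\eta,\genm_{i+1})$ with this decomposition and produces $\decset_{i+1}=\lmap_{i+1}\circ\mathfrak{R}_{i+1}$.

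The heart of the proof is Part~\ref{lem:recursiveDecomp_lmap}: strengthening ``$\lmap_{i+1}$ is low-dimensional with respect to $\Bset_{i+1}$'' (which is immediate from Theorem~\ref{thm:decompTrans}[Part~\ref{thm:decompTrans_partLeftMap}] at level $i+1$) to ``low-dimensional with respect to $\Aset_i\cup\Bset_{i+1}$''. The key point is that $\Aset_i$ behaves as an inert block. Indeed, in any factorization~\eqref{eq:formFactoriz} of $\pi_{i+1}$ the factor $\psi_{\Sset_{i+1}\cup\Bset_{i+1}}$ carries no $\zb_{\Aset_i}$-dependence, while $\pi_{i+1}$ has product form across $\Aset_i$ versus its complement with $\Aset_i$-marginal $\eta_{\Xb_{\Aset_i}}$; hence $\psi_{\Aset_{i+1}\cup\Sset_{i+1}}(\zb_{\Aset_{i+1}\cup\Sset_{i+1}})=\eta_{\Xb_{\Aset_i}}(\zb_{\Aset_i})\,h(\zb_{(\Aset_{i+1}\setminus\Aset_i)\cup\Sset_{i+1}})$ for some positive $h$, so the measure onto which $\lmap_{i+1}$ pushes $\genm_\eta$ --- whose density is $\psi_{\Aset_{i+1}\cup\Sset_{i+1}}\,\eta_{\Xb_{\Bset_{i+1}}}/\mathfrak{c}$ --- has $\Aset_i$ as an independent block carrying exactly the $\genm_\eta$-marginal. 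I would then induct along the ordering $\sigma_{i+1}$, using the coordinate-wise conditional-quantile construction of the generalized KR rearrangement (Appendix~\ref{sec:genKR}): for $k\in\Aset_i$ the target conditional equals the target marginal equals the reference marginal, which forces $\lmap_{i+1}^k(\xb)=x_k$; for $k\notin\Aset_i$ the target conditional of the $k$th coordinate does not see the $\Aset_i$-part of its conditioning variables, and by the inductive hypothesis the earlier components of $\lmap_{i+1}$ on which $\lmap_{i+1}^k$ depends are already $\xb_{\Aset_i}$-free, so $\lmap_{i+1}^k$ is $\xb_{\Aset_i}$-free. This yields low-dimensionality with respect to $\Aset_i$; combined with low-dimensionality with respect to the disjoint set $\Bset_{i+1}$ it gives low-dimensionality with respect to $\Aset_i\cup\Bset_{i+1}$, and since $\Vc\setminus(\Aset_i\cup\Bset_{i+1})=(\Aset_{i+1}\setminus\Aset_i)\cup\Sset_{i+1}$ the effective-dimension bound follows.

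The remaining items are bookkeeping. Part~\ref{lem:recursiveDecomp_rmap} is immediate from Theorem~\ref{thm:decompTrans}[Part~\ref{thm:decompTrans_partRightMap}] at level $i+1$: every $\rmap\in\mathfrak{R}_{i+1}$ is low-dimensional with respect to $\Aset_{i+1}$, hence has effective dimension at most $|\Vc\setminus\Aset_{i+1}|$. For Part~\ref{lem:recursiveDecomp_partInclusion} I would first record two elementary closure properties of Definition~\ref{def:lowdim} --- low-dimensionality with respect to a set is inherited by its subsets and is preserved under composition --- both of which follow in a line from the definition. Then any $T'\in\decset_{i+1}$ equals $\lmap_{i+1}\circ\rmap'$ with $\rmap'$ low-dimensional with respect to $\Aset_{i+1}\supset\Aset_i$ and, by Part~\ref{lem:recursiveDecomp_lmap}, $\lmap_{i+1}$ low-dimensional with respect to $\Aset_i\cup\Bset_{i+1}\supset\Aset_i$; hence $T'$ is low-dimensional with respect to $\Aset_i$, and since $T'$ also pushes $\genm_\eta$ forward to $\genm_{i+1}=\lmap_i\pull\genm_i$ (Theorem~\ref{thm:decompTrans}[Part~\ref{thm:decompTrans_partT}] at level $i+1$) it satisfies both defining conditions of $\mathfrak{R}_i$. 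Thus $\decset_{i+1}\subset\mathfrak{R}_i$, and composing on the left with $\lmap_i$ gives $\lmap_i\circ\mathfrak{R}_i\supset\lmap_i\circ\lmap_{i+1}\circ\mathfrak{R}_{i+1}$.

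For Part~\ref{lem:recursiveDecomp_partNeg}, suppose $\Sset_i\cup\Bset_i$ is a clique in $\Gcb^{i+1}$ and, toward a contradiction, that some proper decomposition $(\Aset_{i+1},\Sset_{i+1},\Bset_{i+1})$ of $\Gcb^{i+1}$ (possibly after adding edges inside $\Sset_{i+1}$) has $\Aset_{i+1}\supsetneq\Aset_i$. Choosing $w\in\Aset_{i+1}\setminus\Aset_i$ and $b\in\Bset_{i+1}$, and noting that $b\notin\Aset_i$ because $\Aset_i\subset\Aset_{i+1}$ is disjoint from $\Bset_{i+1}$, both $w$ and $b$ lie in the clique $\Sset_i\cup\Bset_i$ and are therefore adjacent in $\Gcb^{i+1}$; this edge joins $\Aset_{i+1}$ to $\Bset_{i+1}$ while avoiding $\Sset_{i+1}$, contradicting that $\Sset_{i+1}$ separates them. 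I expect Part~\ref{lem:recursiveDecomp_lmap} to be the real obstacle: translating the purely graph-level sparsification carried out at step $i$ into the precise assertion that $\lmap_{i+1}$ fixes, and is independent of, the coordinates indexed by $\Aset_i$. Everything else reduces to a standard separator construction or to composing the conclusions of Theorem~\ref{thm:decompTrans} at level $i+1$ with elementary properties of low-dimensional maps.
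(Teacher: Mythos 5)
Your proof is correct, and at the level of ideas it agrees with the paper's: both amount to (i) showing that a proper decomposition $(\Aset_{i+1},\Sset_{i+1},\Bset_{i+1})$ with $\Aset_{i+1}\supsetneq\Aset_i$ exists precisely when $\Sset_i\cup\Bset_i$ is not a clique in $\Gcb^{i+1}$, and (ii) establishing that $\lmap_{i+1}$ and each $R\in\mathfrak{R}_{i+1}$ are low-dimensional with respect to $\Aset_i$, then composing. Where you diverge from the paper: its construction of $(\Aset_{i+1},\Sset_{i+1},\Bset_{i+1})$ is minimal---pick any non-adjacent pair $\alpha,\beta\in\Sset_i\cup\Bset_i$, set $\Aset_{i+1}=\Aset_i\cup\{\alpha\}$, $\Bset_{i+1}=\{\beta\}$, and let $\Sset_{i+1}$ be everything else (made a clique by fill-in)---whereas your minimal-separator-plus-connected-component construction is heavier than necessary, though valid. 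For the key structural step (Part~\ref{lem:recursiveDecomp_lmap}) the paper simply invokes Corollary~\ref{cor:lowdim_lmap} (and through it Lemma~\ref{lem:productTarget}); you instead re-derive the content in place, using the product structure of $\genm_{i+1}$ across $\Aset_i$ from Theorem~\ref{thm:decompTrans}[Part~\ref{thm:decompTrans_partRVimap}] to show that in any factorization~\eqref{eq:formFactoriz} of $\pi_{i+1}$ one has $\psi_{\Aset_{i+1}\cup\Sset_{i+1}}=\eta_{\Xb_{\Aset_i}}\cdot h$, and then running the coordinate-wise KR induction. This is a reasonable alternative, and arguably more explicit than the paper, whose appeal to Corollary~\ref{cor:lowdim_lmap} leaves implicit the transfer of the independence hypothesis from $\genm_{i+1}$ to the actual target measure of $\lmap_{i+1}$. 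One detail worth stating in your factor-out argument: the division by $\psi_{\Sset_{i+1}\cup\Bset_{i+1}}$ is legitimate because that factor is strictly positive everywhere, which is forced by strict positivity of $\pi_{i+1}$ and of $\psi_{\Aset_{i+1}\cup\Sset_{i+1}}$ but is not asserted directly by Theorem~\ref{thm:decompTrans}[Part~\ref{thm:decompTrans_partFactorizationPi}]. Your closure-under-composition and closure-under-subset observations for Definition~\ref{def:lowdim}, and the contradiction argument for Part~\ref{lem:recursiveDecomp_partNeg}, match the paper essentially line for line.
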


Lemma \ref{lem:recursiveDecomp}[Part \ref{lem:recursiveDecomp_partPos}]
shows that if $\Sset_1 \cup \Bset_1$  is not
fully connected in $\Gcb^2$, then
there exists a
proper decomposition $(\Aset_2,\Sset_2,\Bset_2)$ of $\Gcb^2$ (obtained, possibly, by adding edges to $\Gcb^2$ in
$\Vc \setminus \Aset_1$) for which $\Aset_2$ is a strict superset of $\Aset_1$.
One can then apply Theorem \ref{thm:decompTrans}
for the pair  $\genm_\eta , \genm_{2} = \lmap_1\pull \,\genm_1$ and the
decomposition 
$(\Aset_{2},\Sset_{2},\Bset_{2})$.
As a result, Part \ref{lem:recursiveDecomp_partInclusion} of the lemma shows that
$\mathfrak{R}_1$ contains a subset 
$\decset_2 = \lmap_2 \circ \mathfrak{R}_2$ of decomposable
transport maps where both $\lmap_2$ and each $R \in \mathfrak{R}_2$ are
local transports on $\Vc \setminus \Aset_1$, i.e., they are both
low-dimensional with respect to $\Aset_1$. 
In particular, $\lmap_2$ is
responsible for decoupling $\Aset_{2}\setminus \Aset_{1}$ from the
rest of the graph and for matching the marginals of $\genm_\eta$ and
$\lmap_2 \pull\,\lmap_1 \pull \, \genm_\pi = (\lmap_1 \circ \lmap_2)\pull \,
\genm_\pi $ along $\Aset_{2}\setminus \Aset_{1}$.  The effective dimension
of $\lmap_2$ is bounded above by the size of the 
separator
set $\Sset_2$ plus the number of 
nodes in $\Aset_2 \setminus \Aset_1$
(Part \ref{lem:recursiveDecomp_lmap}
of the lemma). 
The effective dimension of each
$R \in \mathfrak{R}_2$ is bounded by the cardinality of
$\Vc \setminus \Aset_2$ and is, in the most general case, lower than
that of the maps in $\mathfrak{R}_1$ (Part \ref{lem:recursiveDecomp_rmap}).
Moreover, by Part
\ref{lem:recursiveDecomp_partInclusion},
$\decset_1 = \lmap_1 \circ \mathfrak{R}_1 \supset 
\lmap_1 \circ \lmap_{2} \circ \mathfrak{R}_{2}$, which means that among 
the infinitely many decomposable transports that push forward 
$\genm_\eta$ to $\genm_\pi$, there 
exists at least one that factorizes as the composition of
\textit{three} low-dimensional maps as opposed to two, i.e., $T=\lmap_1 \circ \lmap_2 \circ R$ for some
$R \in \mathfrak{R}_{2}$.

If, on the other hand, 
$\Sset_1 \cup \Bset_1$  is  
fully connected in $\Gcb^2$,
then by Lemma \ref{lem:recursiveDecomp}[Part \ref{lem:recursiveDecomp_partNeg}] 
we know that 
the decomposition of Part \ref{lem:recursiveDecomp_partPos}
does not exist.
As a result, we cannot use Theorem \ref{thm:decompTrans} to prove the
existence of more finely decomposable transports in $\mathfrak{R}_{1}$.
In other words, if we want to match the marginals of $\genm_\eta$ and 
$\lmap_1\pull\,\genm_\pi$ 
along $\Vc \setminus \Ac_1 = \Sset_1 \cup \Bset_1$, then 
we must do so in one shot, using a {\it single} transport map; no more
intermediate steps are feasible.
The main idea, then, is to apply
Lemma \ref{lem:recursiveDecomp}[Part \ref{lem:recursiveDecomp_partPos}], recursively, 
$k$ times, where $k$ is the first integer (possibly zero) for which
$\Sset_{k+1} \cup \Bset_{k+1}$ is a clique in 
$\Gcb^{k+2}$. %
After $k$ iterations, the following
inclusion must hold:
\begin{equation} \label{eq:longerInclusion}
	\decset_1 = \lmap_1 \circ \mathfrak{R}_1 \supset
	\lmap_1 \circ \cdots \circ \lmap_{k+1} \circ \mathfrak{R}_{k+1},
\end{equation}
which shows that there
exists a decomposable transport,
\begin{equation} \label{eq:longerComposition}
T=\lmap_1 \circ \cdots \circ \lmap_{k+1} \circ R,	
\end{equation}
for some $R \in \mathfrak{R}_{k+1}$, that pushes forward $\genm_\eta$ to $\genm_\pi$.
(Note that we can  apply 
Lemma \ref{lem:recursiveDecomp}[Part \ref{lem:recursiveDecomp_partPos}] 
only finitely many times 
since $|\Vc \setminus \Aset_{i+1}|$ is an integer function strictly
decreasing in $i$ and 
bounded away from zero.)
Each $\lmap_i$ in \eqref{eq:longerInclusion} 
is a $\sigma_i$-triangular map for some permutation $\sigma_i$
that satisfies \eqref{eq:longerComposition}, and is low-dimensional with respect to 
$\Aset_{i-1} \cup \Bset_{i}$, i.e., for $i>1$ and up
to a permutation of its components,
\begin{equation}  
\lmap_i( \xb ) = \left[\begin{array}{l}
\xb_{\Aset_{i-1}} \\[\spacinglines] 
\lmap_i^{\Aset_{i} \setminus \Aset_{i-1}}( \xb_{\Sset_i} , \xb_{\Aset_{i} \setminus \Aset_{i-1}} )\\[\spacinglines] 
\lmap_i^{\Sset_i}(\xb_{\Sset_i})\\[\spacinglines] 
\xb_{\Bset_i} 
\end{array}\right].
\end{equation}
The map $R$ is low-dimensional with respect to
$\Aset_{k+1}$ and can also be chosen as a generalized triangular function.
Intuitively, we can think of $\lmap_i$ as
decoupling nodes in $\Aset_i\setminus \Aset_{i-1}$ 
from the rest of the graph in an I-map for 
$(\lmap_1 \circ\cdots\circ \lmap_{i-1})\pull \genm_\pi$.
(Recall that by Lemma \ref{lem:recursiveDecomp} all the sets $(\Aset_i)$ are nested, i.e.,
$\Aset_1 \subset \cdots  \subset \Aset_{k+1}$.)
Figure \ref{fig:graphSparsification} 
illustrates the mechanics underlying the recursive application of 
Lemma \ref{lem:recursiveDecomp}.

We emphasize that the existence and structure
of \eqref{eq:longerComposition} follow from simple 
graph operations on $\Gcb$, and do not entail any actual
computation with the target measure $\genm_\pi$.
Notice also that even if each map in the decomposition \eqref{eq:longerInclusion}
is $\sigma$-triangular, the resulting transport map $T$ need not be
triangular at all.  In other words, we  obtain 
factorizations of general and possibly non-triangular
transport maps in terms of low-dimensional generalized triangular functions.
In this sense, we can regard triangular maps as a fundamental 
``building block'' of a much larger class of transport maps.

Decomposable transports are clearly not unique. 
In particular, there are two factors that affect both the sparsity pattern and the
number $k$ of composed maps
in the family $\lmap_1 \circ \cdots \circ \lmap_{k+1} \circ \mathfrak{R}_{k+1}$:
the sequence of decompositions $(\Aset_i,\Sset_i,\Bset_i)$ and the sequence
of permutations $(\sigma_i)$. 
Usually, there is a certain freedom in the choice of these
parameters, and each configuration might lead to a different family
of decomposable transports.
Of course some families might be more desirable than others:
ideally, we would like the low-dimensional maps in the
composition to have the smallest effective dimension possible.
Recall that by Lemma \ref{lem:recursiveDecomp}
the effective dimension of each $\lmap_i$ can be bounded
above by $|(\Aset_{i} \setminus \Aset_{i-1}) \cup \Sset_{i}|$ 
(with the convention $\Aset_{0} = \emptyset$).
Thus we should intuitively choose a decomposition
$(\Aset_i,\Sset_i,\Bset_i)$ of $\Gcb^i$ and a permutation $\sigma_i$ for $\lmap_i$ 
that
minimize the cardinality of $(\Aset_{i} \setminus \Aset_{i-1}) \cup \Sset_{i}$, and that,
at the same time, minimize the number of edges added from $\Gcb^i$
to $\Gcb^{i+1}$.
In principle, we
should also account for the dimensions of all future maps in the
recursion. In the most general case, this graph theoretic
question could be addressed using dynamic programming
\cite{bertsekas1995dynamic}.  In practice, however, we will often
consider graphs for which a {\it good} sequence of decompositions and
permutations is
rather obvious (see Section \ref{sec:dataAss}). 
For instance, if the target distribution $\genm_\pi$ %
factorizes according to 
a tree $\Gcb$, then it is immediate to 
show the existence of a decomposable transport $T=T_1 \circ \cdots \circ T_{n-1}$ 
that pushes forward $\genm_\eta$ to $\genm_\pi$ and that factorizes
as the composition of $n-1$ low-dimensional maps
$(T_i)_{i=1}^{n-1}$, each associated to an edge of $\Gcb$:
it suffices to consider a sequence of decompositions $(\Aset_i,\Sset_i,\Bset_i)$
with $\Aset_1 \subset \Aset_2 \subset \cdots$, where,
for a
given rooted version of $\Gcb$,
$\Aset_{i}\setminus \Aset_{i-1}$ consists of a single node $a_{i}$ with
the largest depth in $\Gcb_{\Vc \setminus \Aset_{i-1}}$, and where $\Sset_{i}$
contains the unique parent of that node.
Remarkably, each map $T_i$ has effective dimension less than or equal
to two, independent of $n$---the size
of the tree.

At this point, it might be natural to consider a junction tree decomposition of 
a triangulation of $\Gcb$ \cite{koller2009probabilistic} as a convenient
graphical tool to schedule the
sequence of decompositions $(\Aset_i,\Sset_i,\Bset_i)$ needed to apply
Lemma \ref{lem:recursiveDecomp} recursively. 
Decomposable graphs are in fact ultimately chordal \cite{lauritzen1996graphical}.
However, the situation might not be as straightforward.
The problem
is that the clique structure of $\Gcb^i$, an I-map for $\genm_i$, can be
{\it very} different than that of $\Gcb^{i+1}$, an I-map for $\lmap_i
\pull \, \genm_i$; Theorem 
\ref{thm:decompTrans}[Part \ref{graphSparsification_partImap}]
shows that  $\Gcb^{i+1}$ might contain larger maximal cliques than those in 
$\Gcb^i$,  even
if $\Gcb^i$ is chordal (see Figure \ref{fig:graphSparsification} for an
example). 
Thus, working with a junction tree might require a bit of extra care.

\subsection{Computation of decomposable transports}
\label{sec:compDecompTrans}
Given the existence and structure of a decomposable transport
like \eqref{eq:longerComposition}, what to do with it?
There are at least two possible ways of exploiting this type of information.
First, one could solve a variational
problem like \eqref{OptimDirect} 
and enforce an explicit
parameterization of the transport map as the composition
$T=\lmap_1 \circ \cdots \circ \lmap_{k+1} \circ R$. 	
 In this scenario, one need only
parameterize the low-dimensional maps $(\lmap_i, R)$ and optimize, jointly,
over their composition.  The advantage of this approach is that it
bypasses the parameterization of a single high-dimensional function,
$T$, altogether.
See the literature on normalizing flows \cite{rezende2015variational} 
for possible computational
ideas in this direction.  

An alternative---and perhaps more
intriguing---possibility is to compute the maps $(\lmap_i)$ sequentially
by solving \textit{separate} low-dimensional optimization problems---one for
each map $\lmap_i$.  
By Theorem \ref{thm:decompTrans}[Part \ref{thm:decompTrans_partLeftMap}]
and Lemma \ref{lem:recursiveDecomp}, 
there exists a factorization
\eqref{eq:formFactoriz} of
$\pi_i$---a density of $L_{i-1}\pull\,\genm_{i-1}$---for
which $\lmap_i$ is a $\sigma_i$-generalized
KR rearrangement that pushes forward $\genm_\eta$ to a measure
with density proportional to
$\psi_{\Aset_i \cup \Sset_i}  \, 
\eta_{\Xb_{\Bset_i}}$, where
$(\Aset_i , \Sset_i, \Bset_i)$ is a decomposition of
$\Gcb^i$ and $\Gcb^i$  is an I-map for $\genm_i$.
In general $\psi_{\Aset_i \cup \Sset_i}$ depends  on
$\lmap_{i-1}$, and so the maps $(\lmap_i)$ must be computed sequentially.\footnote{
This is not always the case.
For instance, given a rooted version of $\Gcb$ and a pair of consecutive {\it
  depths} (see the discussion at
the end of Section \ref{sec:rec_dec}), all the maps $(\lmap_i)$
associated with edges connecting nodes at these two depths can be computed in parallel.}
In essence, 
decomposable transports break the inference task into 
smaller and possibly easier steps.

Note that we could define $\lmap_i$ with respect to any
factorization \eqref{eq:formFactoriz} with
$\psi_{\Aset_i \cup \Sset_i}$ integrable:
these different factorizations
would lead to a family of decomposable transports with the same
low-dimensional 
structure and sparsity patterns %
(as predicted by Theorem \ref{thm:decompTrans}).
Thus, as long as we have access to a sequence of
integrable factors $(\psi_{\Aset_i \cup \Sset_i})$, we can 
compute each map $\lmap_i$ individually by solving a
low-dimensional  
optimization problem. (See Appendix \ref{sec:genKR} for computational
remarks on generalized triangular functions.) 
Intuitively, since by
Lemma \ref{lem:recursiveDecomp}[Part \ref{lem:recursiveDecomp_lmap}] 
$\lmap_i$ is low-dimensional with respect to $\Aset_{i-1} \cup \Bset_{i}$,
we really only need to optimize for a portion of the map, namely
$\lmap_i^{\Cc}$ for $\Cc = (\Aset_{i} \setminus \Aset_{i-1}) \cup \Sset_{i}$,
which can be regarded effectively as a multivariate map on 
$ \re^{|\Cc|} $. 
In the same way, the map $\rmap$ can be computed as any transport 
(possibly triangular) that pushes forward $\genm_\eta$ to 
$L_{k+1}\pull\,\genm_{k+1}$.
Theorem \ref{thm:decompTrans}[Part \ref{thm:decompTrans_partRightMap}] 
tells us that once again we only need to optimize for
a low-dimensional portion of the map, namely, $\rmap^{\Sset_{k+1} \cup \Bset_{k+1}}$.

While it might be difficult to access a sequence
of factorizations \eqref{eq:formFactoriz} for a general
problem, there are important applications, such as Bayesian
filtering, smoothing, and joint parameter/state estimation, where the 
sequential computation of the transports $(\lmap_i,R)$ is always possible by
construction.  
We discuss these applications in the next section.

\begin{figure}[]
\centering
	\newcommand{\wStarGraph}{1cm} 
\newcommand{\scaleStarGraph}{.8} 
\newcommand{\fontsizelabelsgraph}{\LARGE} 

\subfloat{
\begin{tikzpicture}[transform shape, scale = \scaleStarGraph]

	\begin{pgfonlayer}{bBack}
		\node[defaultBlank]  (center) {};	
		\node[defaultBlank] (dxCenter)	  			[right=of center]			{};		
	\end{pgfonlayer}		
									
 	\node[default]	 (blCenter)		  		[below=of center]		{$2$} 
 		;
 	\node[default] 	(abCenter)	 		[above=of center]		{$3$}
 		;
 		
	\node[default] 	(sxCenter)  			[left=of center]			{$1$}
 		edge[edgeStyle] (abCenter)
 		edge[edgeStyle] (blCenter);

 	\begin{pgfonlayer}{bBack}
		\node[defaultBlank]  (absxCenter) [above=of sxCenter] {};
		\node[defaultBlank]  (blsxCenter) [below=of sxCenter] {};
	\end{pgfonlayer}

	\node[default] (dxdxCenter) 	  			[right=of dxCenter]			{$6$}
 		;
	\node[default] 	(abdxCenter)  			[above=of dxCenter]			{ $4$ }
 		edge[edgeStyle] (abCenter)
 		edge[edgeStyle] (dxdxCenter); 	
	\node[default] 	(bldxCenter)  			[below=of dxCenter]			{ $5$}
 		edge[edgeStyle] (abdxCenter)
 		edge[edgeStyle] (blCenter)
 		edge[edgeStyle] (abCenter)
 		edge[edgeStyle] (dxdxCenter);

 	\begin{pgfonlayer}{bBack}
		\node[defaultBlank]  (abdxdxCenter) [above=of dxdxCenter] {};
		\node[defaultBlank]  (bldxdxCenter) [below=of dxdxCenter] {};
	\end{pgfonlayer}

	 \begin{pgfonlayer}{background}
		\node [fill=\colorA, inner sep=.35cm, rounded corners, inner sep=\valueInnerSep,
				label={[black,font=\fontsizelabelsgraph]below:$\Aset_1$}, 
			  fit=(blsxCenter)(absxCenter)] (setA) {};
		\node [fill=\colorB, inner sep=.35cm, rounded corners, inner sep=\valueInnerSep,
				label={[black,font=\fontsizelabelsgraph]below:$\Bset_1$}, 
			  fit=(abdxdxCenter)(bldxCenter)] (setB) {};	
		\node [fill=\colorS, inner sep=.35cm, rounded corners, inner sep=\valueInnerSep,
				label={[black,font=\fontsizelabelsgraph]below:$\Sset_1$}, 
			  fit=(abCenter)(blCenter)] (setS) {};
	\end{pgfonlayer}	
\end{tikzpicture}
}
\hspace{\wStarGraph}
\subfloat{
\begin{tikzpicture}[transform shape, scale = \scaleStarGraph]

	\begin{pgfonlayer}{bBack}
		\node[defaultBlank]  (center) {};	
		\node[defaultBlank] (dxCenter)	  			[right=of center]			{};		
	\end{pgfonlayer}		
									
 	\node[default]	 (blCenter)		  		[below=of center]		{$2$} 
 		;
 	\node[default] 	(abCenter)	 		[above=of center]		{$3$}
 		edge[edgeStyle] (blCenter);	
 		
	\node[default] 	(sxCenter)  			[left=of center]			{$1$}
 ;

 	\begin{pgfonlayer}{bBack}
		\node[defaultBlank]  (absxCenter) [above=of sxCenter] {};
		\node[defaultBlank]  (blsxCenter) [below=of sxCenter] {};
	\end{pgfonlayer}

	\node[default] (dxdxCenter) 	  			[right=of dxCenter]			{ $6$ }
 		;
	\node[default] 	(abdxCenter)  			[above=of dxCenter]			{ $4$ }
 		edge[edgeStyle] (abCenter)
 		edge[edgeStyle] (blCenter)
 		edge[edgeStyle] (dxdxCenter); 	
	\node[default] 	(bldxCenter)  			[below=of dxCenter]			{ $5$ }
 		edge[edgeStyle] (abdxCenter)
 		edge[edgeStyle] (blCenter)
 		edge[edgeStyle] (abCenter)
 		edge[edgeStyle] (dxdxCenter);

 	\begin{pgfonlayer}{bBack}
		\node[defaultBlank]  (abdxdxCenter) [above=of dxdxCenter] {};
		\node[defaultBlank]  (bldxdxCenter) [below=of dxdxCenter] {};
	\end{pgfonlayer}

	 \begin{pgfonlayer}{background}
		\node [fill=\colorA, inner sep=.35cm, rounded corners, inner sep=\valueInnerSep,
				label={[black,font=\fontsizelabelsgraph]below:$\Aset_1$}, 
			  fit=(blsxCenter)(absxCenter)] (setA) {};
		\node [fill=\colorB, inner sep=.35cm, rounded corners, inner sep=\valueInnerSep,
				label={[black,font=\fontsizelabelsgraph]below:$\Bset_1$}, 
			  fit=(abdxdxCenter)(bldxCenter)] (setB) {};	
		\node [fill=\colorS, inner sep=.35cm, rounded corners, inner sep=\valueInnerSep,
				label={[black,font=\fontsizelabelsgraph]below:$\Sset_1$}, 
			  fit=(abCenter)(blCenter)] (setS) {};
	\end{pgfonlayer}	
\end{tikzpicture}
} 	 	
	\subfloat{
\begin{tikzpicture}[transform shape, scale = \scaleStarGraph]

	\begin{pgfonlayer}{bBack}
		\node[defaultBlank]  (center) {};	
		\node[defaultBlank] (dxCenter)	  			[right=of center]			{};		
	\end{pgfonlayer}		
									
 	\node[default]	 (blCenter)		  		[below=of center]		{$2$} 
 		;
 	\node[default] 	(abCenter)	 		[above=of center]		{$3$}
 		edge[edgeStyle] (blCenter);	
 		
	\node[default] 	(sxCenter)  			[left=of center]			{$1$}
 ;

 	\begin{pgfonlayer}{bBack}
		\node[defaultBlank]  (absxCenter) [above=of sxCenter] {};
		\node[defaultBlank]  (blsxCenter) [below=of sxCenter] {};
	\end{pgfonlayer}

	\node[default] (dxdxCenter) 	  			[right=of dxCenter]			{ $6$}
 		;
	\node[default] 	(abdxCenter)  			[above=of dxCenter]			{ $4$}
 		edge[edgeStyle] (abCenter)
 		edge[edgeStyle] (blCenter)
 		edge[edgeStyle] (dxdxCenter); 	
	\node[default] 	(bldxCenter)  			[below=of dxCenter]			{ $5$ }
 		edge[edgeStyle] (abdxCenter)
 		edge[edgeStyle] (blCenter)
 		edge[edgeStyle] (abCenter)
 		edge[edgeStyle] (dxdxCenter);

 	\begin{pgfonlayer}{bBack}
		\node[defaultBlank]  (abdxdxCenter) [above=of dxdxCenter] {};
		\node[defaultBlank]  (bldxdxCenter) [below=of dxdxCenter] {};
	\end{pgfonlayer}

	 \begin{pgfonlayer}{background}
		\node [fill=\colorA, inner sep=.35cm, rounded corners, inner sep=\valueInnerSep,
				label={[black,font=\fontsizelabelsgraph]below:$\Aset_2$}, 
			  fit=(absxCenter)(blCenter)] (setA) {};
		\node [fill=\colorB, inner sep=.35cm, rounded corners, inner sep=\valueInnerSep,
				label={[black,font=\fontsizelabelsgraph]below:$\Bset_2$}, 
			  fit=(abdxdxCenter)(bldxdxCenter)] (setB) {};	
		\node [fill=\colorS, inner sep=.35cm, rounded corners, inner sep=\valueInnerSep,
				label={[black,font=\fontsizelabelsgraph]below:$\Sset_2$}, 
			  fit=(bldxCenter)(abdxCenter)] (setS) {};
	\end{pgfonlayer}	
\end{tikzpicture}
}
\hspace{\wStarGraph}
\subfloat{
\begin{tikzpicture}[transform shape, scale = \scaleStarGraph]

	\begin{pgfonlayer}{bBack}
		\node[defaultBlank]  (center) {};	
		\node[defaultBlank] (dxCenter)	  			[right=of center]			{};		
	\end{pgfonlayer}		
									
 	\node[default]	 (blCenter)		  		[below=of center]		{$2$} 
 		;
 	\node[default] 	(abCenter)	 		[above=of center]		{$3$}
 		;	
 		
	\node[default] 	(sxCenter)  			[left=of center]			{$1$}
 ;

 	\begin{pgfonlayer}{bBack}
		\node[defaultBlank]  (absxCenter) [above=of sxCenter] {};
		\node[defaultBlank]  (blsxCenter) [below=of sxCenter] {};
	\end{pgfonlayer}

	\node[default] (dxdxCenter) 	  			[right=of dxCenter]			{$6$}
 		;
	\node[default] 	(abdxCenter)  			[above=of dxCenter]			{ $4$ }
 		edge[edgeStyle] (dxdxCenter); 	
	\node[default] 	(bldxCenter)  			[below=of dxCenter]			{ $5$ }
 		edge[edgeStyle] (abdxCenter)
 		edge[edgeStyle] (dxdxCenter);

 	\begin{pgfonlayer}{bBack}
		\node[defaultBlank]  (abdxdxCenter) [above=of dxdxCenter] {};
		\node[defaultBlank]  (bldxdxCenter) [below=of dxdxCenter] {};
	\end{pgfonlayer}


	\begin{pgfonlayer}{background}
		\node [fill=\colorA, inner sep=.35cm, rounded corners, inner sep=\valueInnerSep,
				label={[black,font=\fontsizelabelsgraph]below:$\Aset_2$}, 
			  fit=(absxCenter)(blCenter)] (setA) {};
		\node [fill=\colorB, inner sep=.35cm, rounded corners, inner sep=\valueInnerSep,
				label={[black,font=\fontsizelabelsgraph]below:$\Bset_2$}, 
			  fit=(abdxdxCenter)(bldxdxCenter)] (setB) {};	
		\node [fill=\colorS, inner sep=.35cm, rounded corners, inner sep=\valueInnerSep,
				label={[black,font=\fontsizelabelsgraph]below:$\Sset_2$}, 
			  fit=(bldxCenter)(abdxCenter)] (setS) {};
	\end{pgfonlayer}
\end{tikzpicture}
} 	 	
        \caption[]{
        Sequence of graph decompositions associated with
        the recursive application of Lemma \ref{lem:recursiveDecomp}.
        At 
        \emph{(top left)} 
        is an I-map, $\Gcb^1$, for $\genm_\pi$, with
        $\genm_\pi \in \borelmp(\re^6)$. 
        We first
        decompose this graph into $(\Aset_1,\Sset_1,\Bset_1)$ as indicated,
        and apply Theorem \ref{thm:decompTrans} to the pair 
        $\genm_\eta, \genm_\pi$. 
        To do so, we first need to add edge $(2,3)$ to $\Gcb^1$ in order to turn
        $(\Aset_1,\Sset_1,\Bset_1)$ into a proper decomposition of
        $\Gcb^1$ with a fully connected $\Sset_1$.%
        The resulting graph, $\Gcb^{1}_\star$, is now chordal 
        (in fact, a triangulation of
        $\Gcb^1$ \cite{lauritzen1996graphical}), but still an 
        I-map for $\genm_\pi$.
        The first
        map $\lmap_1$ is $\sigma_1$-triangular with %
        $\sigma_1(\mathbb{N}_6)= \{2,3,1,4,5,6\}$ and it is
        low-dimensional with respect to $\Bset_1$;
        The \emph{(top right)} figure shows  the 
        I-map, $\Gcb^2$, for %
        $\lmap_1\pull \, \genm_\pi$ as
        given by Theorem \ref{thm:decompTrans}[Part \ref{graphSparsification_partImap}]:
        as expected, $\Aset_1$ is disconnected from $\Sset_1 \cup \Bset_1$;
        moreover, a new maximal clique $\{2,3,4,5\}$ appears in
        $\Gcb^2$.
        This new clique is larger than any of the maximal cliques in $\Gcb^{1}_\star$,
        even though
        $\Gcb^{1}_\star$ is chordal.
        (Notice that $\sigma_1$ is not the permutation that adds the fewest edges 
        possible in
        $\Gcb^2$. An example of such ``best'' permutation would be
        $\sigma(\mathbb{N}_6)= \{3,2,1,4,5,6\}$.)
        Though Theorem
        \ref{thm:decompTrans} guarantees the existence of a
        low-dimensional map $R \in \mathfrak{R}_1$ that pushes forward 
        $\genm_\eta$ to
        $\lmap_1\pull \, \genm_\pi$, we instead proceed recursively by
        applying Lemma  \ref{lem:recursiveDecomp}[Part \ref{lem:recursiveDecomp_partPos}] 
        for a proper
        decomposition, $(\Aset_2,\Sset_2,\Bset_2)$, of $\Gcb^2$, where 
        $\Aset_2$ is a strict superset of $\Aset_1$ \emph{(bottom left)}.
        The lemma shows that 
        $\mathfrak{R}_1 \supset \lmap_2 \circ \mathfrak{R}_2$
        for some $\sigma_2$-triangular map $\lmap_2$,
        which  
        is low-dimensional with respect to $\Aset_1 \cup \Bset_2$, and
        where each $\rmap \in \mathfrak{R}_2$ pushes forward $\genm_\eta$   to
        $(\lmap_1 \circ \lmap_2)\pull \genm_\pi$. 
        Can we apply Lemma  \ref{lem:recursiveDecomp} one more
        time to characterize decomposable transports in $\mathfrak{R}_2$?
        The answer is no, as the I-map for 
        $(\lmap_1 \circ \lmap_2)\pull \genm_\pi$ 
        \emph{(bottom right)} 
        consists of a
        single clique in $\Sset_2 \cup \Bset_2$. 
        Nevertheless, each $\rmap \in \mathfrak{R}_2$ is still low-dimensional
        with respect to $\Aset_2$. 
        Overall, we showed the existence of a transport
        map $T:\re^6 \ra \re^6$ pushing forward $\genm_\eta$ to $\genm_\pi$ that decomposes as
        $T=\lmap_1 \circ \lmap_2 \circ R$, where $\lmap_1$, $\lmap_2$, 
        $\rmap$ are effectively $\{3,4,3\}$-dimensional maps, respectively.

        }          
\label{fig:graphSparsification}
\end{figure}

\section{Sequential inference on state-space models: variational algorithms}
\label{sec:dataAss}
In this section, we consider the problem of sequential Bayesian inference 
(or discrete-time data assimilation 
\cite{law2015data,reich2015probabilistic,sarkka2013bayesian})
for continuous, nonlinear, and non-Gaussian 
state-space models.

Our goal is to specialize the theory developed in Section
\ref{sec:decomp} %
to the solution of Bayesian filtering and smoothing problems.
The key result of this section is a new variational %
algorithm for characterizing the full posterior distribution of the
sequential inference problem---e.g., not just a few filtering or smoothing
marginals---via recursive lag--$1$ smoothing with transport maps. The
proposed algorithm builds a decomposable high-dimensional transport map in
a \emph{single forward pass} by solving a sequence of local
small-dimensional problems, without resorting to any backward pass on
the state space model (see Theorem \ref{thm:decompSmooth}).  These
results extend naturally to the case of joint parameter and state
estimation (see Section \ref{sec:joint} and Theorem
\ref{thm:decompJoint}).
\hrevone{
Pseudocode for the algorithm is given in Appendix \ref{sec:algo}.
}

A state-space model consists of a pair of discrete-time stochastic
processes $(\Zb_k,\Yb_k)_{k\ge 0}$ indexed by the time $k$, where
$(\Zb_k)$ is a latent
Markov process of interest
and where $(\Yb_k)$ is the observed process.
We can think of each $\Yb_k$ as a noisy and perhaps indirect measurement of $\Zb_k$.
The Markov structure corresponding to the joint process $(\Zb_k,\Yb_k)$ 
is shown in Figure \ref{fig:dataAssSmoothing}.
The generalization of the results of this section to the case of missing
observations is straightforward and will not be addressed here.

We assume that we are given the transition densities
$\pi_{\Zb_{k+1}\vert \Zb_k}$ for all
$k\ge0$, sometimes 
referred to as the ``prior dynamic,'' together with 
the marginal 
density
of the initial conditions $\pi_{\Zb_{0}}$.
(For instance, the prior dynamic could stem from the discretization of a 
continuous time stochastic differential equation
\cite{oksendal2013stochastic}.)
We denote by $\pi_{\Yb_k \vert \Zb_k }$ the likelihood function, i.e., the
density of $\Yb_k$ given $\Zb_k$, and
assume that $\Zb_k$ and
$\Yb_k$ are
random variables taking values on $\re^n$
and $\re^{\ddata}$, respectively.
Moreover, we denote by $(\yb_k)_{k \ge 0}$ a sequence of realizations of the
observed process $(\Yb_k)$ that will define the posterior
distribution over the unobserved ({hidden}) states of the model, and make the
following regularity assumption in our theorems:
$\pi_{\Zb_{0:k-1}, \Yb_{0:k-1}} > 0$
for all $k\ge 1$.
(The existence of underlying fully supported measures will be left implicit throughout
the section for notational convenience.)

\begin{figure}[h]
\centering
	\newcommand{\wStarGraph}{2cm} 
\newcommand{\scaleStarGraph}{.8} 
\newcommand{\minsize}{9mm} 
\newcommand{\varState}{\Zb} 
\newcommand{\varRef}{\Xb} 

\subfloat{
\begin{tikzpicture}[transform shape, scale = \scaleStarGraph]
	\node[default,minimum size=\minsize]  (nd1) 										{$\varState_0$};
 	\node[default,minimum size=\minsize]	 (nd2)		  		[right=of nd1]		{$\varState_1$} 
 		edge[edgeStyle] (nd1);
 	\node[default,minimum size=\minsize]	 (nd3)		  		[right=of nd2]		{$\varState_2$} 
 		edge[edgeStyle] (nd2);
 	\node[default,minimum size=\minsize]	 (nd4)		  		[right=of nd3]		{$\varState_3$} 
 		edge[edgeStyle] (nd3); 		 		
 	\node[default,minimum size=\minsize]	 (nd5)		  		[right=of nd4]		{$\varState_N$} 
 		edge[edgeStyle,dashed] (nd4); 		
  	\node[dataNode,minimum size=\minsize]	 (data1)		  		[below=of nd1]		{$\Yb_0$} 
 		edge[edgeStyle] (nd1);	
  	\node[dataNode,minimum size=\minsize]	 (data2)		  		[below=of nd2]		{$\Yb_1$} 
 		edge[edgeStyle] (nd2); 	
  	\node[dataNode,minimum size=\minsize]	 (data3)		  		[below=of nd3]		{$\Yb_2$} 
 		edge[edgeStyle] (nd3); 
  	\node[dataNode,minimum size=\minsize]	 (data4)		  		[below=of nd4]		{$\Yb_3$} 
 		edge[edgeStyle] (nd4); 
  	\node[dataNode,minimum size=\minsize]	 (data5)		  		[below=of nd5]		{$\Yb_N$} 
 		edge[edgeStyle] (nd5);  
 	%
 	%
 	\node[defaultGray,minimum size=\minsize]	 (ndr1)		  		[below=of data1]		{$\varRef_0$}; 	
 	\node[defaultGray,minimum size=\minsize]	 (ndr2)		  		[below=of data2]		{$\varRef_1$};
 	\node[defaultGray,minimum size=\minsize]	 (ndr3)		  		[below=of data3]		{$\varRef_2$};
 	\node[defaultGray,minimum size=\minsize]	 (ndr4)		  		[below=of data4]		{$\varRef_3$};
 	\node[defaultGray,minimum size=\minsize]	 (ndr5)		  		[below=of data5]		{$\varRef_N$}; 		
\end{tikzpicture}
} 	
\caption[Markov structure of a typical state-space model]{
\emph{(above)} I-map for the joint process $(\Zb_k,\Yb_k)_{k \ge 0}$ defining the
state-space model.
\emph{(below)} I-map for the independent reference 
process  $(\Xb_k)_{k \ge 0}$ used in Theorem \ref{thm:decompSmooth}.
}
\label{fig:dataAssSmoothing}
\end{figure}
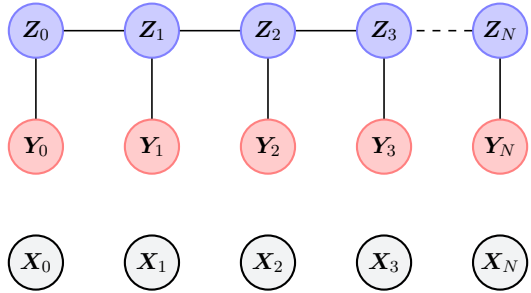

\subsection{Smoothing and filtering: the full Bayesian solution}  
\label{sec:filt}

In typical applications of state-space modeling, the process
$(\Yb_k)$ is only observed sequentially, and thus the goal of
inference is to characterize---{sequentially} in time and via a
{recursive} algorithm---the joint distribution of the current and past states
given currently available measurements, i.e.,
	\begin{equation} \label{eq:dataAssBayes}
	\pi_{\Zb_{0:k}  \vert  \yb_{0:k}}( \zb_{0:k} )
	\coloneqq 
		\pi_{\Zb_{0:k} \vert  
		\Yb_{0:k}}( \zb_{0:k} \vert \yb_{0:k} )		
	\end{equation}
for all $k\ge0$. 
That is, we wish to characterize %
$\pi_{\Zb_{0:k} \vert \yb_{0:k}}$
based on our knowledge of the posterior distribution at the previous timestep,
$\pi_{\Zb_{0:k-1} \vert \yb_{0:k-1}}$, and with an effort that is
constant over time.
We regard \eqref{eq:dataAssBayes} as the full Bayesian
solution to the sequential inference problem \cite{sarkka2013bayesian}. 

Usually, the task of updating $\pi_{\Zb_{0:k-1} \vert \yb_{0:k-1}}$
to yield $\pi_{\Zb_{0:k} \vert \yb_{0:k}}$
becomes increasingly challenging over time due to the widening
inference horizon, making characterization of the full Bayesian
solution impractical for large $k$ %
\cite{sarkka2013bayesian}.  
Thus, two simplifications of the sequential inference problem are
frequently considered: filtering and smoothing
\cite{sarkka2013bayesian}.  In filtering, we characterize
$\pi_{\Zb_{k} \vert \yb_{0:k}}$ for all $k\ge 0$, while in
smoothing we recursively
update $\pi_{\Zb_{j} \vert \yb_{0:k}}$
for increasing $k>j$, where $\Zb_j$ is some past
state of the unobserved process.
Both  filtering and  smoothing deliver
particular low-dimensional
\emph{marginals} of the full Bayesian solution to the inference problem, and
hence are often considered good candidates
for numerical approximation \cite{doucet2009tutorial,sarkka2013bayesian,crisan2002survey}.

The following theorem shows that
characterizing  the full Bayesian solution to the 
sequential inference problem via a decomposable transport map  
is essentially no harder than performing lag--$1$ smoothing, which,
in turn, amounts
to characterizing $\pi_{\Zb_{k-1},\Zb_{k} \vert \yb_{0:k}}$
for all $k\ge 0$
(an operation only incrementally harder than regular filtering).
This result relies on
the recursive application of the decomposition theorem for
couplings (Theorem \ref{thm:decompTrans}) 
to the {\it tree}
Markov structure of $\pi_{\Zb_{0:k} \vert \yb_{0:k}}$.
In what follows, let $(\Xb_k)_{k \ge 0}$ be an independent 
(reference) process with nonvanishing marginal densities %
$(\eta_{\Xb_k})$, with each $\Xb_k$ taking values on
$\re^n$. %
See Figure \ref{fig:dataAssSmoothing} for the corresponding Markov
network.
\begin{thm}[Decomposition theorem for state-space models] \label{thm:decompSmooth}
Let $(\submap_i)_{i\ge 0}$ be a
sequence of $(\sigma_i)$-generalized 
KR rearrangements on $\re^{n} \times \re^{n} $,
which are %
of the form
\begin{equation}  \label{eq:thm:decompSmooth:upperTriMap} 
\submap_i( \xb_i , \xb_{i+1} ) = \left[\begin{array}{l}
\submap_i^0( \xb_i , \xb_{i+1} ) \\[\spacinglines]
\submap_i^1(\xb_{i+1})
\end{array}\right] %
\end{equation}
for some $\sigma_i$, $\submap_i^0 : \re^{n} \times \re^{n} \ra \re^n$, 
$\submap_i^1 : \re^{n} \ra \re^n$, 
and that are defined by the recursion: 
	\begin{itemize}[label={--}]
		\item 
		$\submap_0$ pushes forward 
		$\eta_{\Xb_0,\Xb_{1}}$ to
		$\pi^0 = \widetilde{\pi}^0/ \mathfrak{c}_0$,
		\item 
		$\submap_i$
		pushes forward $\eta_{\Xb_i,\Xb_{i+1}}$ to
		$\pi^i(\zb_i, \zb_{i+1}) =
		\eta_{\Xb_i}(\zb_i) \,
		\widetilde{\pi}^i( \submap_{i-1}^1(\zb_i) , \zb_{i+1})/\mathfrak{c}_i$,		
	\end{itemize}
where 
$\mathfrak{c}_i$ is a normalizing constant
and where
$(\widetilde{\pi}^i)_{i \ge 0}$ 
are
functions on $\re^{n} \times \re^{n} $ given by: 
\begin{itemize}[label={--}]
	\item $\widetilde{\pi}^0(\zb_0, \zb_1) = \pi_{\Zb_0,\Zb_1}(\zb_0, \zb_1) \,
		\pi_{\Yb_0 \vert \Zb_0}(\yb_0 \vert \zb_0)  \, 
		\pi_{\Yb_1 \vert \Zb_1}(\yb_1 \vert \zb_1)$,
        
	\item $\widetilde{\pi}^i(\zb_i, \zb_{i+1})=
	   \pi_{\Zb_{i+1} \vert \Zb_i}( \zb_{i+1} \vert \zb_{i})  \,
		\pi_{\Yb_{i+1} \vert \Zb_{i+1} }(\yb_{i+1} \vert \zb_{i+1})$ for $i\ge 1$.	
\end{itemize}
Then, for all $k\ge 0$, the following hold:	
	\begin{enumerate}
		\item 
		\label{thm:decompSmooth:partFilt}
		The map $\submap_k^1$ pushes forward $\eta_{\Xb_{k+1}}$ to
		$\pi_{\Zb_{k+1} \vert \yb_{0:k+1}}$.
		\hfill [filtering]
		\item 
		\label{thm:decompSmooth:partSmot}
		The map $\overline{\submap}_k$, defined as ($\submap_{k-1}^1(\xb)=\xb$ for $k=0$)
		\begin{equation}    \label{eq:thm:decompSmooth:upperTriMapSmooth} 
			\overline{\submap}_k( \xb_k , \xb_{k+1} ) = \left[\begin{array}{l}
			\submap_{k-1}^1( \submap_k^0( \xb_k , \xb_{k+1} ) )\\[\spacinglines] 
			\submap_k^1(\xb_{k+1})
			\end{array}\right], %
		\end{equation}
		pushes forward 
		$\eta_{\Xb_k,\Xb_{k+1}}$ 
		to
		$\pi_{ \Zb_k ,\Zb_{k+1} \vert \yb_{0:k+1} }$. 
		\hfill
		[lag--$1$ smoothing]
		\item 
		\label{thm:decompSmooth:partFull}
		The composition of transport maps $\mathfrak{T}_k = T_0 \circ \cdots \circ T_k$,
		where each $T_i$ is defined 
		as
		\begin{equation}  \label{eq:thm:decompSmooth:upperTriMapEmb} 
		T_i( \xb_0 , \ldots, \xb_{k+1} ) = \left[\begin{array}{l}
		\xb_0 \\ 
		\vdots \\
		\xb_{i-1} \\[\spacinglines]
		\submap_i^0( \xb_i , \xb_{i+1}) \\[\spacinglines]
		\submap_i^1( \xb_{i+1}) \\
		\xb_{i+2} \\
		\vdots \\
		\xb_{k+1}
		\end{array}\right],%
		\end{equation}
		pushes forward 
		$\eta_{\Xb_{0:k+1}}$  
		to 
		$\pi_{\Zb_{0:k+1} \vert \yb_{0:k+1} }$. \hfill
		[full Bayesian solution]
		\item 
		\label{thm:decompSmooth:partEvidence}
		The model evidence (marginal likelihood) is given by 
		\begin{equation} \label{thm:decompSmooth:evidence}
		\pi_{\Yb_{0:k+1}}(\yb_{0:k+1}) 
		= \prod_{i=0}^k \mathfrak{c}_i.	
		\end{equation}		
	\end{enumerate}
\end{thm}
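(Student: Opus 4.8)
The plan is to prove Part~\ref{thm:decompSmooth:partFull} first (it is the crux) and read off Parts~\ref{thm:decompSmooth:partFilt}, \ref{thm:decompSmooth:partSmot}, \ref{thm:decompSmooth:partEvidence} as corollaries. Two preliminaries. First, \emph{well-posedness}: under the standing positivity assumptions one checks by induction that each $\pi^i$ is a fully supported, integrable probability density on $\re^n\times\re^n$ (using that $\submap_{i-1}^1$ is a bijection of $\re^n$), so each $(\sigma_i)$-generalized KR rearrangement $\submap_i$ of the block form \eqref{eq:thm:decompSmooth:upperTriMap} exists and is unique (Appendix~\ref{sec:genKR}) and $\mathfrak{c}_i\in(0,\infty)$. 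Second, \emph{state-space factorization}: by the Markov structure of Figure~\ref{fig:dataAssSmoothing} the product $\widetilde{\pi}^0(\zb_0,\zb_1)\prod_{i=1}^{k}\widetilde{\pi}^i(\zb_i,\zb_{i+1})$ equals $\pi_{\Zb_0}(\zb_0)\prod_{i=0}^{k}\pi_{\Zb_{i+1}\vert\Zb_i}(\zb_{i+1}\vert\zb_i)\prod_{i=0}^{k+1}\pi_{\Yb_i\vert\Zb_i}(\yb_i\vert\zb_i)$, i.e.
\[
\widetilde{\pi}^0(\zb_0,\zb_1)\prod_{i=1}^{k}\widetilde{\pi}^i(\zb_i,\zb_{i+1})=\pi_{\Zb_{0:k+1},\Yb_{0:k+1}}(\zb_{0:k+1},\yb_{0:k+1}).
\]

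\textbf{Engine.} The whole computation runs on one change-of-variables observation. If $\submap=[\submap^0;\submap^1]$ has the form \eqref{eq:thm:decompSmooth:upperTriMap} and pushes $\eta_{\Xb}\otimes\eta_{\Xb'}$ to a density $\varrho$ on $\re^n\times\re^n$, and $G\ge0$ is measurable in auxiliary variables $\ub$, then $(\xb,\xb',\ub)\mapsto(\submap^0(\xb,\xb'),\submap^1(\xb'),\ub)$ pushes the (integrable) density $\eta_{\Xb}(\xb)\,\eta_{\Xb'}(\xb')\,G(\submap^1(\xb'),\ub)$ to $\varrho(\zb,\zb')\,G(\zb',\ub)$. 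This is immediate: against a test function, fix $\ub$, write the integrand as $h_{\ub}\circ\submap$ with $h_{\ub}(\zb,\zb')=f(\zb,\zb',\ub)\,G(\zb',\ub)$, apply the defining identity $\int h_{\ub}\circ\submap\,{\rm d}(\eta_{\Xb}\otimes\eta_{\Xb'})=\int h_{\ub}\,{\rm d}\varrho$, and integrate out $\ub$ by Fubini---no Jacobians, and only the block form of $\submap$ is used.

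\textbf{Part~\ref{thm:decompSmooth:partFull}.} Set $\mu_j:=(T_j\circ\cdots\circ T_k)\push\eta_{\Xb_{0:k+1}}$, so $\mu_{k+1}=\eta_{\Xb_{0:k+1}}$, $\mu_0=\mathfrak{T}_k\push\eta_{\Xb_{0:k+1}}$, and $\mu_j=T_j\push\mu_{j+1}$. By downward induction on $j$,
\[
\mu_j(\zb_{0:k+1})=\Big(\prod_{l=0}^{j-1}\eta_{\Xb_l}(\zb_l)\Big)\,\pi^j(\zb_j,\zb_{j+1})\prod_{i=j+1}^{k}\frac{\widetilde{\pi}^i(\zb_i,\zb_{i+1})}{\mathfrak{c}_i}\qquad(1\le j\le k),
\]
the case $j=k$ being $\submap_k\push(\eta_{\Xb_k}\otimes\eta_{\Xb_{k+1}})=\pi^k$ (other coordinates untouched); the step from $j+1$ to $j$ applies the engine with $\submap=\submap_j$ and $(\xb,\xb')=(\zb_j,\zb_{j+1})$---the needed $\eta_{\Xb_j}\otimes\eta_{\Xb_{j+1}}$ factor sits inside $\mu_{j+1}$, the $\eta_{\Xb_{j+1}}$ piece coming from $\pi^{j+1}=\eta_{\Xb_{j+1}}\widetilde{\pi}^{j+1}(\submap_j^1(\cdot),\cdot)/\mathfrak{c}_{j+1}$---while $\mu_{j+1}$ depends on $\zb_{j+1}$ only through $\eta_{\Xb_{j+1}}(\zb_{j+1})$ and through $\submap_j^1(\zb_{j+1})$ (inside $\widetilde{\pi}^{j+1}$), so the remaining factors form the ``$G$'' and the engine returns the asserted form at index $j$. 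A last application with $\submap=\submap_0$---whose target $\pi^0=\widetilde{\pi}^0/\mathfrak{c}_0$ carries no leading $\eta$, which is exactly why $\submap_{-1}^1$ never appears---collapses $\mu_1$ to
\[
\mu_0(\zb_{0:k+1})=\frac{1}{\prod_{i=0}^{k}\mathfrak{c}_i}\,\widetilde{\pi}^0(\zb_0,\zb_1)\prod_{i=1}^{k}\widetilde{\pi}^i(\zb_i,\zb_{i+1})=\frac{\pi_{\Zb_{0:k+1},\Yb_{0:k+1}}(\zb_{0:k+1},\yb_{0:k+1})}{\prod_{i=0}^{k}\mathfrak{c}_i}
\]
by the preliminary factorization. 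Since $\mu_0$ is a probability density, integrating over $\zb_{0:k+1}$ forces $\prod_{i=0}^{k}\mathfrak{c}_i=\pi_{\Yb_{0:k+1}}(\yb_{0:k+1})$ (Part~\ref{thm:decompSmooth:partEvidence}), whence $\mu_0=\pi_{\Zb_{0:k+1}\vert\yb_{0:k+1}}$ (Part~\ref{thm:decompSmooth:partFull}). For Parts~\ref{thm:decompSmooth:partFilt} and~\ref{thm:decompSmooth:partSmot}, tracing $\mathfrak{T}_k$ coordinate-wise shows its $\zb_{k+1}$-block equals $\submap_k^1(\xb_{k+1})$ and its $(\zb_k,\zb_{k+1})$-block equals $\overline{\submap}_k(\xb_k,\xb_{k+1})$ (only $T_k$ then $T_{k-1}$ touch coordinate $k$, the latter supplying the outer $\submap_{k-1}^1$; the convention $\submap_{-1}^1=\mathrm{id}$ covers $k=0$); since $\eta_{\Xb_{0:k+1}}$ is a product and these blocks depend only on $\xb_{k+1}$, resp.\ on $(\xb_k,\xb_{k+1})$, pushing the corresponding $\eta$-marginal yields the corresponding marginal of $\mu_0$, i.e.\ $\submap_k^1\push\eta_{\Xb_{k+1}}=\pi_{\Zb_{k+1}\vert\yb_{0:k+1}}$ and $\overline{\submap}_k\push(\eta_{\Xb_k}\otimes\eta_{\Xb_{k+1}})=\pi_{\Zb_k,\Zb_{k+1}\vert\yb_{0:k+1}}$.

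\textbf{Main obstacle.} The delicate part is the index bookkeeping in the downward induction: confirming at every stage that $T_j$ acts exactly on the pair $(j,j+1)$ while the factor $\widetilde{\pi}^{j+1}(\submap_j^1(\zb_{j+1}),\cdot)$ is precisely the ``$G$'' whose first slot $\submap_j$ fills, and that the embedded maps $T_i$ inherit the block form needed for the engine. A more indirect route would instantiate Theorem~\ref{thm:decompTrans} and Lemma~\ref{lem:recursiveDecomp} on the path I-map of $\pi_{\Zb_{0:k+1}\vert\yb_{0:k+1}}$ rooted at node $k+1$ (the tree example ending Section~\ref{sec:rec_dec}) and then identify the abstract factors $\lmap_i$ with the concrete $\submap_{i-1}$ for the potentials $\psi_{\Aset_i\cup\Sset_i}$ implicit in the $\widetilde{\pi}^i$---trading explicit computation for a careful matching of objects.
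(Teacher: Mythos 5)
Your proof is correct in its main thrust and rests on the same core mechanism as the paper's---a telescoping change of variables propagated along the chain---but you organize it differently: you prove Part~\ref{thm:decompSmooth:partFull} first by a downward induction in the composition index, and then read Parts~\ref{thm:decompSmooth:partFilt}, \ref{thm:decompSmooth:partSmot}, \ref{thm:decompSmooth:partEvidence} off as corollaries. The paper instead runs a forward induction that establishes, jointly, that each $\pi^i$ is a valid density \emph{and} that $\submap_i^1\push\,\eta_{\Xb_{i+1}} = \pi_{\Zb_{i+1}\vert\yb_{0:i+1}}$ (Part~\ref{thm:decompSmooth:partFilt}), and then a second forward induction for Part~\ref{thm:decompSmooth:partFull} via the decomposition $\mathfrak{T}_{k+1}=A_{k+1}\circ T_{k+1}$ and Lemma~\ref{lem:GenchangeVarTriWeak}. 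Your ``engine'' lemma and coordinate-tracing derivation of Parts~\ref{thm:decompSmooth:partFilt}--\ref{thm:decompSmooth:partSmot} are clean and correct, and your ``state-space factorization'' identity is exactly the telescoping product the paper exploits implicitly through the normalizing constants $(\mathfrak{c}_i)$.

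There is, however, a genuine gap in your preliminary ``well-posedness'' step, and it is not cosmetic. You assert that integrability of each $\pi^i$ (hence finiteness of $\mathfrak{c}_i$) follows by induction ``using that $\submap_{i-1}^1$ is a bijection of $\re^n$.'' Being a bijection is not enough. After the change of variables $\wb = \submap_{i-1}^1(\zb_i)$, the constant becomes $\mathfrak{c}_i = \int \rho_{i-1}(\wb)\,\widetilde{\pi}^{\,i}(\wb,\zb_{i+1})\,{\rm d}\wb\,{\rm d}\zb_{i+1}$ where $\rho_{i-1} \coloneqq (\submap_{i-1}^1)\push\,\eta_{\Xb_i}$ is the second marginal of $\pi^{i-1}$; integrating $\widetilde{\pi}^{\,i}$ over $\zb_{i+1}$ leaves $\int \rho_{i-1}(\wb)\,\pi_{\Yb_{i+1}\vert\Zb_i}(\yb_{i+1}\vert\wb)\,{\rm d}\wb$, which is finite \emph{because} $\rho_{i-1}=\pi_{\Zb_i\vert\yb_{0:i}}$---that is, it is finite precisely when the filtering identity of Part~\ref{thm:decompSmooth:partFilt} holds at index $i-1$. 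For a generic bijection $\submap_{i-1}^1$ this integral could diverge. Since in your write-up Part~\ref{thm:decompSmooth:partFilt} is presented as a \emph{corollary} of Part~\ref{thm:decompSmooth:partFull}, which in turn presumes $\pi^0,\ldots,\pi^k$ already well-defined, the argument as written is circular. The fix is available and you should state it explicitly: interleave the two inductions. The downward argument at level $k$ uses only $\pi^0,\ldots,\pi^k$, and its conclusion delivers the filtering identity $(\submap_k^1)\push\,\eta_{\Xb_{k+1}}=\pi_{\Zb_{k+1}\vert\yb_{0:k+1}}$, which is exactly what guarantees $\mathfrak{c}_{k+1}=\pi_{\Yb_{k+2}\vert\yb_{0:k+1}}(\yb_{k+2})<\infty$ and hence that $\pi^{k+1}$ is well-defined, closing the loop. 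This is, in effect, what the paper's single forward induction accomplishes by proving the validity of $\pi^i$ and the filtering marginal identity in the same step.
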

Theorem \ref{thm:decompSmooth} suggests a variational %
algorithm for smoothing and filtering a continuous state-space
model: compute the sequence of maps $(\submap_i)$, each of dimension
$2n$; embed them into higher-dimensional identity maps to form $(T_i)$
according to \eqref{eq:thm:decompSmooth:upperTriMapEmb}; 
then evaluate
the composition $\mathfrak{T}_k = T_0 \circ \cdots \circ T_k$ to
sample directly from $\pi_{\Zb_{0:k+1} \vert \yb_{0:k+1} }$ (i.e.,
the full Bayesian solution) and obtain information about any
smoothing or filtering distribution of interest.

Successive transports in the composition $(\mathfrak{T}_k)_{k \ge 0}$ are 
{\it nested} and thus
ideal for sequential assimilation:
given $\mathfrak{T}_{k-1}$, we can 
obtain
$\mathfrak{T}_{k}$ simply by computing an additional map $\submap_{k}$
of dimension $2n$---with no need
to recompute $(\submap_i)_{i < k}$. This step converts a transport map that samples $\pi_{\Zb_{0:k}\vert \yb_{0:k}}$ into
one
that samples 
$\pi_{\Zb_{0:k+1}\vert \yb_{0:k+1}}$.
This
feature is important since $\submap_{k}$ is always a $2n$-dimensional map, while
$\pi_{\Zb_{0:k+1}\vert \yb_{0:k+1}}$ is a density
on $\re^{n(k+2)}$---a space whose dimension increases with time $k$.
In fact, from the perspective of Section~\ref{sec:decomp}, Theorem
\ref{thm:decompSmooth} simply shows that each
$\pi_{\Zb_{0:k+1}\vert \yb_{0:k+1}}$ can be represented via a
\emph{decomposable} transport $\mathfrak{T}_k = T_0 \circ \cdots \circ T_k$.
The sparsity pattern of each map $\submap_i$, specified  in 
\eqref{eq:thm:decompSmooth:upperTriMap}, is necessary for Theorem \ref{thm:decompSmooth} 
to hold: $\submap_i$ cannot be {\it any} transport map; it must be
block upper triangular.

The proposed algorithm consists of a forward pass on the state-space model---wherein the sequence of transport maps $(\submap_i)$ are computed and stored---followed by
a backward pass where the composition %
$\mathfrak{T}_k = T_0 \circ \cdots \circ T_k$ is %
evaluated deterministically  to sample $\pi_{\Zb_{0:k+1} \vert
  \yb_{0:k+1}}$. %
This backward pass does not re-evaluate the potentials of the
state-space model (e.g., transition kernels or likelihoods) at earlier
times, nor does it perform any
additional
computation other than evaluating the maps $(\submap_i)$ in $\mathfrak{T}_k$.
Though each map $T_j$ is usually trivial to evaluate---e.g., the map might be
parameterized in terms of polynomials \cite{marzouk2016introduction}
and %
differ from the identity along only %
$2n$ components---it is
true that the cost of evaluating $\mathfrak{T}_k$ grows linearly
with $k$. %
 This is hardly surprising since
$\pi_{\Zb_{0:k+1} \vert \yb_{0:k+1}}$ is a density
over spaces of increasing dimension. A direct approximation of $\mathfrak{T}_k$ is
usually a bad idea since the map is high-dimensional and dense (in the sense
defined by Section \ref{sec:decomp}); it is better to store $\mathfrak{T}_k$
implicitly through the sequence of maps $(\submap_i)_{i\ge0}^k$, and 
sample smoothed trajectories by evaluating $\mathfrak{T}_k$
only when it is needed.
If we
are only interested in a particular smoothing marginal, 
e.g.,
$\pi_{\Zb_{0} \vert \yb_{0:k+1}}$ for all $k\ge 0$, then we can define
a general forward recursion to sample $\pi_{\Zb_{0} \vert \yb_{0:k+1}}$
with a {\it single} transport map that is updated recursively
over %
time, rather than with a growing composition of maps---and thus with 
a cost independent of $k$. This construction is given in 
Section \ref{sec:margSmooth}.

Also, it is important to emphasize that in order to assimilate a new measurement,
say $\yb_{k+1}$, we do \emph{not} need to evaluate the full composition  $\mathfrak{T}_{k-1}$; we only need
to compute a low-dimensional map $\submap_k$ whose target density $\pi^k$
depends only on $\submap_{k-1}$. The previous maps $(\submap_i)_{i<k-1}$
are unnecessary at this stage. Thus the effort of assimilating a new piece
of data is constant in time---modulo the complexity of each $\submap_k$.

The distribution $\pi_{\Zb_{0:k+1}\vert \yb_{0:k+1}}$ is {not}
represented via a  collection of particles as $k\ge 0$
increases, but rather via a growing composition of low-dimensional 
transport
maps
\textcolor{black}{
that yields {\it fully supported} approximations of $\pi_{\Zb_{0:k+1}\vert \yb_{0:k+1}}$.
}
These maps are computed via deterministic optimization:
there are no importance sampling or resampling steps.
Intuitively, the optimization step for $\submap_k$ {\it moves} the
particles on which
the map is evaluated, rather than reweighing them.
Part \ref{thm:decompSmooth:partFilt} of Theorem \ref{thm:decompSmooth}
shows that the lower subcomponent $\submap_k^1:\re^n \ra \re^n$ of the map
$\submap_k$ characterizes the filtering distribution
$\pi_{\Zb_{k+1} \vert \yb_{0:k+1}}$  for all $k\ge 0$, while
Part \ref{thm:decompSmooth:partSmot}
shows
that each $\submap_k$ also characterizes the lag--$1$ smoothing
distribution $\pi_{ \Zb_k ,\Zb_{k+1} \vert \yb_{0:k+1}}$ up to an invertible transformation of the marginal over $\Zb_k$.
Thus, Theorem \ref{thm:decompSmooth} 
implies a deterministic algorithm for lag--$1$ smoothing 
that in fact fully characterizes the posterior 
distribution of the nonlinear state-space model---much in the same spirit as the 
Rauch-Tung-Striebel (RTS) smoothing algorithm for Gaussian models.
We clarify this connection in Section \ref{sec:smoothLinGauss}.

\hrevone{A related perspective on the 
proposed smoothing algorithm is that
the composition of maps $\mathfrak{T}_k = T_0 \circ \cdots \circ T_k$ 
implements 
the following factorization of the full Bayesian solution,
\begin{equation}
\label{eq:backward_cond}
\pi_{\Zb_{0:k+1} \vert \yb_{0:k+1} }  =  \pi_{\Zb_{k+1} \vert \yb_{0:k+1}} \,
\pi_{\Zb_{k} \vert \Zb_{k+1}, \yb_{0:k}} \,  \pi_{\Zb_{k-1} \vert \Zb_{k}, \yb_{0:k-1}}  \cdots \, \pi_{\Zb_0 \vert \Zb_1, \yb_{0}}, 
\end{equation} 
wherein each map 
$\submap_i$, due to its block {\it upper} triangular structure, 
is associated with a specific
factorization of the 
lag--$1$ smoothing density,
\begin{equation}
\pi_{\Zb_{i+1}, \Zb_{i} \vert \yb_{0:i+1} } = \pi_{\Zb_{i+1} \vert \yb_{0:i+1}} \, \pi_{\Zb_{i} \vert \Zb_{i+1}, \yb_{0:i}}. 
\end{equation}
Evaluating $\mathfrak{T}_k$ on samples drawn from the reference process $\eta_{\Xb_{0:k+1}}$ amounts to sampling first from the final filtering marginal $\pi_{\Zb_{k+1} \vert \yb_{0:k+1}}$ and then from the sequence of ``backward'' conditionals
in \eqref{eq:backward_cond}.
See also \cite{kitagawa1987non,doucet2009tutorial,godsill2012monte} for alternative %
approximations of the forward--filtering backward--smoothing formulas.%
}

\hrevone{%
Note that the proposed approach does \emph{not} 
reduce to the ensemble Kalman filter (EnKF) or to the ensemble Kalman smoother
(EnKS) \cite{evensen2003ensemble,evensen2000ensemble}, even if the maps $\{\submap_k\}$ are constrained to be linear.
For one, the EnKF implements a two-step recursive approximation of each filtering marginal, which consists of  
(i) %
a particle approximation of the
``forecast'' distribution $\pi_{\Zb_{k+1} \vert \yb_{0:k}}$ obtained by {\it simulating} the transition kernel $\pi_{\Zb_{k+1}\vert\Zb_k}$, 
followed by 
(ii) a 
{\it linear} approximation of the forecast-to-analysis update (i.e., the update from $\pi_{\Zb_{k+1} \vert \yb_{0:k}}$ to $\pi_{\Zb_{k+1} \vert \yb_{0:k+1}}$).
In contrast, our approach constructs a recursive {\it variational} approximation of each lag--$1$ smoothing distribution,
essentially using
numerical optimization to minimize the KL divergence between
$\pi_{\Zb_{k}, \Zb_{k+1} \vert \yb_{0:k+1}}$ and its transport map approximation. 
We do not %
make a particle approximation of 
the forecast distribution 
by integrating the model dynamics, but instead require %
explicit evaluations of the transition density $\pi_{\Zb_{k+1}\vert\Zb_k}$.
 If, however, the dynamics of the state-space model are linear, with Gaussian transition/observational noise and Gaussian initial conditions, then the proposed algorithm is equivalent to
 filtering and 
 smoothing via
 ``exact'' Kalman formulas;
 in this case, the EnKF and EnKS can be interpreted as Monte Carlo approximations of the recursions defined by the proposed algorithm \cite{raanes2016ensemble}. 
}
\paragraph{Numerical approximations.}
\textcolor{black}
{ 
In general, the maps $(\submap_i)$ must be approximated numerically 
(see Section \ref{sec:compTransport}).
As a result, Monte Carlo estimators
associated with the evaluation of 
$\mathfrak{T}_k = T_0 \circ \cdots \circ T_k$ are {\it biased}, although possibly with negligible variance, since it is trivial to evaluate the map a large number of times.
This bias is {\it only} due to the numerical approximation of $(\submap_i)$, and not
to the particular factorization properties of $\mathfrak{T}_k$.
In practice, one might either accept this bias or try to reduce it.
The bias can be reduced in at least two ways:
(1) by enriching the parameterization of some $(\submap_i)$, and thus 
increasing the accuracy of the
  variational approximation,
 or (2) by using the
map-induced proposal density $(\mathfrak{T}_k)\push \eta_{\Xb_{0:k+1}}$---i.e.,  
the pushforward of a marginal of the reference process through $\mathfrak{T}_k$---within importance sampling or MCMC (see Section \ref{sec:numerics}).
For instance, the weight function
\begin{equation}
	w^{k+1}(\xb) = \frac{\pi_{\Zb_{0:k+1} \vert \yb_{0:k+1}}(\xb)}{ (\mathfrak{T}_k)\push \eta_{\Xb_{0:k+1}}(\xb)}
\end{equation}
is readily available, and can be used to yield consistent estimators with
respect to the smoothing distribution.
However, the resulting weights cannot be computed recursively in time, because even though
the small dimensional maps $\submap_k$ are computed sequentially, the map-induced proposal
$(\mathfrak{T}_k)\push \eta_{\Xb_{0:k+1}}$
changes entirely at every step.
}

In particle filters, the complexity of approximating the underlying
distribution is given by the number of particles $N$.
In the proposed variational approach, the complexity of the
approximation depends on the parameterization of each map $\submap_i$. 
There is no single parameter like $N$ to describe the complexity of the latter---though, broadly, it should depend on the number of degrees of freedom in the parameterization.
In some cases, one might think of using the total order of a multivariate
polynomial expansion of each component of the map as a tuning parameter.
But this is far from general or practical in high dimensions.
The virtue of a functional representation of the transport map is the ability
to carefully select the degrees of freedom of the parameterization.
For instance, we might model local interactions between different 
groups of input variables using different approximation orders or even different 
sets of basis functions.
This freedom should {not} be frightening, but rather embraced as a
rich opportunity to exploit %
the structure
of the particular problem at hand.
(See \cite[Chapter 6]{spantini2017inference} for an  
example of this practice in the context of filtering high-dimensional
spatiotemporal processes with chaotic dynamics.)
\hrevone{In general, richer parameterizations of the maps are more
  costly to characterize because they lead to higher-dimensional
  optimization problems \eqref{OptimDirectSAA}.  Yet, richer
  parameterizations can yield arbitrarily accurate results.  There is
  clearly a tradeoff between computational 
  cost
  and statistical
  accuracy.
We investigate this tradeoff numerically in Section
\ref{sec:numerics}, where we report the cost of computing a transport
map under different parameterizations and inference scenarios.}

\hrevone{Another important note: the {\it sequential} approximation of
  the individual maps $(\submap_i)$ might present additional
  challenges due to the accumulation of error, since the target
  density for the $k$-th map $\submap_k$ depends on the numerical
  approximation of the previous map, $\submap_{k-1}$. This is not an
  issue with the factorization of $\mathfrak{T}_k$ per se, but rather
  with sequentially computing each element of the factorization. The
  analysis of sequential Monte Carlo methods (e.g.,
  \cite{crisan2002survey,del2004feynman,smith2013sequential})
  addresses a similar accumulation of error, but 
has not yet been
    extended to
  sequential variational inference techniques. In Section
  \ref{sec:numerics}, we empirically investigate the stability of
  variational transport map approximations for a problem of very long
  time smoothing (see Figure
  \ref{fig:stoc-vol:very-long-smooth-vs-unbiased}), showing excellent
  results---at least for the reconstruction of low-order smoothing
  marginals.}

As shown in \eqref{eq:normConst}, the computation of each %
$\submap_i$ is also associated with an approximation of the normalizing constant $\mathfrak{c}_i$ of
its own target density, which then leads to a one-pass approximation of the
marginal likelihood using \eqref{thm:decompSmooth:evidence}.

One last remark: 
the proof of Theorem \ref{thm:decompSmooth} shows
that 
the triangular structure hypothesis for each $\submap_i$
can be relaxed provided
that the underlying densities are regular enough. The following
corollary clarifies this point.
\begin{cor} \label{cor:genDecompThmChains}
The results of Theorem \ref{thm:decompSmooth} still hold  if we replace
every KR rearrangement $\submap_i$ with a ``block triangular''  diffeomorphism of the form
\eqref{eq:thm:decompSmooth:upperTriMap}
that
couples the same distributions, provided
that such  regular transport maps exist.
\end{cor}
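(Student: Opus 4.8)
The plan is to revisit the proof of Theorem~\ref{thm:decompSmooth} and isolate exactly which properties of the Knothe--Rosenblatt rearrangements $(\submap_i)$ are actually invoked. I expect to find that only three are used: (i) the ``block upper triangular'' form~\eqref{eq:thm:decompSmooth:upperTriMap}, i.e.\ that the lower sub-block $\submap_i^1$ depends on $\xb_{i+1}$ alone; (ii) the coupling property, i.e.\ that $\submap_i$ pushes $\eta_{\Xb_i,\Xb_{i+1}}$ forward to $\pi^i$; and (iii) a change-of-variables formula for densities under $\submap_i$ and under its sub-blocks. For a KR rearrangement, property (iii) is supplied by Lemma~\ref{lem:changeVarTriWeak}; for a block-triangular diffeomorphism it is supplied by the (cleaner) identity~\eqref{eq:smoothPushPull}. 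Since monotonicity, uniqueness, and the particular ``version'' of the map are never used, I would argue that replacing each $\submap_i$ by any block-triangular diffeomorphism of the form~\eqref{eq:thm:decompSmooth:upperTriMap} coupling the same pair of measures leaves every step intact; existence of such regular maps is granted by the hypothesis of the corollary.

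Concretely, I would re-run the proof in this order. First, the filtering claim (Part~\ref{thm:decompSmooth:partFilt}): since the block structure forces $\submap_k^1$ to be a function of $\xb_{k+1}$ alone and the reference process has independent components, $\submap_k^1$ pushes the marginal $\eta_{\Xb_{k+1}}$ to the $\zb_{k+1}$-marginal of $\pi^k$; an induction on $k$, using the recursion defining $\widetilde{\pi}^k$ and $\pi^k$ together with the substitution $\wb=\submap_{k-1}^1(\zb_k)$, identifies that marginal with $\pi_{\Zb_{k+1}\vert \yb_{0:k+1}}$ via the usual predict--update recursion and, simultaneously, shows $\mathfrak{c}_k=\pi_{\Yb_{k+1}\vert \yb_{0:k}}(\yb_{k+1})$, whence~\eqref{thm:decompSmooth:evidence} follows by telescoping (Part~\ref{thm:decompSmooth:partEvidence}). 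The substitution $\wb=\submap_{k-1}^1(\zb_k)$ is the only place a change of variables enters, and for a diffeomorphism it is just~\eqref{eq:smoothPushPull}. Second, the lag--$1$ smoothing and full-solution claims (Parts~\ref{thm:decompSmooth:partSmot} and~\ref{thm:decompSmooth:partFull}): the map $(\zb_k,\zb_{k+1})\mapsto(\submap_{k-1}^1(\zb_k),\zb_{k+1})$ pushes $\pi^k$ to $\pi_{\Zb_k,\Zb_{k+1}\vert \yb_{0:k+1}}$ by the same change of variables, which is exactly $\overline{\submap}_k$ of~\eqref{eq:thm:decompSmooth:upperTriMapSmooth} precomposed with $\submap_k$; and assembling the embeddings $T_i$ of~\eqref{eq:thm:decompSmooth:upperTriMapEmb} into $\mathfrak{T}_k=T_0\circ\cdots\circ T_k$ is pure functoriality of pushforward. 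None of these manipulations touches anything beyond (i)--(iii).

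The one genuine constraint---and the step I expect to require the most care---is regularity: each $\submap_{i-1}^1$ must be an honest diffeomorphism of $\re^n$ so that the density $\pi^i(\zb_i,\zb_{i+1})=\eta_{\Xb_i}(\zb_i)\,\widetilde{\pi}^i(\submap_{i-1}^1(\zb_i),\zb_{i+1})/\mathfrak{c}_i$ is well defined and fully supported (hence, in turn, a block-triangular diffeomorphism coupling $\eta_{\Xb_i,\Xb_{i+1}}$ to it can exist in $\borelmp(\re^{2n})$), and so that the Jacobian factor produced by the substitution $\wb=\submap_{i-1}^1(\zb_i)$ cancels correctly. This is precisely why the corollary restricts to diffeomorphisms of the form~\eqref{eq:thm:decompSmooth:upperTriMap} and adds the caveat ``provided that such regular transport maps exist''; the base case $k=0$ poses no issue since there $\pi^0=\widetilde\pi^0/\mathfrak{c}_0$ and $\submap_{-1}^1$ is taken to be the identity. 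The main obstacle is therefore bookkeeping rather than analysis: I must confirm that no step of the original proof silently used monotonicity to disambiguate a choice of map, and that each a.e.-differentiability argument from the KR setting upgrades unchanged to the everywhere-differentiable setting---both of which I expect to be routine, since the proof is in essence a chain of marginalizations and conditionings insensitive to the finer structure of the transports.
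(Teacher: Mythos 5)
Your proposal is correct and follows the same route the paper intends: the paper gives no separate proof of this corollary, merely observing after Theorem~\ref{thm:decompSmooth} that its proof ``shows that the triangular structure hypothesis for each $\submap_i$ can be relaxed provided that the underlying densities are regular enough,'' which is precisely the inventory you carry out. Your observation that the only three properties of $\submap_i$ ever invoked are the block-triangular sparsity pattern, the coupling, and a change-of-variables formula---with Lemma~\ref{lem:GenchangeVarTriWeak} for KR rearrangements replaced by the smooth identity~\eqref{eq:smoothPushPull} for diffeomorphisms---accurately captures why the argument transfers verbatim, and you correctly flag that the ``provided such regular transport maps exist'' caveat is doing the remaining work.
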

Filtering and smoothing are of course very rich problems, and in this
section we have by no means attempted to be exhaustive. Rather, our
goal was to highlight some 
implications of decomposable transports
on problems of %
sequential Bayesian inference, in a general non-Gaussian setting.  

\subsection{The linear Gaussian case: connection with the 
RTS smoother}
\label{sec:smoothLinGauss}
In this section, we specialize the results of Theorem \ref{thm:decompSmooth}
to linear Gaussian state-space models, and make explicit the
connection with the RTS Gaussian smoother \cite{rauch1965maximum}.

Consider a linear Gaussian state-space model defined by 
\begin{eqnarray}
 \Zb_{k+1}  & = & \forop_{k} \, \Zb_k + \noised_{k} \\
 \Yb_k      & = &\obsop_k \, \Zb_k + \noiseo_k \nonumber	
\end{eqnarray}
for all $k \ge 0$, where $\noised_{k}\sim \Gauss( 0 , \noisedcov_k )$,
$\noiseo_{k}\sim \Gauss( 0 , \noiseocov_k )$,
$\forop_{k} \in \re^{n \times n}$, $\obsop_k \in \re^{ d \times n }$, and
$\Zb_0 \sim \Gauss( \mu_0, \Gammab_0 )$.
Both $\noised_{k}$ and $\noiseo_{k}$ are independent of $\Zb_k$, while
$\noisedcov_k, \noiseocov_k$, and $\Gammab_0$ are symmetric positive definite %
matrices for all $k\ge 0$.

If we choose an independent reference process $(\Xb_k)$ with standard normal
marginals, i.e., $\eta_{\Xb_k} = \Gauss(0, {\bf I})$, 
 then the
maps $(\submap_k)$ of Theorem \ref{thm:decompSmooth} can be chosen
to be linear:
\begin{equation}   \label{eq:linearSubMap}
	\submap_k( \zb_k , \zb_{k+1} ) = 
	\left[
		\begin{array}{rr}
		\Ab_k & \Bb_k \\ 
		{\bf 0} & \Cb_k
		\end{array}
	\right]
	\left \{ 
		\begin{array}{l}
		\zb_k \\ 
		\zb_{k+1} 
		\end{array}
	\right \} + 
	\left \{ 
		\begin{array}{l}
		\ab_k \\ 
		\cb_k 
		\end{array}
	\right \}, 	
\end{equation}
for some matrices $\Ab_k,\Bb_k,\Cb_k \in \re^{n \times n}$ and $\ab_k,\cb_k\in \re^n$.
(Notice that in this case Corollary \ref{cor:genDecompThmChains} applies and
the matrices $\Ab_k,\Bb_k$ can be full and not necessarily triangular.)
The following lemma gives a closed form expression for the maps
$(\submap_k)$ with $k\ge 1$. ($\submap_0$ can be
derived analogously with simple algebra.) 
\begin{lem}[The linear Gaussian case] \label{lem:kalmanRec}
For $k\ge 1$, the map $\submap_k$ in \eqref{eq:linearSubMap} can be defined
as follows: if $(\cb_k , \Cb_k)$ is the output of a 
square-root Kalman filter at time $k$ \cite{bierman2006factorization},
i.e., if $\cb_k$ and $\Cb_k$ are, respectively, the mean and square root of the covariance of the
filtering distribution
$\pi_{\Zb_{k+1}\vert\yb_{0:k+1}}$, 
 then one can set:
\begin{eqnarray} \label{eq:recursionKalm}
	\Ab_k & = & \Jb_{k}^{-1/2} \\
	\Bb_k & = & 
	- \Jb_{k}^{-1} \, \Pb_{k} \, \Cb_k \nonumber \\
	\ab_k & = & 
	\Jb_{k}^{-1} \, \Pb_{k} \, (\forop_k\,\cb_{k-1}- \cb_k) \nonumber,
\end{eqnarray}
for $\Jb_{k} \coloneqq {\bf I} + \Cb_{k-1}^\top \, \forop_{k}^\top \, \noisedcov_k^{-1} \, 
		\forop_{k} \, \Cb_{k-1}$ and
$\Pb_{k} = - \Cb_{k-1}^\top\,\Fb_k^\top\,\Qb_k^{-1}$.		
\end{lem}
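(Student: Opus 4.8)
\emph{Proof plan.} The plan is to exploit that, once the reference process is Gaussian, every target density $\pi^i$ in the recursion of Theorem~\ref{thm:decompSmooth} is Gaussian, so each $\submap_i$ may be taken affine; an affine map pushing a standard Gaussian forward to a prescribed Gaussian is then determined, up to a choice of matrix square roots, by matching means and covariances, equivalently by matching the precision matrix of the target. I would run an induction on $k$ (the base case being the analogous, simpler, computation for $\submap_0$), with inductive hypothesis that $\submap_{k-1}^1$ is affine of the form $\submap_{k-1}^1(\xb)=\Cb_{k-1}\xb+\cb_{k-1}$, where $\cb_{k-1}$ and $\Cb_{k-1}$ are the mean and a square root of the covariance of $\pi_{\Zb_k\vert\yb_{0:k}}$; this is consistent with Theorem~\ref{thm:decompSmooth}[Part~\ref{thm:decompSmooth:partFilt}]. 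Existence of an \emph{affine} $\submap_k$ with the block structure \eqref{eq:linearSubMap} is licensed by Corollary~\ref{cor:genDecompThmChains}: a fully triangular KR rearrangement would force triangular diagonal blocks, but any block-triangular diffeomorphism coupling the same measures is admissible, and in the Gaussian case such maps exist.

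\emph{Step 1 (target and its precision).} Substituting $\submap_{k-1}^1(\zb_k)=\Cb_{k-1}\zb_k+\cb_{k-1}$ into $\pi^k(\zb_k,\zb_{k+1})\propto\eta_{\Xb_k}(\zb_k)\,\pi_{\Zb_{k+1}\vert\Zb_k}(\zb_{k+1}\vert\submap_{k-1}^1(\zb_k))\,\pi_{\Yb_{k+1}\vert\Zb_{k+1}}(\yb_{k+1}\vert\zb_{k+1})$ and expanding the Gaussian transition and likelihood densities shows that $\pi^k=\Gauss(\boldsymbol{m}^k,(\Omega^k)^{-1})$ is a nondegenerate Gaussian. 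Writing $\mathbf{G}_k\coloneqq\forop_k\Cb_{k-1}$, the Hessian of $-\log\pi^k$ is the block precision
\begin{equation*}
\Omega^k=\left[\begin{array}{cc}\mathbf{I}+\mathbf{G}_k^\top\noisedcov_k^{-1}\mathbf{G}_k & -\mathbf{G}_k^\top\noisedcov_k^{-1}\\[3pt] -\noisedcov_k^{-1}\mathbf{G}_k & \noisedcov_k^{-1}+\obsop_{k+1}^\top\noiseocov_{k+1}^{-1}\obsop_{k+1}\end{array}\right],
\end{equation*}
and I would record the two identities that make the statement work, namely $\Omega^k_{11}=\Jb_k$ and $\Omega^k_{12}=\Pb_k$.

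\emph{Step 2 (matching covariances).} For $\submap_k(\xb)=M\xb+\mathbf{b}$ with $M$ the block-triangular matrix of \eqref{eq:linearSubMap}, the pushforward $(\submap_k)\push\Gauss(0,\mathbf{I})$ has precision $M^{-\top}M^{-1}$; block-inverting $M$ and equating with $\Omega^k$ yields $\Ab_k^{-\top}\Ab_k^{-1}=\Jb_k$, $-\Jb_k\Bb_k\Cb_k^{-1}=\Pb_k$, and a $(2,2)$-block relation that rearranges to $(\Cb_k\Cb_k^\top)^{-1}=\Omega^k_{22}-\Omega^k_{21}(\Omega^k_{11})^{-1}\Omega^k_{12}$, the Schur complement of $\Omega^k_{11}$. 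The first two are solved by $\Ab_k=\Jb_k^{-1/2}$ (symmetric square root) and then $\Bb_k=-\Jb_k^{-1}\Pb_k\Cb_k$, i.e.\ the first two lines of \eqref{eq:recursionKalm}. For the third I would identify the Schur complement with the filtering precision at time $k+1$: integrating $\zb_k$ out of $\pi^k$, and using the change of variables induced by $\submap_{k-1}^1$ together with Theorem~\ref{thm:decompSmooth}[Part~\ref{thm:decompSmooth:partFilt}] and the Chapman--Kolmogorov identity, converts $\eta_{\Xb_k}(\cdot)\,\pi_{\Zb_{k+1}\vert\Zb_k}(\zb_{k+1}\vert\submap_{k-1}^1(\cdot))$ into the one-step predictive density $\pi_{\Zb_{k+1}\vert\yb_{0:k}}$, which after multiplication by the likelihood $\pi_{\Yb_{k+1}\vert\Zb_{k+1}}(\yb_{k+1}\vert\cdot)$ equals $\pi_{\Zb_{k+1}\vert\yb_{0:k+1}}$. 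Hence $\Cb_k\Cb_k^\top$ is the filtering covariance at $k+1$ and $\Cb_k$ is a valid square-root Kalman-filter output; equivalence of the two descriptions of this matrix (``Schur complement of $\Omega^k$'' versus ``Kalman analysis update'') follows from the Woodbury identity applied to the forecast covariance $\mathbf{G}_k\mathbf{G}_k^\top+\noisedcov_k$, whose inverse is $\noisedcov_k^{-1}-\Pb_k^\top\Jb_k^{-1}\Pb_k$.

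\emph{Step 3 (matching means; the main obstacle).} Since $\submap_k(\mathbf{0})=\mathbf{b}=(\ab_k,\cb_k)$, the pair $(\ab_k,\cb_k)$ must equal the mean $\boldsymbol{m}^k$ of $\pi^k$. By Step~2 the $\zb_{k+1}$-marginal of $\pi^k$ is $\pi_{\Zb_{k+1}\vert\yb_{0:k+1}}$, so $\cb_k$ is its mean, consistently with the square-root Kalman filter. It remains to verify $\ab_k=\boldsymbol{m}^k_1=\Ex_{\pi^k}[\zb_k]=\Jb_k^{-1}\Pb_k(\forop_k\cb_{k-1}-\cb_k)$. I would obtain this by solving the stationarity system for $\boldsymbol{m}^k$ (equivalently, $\Omega^k\boldsymbol{m}^k$ equals the linear coefficient vector of $-\log\pi^k$) block-wise, then substituting the Kalman mean recursion relating $\cb_k$ to $\forop_k\cb_{k-1}$ and the innovation $\yb_{k+1}-\obsop_{k+1}\forop_k\cb_{k-1}$, so that the expression telescopes into \eqref{eq:recursionKalm}. \emph{This last reconciliation is where the real work lies}: every ingredient is elementary linear algebra, but several distinct covariance and mean quantities are in play (forecast versus analysis versus joint versus conditional), and one has to choose the right form of the Kalman recursion---information form versus Joseph/square-root form---and keep careful track of the innovation and $\cb_{k-1}$ cross terms for everything to collapse cleanly to the stated formulas. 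I would close by noting that \eqref{eq:recursionKalm} is not the unique solution (orthogonal factors within blocks and alternative matrix square roots give further valid $\submap_k$), the lemma merely exhibiting the choice tied to the square-root Kalman filter.
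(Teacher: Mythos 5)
Your proposal is correct and arrives at \eqref{eq:recursionKalm} by a genuinely different computation from the paper's. The paper exploits the defining property of the (upper) triangular KR rearrangement directly: $\submap_k^0(\cdot,\zb_{k+1})$ is, by construction, the increasing rearrangement pushing $\eta_{\Xb_k}$ to the \emph{conditional} $\pi^k_{\Zb_k\vert\Zb_{k+1}}(\cdot\vert\submap_k^1(\zb_{k+1}))$. Writing $\pi^k$ in information form $\Gauss^{-1}(\zb;\hb,\Jb)$, this conditional is $\Gauss(\zb_k;\Jb_{11}^{-1}(\hb_1-\Jb_{12}\zb_{k+1}),\Jb_{11}^{-1})$; substituting $\zb_{k+1}\mapsto\Cb_k\zb_{k+1}+\cb_k$ and $\hb_1=\Jb_{12}\Fb_k\cb_{k-1}$, the affine $\submap_k^0$ (and hence $\Ab_k,\Bb_k,\ab_k$) drops out in one line, with no need to invert the block-triangular $M$ or to appeal to a Schur complement identity. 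You instead equate the full joint precision $M^{-\top}M^{-1}$ with $\Omega^k$ and the mean $\mathbf{b}$ with $\boldsymbol{m}^k$; both routes yield the same answer, and your block algebra in Step 2 is sound.

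One point worth flagging: the difficulty you anticipate in Step 3 is overstated. You do not need the Kalman mean recursion or any innovation bookkeeping. Having already established in Step 2 that the $\zb_{k+1}$-marginal of $\pi^k$ is the filtering distribution, you have $\boldsymbol{m}^k_2=\cb_k$ for free. Then the first block row of the stationarity system $\Omega^k\boldsymbol{m}^k=\hb$ reads $\Jb_{11}\boldsymbol{m}^k_1+\Jb_{12}\boldsymbol{m}^k_2=\Jb_{12}\Fb_k\cb_{k-1}$, which rearranges immediately to $\ab_k=\boldsymbol{m}^k_1=\Jb_k^{-1}\Pb_k(\Fb_k\cb_{k-1}-\cb_k)$. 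No telescoping is required; the paper's conditional-density viewpoint makes this transparent, but even in your joint-matching framework the mean falls out of the first block equation once $\cb_k$ is in hand. Your closing remark on nonuniqueness (symmetric vs.\ other matrix square roots, orthogonal factors) is a fair observation that the paper leaves implicit by fixing a particular square root.
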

The formulas in Lemma \ref{lem:kalmanRec} can be interpreted as
one possible implementation of a square-root
RTS smoother for Gaussian models: at each step $k$ 
of a forward pass, the filtering estimates $(\cb_k , \Cb_k)$
are augmented with a collection $(\ab_k,\Ab_k,\Bb_k)$ of stored quantities, which
can then be reused to sample  the full Bayesian
solution (or particular smoothing marginals) whenever needed, and
without ever touching the state-space model again. 
In this sense, the algorithm proposed in Section \ref{sec:filt} can be
understood 
as the natural
generalization---to the non-Gaussian case---of 
the square-root %
RTS smoother.

\subsection{Sequential joint parameter and state estimation} %
\label{sec:joint}
In defining a state-space model, it is common to parameterize the
transition densities of the unobserved process or the likelihoods of
the observables in terms of some hyperparameters $\vhyp$. The Markov
structure of the resulting Bayesian hierarchical model, conditioned on
the data, is shown in Figure \ref{fig:dataAssHyper}.
The state-space model is
now fully specified in terms of the conditional densities
$(\pi_{\Yb_{k}\vert\Zb_k, \vhyp})_{k\geq 0}$,
$(\pi_{\Zb_{k+1}\vert\Zb_k, \vhyp})_{k\geq 0}$,
$\pi_{\Zb_0 \vert \vhyp}$, and the marginal $\pi_{\vhyp}$. We assume
that the hyperparameters $\vhyp$ take values on $\re^{\dhyp}$, and that
the following regularity conditions hold:
$\pi_{\vhyp,\Zb_{0:k-1}, \Yb_{0:k-1}} > 0$
for all $k\ge 1$.

Given such a parameterization, one often wishes to \textit{jointly}
infer the hidden states and the hyperparameters of the model as
observations of the process $(\Yb_k)$ become available. That is, the
goal of inference is to characterize, %
via a {\it recursive} algorithm, the sequence of posterior
distributions given by
	\begin{equation} \label{eq:dataAssBayesJoint}
		\pi_{\vhyp,\Zb_{0:k}  \vert  \yb_{0:k}}(\zb_{\vhyps}, \zb_{0:k} )
	\coloneqq 
		\pi_{\vhyp, \Zb_{0:k} \vert  
		\Yb_{0:k}}(\zb_{\vhyps}, \zb_{0:k} \vert \yb_{0:k} )		
	\end{equation}
for all $k\ge0$ and for a sequence $(\yb_k)_{k \ge 0}$ of observations.
The following theorem shows that we can characterize \eqref{eq:dataAssBayesJoint} 
by computing a 
sequence of low-dimensional transport maps in the same spirit as 
Theorem \ref{thm:decompSmooth}.
In what follows, let $(\Xb_k)$ be an independent 
process with marginals
$(\eta_{\Xb_k})$ as defined in Theorem \ref{thm:decompSmooth} and let
$\Xb_{\vhyp}$ be a  random variable
on $\re^p$ that is independent of $(\Xb_k)$ and with
nonvanishing density $\eta_{\Xb_{\vhyp}}$.
\begin{figure}[H]
\centering
	\newcommand{\wStarGraph}{2cm} 
\newcommand{\scaleStarGraph}{.8} 
\newcommand{\minsize}{9mm} 
\newcommand{\varState}{\Zb} 

\subfloat{
\begin{tikzpicture}[transform shape, scale = \scaleStarGraph]
	\node[default,minimum size=\minsize]  (nd1) 										{$\varState_0$};
 	\node[default,minimum size=\minsize]	 (nd2)		  		[right=of nd1]		{$\varState_1$} 
 		edge[edgeStyle] (nd1);
 	\node[default,minimum size=\minsize]	 (nd3)		  		[right=of nd2]		{$\varState_2$} 
 		edge[edgeStyle] (nd2);
 	\node[default,minimum size=\minsize]	 (nd4)		  		[right=of nd3]		{$\varState_3$} 
 		edge[edgeStyle] (nd3); 		 		
 	\node[default,minimum size=\minsize]	 (nd5)		  		[right=of nd4]		{$\varState_N$} 
 		edge[edgeStyle,dashed] (nd4); 		
  \node[defaultOrange,minimum size=\minsize]   (ndb)       [above=of nd3]    
  {\Large $\vhyp$} 
    edge[edgeStyle] (nd1)
    edge[edgeStyle] (nd2)
    edge[edgeStyle] (nd3)
    edge[edgeStyle] (nd4)
    edge[edgeStyle] (nd5);     								 	
\end{tikzpicture}
} 	
\caption[Markov structure of a typical state-space model with static parameters]{
I-map for 
$\pi_{\vhyp,\Zb_0 , \ldots, \Zb_N \vert \yb_0 , \ldots, \yb_N}$, for any
$N>0$.
}
\label{fig:dataAssHyper}
\end{figure}

\begin{thm}[Decomposition theorem for joint parameter and state estimation] \label{thm:decompJoint}
Let $(\submap_i)_{i \ge 0}$ be a sequence of $(\sigma_i)$-generalized 
KR rearrangements on $\re^{\dhyp} \times \re^n \times \re^n$, which are of the
form
\begin{equation}   \label{eq:thm:decompJoint:genTriMap} 
\submap_i( \xb_{\vhyps}, \xb_i , \xb_{i+1} ) = \left[\begin{array}{l}
\submap_i^{\vhyp}( \xb_{\vhyps} ) \\[\spacinglines]
\submap_i^0( \xb_{\vhyps}, \xb_i , \xb_{i+1} ) \\[\spacinglines] 
\submap_i^1( \xb_{\vhyps}, \xb_{i+1})
\end{array}\right] %
\end{equation}
for some $\sigma_i$, $\submap_i^{\vhyp} : \re^{\dhyp} \ra \re^\dhyp$,
$\submap_i^0 : \re^{\dhyp} \times \re^n \times \re^n \ra \re^n$, 
$\submap_i^1 : \re^{\dhyp} \times \re^{n} \ra \re^n$, and
that are defined by the recursion:
	\begin{itemize}[label={--}]
		\item  
		$\submap_0$ pushes forward 
		$\eta_{\Xb_{\vhyp},\Xb_0,\Xb_{1}}$ to
		\begin{equation} \label{eq:target_stepmap_0}
		\pi^0 = \widetilde{\pi}^0/ \mathfrak{c}_0,
		\end{equation}
		\item 
		$\submap_i$
		pushes forward $\eta_{\Xb_{\vhyp},\Xb_i,\Xb_{i+1}}$ to
		\begin{equation} \label{eq:target_stepmap}
		\pi^i(\zb_{\vhyps},\zb_i, \zb_{i+1}) =
		\eta_{\Xb_{\vhyp},\Xb_i}(\zb_{\vhyps},\zb_i) \, 
		\widetilde{\pi}^i(\mathfrak{T}^{\vhyp}_{i-1}(\zb_{\vhyps}),
                \,
		\submap_{i-1}^1(\zb_{\vhyps},\zb_i), \zb_{i+1})/\mathfrak{c}_i,
		\end{equation}		
	\end{itemize}
where 
$\mathfrak{c}_i$ is a normalizing constant,
the map
$\mathfrak{T}^{\vhyp}_{j} \coloneqq \submap_{0}^{\vhyp} \circ \cdots
\circ \submap_{j}^{\vhyp}$ for all $j \ge 0$, and where
$(\widetilde{\pi}^i)_{i \ge 0}$ 
are
functions on 
$\re^{\dhyp} \times \re^{n} \times \re^{n} $ given by: 
	\begin{itemize}[label={--}]
		\item  
		$\widetilde{\pi}^0(\zb_{\vhyps},\zb_0, \zb_{1})  
		= \pi_{\vhyp,\Zb_0,\Zb_1}(\zb_{\vhyps},\zb_0, \zb_{1}) \,
		\pi_{\Yb_0 \vert \Zb_0, \vhyp}(\yb_0 \vert \zb_0, \zb_{\vhyps}) \, 
		\pi_{\Yb_1 \vert \Zb_1, \vhyp}(\yb_1 \vert \zb_1, \zb_{\vhyps})$,
		\item 
		$\widetilde{\pi}^i(\zb_{\vhyps},\zb_i, \zb_{i+1}) 
		= \pi_{\Zb_{i+1} \vert \Zb_i, \vhyp}(\zb_{i+1}\vert \zb_i, \zb_{\vhyps}) \,
		\pi_{\Yb_{i+1} \vert \Zb_{i+1} , \vhyp }(\yb_{i+1}\vert \zb_{i+1}, \zb_{\vhyps})$ 
		for $i\ge 1$.
	\end{itemize}
Then, for all $k\ge 0$, the following hold:	
	\begin{enumerate}
		\item
		\label{thm:decompJoint:partFilt}
		The map 
		$\widetilde{\submap}_k$,
		defined as
		\begin{equation}    \label{eq:thm:decompJoint:TriMapFilt} 
			\widetilde{\submap}_k( \xb_{\vhyps} , \xb_{k+1} ) = 
			\left[\begin{array}{l}
			\mathfrak{T}^{\vhyp}_{k}(  \xb_{\vhyps} )\\[\spacinglines] 
			\submap_k^1(  \xb_{\vhyps}, \xb_{k+1})
			\end{array}\right],%
		\end{equation}
		pushes forward 
		$\eta_{\Xb_{\vhyp}, \Xb_{k+1}}$ 
		to
		$\pi_{ \vhyp ,\Zb_{k+1} \vert \yb_{0}, \ldots, \yb_{k+1}}$. 
		\hfill [filtering]
		\item
		\label{thm:decompJoint:partFull}
		The composition of transport maps
		$\mathfrak{T}_k = T_0 \circ \cdots \circ T_k$,
		where each $T_i$ is defined as
		\begin{equation}  \label{eq:thm:decompJoint:genTriMapEmb}
		T_i( \xb_{\vhyps}, \xb_0 , \ldots, \xb_{k+1} ) = \left[\begin{array}{l}
		\submap_i^{\vhyp}( \xb_{\vhyps}) \\
		\xb_0 \\ 
		\vdots \\
		\xb_{i-1} \\[\spacinglines]
		\submap_i^0( \xb_{\vhyps}, \xb_i , \xb_{i+1}) \\[\spacinglines]
		\submap_i^1( \xb_{\vhyps}, \xb_{i+1}) \\
		\xb_{i+2} \\
		\vdots \\
		\xb_{k+1}
		\end{array}\right],
		\end{equation}
		pushes forward
		$\eta_{\Xb_{\vhyp},\Xb_{0}, \ldots, \Xb_{k+1}}$ to 
		$\pi_{\vhyp,\Zb_{0},\ldots, \Zb_{k+1} \vert \yb_{0},\ldots, \yb_{k+1} }$.
		\hfill [full Bayesian solution]
		\item 
		\label{thm:decompJoint:partEvidence}
		The model evidence (marginal likelihood) is given by \eqref{thm:decompSmooth:evidence}. 
	\end{enumerate}
\end{thm}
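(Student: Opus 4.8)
The plan is to adapt the argument behind Theorem~\ref{thm:decompSmooth}; the only genuinely new feature is the parameter block $\vhyp$ carried through every map. I would argue by induction on $k$, proving Part~\ref{thm:decompJoint:partFull} (the full-posterior claim) together with the companion identity $\mathfrak{c}_k=\pi_{\Yb_{k+1}\vert\yb_{0:k}}(\yb_{k+1})$; Parts~\ref{thm:decompJoint:partFilt} and~\ref{thm:decompJoint:partEvidence} then follow by marginalization and by telescoping $\prod_i\mathfrak{c}_i$. The regularity hypothesis $\pi_{\vhyp,\Zb_{0:k-1},\Yb_{0:k-1}}>0$ ensures that each target $\pi^i$ in the statement is strictly positive and, being a normalized product of densities, integrable, so the $\sigma_i$-generalized KR rearrangement $\submap_i$ exists and is $\eta$-a.e.\ unique. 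Conceptually this is just the recursive decomposition of Section~\ref{sec:rec_dec} applied to the Markov structure of Figure~\ref{fig:dataAssHyper} --- a chain of states with $\vhyp$ adjacent to all of them --- peeling the states off one at a time while keeping $\vhyp$ in every separator set; that viewpoint already explains the block-upper-triangular shape~\eqref{eq:thm:decompJoint:genTriMap}--\eqref{eq:thm:decompJoint:genTriMapEmb} of the maps $T_i$.

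For the base case $k=0$, the Markov properties of the model give $\widetilde{\pi}^0=\pi_{\vhyp,\Zb_0,\Zb_1,\Yb_0,\Yb_1}(\cdot,\cdot,\cdot,\yb_0,\yb_1)$, so $\pi^0=\widetilde{\pi}^0/\mathfrak{c}_0=\pi_{\vhyp,\Zb_0,\Zb_1\vert\yb_0,\yb_1}$ with $\mathfrak{c}_0=\pi_{\Yb_0,\Yb_1}(\yb_0,\yb_1)$, and $T_0=\submap_0$ is as claimed. For the inductive step, write $\mathfrak{T}_k=\widehat{\mathfrak{T}}_{k-1}\circ T_k$, where $\widehat{\mathfrak{T}}_{k-1}$ is $\mathfrak{T}_{k-1}$ extended by the identity on the block $\xb_{k+1}$. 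Because the reference is a tensor product and $\submap_k$ pushes $\eta_{\Xb_{\vhyp},\Xb_k,\Xb_{k+1}}$ to $\pi^k$, one gets $(T_k)\push\eta_{\Xb_{\vhyp},\Xb_{0:k+1}}(\zb_{\vhyps},\zb_{0:k+1})=\eta_{\Xb_{\vhyp},\Xb_{0:k}}(\zb_{\vhyps},\zb_{0:k})\,\widetilde{\pi}^k\big(\mathfrak{T}^{\vhyp}_{k-1}(\zb_{\vhyps}),\submap_{k-1}^1(\zb_{\vhyps},\zb_k),\zb_{k+1}\big)/\mathfrak{c}_k$. The crucial observation is that, by the block-upper-triangular form of the $T_i$, the $\vhyp$-block and the $k$-th block of the output of $\mathfrak{T}_{k-1}$ are \emph{exactly} $\mathfrak{T}^{\vhyp}_{k-1}(\zb_{\vhyps})$ and $\submap_{k-1}^1(\zb_{\vhyps},\zb_k)$; hence the last factor above is $(\widetilde{\pi}^k/\mathfrak{c}_k)$ pulled back along $\mathfrak{T}_{k-1}\times\mathrm{id}$. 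Applying the change-of-variables identity $F\push\big(\rho\cdot(h\circ F)\big)=(F\push\rho)\cdot h$ --- legitimate because $\sigma$-generalized KR rearrangements and their compositions obey Lemma~\ref{lem:changeVarTriWeak} (see Appendix~\ref{sec:genKR}) --- with $F=\mathfrak{T}_{k-1}\times\mathrm{id}$, $\rho=\eta_{\Xb_{\vhyp},\Xb_{0:k}}$, and the inductive hypothesis $(\mathfrak{T}_{k-1})\push\eta_{\Xb_{\vhyp},\Xb_{0:k}}=\pi_{\vhyp,\Zb_{0:k}\vert\yb_{0:k}}$, I obtain $(\mathfrak{T}_k)\push\eta_{\Xb_{\vhyp},\Xb_{0:k+1}}=\pi_{\vhyp,\Zb_{0:k}\vert\yb_{0:k}}\cdot\widetilde{\pi}^k/\mathfrak{c}_k$. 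The Markov properties of the graph finally give $\pi_{\vhyp,\Zb_{0:k+1},\Yb_{k+1}\vert\yb_{0:k}}=\pi_{\vhyp,\Zb_{0:k}\vert\yb_{0:k}}\cdot\widetilde{\pi}^k$ (the new transition and likelihood factor out, and conditioning on $\yb_{0:k}$ does not couple them to $\Zb_{k+1},\Yb_{k+1}$ since $\{\Zb_k,\vhyp\}$ blocks every path); comparing total masses then forces $\mathfrak{c}_k=\pi_{\Yb_{k+1}\vert\yb_{0:k}}(\yb_{k+1})$ and hence $(\mathfrak{T}_k)\push\eta_{\Xb_{\vhyp},\Xb_{0:k+1}}=\pi_{\vhyp,\Zb_{0:k+1}\vert\yb_{0:k+1}}$, closing the induction.

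With Part~\ref{thm:decompJoint:partFull} established, Part~\ref{thm:decompJoint:partFilt} is immediate: the block-upper-triangular form makes the $\vhyp$- and $(k{+}1)$-th components of $\mathfrak{T}_k$ depend only on $(\xb_{\vhyps},\xb_{k+1})$ and equal $\big(\mathfrak{T}^{\vhyp}_k(\xb_{\vhyps}),\submap_k^1(\xb_{\vhyps},\xb_{k+1})\big)=\widetilde{\submap}_k(\xb_{\vhyps},\xb_{k+1})$, so, the reference being a product, $(\widetilde{\submap}_k)\push\eta_{\Xb_{\vhyp},\Xb_{k+1}}$ is the $(\vhyp,\Zb_{k+1})$-marginal of $\pi_{\vhyp,\Zb_{0:k+1}\vert\yb_{0:k+1}}$, namely $\pi_{\vhyp,\Zb_{k+1}\vert\yb_{0:k+1}}$. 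Part~\ref{thm:decompJoint:partEvidence} follows by telescoping, $\prod_{i=0}^k\mathfrak{c}_i=\pi_{\Yb_0,\Yb_1}(\yb_0,\yb_1)\prod_{i=1}^k\pi_{\Yb_{i+1}\vert\yb_{0:i}}(\yb_{i+1})=\pi_{\Yb_{0:k+1}}(\yb_{0:k+1})$. I expect the main obstacle to be the inductive step above --- specifically the bookkeeping that converts the ``reference-side'' quantities $\mathfrak{T}^{\vhyp}_{k-1}$ and $\submap_{k-1}^1$ in the definition of $\pi^k$ into the ``target-side'' factor $\widetilde{\pi}^k$ after the change of variables through $\mathfrak{T}_{k-1}$. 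This relies entirely on the prescribed block-upper-triangular sparsity of each $\submap_i$ (so that the parameter and newest-state blocks propagate through $\mathfrak{T}_{k-1}$ in the controlled way noted above) and on the change-of-variables formula holding for compositions of non-smooth generalized triangular maps, which is where the positivity/regularity assumptions enter; the smooth-diffeomorphism version is entirely analogous to Corollary~\ref{cor:genDecompThmChains}. The remaining ingredients --- the base case, the Bayes/Markov factorizations, and the telescoping --- are routine.
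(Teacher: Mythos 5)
Your proposal is correct and takes essentially the same route as the paper: the same decomposition $\mathfrak{T}_k = \widehat{\mathfrak{T}}_{k-1}\circ T_k$, the same change-of-variables machinery through compositions of KR rearrangements (Lemma~\ref{lem:GenchangeVarTriWeak} applied iteratively), and the same Markov/Bayes factorization identifying $\widetilde{\pi}^k$ as the extra factor. The only organizational difference is that you run a single induction (Part~\ref{thm:decompJoint:partFull} plus $\mathfrak{c}_k$) and recover Part~\ref{thm:decompJoint:partFilt} afterward by marginalizing and reading off the $(\vhyp,k{+}1)$ blocks, whereas the paper first runs a complete induction establishing well-definedness together with Part~\ref{thm:decompJoint:partFilt} via the auxiliary maps $A_i$ and then runs a second induction for Part~\ref{thm:decompJoint:partFull}. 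One small caution on your ordering: in the inductive step you invoke $\submap_k$ before having verified that $\pi^k$ is a normalizable density, which presupposes $\mathfrak{c}_k<\infty$; to avoid the apparent circularity you should, as the paper does in the derivation of \eqref{thm:decompJoint:ck}, first obtain $\mathfrak{c}_k = \pi_{\Yb_{k+1}\vert\yb_{0:k}}(\yb_{k+1})<\infty$ directly by changing variables through $\widetilde{\submap}_{k-1}$ (available from Part~\ref{thm:decompJoint:partFilt} at step $k{-}1$, i.e., from marginalizing the inductive hypothesis), and only then conclude that $\submap_k$ exists and compute $(\mathfrak{T}_k)\push\eta$. With that reordering, the single-induction presentation is clean and equivalent to the paper's two-induction argument.
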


Theorem \ref{thm:decompJoint} suggests a variational %
algorithm
for the joint parameter and state estimation problem that is
similar to the one proposed in Theorem \ref{thm:decompSmooth}: 
compute the sequence of maps $(\submap_i)$, each of dimension $2n+p$;
embed them into
higher-dimensional identity maps to form $(T_i)$ according to
\eqref{eq:thm:decompJoint:genTriMapEmb};
then evaluate the composition
$\mathfrak{T}_k = T_0 \circ \cdots \circ T_k$ to sample directly from
$\pi_{\vhyp,\Zb_{0:k+1} \vert \yb_{0:k+1} }$ (i.e., the full Bayesian
solution). %
\hrevone{ 
See Appendix \ref{sec:algo} for more details.
}
Each map $\submap_i$ is now of dimension twice that of the
model state plus the dimension of the hyperparameters. This dimension is
slightly higher than that of the maps $(\submap_i)$ considered
in Theorem \ref{thm:decompSmooth},
and should be regarded as the price to pay
for introducing hyperparameters in the state-space
model and having to deal with the Markov structure of Figure \ref{fig:dataAssHyper}
as opposed to the tree structure of Figure \ref{fig:dataAssSmoothing}.
By Theorem \ref{thm:decompJoint}[Part \ref{thm:decompJoint:partFilt}], the composition 
of maps 
$\mathfrak{T}^{\vhyp}_{k} = \submap_{0}^{\vhyp} \circ \cdots \circ \submap_{k}^{\vhyp}$
provides a recursive  characterization of the posterior distribution over the static 
parameters, $\pi_{ \vhyp \vert \yb_{0:k+1}}$, for all $k \ge 0$.
The latter is often the ultimate goal of inference \cite{andrieu2010particle}.
\textcolor{black}{ 
In order to have a sequential algorithm for
parameter estimation, we also need to keep
a running approximation of 
$\mathfrak{T}^{\vhyp}_{k}$ 
using the recursion $\mathfrak{T}^{\vhyp}_{k} = \mathfrak{T}^{\vhyp}_{k-1} \circ 
\submap_{k}^{\vhyp}$---e.g., via regression---so that the
cost of evaluating $\mathfrak{T}^{\vhyp}_{k}$ does not grow with $k$.
}

\textcolor{black}{ 
Even in the joint parameter and state estimation case, only a single
forward pass with local computations is necessary to 
gather
all the
information from
the state-space model needed to sample the collection of posteriors
$(\pi_{\vhyp,\,\Zb_{0:k+1} \vert \yb_{0:k+1} })$.
Notice that the accuracy of the variational procedure is only limited by the
accuracy of each computed map, and that the proposed approach does not prescribe
an
artificial dynamic for the  parameters
\cite{kitagawa1998self,liu2001combined}, or an \textit{a priori} fixed-lag
smoothing approximation \cite{polson2008practical}.
Yet there is no rigorous proof that the performance of the
proposed sequential algorithm for parameter estimation does not deteriorate with time.
Indeed, developing exact, sequential, and online algorithms for parameter estimation 
in general non-Gaussian state-space models is among the chief research challenges in  SMC methods \cite{jacob2015sequential}. 
See \cite{chopin2013smc2,crisan2013nested,del2017biased} for recent contributions in this direction 
and \cite{kantas2015particle}
for a review of SMC approaches to Bayesian parameter inference.
See also \cite{erol2017nearly} for a hybrid approach that combines elements of variational
inference with particle filters.
}

We refer the reader to Section \ref{sec:numerics} for 
a numerical illustration of parameter inference with transport maps involving a stochastic
volatility model. %

\subsection{Fixed-point smoothing}
\label{sec:margSmooth}
Consider again the problem of sequential inference in
a state-space model without static parameters (see Figure
\ref{fig:dataAssSmoothing}), and suppose that we are interested
only in the smoothing marginal $\pi_{\Zb_0 \vert \yb_{0:k}}$ for all
$k \ge 0$; this is the fixed-point smoothing problem \cite{sarkka2013bayesian}.  

In Section \ref{sec:filt} we showed that computing a sequence of maps
$(\submap_i)$---each of dimension $2n$---is sufficient to sample the
joint distribution $\pi_{\Zb_{0:k+1} \vert \yb_{0:k+1}}$ by evaluating
the composition $\mathfrak{T}_k=T_0 \circ \cdots \circ T_k$, where
each $T_i$ is a trivial embedding of $\submap_i$ into an identity map.
If we can sample $\pi_{\Zb_{0:k+1} \vert \yb_{0:k+1}}$, then it is
easy to obtain samples from the marginal
$\pi_{\Zb_{0} \vert \yb_{0:k+1}}$: in fact, it suffices to evaluate
only the first $n$ components of $\mathfrak{T}_k$, which can be
interpreted as a map from $\re^{n\times (k+2)}$ to $\re^n$. To do so,
however, we need to evaluate $k$ maps.
A natural question then is whether it is possible to 
characterize $\pi_{\Zb_{0} \vert \yb_{0:k+1}}$
via a {\it single} transport map that is updated recursively 
in time, as opposed to a growing
composition of maps.

Here we propose a solution---certainly not the only possibility---based on the theory of
Section~\ref{sec:joint}.
The idea is to treat $\Zb_0$ as a static parameter, i.e., to set 
$\vhyp \coloneqq \Zb_0$ and apply the results of Theorem 
\ref{thm:decompJoint} to the Markov structure of
Figure \ref{fig:dataAssMargSmoothing}. The resulting algorithm
computes a sequence of
maps $(\submap_i)$ of dimension $3n$, i.e., {\it three} times the state dimension, and
keeps a running approximation of 
$\mathfrak{T}^{\vhyp}_{k}$
via the recursion $\mathfrak{T}^{\vhyp}_{k} = \mathfrak{T}^{\vhyp}_{k-1} \circ 
\submap_{k}^{\vhyp}$, where each $\submap_{k}^{\vhyp}$ is just 
a subcomponent of $\submap_k$. 
These maps $(\submap_i)$ are higher-dimensional than those considered in
Section \ref{sec:filt}, %
but
they do yield
the desired result:
each $\mathfrak{T}^{\vhyp}_{k}:\re^n \ra \re^n$
characterizes the smoothing 
marginal $\pi_{\Zb_{0} \vert \yb_{0:k+1}}$, for all $k \ge 0$, via a
single transport map that is updated recursively in time with just one forward pass
(see Theorem \ref{thm:decompJoint}[Part \ref{thm:decompJoint:partFilt}]).

\begin{figure}[h]
\centering
	\newcommand{\wStarGraph}{2cm} 
\newcommand{\scaleStarGraph}{.8} 
\newcommand{\minsize}{9mm} 
\newcommand{\varState}{\Zb} 
\newcommand{\colorEdges}{orange!50} 

\subfloat{
\begin{tikzpicture}[transform shape, scale = \scaleStarGraph]
	\node[default,minimum size=\minsize]  (nd1) 										{$\varState_1$};
 	\node[default,minimum size=\minsize]	 (nd2)		  		[right=of nd1]		{$\varState_2$} 
 		edge[edgeStyle] (nd1);
 	\node[default,minimum size=\minsize]	 (nd3)		  		[right=of nd2]		{$\varState_3$} 
 		edge[edgeStyle] (nd2);
 	\node[default,minimum size=\minsize]	 (nd4)		  		[right=of nd3]		{$\varState_4$} 
 		edge[edgeStyle] (nd3); 		 		
 	\node[default,minimum size=\minsize]	 (nd5)		  		[right=of nd4]		{$\varState_N$} 
 		edge[edgeStyle,dashed] (nd4); 		
  \node[defaultOrange,minimum size=\minsize]   (ndb)       [above=of nd3]    
  {$\varState_0$} 
    edge[edgeStyle] (nd1)
    edge[edgeStyle, color = \colorEdges] (nd2)
    edge[edgeStyle, color = \colorEdges] (nd3)
    edge[edgeStyle, color = \colorEdges] (nd4)
    edge[edgeStyle, color = \colorEdges] (nd5);     								 	
\end{tikzpicture}
} 	
\caption[Online fixed-point smoothing]{I-map (certainly not minimal) for
$\pi_{\Zb_0, \Zb_{1:N} \vert \yb_{0:N}}$, for any $N>0$.
Orange edges have been added compared to the tree structure of
Figure \ref{fig:dataAssSmoothing}.
}
\label{fig:dataAssMargSmoothing}
\end{figure}
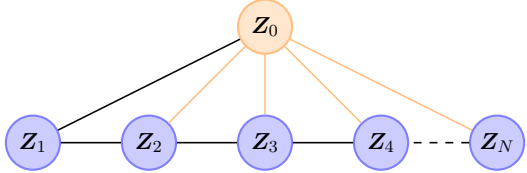

\section{Numerical illustration}
\label{sec:numerics}
We illustrate some aspects of the preceding theory using a problem of
sequential inference in a non-Gaussian state-space model.  In particular,
we show the application of decomposable transport maps
(Sections~\ref{sec:decomp} and \ref{sec:dataAss}) to joint state and
parameter inference in a stochastic volatility model. 
This example is
intended as a direct and simple illustration of the theory.
The notion of decomposable transport maps is useful well beyond the sequential
inference setting, and entails the general problem of inference in continuous non-Gaussian graphical 
models.
We refer the reader to \cite{morrison2017beyond} for an application of the theory of
sparse transports (Section \ref{sec:sparse}) to the problem of learning the Markov structure of a 
non-Gaussian distribution, and 
we defer
further numerical investigations to a dedicated paper
\cite{bigoni2016monotone}.

Following \cite{kim1998stochastic,rue2009approximate}, we model the scalar log-volatility 
$(\Zb_k)$
of the return of a financial asset at time $k=0,\ldots,N$ using an autoregressive 
process of
order one, which is fully specified by 
$\Zb_{k+1}=\mub + \phib \,(\Zb_k -\mub) + \,\noised_{k}$, for all $k\ge0$,
where $\noised_{k}\sim \Gauss(0,1/16)$ is independent of $\Zb_k$,
$\Zb_0 \vert \mub,\phib \sim 
\Gauss(\mub,\frac{1}{1-\phib^2})$,
and where
$\phib$ and $\mub$ represent scalar
hyperparameters of the model. 
In particular, $\mub\sim\Gauss(0,1)$ and 
$\phib = 2\exp(\phib^\star)/(1+\exp(\phib^\star))-1$ with
$\phib^\star\sim \Gauss(3,1)$.
We define $\vhyp\coloneqq(\mub,\phib)$.
The process $(\Zb_k)$ and parameters $\vhyp$ are unobserved and
must be estimated from an observed process $(\Yb_k)$, which represents the
mean return on holding the asset at 
time $k$,  
$\Yb_k = \noiseo_k \, 
\exp(\frac{1}{2}\Zb_k)$, where
$\noiseo_k$ is a standard normal random variable
 independent of $\Zb_k$.
As a dataset $(\yb_k)_{k = 0}^N$, we use the $N+1$  daily differences
of the pound/dollar exchange rate starting on 1 October 1981, with
$N=944$
\cite{rue2009approximate,durbin2000time}. 

Our goal is to sequentially characterize
$\pi_{\vhyp,\Zb_{0:k} \vert \yb_{0:k}}$, for all 
$k=0,\ldots,N$, as observations $(\yb_k)$ become 
available. 
The Markov structure of $\pi_{\vhyp,\Zb_{0:N} \vert \yb_{0:N}}$
matches
Figure \ref{fig:dataAssHyper}. 
We solve the problem using the algorithm introduced in Section \ref{sec:joint}: 
we compute a sequence, $(\submap_j)_{j=0}^{N-1}$,
of four-dimensional transport maps 
($n = \text{dim}(\Zb_j)=1$ and $p = \text{dim}(\vhyp)=2$) according to their definition 
in Theorem \ref{thm:decompJoint} and using the
variational form \eqref{OptimDirect}. 
All reference densities are standard Gaussians.
Then, by Theorem \ref{thm:decompJoint}[part \ref{thm:decompJoint:partFilt}], 
for any $k < N$, we can easily sample %
the filtering
marginal $\pi_{\Zb_{k+1} \vert \yb_{0:k+1}}$  by pushing forward
a standard normal through the subcomponent $\submap_k^1$ of $\submap_k$,
and we can also sample  the posterior distribution over the
static parameters $\pi_{ \vhyp \vert \yb_{0:k+1}}$ by pushing forward a 
standard normal through the map $\mathfrak{T}^{\vhyp}_{k}$. 
The map $\mathfrak{T}^{\vhyp}_{k} = \submap_{0}^{\vhyp} \circ \cdots \circ \submap_{k}^{\vhyp}$ 
is updated sequentially over time (via
regression) using the
recursion $\mathfrak{T}^{\vhyp}_{k} = \mathfrak{T}^{\vhyp}_{k-1} \circ 
\submap_{k}^{\vhyp}$, so that the
cost of evaluating $\mathfrak{T}^{\vhyp}_{k}$ does not increase with $k$.
The resulting algorithm for parameter estimation is thus sequential.
Moreover, 
if we want to sample  
$\pi_{\vhyp,\Zb_{0:k+1} \vert \Yb_{0:k+1}}$---the full
Bayesian solution at time $k+1$---we simply need to embed 
each $\submap_j$
into an identity map to form the
transport $T_j$,
for $j=0,\ldots,k$,
and push forward reference samples 
through the composition 
$\mathfrak{T}_k = T_0 \circ \cdots \circ T_k$
(Theorem \ref{thm:decompJoint}[part \ref{thm:decompJoint:partFull}]).
\hrevone{See Appendix \ref{sec:algo} for pseudocode of the relevant algorithms.}

Figures \ref{fig:stoc-vol:filtering} and
\ref{fig:stoc-vol:smoothing-vs-unbiased} show the resulting
filtering and smoothing marginals of the states over time, respectively. 
Figures  \ref{fig:stoc-vol:3d-phi-vs-unbiased} and
\ref{fig:stoc-vol:3d-mu-vs-unbiased}
collect the corresponding posterior marginals of
the static parameters over time.
Figure \ref{fig:stoc-vol:post-pred} 
illustrates marginals of the posterior predictive distribution of the data, together with
the observed data $(\yb_k)$, showing excellent coverage overall.

Our results rely on a numerical approximation of the
desired transport maps.
Each component of $\mathfrak{M}_k$ is 
parameterized via the monotone representation
\eqref{eq:monotone}, 
with $(a_k)$ and $(b_k)$ chosen to be Hermite polynomials and functions, respectively, of total degree seven \cite{boyd2001chebyshev} . 
The %
expectation 
in \eqref{OptimDirect} is approximated using tensorized Gauss quadrature 
rules.
The resulting minimization problems are 
solved sequentially using the Newton-CG method \cite{wright1999numerical}.
This test case was run using the dedicated software package publicly available at
\url{http://transportmaps.mit.edu}.
The website contains details about additional possible parameterizations of the maps.

There are several ways to investigate the quality of these
approximations.  Figures
\ref{fig:stoc-vol:smoothing-vs-unbiased},
\ref{fig:stoc-vol:3d-phi-vs-unbiased}, and
\ref{fig:stoc-vol:3d-mu-vs-unbiased} 
compare the numerical
approximation (via a decomposable transport map) of the smoothing marginals
of the states and the posteriors of the static parameters to a ``reference'' solution
obtained via MCMC. The MCMC chain is run until it
yields $10^5$ effectively independent samples.  The two solutions 
agree remarkably well and are almost indistinguishable in most
places. 
(Of course, MCMC in this context is not a data-sequential
algorithm; it requires that all the data $(\yb_k)_{k = 0}^N$ be
available simultaneously.)
An important fact is that the MCMC chain is generated using an 
{\it independent} proposal \cite{robert2013monte} given by the pushforward of a standard Gaussian
through the numerical
approximation of
$\mathfrak{T}_{N-1}$ (denoted as 
$\widetilde{ \mathfrak{T} }_{N-1}$).
The resulting MCMC chain has an acceptance rate slightly above $75 \%$, confirming the
overall quality of the variational approximation.
We notice, however, a {\it slow} accumulation of error in the posterior marginal
for the static parameter $\mub$ (Figure \ref{fig:stoc-vol:3d-mu-vs-unbiased}).
This is not surprising since we are performing {\it sequential} 
parameter inference \cite{jacob2015sequential}. 

A second {quality} test can proceed as follows:
since we use a standard Gaussian reference distribution $\genm_{\eta}$, we expect the
pullback of $\pi_{\vhyp,\Zb_{0:N} \vert \yb_{0:N}}$ through 
$\widetilde{ \mathfrak{T} }_{N-1}$ to be close to a standard Gaussian.
Figure \ref{fig:stoc-vol:pullback}  %
supports this claim by showing a collection of random two-dimensional conditionals
of the approximate pullback: these ``slices'' of the 947-dimensional 
($N$+1 states plus two hyperparameters)
pullback distribution are identical to a two-dimensional
standard normal, as expected.
The fact that we {\it can}  evaluate the approximate pullback density is one of
the key features of this variational approach to inference.
Even more, we can use this approximate pullback density 
to estimate the KL divergence between our target $\genm_{\pi}$ 
(the full Bayesian solution at time $N$) 
and the approximating measure 
$(\widetilde{ \mathfrak{T} }_{N-1})\push \genm_{\eta}$, via the
variance diagnostic in \eqref{eq:var_diag}.
A numerical realization of \eqref{eq:var_diag} 
yields $\Dkl(\,(\widetilde{ \mathfrak{T} }_{N-1})\push 
\genm_{\eta} \,\vert\vert \, \genm_{\pi} \, ) \approx 1.07 \times 10^{-1}$, which
confirms the good numerical 
approximation of $\genm_\pi$, a 947-dimensional target measure.
For comparison, we note that the KL divergence from
$\genm_\pi$ to its 
Laplace 
approximation (a Gaussian approximation at the mode) is %
$\approx 5.68$---considerably
worse 
than what is achieved through optimization of
a nonlinear transport map.
Moreover, the Laplace approximation {cannot} be computed
sequentially with a constant effort per time step.

\hrevone{While a slow accumulation of errors is expected for sequential parameter inference,
we also wish
to investigate the stability of our transport map
approximation for %
recursive smoothing {\it without} static parameters.
}
We try the following experiment: (1) compute the
posterior medians of the static parameters after $N+1=945$ days, i.e., 
$\thetab^* = {\rm med}[\Theta \vert \yb_0 ,\ldots, \yb_{944} ]$; and then (2) use these
parameters to characterize the smoothing distribution $\pi_{\Zb_{0}
,\ldots, \Zb_{2500} \vert \thetab^\ast, \,\yb_{0}, \ldots,
\,\yb_{2500}}$ of the log-volatility over (roughly) \hrevone{the next ten
years worth of exchanges}, 
using the sequential algorithm proposed in Section \ref{sec:filt}, which in this case
amounts to computing only a sequence of two-dimensional maps $(\submap_k)$.
The resulting smoothing marginals are shown in Figure
\ref{fig:stoc-vol:long-smooth-vs-unbiased}  and compared to those of a
reference MCMC 
simulation with $10^5$ effectively independent samples; we observe excellent agreement despite the long
assimilation window.
\hrevone{We then repeat the same experiment for an even longer assimilation
window, i.e., 9009 steps or roughly 35 years.
Figure \ref{fig:stoc-vol:very-long-smooth-vs-unbiased} 
shows the remarkable stability of the resulting smoothing approximation, at
least for low-order marginals. In fact, even the approximation of the \emph{joint} distribution of the states is quite good, as reported in the last
column of Table \ref{tab:cost_vs_accuracy}.
Understanding how errors propagate in this variational framework---and what could be potential
mechanisms for the ``dissipation'' of errors---is an exciting avenue for future work.
}

\hrevone{ 
The results presented so far are very accurate, but also expensive.
Table \ref{tab:cost_vs_accuracy} collects the computational times
for the joint state-parameter inference problem (approximately two days) and for the
long-time (9009 step) smoothing problem (approximately 40 minutes), using a
degree-seven map.
While there remains a tremendous opportunity to develop %
more performance-oriented versions of our transport map code, specialized to the
problem of sequential inference, 
the present framework also offers a practical and powerful tradeoff
between computational 
cost
and accuracy.  In Appendix \ref{sec:add_res}, we re-run all our test
cases using linear, rather than degree seven, parameterizations of
the maps $\{ \submap_k\}$.  Table \ref{tab:cost_vs_accuracy} shows
that the computational times are dramatically reduced: from two days
to approximately one minute for the joint state-parameter inference
problem, and from 40 minutes to 7 minutes for the long-time smoothing
problem.  The reduction in computational time comes, of course, at the
price of accuracy; see last column of Table
\ref{tab:cost_vs_accuracy}.  This reduction in accuracy may or may not
be acceptable.  For instance, in Figure
\ref{fig:stoc-vol:very-long-smooth-vs-unbiased-linear}, it is
difficult to distinguish the linear map approximation from the
reference MCMC solution.  Quantitatively, we know from Table
\ref{tab:cost_vs_accuracy} that a linear map is worse at approximating
the full Bayesian solution than a degree-7 transformation.
Yet, as far as quantiles of low-order marginals  are concerned, the two solutions are
indistinguishable (Figure \ref{fig:stoc-vol:very-long-smooth-vs-unbiased-linear}); in an applied setting, this accuracy may be more than sufficient.
In other cases, however, a linear map might be inadequate. For example, the parameter marginals in Figures 
\ref{fig:stoc-vol:3d-phi-vs-unbiased-linear}
and
\ref{fig:stoc-vol:3d-mu-vs-unbiased-linear}, estimated using linear maps,
are much worse than their degree-7 counterparts (Figures
\ref{fig:stoc-vol:3d-phi-vs-unbiased} and
\ref{fig:stoc-vol:3d-mu-vs-unbiased}). 
In these cases, we \emph{need} nonlinear transformations.
}

\hrevone{Clearly, there is a rich spectrum of possibilities between a
linear and a high-order transport map. Some parameterizations can
scale with dimension (e.g., separable but nonlinear representations),
while others do not (e.g., total-degree polynomial expansions).  
Depending on the problem, some
parameterizations will lead to
accurate results, while others will not.
Yet, the cost-accuracy tradeoff %
in
the transport framework can be
{\it controlled}, e.g., by estimating %
the {quality} of a given approximation 
using
\eqref{eq:var_diag}.}

\vspace{60pt}

\begin{figure}[H] %
  \begin{center}
        \includegraphics[width=0.90\textwidth, bb=25bp 0bp 650bp 290bp, clip]{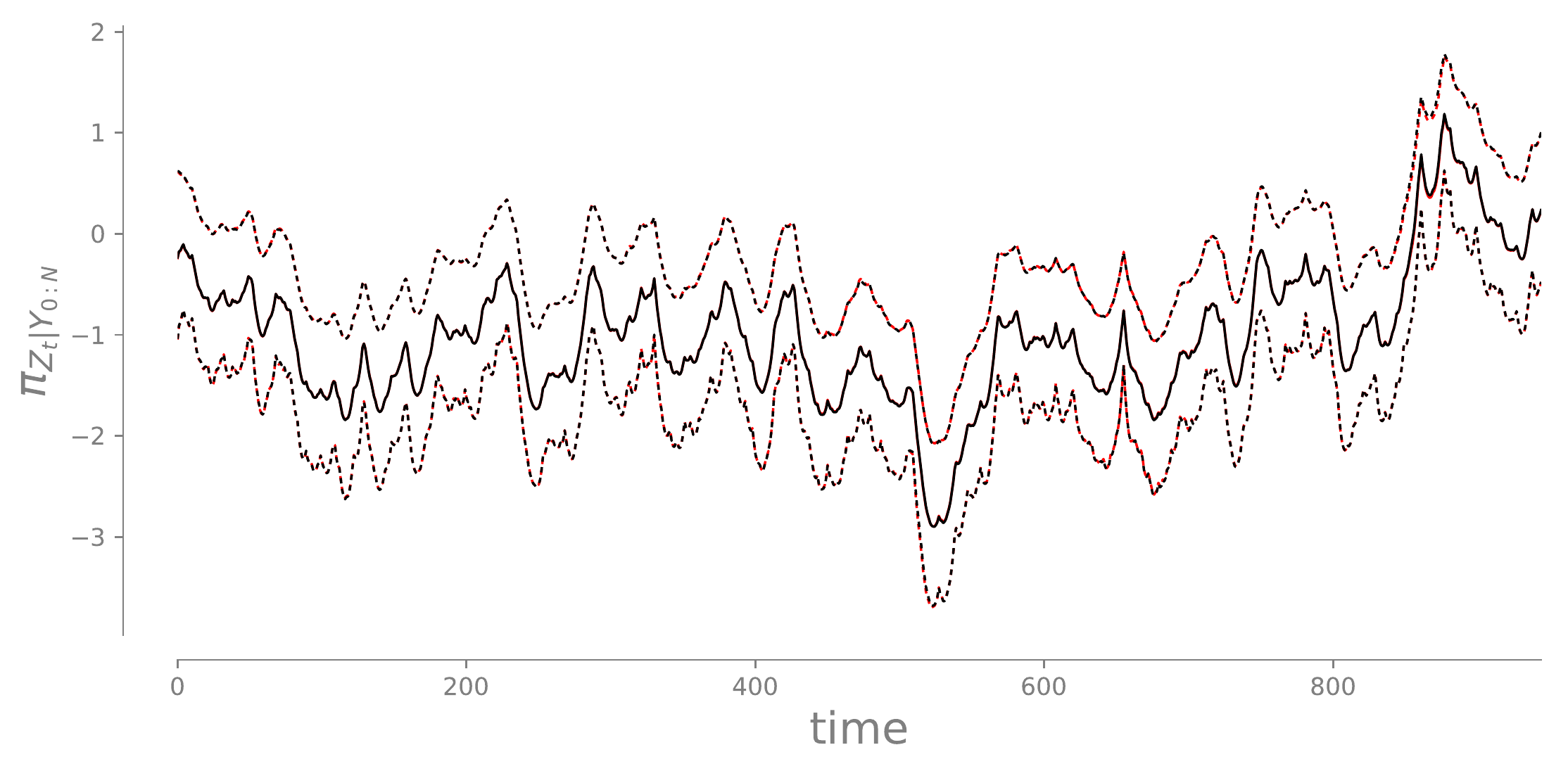}   
    \caption{    
      Comparison between the $\{5,95\}$--percentiles (dashed lines) and the mean (solid line)
      of the numerical approximation of the
      smoothing marginals $\pi_{ \Zb_{k} \vert \yb_{0:N}}$ via
      transport maps (red lines) versus a ``reference'' MCMC solution 
      (black lines), %
      for $k=0,\ldots,N$. The two solutions are indistinguishable.}
    \label{fig:stoc-vol:smoothing-vs-unbiased}
  \end{center}
\end{figure}

\begin{figure}[H] %
  \begin{center}
        \includegraphics[width=0.90\textwidth, bb=25bp 0bp 650bp 290bp, clip]{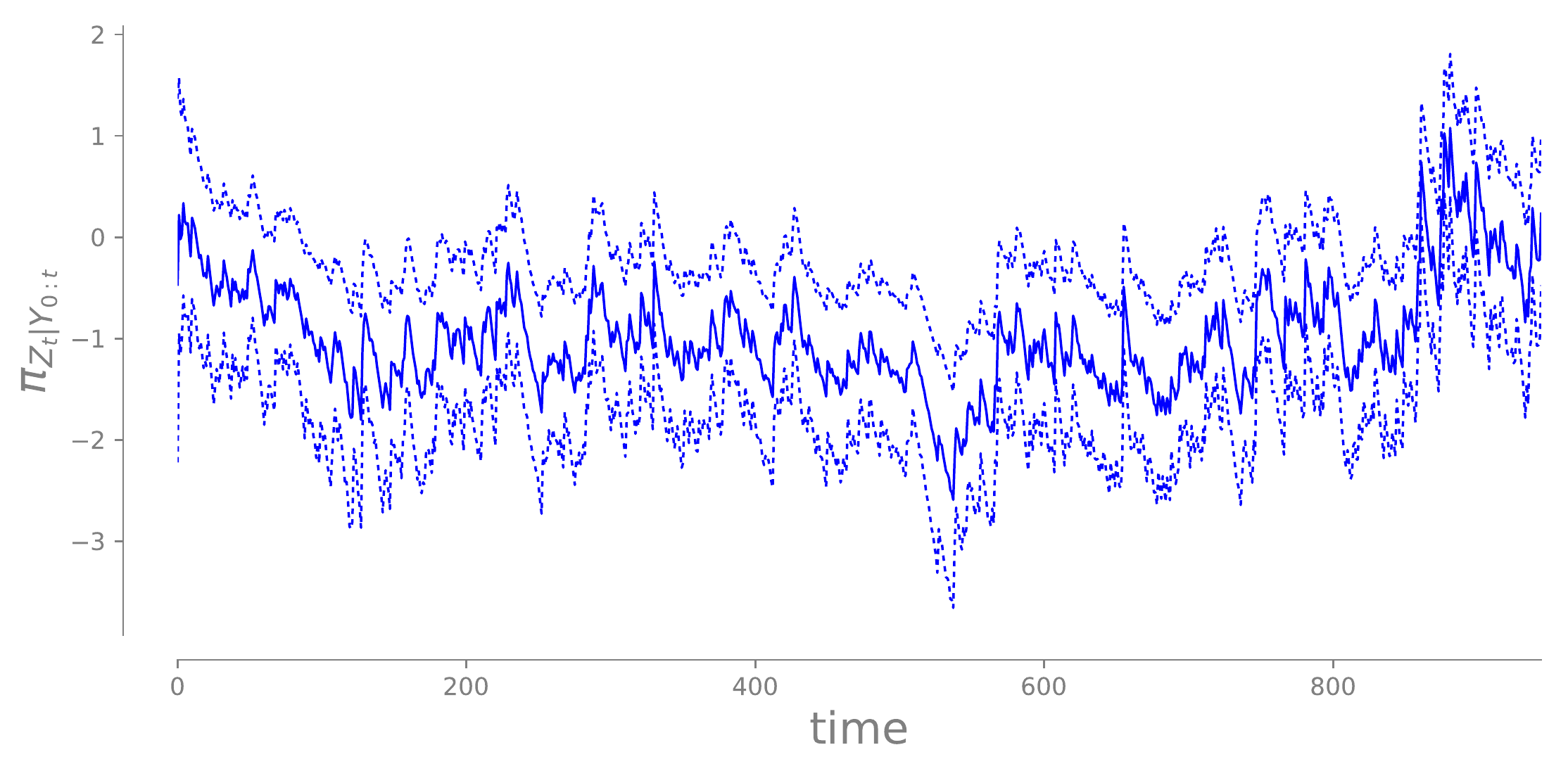}   
    \caption{At each time $k$, we illustrate the
      $\{5,95\}$--percentiles (dotted lines) and the mean (solid line) of the
      numerical approximation of the filtering distribution 
      $\pi_{ \Zb_k \vert \yb_{0:k}}$, 
      for $k=0,\ldots,N$. }
    \label{fig:stoc-vol:filtering} 
  \end{center}
\end{figure}

\begin{figure}[H] %
  \begin{center}
        \includegraphics[width=1.0\textwidth, bb=65bp 0bp 650bp 290bp, clip]{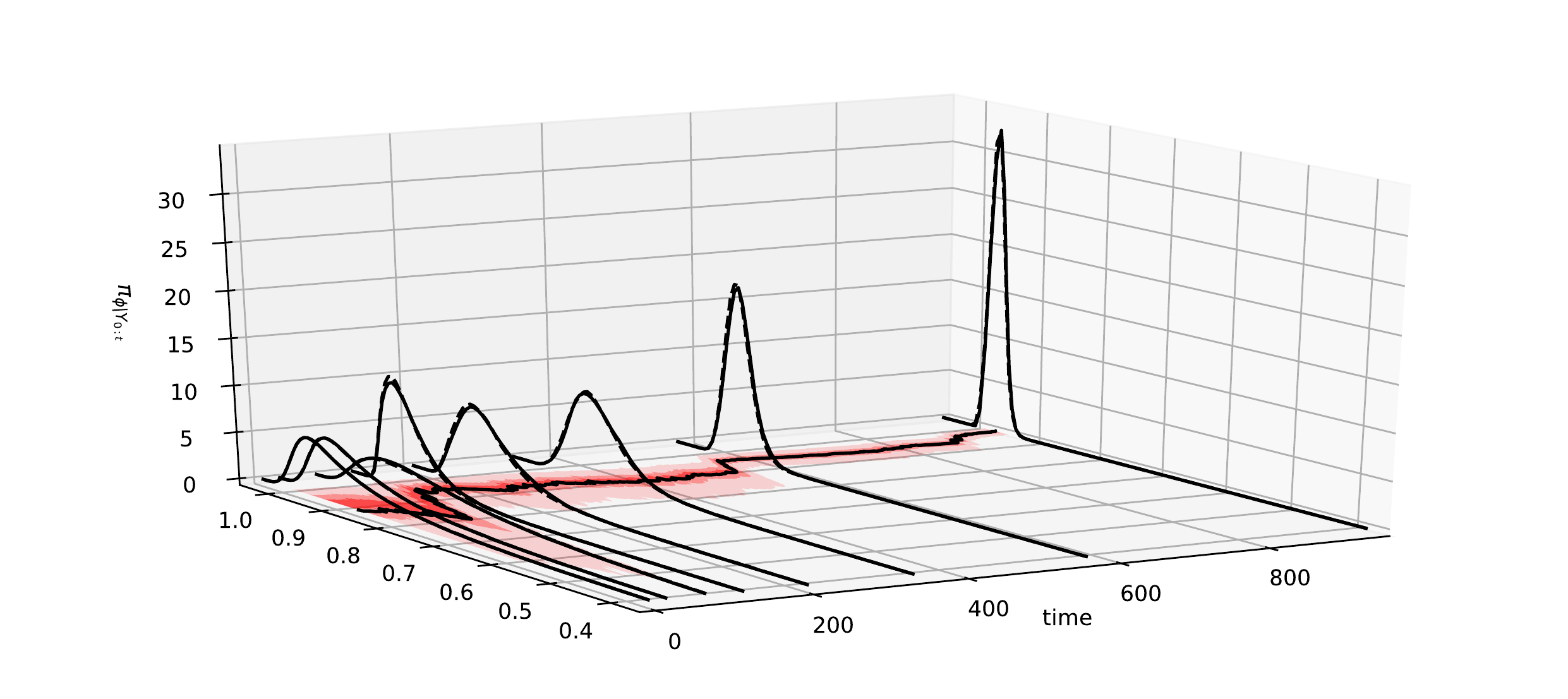}   
    \caption{\emph{(Horizontal plane)} At each time $k$, we illustrate the
      $\{5,25,40,60,75,95\}$--percentiles (shaded regions) and the mean (solid line) of 
      the numerical approximation of $\pi_{ \phib \vert \yb_{0:k}}$,
      the posterior marginal of
      the static parameter $\phib$. 
      \emph{(Vertical axis)} At several times $k$ %
      we also compare 
      the transport map numerical approximation of 
         $\pi_{ \phib \vert \yb_{0:k}}$ 
          (solid lines)
      with a reference MCMC solution (dashed lines).
      The two distributions agree remarkably well.}
  \label{fig:stoc-vol:3d-phi-vs-unbiased} 
  \end{center}
\end{figure}

\begin{table}[H]
  \centering
  \begin{tabular}{c|c|c|c|c|c|c}
    \textbf{Type} & \textbf{\# steps} & \textbf{Order} & \textbf{Time [m\,:\,s]} & \textbf{\# cores} & \textbf{Figures} & \textbf{Var.\ diag.} \eqref{eq:var_diag}\\ \hline\hline
    \multirow{3}{*}{S/P} & \multirow{3}{*}{945} & Laplace & 00\,:\,04 & 1 & & $5.68$ \\
    & & 
    7
    & $\approx$ 2 days & 64 & 
    \ref{fig:stoc-vol:smoothing-vs-unbiased} --
    \ref{fig:stoc-vol:pullback} & $1.07 \times 10^{-1}$ \\
    & & linear & 01\,:\,14 & 1 & 
    \ref{fig:stoc-vol:smoothing-vs-unbiased-linear} --
    \ref{fig:stoc-vol:2d-mu-vs-unbiased-linear} & $1.77$ \\ \hline
    \multirow{3}{*}{S} & \multirow{3}{*}{9009} & Laplace & 00\,:\,42 & 1 & & $10.0$ \\
    & & %
    7
    & 42\,:\,50 & 1 & \ref{fig:stoc-vol:very-long-smooth-vs-unbiased} & $1.19 \times 10^{-1}$ \\
    & & linear & 06\,:\,40 & 1 & \ref{fig:stoc-vol:very-long-smooth-vs-unbiased-linear} & $5.01$ \\ \hline
  \end{tabular}
      \caption{Computational effort required to compute
      a decomposable transport map for different complexities of the
      transformations $\submap_k$---linear versus degree seven---and for
      different inference scenarios---smoothing and static
      parameter estimation {\it(top row)} or long-time smoothing
      without static parameters {\it(bottom row)}, for the stochastic
      volatility model of Section \ref{sec:numerics}. The last column
      reports the variance diagnostic 
      \eqref{eq:var_diag} for the corresponding \emph{joint} posterior, not just a
      few marginals. It highlights a tradeoff between cost and
      accuracy, typical of the transport map approach to variational
      inference. For comparison, we also report the cost and accuracy
      of a simple Laplace approximation, which requires no formal
      optimization.
      }

      \label{tab:cost_vs_accuracy}
\end{table}

\begin{figure}[H]%
  \begin{center}
        \includegraphics[width=1.0\textwidth, bb=65bp 0bp 650bp 290bp, clip]{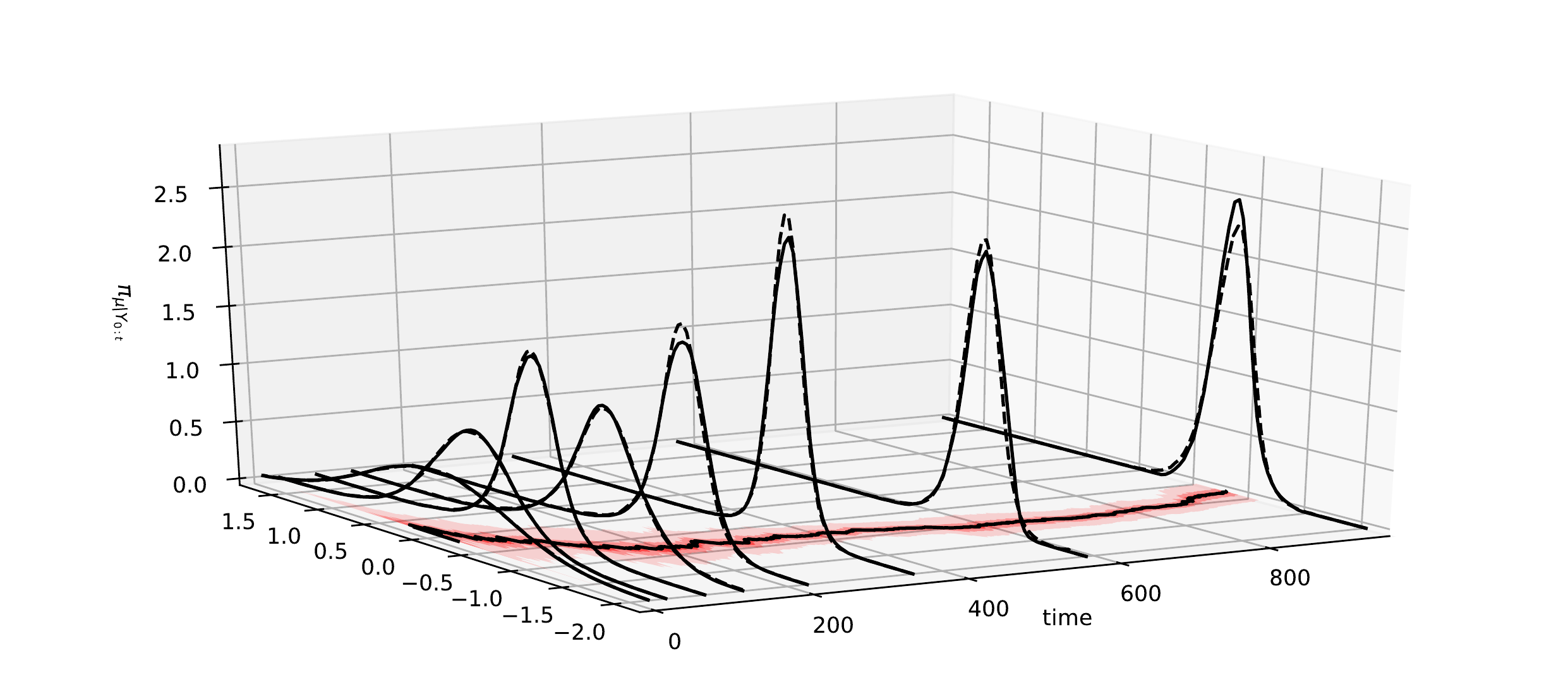}   
    \caption{
    Same as Figure \ref{fig:stoc-vol:3d-phi-vs-unbiased},  but for the static parameter $\mub$.
    }
  \label{fig:stoc-vol:3d-mu-vs-unbiased}   
  \end{center}
\end{figure}

\begin{figure}[H] %
  \begin{center}
        \includegraphics[width=0.90\textwidth, bb=25bp 0bp 650bp 281bp, clip]{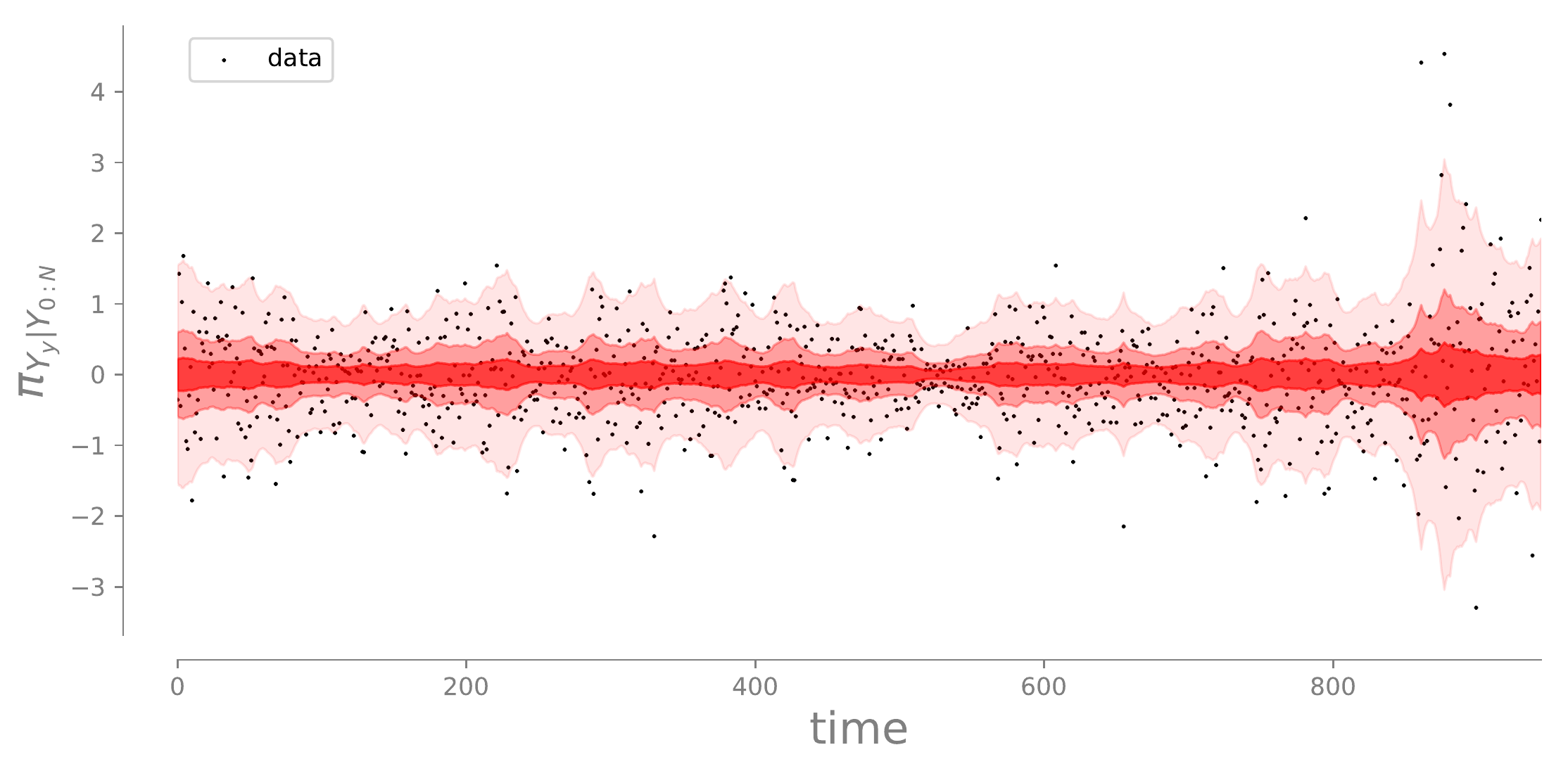}   
    \caption{ %
     Shaded regions represent the $\{5,25,40,60,75,95\}$--percentiles of the
     marginals of the
     posterior predictive distribution (conditioning on all the data),
     along with
     black dots that represent the observed data $(\yb_k)_{k=0}^N$.
      }
    \label{fig:stoc-vol:post-pred} 
  \end{center}
\end{figure}

\begin{figure}[H] %
  \begin{center}
     \includegraphics[width=0.40\textwidth, trim={0 0 0 1cm}, clip]{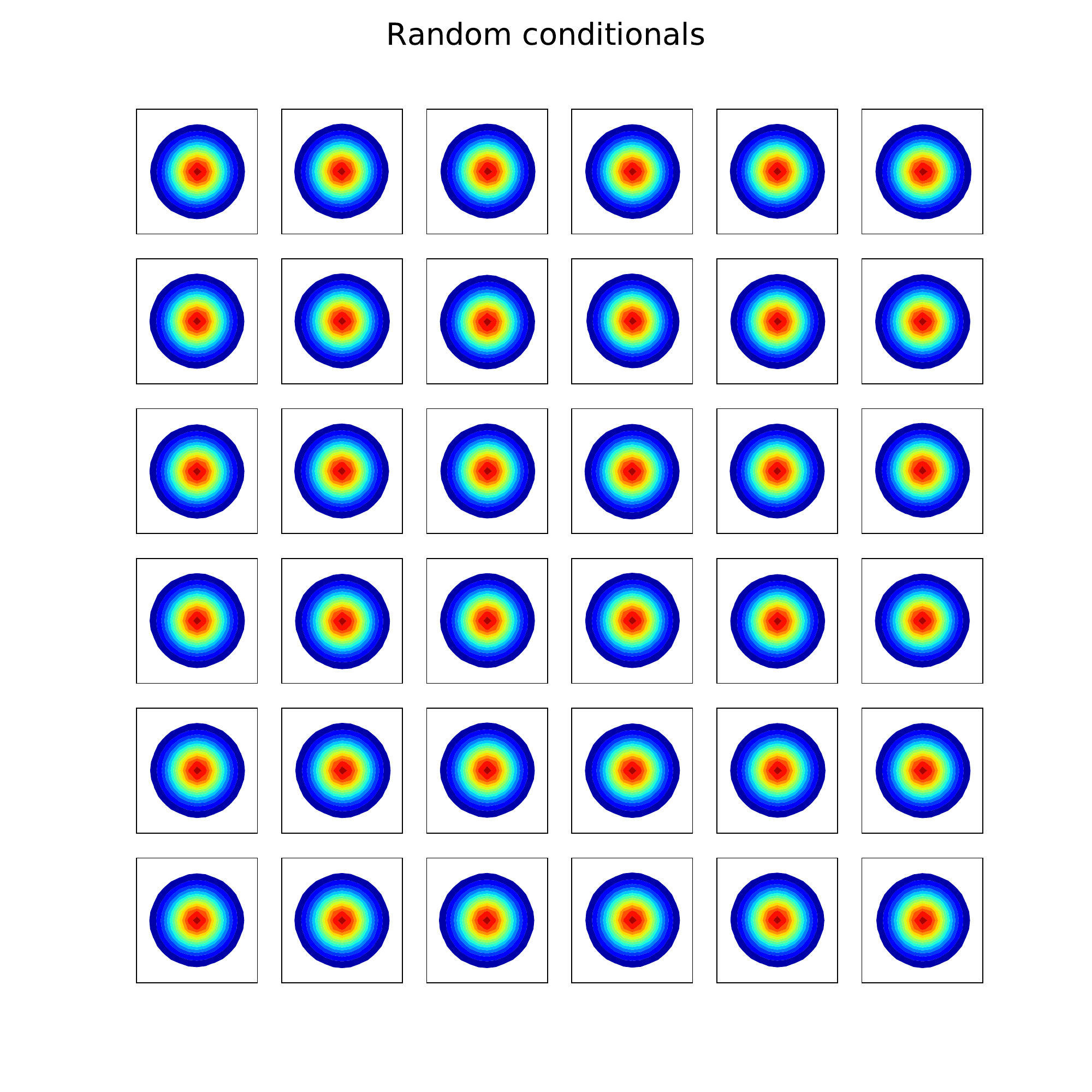} 
    \caption{ Randomly chosen two-dimensional conditionals of the pullback of 
      $\pi_{\vhyp, \Zb_{0:N} \vert \yb_{0:N}}$ 
      through the numerical approximation of
      $\mathfrak{T}_{N-1}$.
      Since we use a standard normal reference distribution, the
      numerical approximation of 
      $\mathfrak{T}_{N-1}$
      should be deemed satisfactory if the %
      pullback density
      is close to a standard normal, as it is here.
      }
    \label{fig:stoc-vol:pullback} 
  \end{center}
\end{figure}

\begin{figure}[H] %
  \begin{center}
        \includegraphics[width=0.90\textwidth, bb=25bp 0bp 650bp 290bp, clip]{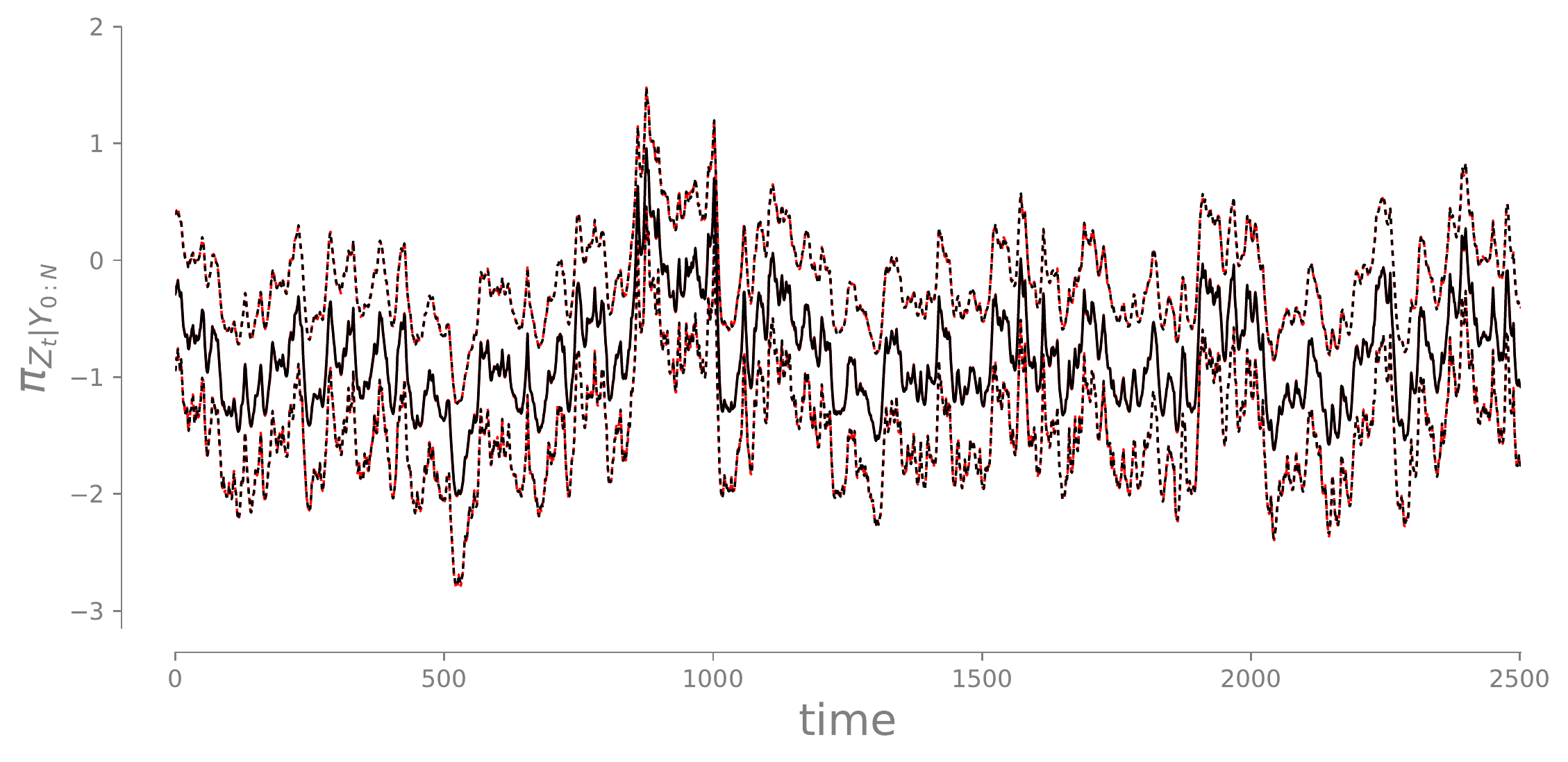}   
    \caption{Comparison between the $\{5,95\}$--percentiles (dashed lines) and the mean 
    (solid line)
      of the transport map 
      numerical 
      approximation of the
      smoothing marginals $\pi_{ \Zb_{k} \vert \thetab^*, \yb_{0:2500}}$,
      with $\thetab^* = {\rm med}[ \Theta \vert \yb_0 ,\ldots, \yb_{N}
      ]$ (red lines), and a reference MCMC solution       %
      (black lines).
      The two solutions are indistinguishable.}
    \label{fig:stoc-vol:long-smooth-vs-unbiased} 
  \end{center}
\end{figure}

\clearpage
\begin{sidewaysfigure}
\begin{center}
\includegraphics[width=.9\textwidth, bb=25bp 0bp 800bp 290bp, clip]{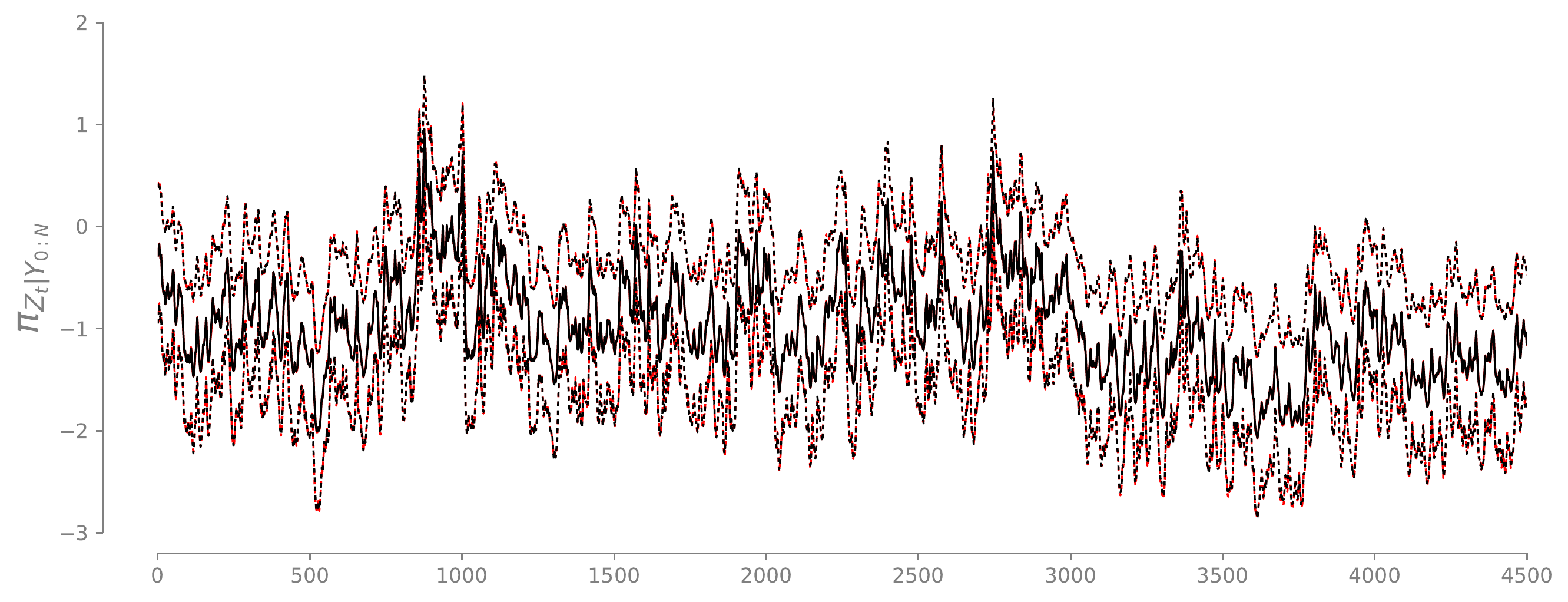}
\includegraphics[width=.9\textwidth, bb=25bp 0bp 800bp 290bp, clip]{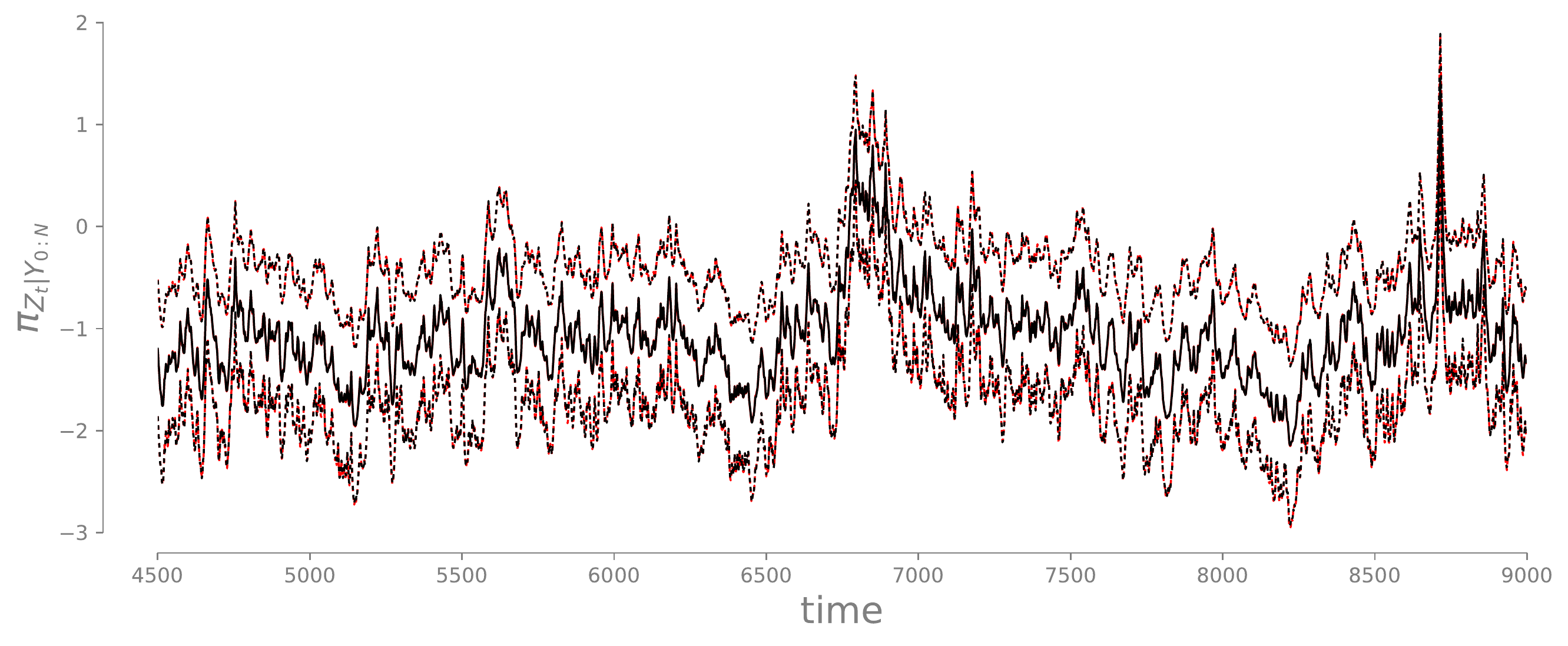}
\caption{Same as Figure \ref{fig:stoc-vol:long-smooth-vs-unbiased}, but
for a longer assimilation window, i.e., $\pi_{ \Zb_{k} \vert \thetab^*, \yb_{0:9000}}$.
The smoothing approximation remains excellent despite the widening inference horizon. }
\label{fig:stoc-vol:very-long-smooth-vs-unbiased} 
\end{center}
\end{sidewaysfigure}
\clearpage

\section{Discussion} %
\label{sec:discus}
This paper has focused on the problem of coupling a pair $(\genm_\eta,\genm_\pi)$ of
absolutely continuous measures on $\re^n$,
for the purpose of sampling or integration.
If $\genm_\eta$ is a tractable measure (e.g., an isotropic Gaussian) and
$\genm_\pi$ is an intractable measure of interest (e.g., a
posterior distribution), then a deterministic coupling
enables principled approximations of integrals 
via the identity $\int g\,{\rm d}\genm_\pi = \int g\circ T\,{\rm
d}\genm_\eta$. In other words, a deterministic coupling provides a
simple way to simulate $\genm_\pi$ by pushing forward samples from
$\genm_\eta$ through a transport map $T$.
This idea, modulo some variations, %
has been exploited in a variety of statistical and machine
learning applications---some old, some new---including
random number generation \cite{marsaglia2000ziggurat},
variational inference 
\cite{el2012bayesian,schillings2014scaling,rezende2015variational,liu2016stein,marzouk2016introduction},
 the 
computation of model evidence 
\cite{meng2002warp},
model learning and density estimation
\cite{tabak2013family,laparra2011iterative,anderes2012general,stavropoulou2015parametrization},
non-Gaussian proposals for MCMC or importance sampling
\cite{parno2014transport,Bardsley2014,morzfeld2015parameter,heng2015gibbs,oliver2015metropolized,han2017stein},
multiscale modeling \cite{parno2015multiscale}, and
filtering
\cite{daum2008particle,chorin2009implicit,reich2013ensemble,reich2013nonparametric},
to name a few. Indeed there are infinitely many ways to transport one
measure to another \cite{villani2008optimal} and as many ways to
compute one.  Yet these maps are not equally easy to
characterize.

This paper establishes an explicit link
between the conditional independence structure of 
$(\genm_\eta,\genm_\pi)$ and the
existence of low-dimensional couplings induced by transport
maps that are {\it sparse} and/or 
{\it decomposable}.
These results 
can enhance
a 
wide array of numerical approaches to
the transportation of measures, including 
\cite{reich2011dynamical,el2012bayesian,tabak2013family,rezende2015variational,liu2016stein,bigoni2016monotone}, 
and thus facilitate simulation with
respect to complex distributions in high dimensions. 
We briefly discuss our main results below.
\paragraph{Sparse transports.}

A sparse transport is a map whose components do not depend on all
input variables. Section \ref{sec:sparse} derives tight bounds on the
sparsity pattern of the Knothe--Rosenblatt (KR) rearrangement (a
triangular transport map) based solely on the Markov structure of
$\genm_\pi$, provided that $\genm_\eta$ is a %
product 
measure (Theorem \ref{thm:sparsityRosenblatt}). This analysis shows
that the inverse of the KR rearrangement
is the
natural generalization to the {\it non-Gaussian} case of the
Cholesky factor  
of the precision 
matrix 
of a Gaussian
MRF---in that
both the
inverse KR rearrangement (a potentially nonlinear map) and the Cholesky factor (a linear map)
have 
the same sparsity pattern given 
target
measures
with the 
same Markov structure.
Thus %
the KR rearrangement can be used to extend 
well-known modeling and sampling techniques for
high-dimensional Gaussian MRFs  \cite{rue2005gaussian} 
to non-Gaussian fields (Section \ref{sec_gmrf}).
These results are particularly useful when constructing a transport map
from samples via convex optimization \cite{parno2015transport} and suggest
novel approaches to model learning \cite{morrison2017beyond} and high-dimensional filtering
\cite[Chapter 6]{spantini2017inference}.
Section \ref{sec:sparse} 
shows that sparsity is usually a feature of inverse transports,
while direct transports tend to be dense, even for the
most trivial Markov structures.
In fact, the sparsity of direct transports stems from
\textit{marginal}
 (rather than conditional) independence---a property
frequently exploited in localization schemes for
high-dimensional covariance estimation \cite{gaspari1999construction,hamill2001distance}.

\paragraph{Decomposable transports.}  
A decomposable map is
a function that can be written as the %
composition of {\it finitely} many 
low-dimensional 
maps that
are triangular up to a permutation---i.e., 
$T=T_1 \circ \cdots \circ T_\ell$, where  each %
$T_i$ differs from the identity only along a small subset of its
components and is a generalized triangular function as
defined in Section \ref{sec:decomp}.
Theorem \ref{thm:decompTrans} shows 
that every target measure %
whose Markov network admits
a graph decomposition can be coupled 
with a %
product
(reference) measure %
 via a decomposable map.
Decomposable maps are important because
they are 
much easier to represent than 
arbitrary multivariate functions on $\re^n$. 
In general, these maps are non-triangular, even though %
each map in the composition is generalized triangular.
The notion of a decomposable map
is different from the composition-of-maps approaches
advocated in the literature for the approximation of
transport maps (e.g., consider normalizing flows \cite{rezende2015variational} or Stein variational algorithms 
\cite{anderes2012general,liu2016stein,detommaso2018stein}, but also \cite{tabak2013family,laparra2011iterative}).   
In these approaches, very simple maps
$(M_i)_{i \ge 1}$
are composed in growing number
to define a transport map of increasing complexity, 
$M=M_1\circ  \cdots \circ M_k$.
The %
number of layers in $M$ %
depends on the
desired accuracy of the transport and can be arbitrarily large.
On the other hand, a decomposable coupling is induced by a {special}
transport map that can be written {\it exactly} as the composition of
finitely many maps, $T=T_1 \circ \cdots \circ T_\ell$, where each $T_i$ has a specific sparsity
pattern that makes it low-dimensional.
This definition does not specify a
representation for
$T_i$.
In fact, each $T_i$ 
could itself be
approximated
by the composition of
simple maps using {\it any} of the aforementioned techniques.
The advantage of targeting a decomposable transport  is the
fact that the $(T_i)$ are {\it guaranteed} to be low-dimensional.
\paragraph{Approximate Markov properties.} 
Sparsity and decomposability of certain transport maps are  induced 
by the %
Markov properties of the target measure. 
A natural question is: %
what  
happens
when 
$\genm_\pi$ satisfies some Markov properties only {\it approximately}?
In particular, let $\genm_\pi$ be Markov with respect to $\Gcb$, 
and assume that
there exists a %
measure
$\agenm \in \borelmp(\re^n)$
which is Markov with respect to a graph $\agengr$ that is
{\it sparser} %
than $\Gcb$ and
such that 
$\Dkl(\,\agenm \, \vert\vert \, \genm_\pi \,)<\varepsilon$, for some
$\varepsilon>0$.
For small $\varepsilon$,
we would be tempted to use %
$\agengr$ to
characterize %
couplings of  $(\genm_\eta, \genm_\pi)$
that are possibly
sparser
or more decomposable
than those associated with  $\Gcb$.
Concretely, if we are interested in a 
triangular transport that 
pushes forward $\genm_\eta$ to $\genm_\pi$,
we could minimize 
$\Dkl(\,T\push\,\genm_\eta \, \vert\vert \, \genm_\pi \,)$
over the set of %
maps that have the
same sparsity pattern as the
KR rearrangement between $\genm_\eta$ and $\agenm$.
Bounds on this sparsity pattern are given by 
Theorem \ref{thm:sparsityRosenblatt} using only
graph operations on
$\agengr$;
no explicit knowledge of $\agenm$ is required.
Alternatively, if we are interested in
decomposable transports that 
push forward $\genm_\eta$ to $\genm_\pi$, %
we could minimize 
$\Dkl(\,T\push\,\genm_\eta \, \vert\vert \, \genm_\pi \,)$
over the set of %
maps that  factorize
as any of the decomposable transports
between $\genm_\eta$ and $\agenm$.
The shapes of these low-dimensional factorizations
are given by Theorem \ref{thm:decompTrans}
using, once again, only graph operations on
$\agengr$.

Now let $\hat{\spaceMap}$ denote the set of maps
whose structure is constrained by $\agengr$ in terms of
sparsity or decomposability.
It is easy to show that
\begin{equation}
	\min_{T \in \hat{\spaceMap}}	
	\Dkl(\,T\push\,\genm_\eta \, \vert\vert \, \genm_\pi \,) < \varepsilon,
\end{equation}
which 
means %
that the price %
of assuming 
that the coupling is either sparser or more
decomposable than it ought to be
is just a small
error in the approximation of $\genm_\pi$.

Of course, the pending 
question is  whether
$\genm_\pi$ can be well approximated  by a measure
that satisfies additional Markov properties.
There is some work on this topic, 
e.g., \cite{johnson2008recursive,jog2015model,cheng2015efficient}---especially in the
case of Gaussian measures---but a more thorough 
investigation of the problem remains an open and
important direction for future work.
Interestingly, the transport map framework also
allows one to \textit{adaptively}
discover information about low-dimensional
couplings. %
For instance, one might start with a very sparse transport map %
and then incrementally decrease the sparsity level of the map
until the resulting approximation of $\genm_\pi$ becomes
satisfactory. The same can be done for decomposable transports.
See \cite{bigoni2016monotone} for some details on this idea.

\paragraph{Filtering and smoothing.} 
Section \ref{sec:compDecompTrans} shows how not only the
representation, but also the \textit{computation}, of a decomposable
map, $T=T_1 \circ \cdots \circ T_\ell$, can be broken into a
sequence of $\ell$ simpler steps, each associated with a
low-dimensional optimization problem whose solution yields $T_i$. 
We give a concrete example of this idea for %
filtering, smoothing, and joint state--parameter inference in
nonlinear and non-Gaussian
state-space models (Section \ref{sec:dataAss}).
In this context, 
Theorems \ref{thm:decompSmooth} and \ref{thm:decompJoint} introduce
variational
approaches for characterizing the full posterior distribution of %
the sequential inference problem, 
essentially by performing only 
recursive lag--1 smoothing with transport maps.
The proposed approaches consist of a {\it single} forward pass on the
state-space model, and generalize the square-root
Rauch-Tung-Striebel  smoother to non-Gaussian models  (see Section~\ref{sec:smoothLinGauss}).
In practice, we should think of Theorems \ref{thm:decompSmooth} and
\ref{thm:decompJoint} as providing ``meta-algorithms'' within which
all kinds of approximations can be introduced, e.g.,
linearizations of the forward model, 
restriction to 
linear maps, and approximate flows \cite{daum2008particle,liu2016stein}, 
to name a few. %
These approximations are the workhorse of
modern approaches to 
large-scale filtering, e.g., data assimilation in geophysical applications
\cite{sarkka2013bayesian,evensen2007data}, and may play a key role
in further instantiations of the ``meta-algorithms'' 
proposed in Section \ref{sec:dataAss}. 
Of course, it would be desirable
to complement such variational approximations with a rigorous 
error analysis, analogous to the analysis available for
SMC methods (e.g., \cite{crisan2002survey,del2004feynman,smith2013sequential}). 
It is also important to note that one can always use functionals like
\eqref{eq:var_diag} 
to estimate the quality of a given approximate map, or use the map itself to build
sophisticated proposals for sampling techniques like MCMC \cite{parno2014transport}.

A recent approach that constructs an approximation of the KR rearrangement 
for sequential inference is the ``Gibbs flow'' of
\cite{heng2015gibbs}; here, the authors define a proposal for SMC (or
MCMC) methods using the solution map of a discretized ordinary
differential equation (ODE) whose drift term depends only on the full
conditionals of the target distribution.
Evaluating the solution map only requires the evaluation of
one-dimensional integrals, and the action of this map implicitly
defines a transport, without any explicit parameterization of the
transformation. 
Several other filtering approaches in the literature, e.g., \cite{daum2012particle,yang2013feedback}, rely on the solution of ODEs that are different from \cite{heng2015gibbs}, but also inspired by ideas from mass transportation.
Implicit sampling for particle filters~\cite{chorin2009implicit} also implicitly constructs a transport map, from a standard Gaussian to a particular approximation of the filtering distribution; the action of this transport is realized by solving an optimization/root-finding problem for each sample \cite{Morzfeld2012}. 
One of the first efforts to use \textit{optimal} transport in
filtering is \cite{reich2013nonparametric}, which constructs an
optimal transport {plan} between an empirical approximation of the
forecast distribution (given by simulating the prior dynamic) and a
corresponding empirical approximation of the filtering distribution,
obtained by reweighing the forecast ensemble according to the
likelihood.  Thus, \cite{reich2013nonparametric} solves a {\it
  discrete} Kantorovich optimal transport
problem %
instead of a continuous problem for a transport map (cf.\ Section
\ref{sec:filt}). A linear transformation of the forecast ensemble
is then derived from the optimal plan. In this approach, the explicit
construction of couplings is used only to update the forecast
distribution, instead of the previous filtering
marginal. %

\paragraph{Further extensions.}
We envision many additional ways to extend the present work.
For instance, it would be interesting to investigate the low-dimensional
structure of deterministic couplings between pair of measures
$(\genm_\eta,\genm_\pi)$ that 
are not absolutely continuous and that need not be
defined
on the same  space $\re^n$.
Such couplings are usually induced by ``random'' maps and
can be particularly effective for approximating
multi-modal distributions; see the warp bridge transformations in
\cite{meng2002warp,wang2016warp} for some examples.

Finally, we emphasize that this paper characterizes some
classes of low-dimensional maps, but certainly not all.
In particular, low dimensionality need not stem from the Markov
properties of the underlying measures.
In ongoing work we are exploring the notion of low-rank couplings:
these are induced by transport maps that are low-dimensional up to a
rotation of the space, i.e., maps whose action is nontrivial only
along a low-dimensional subspace.  This type of
structure appears quite naturally in certain high-dimensional
Bayesian inference problems (e.g., inverse problems
\cite{stuart2010inverse} and spatial statistics) where the data may be
informative only about a few linear combinations of the latent
parameters \cite{spantini2014optimal,cui2014likelihood,spantini2016goal}.  Low-rank
structure can be detected via certain average derivative functionals
\cite{samarov1993exploring,constantine2014active} but cannot be
deduced, in general, from the Markov structure of
$(\genm_\eta,\genm_\pi)$.

\acks{We would like to thank Ricardo Baptista, Chi Feng,  Jeremy Heng, Pierre Jacob, 
Qiang Liu, Rebecca Morrison, Zheng Wang, Alan Willsky, Olivier Zahm, and Benjamin Zhang 
for many insightful discussions and for pointing us to key references
in the literature. This work was supported in part by the US
Department of Energy, Office of Advanced Scientific Computing (ASCR),
under grant numbers DE-SC0003908 and DE-SC0009297.}

\appendix

\section{Generalized Knothe-Rosenblatt rearrangement}
\label{sec:genKR}
In this section we first review the classical
notion of KR rearrangement 
\cite{rosenblatt1952remarks,knothe1957contributions},
and
then give a formal
definition for a  {\it generalized} KR rearrangement, i.e.,
a transport map that is lower triangular up
to a permutation.
A disclaimer: these transports can also be defined
under  weaker conditions than those
considered here, at the expense, however, of
some useful regularity 
(e.g., see \cite{bogachev2005triangular}). %

The following definition introduces the one-dimensional version of the KR-rearrangement, and it
is key to extend the transport to higher dimensions.
\begin{definition}[Increasing rearrangement on $\re$]
\label{def:increasRearr}
Let $\genm_\eta,\genm_\pi \in \borelmp(\re)$, and let $F,G$ be their respective cumulative distribution functions, i.e., 
$F(t)=\genm_\eta( (-\infty, t) )$ and
$G(t)=\genm_\pi( (-\infty, t) )$.
Then 
the increasing rearrangement on $\re$ is given by $T=G^{-1} \circ F$.
\end{definition}
Under the hypothesis of Definition \ref{def:increasRearr}, it is easy to see that
both $F$ and $G$ are homeomorphisms, and that $T$ is a strictly increasing
map that pushes forward $\genm_\eta$ to $\genm_\pi$ \cite{santambrogio2015optimal}.
\begin{definition}[Knothe-Rosenblatt rearrangement]
\label{def:KRrearr}
Given 
$\Xb \sim \genm_\eta$, $\Zb \sim \genm_\pi$, with 
$\genm_\eta,\genm_\pi \in \borelmp(\re^n)$, and 
a pair $\eta,\pi$ of strictly positive densities for 
$\genm_\eta$ and $\genm_\pi$, respectively,
the corresponding KR rearrangement
is a triangular map $T:\re^n \ra \re^n$ defined,
recursively, as follows. %
For all $\xb_{1:k-1}\in \re^{k-1}$, the map 
$\xi \mapsto T^k(\xb_{1:k-1}, \xi )$---the restriction of 
the $k$th component of $T$
onto its first $k-1$ 
inputs---is
defined as the   increasing rearrangement on $\re$
that 
pushes forward $\xi \mapsto \eta_{X_k \vert \Xb_{1:k-1}}( \xi \vert \xb_{1:k-1})$ to
$\xi \mapsto \pi_{Z_k\vert \Zb_{1:k-1}}( \xi \vert T^1(x_1), \ldots, T^{k-1}(\xb_{1:k-1}) )$, where
$\eta_{X_k \vert \Xb_{1:k-1}}$ and $\pi_{Z_k\vert \Zb_{1:k-1}}$ 
are conditional densities defined as in \eqref{eq:cond_density}.
\end{definition}
Notice that for any measure $\genm$ in $\borelmp(\re^n)$ there always
exists a strictly positive {\it version} of its density.
By considering such positive densities in Definition \ref{def:KRrearr}, we can
define the KR rearrangement on the entire $\re^n$ \cite{bogachev2005triangular}.
In fact, we should really think of Definition \ref{def:KRrearr} as providing
a possible {\it version} of the KR rearrangement 
(recall that the increasing triangular
transport is unique up to sets of measure zero \cite{santambrogio2015optimal}).
Since in this case $\genm_\pi$ is equivalent to the
Lebesgue measure 
($\genm_\pi(\Aset)=\int_{\Aset} \pi(\xb)\,\lebm({\rm d}\xb) = 0 
\Rightarrow \lebm(\Aset)=0$ if $\pi>0$ a.e.), 
the component \eqref{eq:componentMap}
is also absolutely continuous on all compact intervals 
\cite[Lemma 2.4]{bogachev2005triangular}.
As a result, the rearrangement can be used to
define general change of variables 
as well as
pullbacks and pushforwards with respect to arbitrary densities, as shown
by the following lemma
adapted from \cite{bogachev2005triangular}. %

\begin{lem} \label{lem:GenchangeVarTriWeak}
Let $T$ be an increasing triangular bijection on $\re^n$ such that the functions
\begin{equation}  
  \xi \mapsto T^k( x_1 ,\ldots, x_{k-1}, \xi )
\end{equation}
are absolutely continuous on all compact intervals for a.e.
$(x_1,\ldots,x_{k-1})\in \re^{k-1}$. %
Then for any 
integrable function $\varphi$, it holds:
\begin{equation}
	\int \varphi(\yb)\,{\rm d}\yb =
	\int \varphi(T(\xb))\,\det \nabla T(\xb)\,{\rm d}\xb,	
\end{equation}
where $\det \nabla T \coloneqq \prod_{k=1}^n \partial_k T^k$.
In particular, if %
$\genm_{\rho}$ is a measure on $\re^n$ with density $\rho$,  then
we also have
$T \pull \genm_{\rho} \ll \lebm$ with density (a.e.): %
\begin{equation} \label{eq:GenpullbackDensTri}
  T\pull \rho(\xb) = \rho(T(\xb))\,\det \nabla T(\xb).
\end{equation}
\end{lem}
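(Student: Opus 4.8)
The plan is to reduce the whole statement to the single measure identity $T\pull\lebm = (\det\nabla T)\,\lebm$ on $\re^n$, prove this by induction on $n$ by checking it on boxes, and then recover the integral formula and the pullback-density claim by the standard measure-theoretic machinery. First I would record the elementary facts. Each $\partial_k T^k$ exists $\lebm$-a.e.\ on $\re^n$ (absolute continuity of $\xi\mapsto T^k(x_1,\ldots,x_{k-1},\xi)$ on compacts gives differentiability a.e.\ in $\xi$, and Fubini upgrades this to an $\re^n$-a.e.\ statement), is measurable, and is nonnegative because $T$ is increasing in its $k$th argument; hence $\det\nabla T=\prod_{k=1}^n\partial_k T^k$ is a nonnegative measurable function and $\nu(A):=\int_A\det\nabla T\,{\rm d}\xb$ is a Borel measure. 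Since $T$ is an increasing triangular bijection it is a homeomorphism of $\re^n$ (cf.\ Section \ref{sec:compTransport}), so it is a Borel isomorphism, each partial restriction $\xi\mapsto T^k(x_1,\ldots,x_{k-1},\xi)$ is a strictly increasing homeomorphism of $\re$, and $\mu(A):=\lebm(T(A))$ is a Borel measure as well. It therefore suffices to show $\mu=\nu$ on the $\pi$-system of bounded half-open boxes $A=\prod_{k=1}^n(a_k,b_k]$, which generates $\mathcal B(\re^n)$; the computation below also shows $\nu(A)<\infty$ there, so both measures are $\sigma$-finite and the uniqueness theorem gives $\mu=\nu$ on all of $\mathcal B(\re^n)$.

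For the box identity I would induct on $n$. The base case $n=1$ is the fundamental theorem of calculus for the AC increasing bijection $T$: $\lebm(T((a_1,b_1]))=T(b_1)-T(a_1)=\int_{a_1}^{b_1}\partial_1T^1$. For the inductive step, fix $x_1$ and set $T_{x_1}(x_2,\ldots,x_n):=(T^2(x_1,x_2),\ldots,T^n(x_1,\ldots,x_n))$; for a.e.\ $x_1$ this is again an increasing triangular bijection of $\re^{n-1}$ with components absolutely continuous on compacts (again by Fubini), so the inductive hypothesis applies to it. Because $T$ is triangular with $T^1$ depending on $x_1$ alone, the $y_1$-slice of $T\big((a_1,b_1]\times B'\big)$ equals $T_{(T^1)^{-1}(y_1)}(B')$ for $y_1\in(T^1(a_1),T^1(b_1)]$ and is empty otherwise. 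Fubini/Cavalieri, the inductive hypothesis on $T_{x_1}$, and the one-dimensional change of variables $y_1=T^1(x_1)$ (valid for the nonnegative measurable integrand $x_1\mapsto\lebm_{n-1}(T_{x_1}(B'))$ and the AC increasing $T^1$) then give
\begin{align*}
\lebm_n\!\left(T\left((a_1,b_1]\times B'\right)\right)
&= \int_{(a_1,b_1]}\lebm_{n-1}\big(T_{x_1}(B')\big)\,\partial_1T^1(x_1)\,{\rm d}x_1 \\
&= \int_{(a_1,b_1]\times B'}\prod_{k=1}^n\partial_kT^k(\xb)\,{\rm d}\xb,
\end{align*}
the last equality by Fubini; this closes the induction.

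Finally, with $\mu=\nu$ in hand, for any Borel $A$ one has $\lebm(A)=\mu(T^{-1}(A))=\nu(T^{-1}(A))=\int(\mathbf 1_A\circ T)\,\det\nabla T\,{\rm d}\xb$, i.e.\ $\int\varphi(\yb)\,{\rm d}\yb=\int\varphi(T(\xb))\,\det\nabla T(\xb)\,{\rm d}\xb$ for $\varphi=\mathbf 1_A$; this extends to nonnegative simple $\varphi$ by linearity, to nonnegative measurable $\varphi$ by monotone convergence, and to integrable $\varphi$ by splitting into positive and negative parts (finiteness of both sides following from the nonnegative case applied to $|\varphi|$). Applying this with $\varphi=\mathbf 1_{T(B)}\rho$ and using $\mathbf 1_{T(B)}\circ T=\mathbf 1_B$ (injectivity of $T$) yields $(T\pull\genm_\rho)(B)=\genm_\rho(T(B))=\int_B\rho(T(\xb))\,\det\nabla T(\xb)\,{\rm d}\xb$, so $T\pull\genm_\rho\ll\lebm$ with the stated density.

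The main obstacle is the inductive step: checking that the a.e.\ absolute continuity of the component maps descends (via Fubini) to $T_{x_1}$, verifying that the image $T\big((a_1,b_1]\times B'\big)$ is Borel and that $y_1\mapsto\lebm_{n-1}$ of its slice is measurable so that Fubini and the one-dimensional change of variables legitimately apply, and keeping track of strict monotonicity so that $(T^1)^{-1}$ is well defined on the relevant interval. Everything else is the routine passage from a $\pi$-system to the Borel $\sigma$-algebra and from indicators to integrable functions; the argument as a whole is an adaptation of the treatment in \cite{bogachev2005triangular}.
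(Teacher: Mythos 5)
The paper does not prove this lemma; it states the result is ``adapted from'' \cite{bogachev2005triangular} and leaves the argument to that reference, so there is no in-paper proof to compare against. Your reconstruction follows the standard route (and, as far as I can tell, the route taken in \cite{bogachev2005triangular}): reduce everything to the measure identity $T\pull\lebm=(\det\nabla T)\,\lebm$; verify it on half-open boxes by inducting on dimension, peeling off the first coordinate via the triangular structure, applying the one-dimensional Lebesgue change of variables for AC increasing maps, and Fubini/Tonelli; extend from boxes to $\mathcal{B}(\re^n)$ by uniqueness of $\sigma$-finite measures agreeing on a generating $\pi$-system; and pass from indicators to integrable $\varphi$ by linearity, monotone convergence, and splitting into positive and negative parts. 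Deriving \eqref{eq:GenpullbackDensTri} by applying the integral identity to $\mathbf{1}_{T(B)}\,\rho$ and using injectivity of $T$ is also fine. The skeleton is correct.

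One caveat. You assert early on that the hypotheses make $T$ a homeomorphism of $\re^n$ and each section $\xi\mapsto T^k(x_1,\ldots,x_{k-1},\xi)$ a strictly increasing homeomorphism of $\re$, citing Section~\ref{sec:compTransport}, and you lean on this to conclude that $A\mapsto\lebm(T(A))$ is a well-defined Borel measure and to carry out the slice argument. But ``increasing triangular bijection with AC sections'' does not by itself give joint continuity of $T$ in $(x_1,\ldots,x_{k-1})$: for $T(x_1,x_2)=\bigl(x_1,\,f(x_1)+x_2\bigr)$ with $f$ an arbitrary discontinuous function, every section is a strictly increasing affine homeomorphism and $T$ is an increasing triangular bijection, yet $T$ is not a homeomorphism. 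It does not even guarantee Borel measurability of $T$. What the proof actually needs is that $T$ is Borel measurable, from which Lusin--Souslin makes $T$ a Borel isomorphism and supplies exactly the measurability of images and slices you flag as the ``main obstacle.'' In the paper this is harmless---the lemma is applied only to Knothe--Rosenblatt rearrangements between measures in $\borelmp(\re^n)$, for which the sections are strictly increasing homeomorphisms and $T$ is Borel bimeasurable (Section~\ref{sec:compTransport}, Definition~\ref{def:KRrearr})---but this regularity should be recorded as a tacit hypothesis of the lemma rather than presented as a consequence of its wording.
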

The lemma can also be applied to the inverse KR rearrangement $T^{-1}$ to show that
$T \push \genm_{\rho} \ll \lebm$, where
the form of the corresponding
pushforward density $T\push \rho$ is given by  replacing $T$
with $T^{-1}$ in \eqref{eq:GenpullbackDensTri}. 
We will use these results extensively in the proofs of Appendix \ref{sec:proofs}.
Notice, however, that Lemma \ref{lem:GenchangeVarTriWeak} does not hold for
a generic triangular function: the map must be somewhat regular, in the sense
specified by the lemma. See \cite{bogachev2005triangular} for an in depth discussion.

We now give a constructive definition for a generalized 
KR rearrangement.
\begin{definition}[Generalized 
Knothe-Rosenblatt rearrangement]
\label{def:genKRrearr}
Given 
$\Xb \sim \genm_\eta$, $\Zb \sim \genm_\pi$, with 
$\genm_\eta,\genm_\pi \in \borelmp(\re^n)$, 
a pair $\eta,\pi$ of strictly positive densities for 
$\genm_\eta$ and $\genm_\pi$, respectively, and
a permutation $\sigma$ of
$\mathbb{N}_n$, the corresponding $\sigma$-generalized KR rearrangement is 
a $\sigma$-triangular map\footnote{
See Definition \ref{def:genTri}.	
} $T:\re^n \ra \re^n$ %
defined at any $\xb \in \re^n$ using the following recursion in $k$.
The map 
$\xi \mapsto T^{\sigma(k)}(x_{\sigma(1)},\ldots,x_{\sigma(k-1)} , \xi )$
is
defined as the  increasing rearrangement on $\re$
that 
pushes forward $\xi \mapsto \eta_{X_{\sigma(k)} \vert \Xb_{\sigma(1:k-1)} }( \xi \vert 
\xb_{\sigma(1:k-1)})$ to
\begin{equation}
	\xi \mapsto \pi_{Z_{\sigma(k)}
\vert \Zb_{\sigma(1:k-1)}}( \xi \vert T^{\sigma(1)}(x_{\sigma(1)}), \ldots, 
T^{\sigma(k-1)}(\xb_{\sigma(1:k-1)}) ), 
\end{equation}
where
$\xb_{\sigma(1:k-1)} = x_{\sigma(1)},\ldots,x_{\sigma(k-1)}$.
\end{definition}
Existence %
of a generalized KR
rearrangement follows trivially from its definition. %
Moreover, the transport map satisfies all the regularity properties discussed
for the classic  KR rearrangement, including Lemmas 
\ref{lem:changeVarTriWeak} and \ref{lem:GenchangeVarTriWeak}.
Thus we will often cite these two results when dealing with
generalized KR rearrangements in our proofs.
The following lemma shows that the computation of a generalized KR rearrangement is also 
essentially no different than
the computation of a lower triangular transport (and thus all the
discussion of Section \ref{sec:compTransport} readily applies).

\begin{lem}
	Given 
	$\genm_\eta,\genm_\pi \in \borelmp(\re^n)$,
	let $T$ be a $\sigma$-generalized KR rearrangement that pushes
	forward $\genm_\eta$ to $\genm_\pi$ for some permutation $\sigma$.
	Then $T = Q_\sigma^\top \circ T_\ell \circ Q_\sigma$ a.e., where
	$Q_\sigma \in \re^{n \times n}$ is a matrix representing the permutation,
	i.e., $(Q^\sigma)_{ij}=(\eb_{\sigma(i)})_j$, and where
	$T_\ell$ is a (lower triangular) KR rearrangement that
	pushes forward 
	$(Q_\sigma)\push \genm_\eta$ to 
	$(Q_\sigma)\push \genm_\pi$.
\end{lem}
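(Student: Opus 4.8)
The plan is to reduce the claim to the already-recorded uniqueness of the classical lower triangular KR rearrangement. Set $S \coloneqq Q_\sigma \circ T \circ Q_\sigma^\top$. I would show that $S$ is a monotone increasing \emph{lower} triangular map that pushes forward $(Q_\sigma)\push\genm_\eta$ to $(Q_\sigma)\push\genm_\pi$; uniqueness then forces $S = T_\ell$ a.e., and since $Q_\sigma^\top \circ Q_\sigma$ is the identity map this rearranges to $T = Q_\sigma^\top \circ T_\ell \circ Q_\sigma$ a.e.

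First I would fix the coordinate action: the $i$th row of $Q_\sigma$ is $\eb_{\sigma(i)}^\top$, hence $(Q_\sigma \xb)_i = x_{\sigma(i)}$ and $Q_\sigma$ is orthogonal with $Q_\sigma^{-1} = Q_\sigma^\top$. Writing $\wb = Q_\sigma^\top \xb$, so that $w_{\sigma(j)} = x_j$, one has $S^k(\xb) = T^{\sigma(k)}(\wb)$. By Definition~\ref{def:genTri}, $T^{\sigma(k)}$ depends only on the coordinates $w_{\sigma(1)},\dots,w_{\sigma(k)}$, i.e.\ on $x_1,\dots,x_k$, so $S$ is lower triangular; and, as a function of its diagonal input $x_k = w_{\sigma(k)}$, the component $S^k$ inherits monotonicity from the fact that $T$ is a monotone increasing $\sigma$-triangular map. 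For the coupling property I would use functoriality of the pushforward together with $(Q_\sigma^\top)\push\bigl((Q_\sigma)\push \genm_\eta\bigr) = \genm_\eta$ and $T\push\genm_\eta = \genm_\pi$ to get $S\push\bigl((Q_\sigma)\push\genm_\eta\bigr) = (Q_\sigma)\push\bigl(T\push\genm_\eta\bigr) = (Q_\sigma)\push\genm_\pi$. Since $Q_\sigma$ is a linear bijection with unit Jacobian, $(Q_\sigma)\push\genm_\eta$ and $(Q_\sigma)\push\genm_\pi$ lie again in $\borelmp(\re^n)$, so the lower triangular KR rearrangement $T_\ell$ between them is defined, and by its uniqueness up to $(Q_\sigma)\push\genm_\eta$-null sets \cite{bogachev2005triangular} we conclude $S = T_\ell$, $(Q_\sigma)\push\genm_\eta$-a.e. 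Transporting this equality back through the bijection $Q_\sigma$ — which carries $\genm_\eta$-null sets to $(Q_\sigma)\push\genm_\eta$-null sets and conversely — yields $T = Q_\sigma^\top \circ T_\ell \circ Q_\sigma$, $\genm_\eta$-a.e.

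The argument is essentially bookkeeping, so the only place that warrants care is getting the direction of the permutation right (conjugation by $Q_\sigma$ versus $Q_\sigma^\top$, equivalently $\sigma$ versus $\sigma^{-1}$) and transferring the ``a.e.'' qualifier across the linear bijection $Q_\sigma$; both are pinned down once the identity $(Q_\sigma\xb)_i = x_{\sigma(i)}$ is fixed at the start. An alternative route would verify directly that $Q_\sigma^\top \circ T_\ell \circ Q_\sigma$ satisfies the recursion in Definition~\ref{def:genKRrearr} and invoke the uniqueness of the $\sigma$-generalized rearrangement instead; I prefer the route above because it leans only on the uniqueness of the ordinary triangular KR map, which is already established in the text.
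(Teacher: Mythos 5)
Your proof is correct and is essentially the paper's argument run in the opposite direction: you conjugate $T$ by $Q_\sigma$ to produce a lower triangular monotone coupling of $(Q_\sigma)\push\genm_\eta$ and $(Q_\sigma)\push\genm_\pi$ and invoke uniqueness of the classical KR rearrangement, whereas the paper conjugates $T_\ell$ by $Q_\sigma^\top$ to produce a $\sigma$-triangular monotone coupling of $\genm_\eta$ and $\genm_\pi$ and invokes uniqueness of the $\sigma$-generalized rearrangement. The coordinate bookkeeping ($(Q_\sigma\xb)_i = x_{\sigma(i)}$, hence $S^k$ depends on $x_1,\dots,x_k$ and is increasing in $x_k$) and the pushforward functoriality are identical in both directions, so the two are mirror images; your remark that leaning only on the classical uniqueness is cleaner is reasonable, though the paper does already assert uniqueness for generalized rearrangements in Section~\ref{sec:prelnotions}.
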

\begin{proof}
If $T_\ell$ pushes forward $(Q_\sigma)\push \genm_\eta$ to 
$(Q_\sigma)\push \genm_\pi$,
then $\genm_\eta \circ Q_\sigma^\top \circ T^{-1}_\ell = \genm_\pi \circ Q_\sigma^\top$, and so
$T = Q_\sigma^\top \circ T_\ell \circ Q_\sigma$ must 
push forward $\genm_\eta$ to $\genm_\pi$.
Moreover, notice that %
$T^{\sigma(k)}(\xb)=T_\ell^k(  \xb^\top \eb_{\sigma(1)} , \ldots,  \xb^\top \eb_{\sigma(k)})$, which
shows that $T$ is a monotone increasing $\sigma$-generalized triangular function
(see Definition \ref{def:genTri}).
The lemma then follows by $\genm_\eta$-uniqueness of a KR rearrangement.
\end{proof}

\section{Proofs of the main results}
\label{sec:proofs}

In this section we collect the proofs of the main results and claims of the paper,
together with useful additional lemmas to support the technical
derivations.
\medskip

\noindent
{\bf Proof of Lemma \ref{lem:pairwiseIndep}.}
The general solution of $\partial^2_{i,j} \log \pi = 0$ on $\re^n$ is given by
$\log \pi(\zb) = g(\zb_{1:i-1}, \zb_{i+1:n}) + h(\zb_{1:j-1}, \zb_{j+1:n})$ for
some  functions $g,h:\re^{n-1} \ra \re$. Hence 
$Z_i \orth Z_j \,\vert\, \Zb_{\Vc\setminus (i,j)}$ \cite{lauritzen1996graphical}.
Conversely, if $Z_i \orth Z_j \,\vert\, \Zb_{\Vc\setminus (i,j)}$, then
$\pi$---which is the density of 
$\genm_\pi$ with respect to a tensor product Lebesgue measure \cite{lauritzen1996graphical}---must factor as 
\begin{equation}
\pi = \pi_{Z_i \vert \Zb_{\Vc\setminus (i,j)} }
	   \pi_{Z_j \vert \Zb_{\Vc\setminus (i,j)} }
	   \pi_{\Zb_{\Vc\setminus (i,j)} },	
\end{equation}
so that
$\partial^2_{i,j} \log \pi = 0$ on $\re^n$.	   
\hfill $\square$
\smallskip

\noindent
{\bf Proof of Theorem \ref{thm:sparsityRosenblatt}.}
We begin with Part \ref{thm:sparsityRosenblatt:inverse} of the theorem.
Let $\eta,\pi$ be a pair of strictly positive densities for
$\genm_\eta$ and $\genm_\pi$, respectively 
(these positive densities exist since the measures are fully supported).
Now consider a {\it version} of the 
KR rearrangement, $S$, that pushes forward $\genm_\pi$ to $\genm_\eta$
as given by Definition \ref{def:KRrearr}
for the pair $\eta,\pi$ (Appendix \ref{sec:genKR}).
By definition, 
and for all
$\zb_{1:k-1}\in \re^{k-1}$, the map 
$\xi \mapsto S^k(\zb_{1:k-1}, \xi )$ is the %
monotone increasing 
rearrangement that pushes forward 
$\xi \mapsto \pi_{Z_k \vert \Zb_{1:k-1}}( \xi \vert \zb_{1:k-1})$ to the marginal 
$\eta_{X_k}$ (recall that $\genm_\eta$ is a tensor product measure).
Moreover, it follows easily
from \cite[Proposition 3.17]{lauritzen1996graphical}, that each
marginal $\pi_{\Zb_{1:k}}$---or better yet, the
corresponding measure---is globally Markov with 
respect to 
$\Gcb^k$, and that
$\pi_{\Zb_{1:k}}(\zb_{1:k})\,\pi_{\Cc}(\zb_{\Cc}) = 
\pi_{ \Zb_k , \Zb_\Cc }( \zb_k, \zb_{\Cc} ) \, \pi_{\Zb_{1:k-1}}(\zb_{1:k-1})$, where
$\Cc \coloneqq \neigh( k , \Gcb^k)$, possibly empty.
Thus, the conditional $\pi_{Z_k \vert \Zb_{1:k-1}}( \zb_k \vert \zb_{1:k-1})$
is constant along any input $\zb_j$  with $j \notin \neigh (k, \Gcb^k) $.
For any such $j$, $S^k$ must be constant along its $j$th input, so
that $(j,k)\in \widehat{\sparse}_S$.

Part \ref{thm:sparsityRosenblatt:direct} of the theorem follows similarly.
Consider the KR rearrangement, $T$, that pushes forward $\genm_\eta$ to 
$\genm_\pi$ as given by Definition \ref{def:KRrearr}.
For all
$\xb_{1:k-1}\in \re^{k-1}$, the map 
$\xi \mapsto T^k(\xb_{1:k-1}, \xi )$ is the 
monotone increasing 
rearrangement that pushes forward 
$\eta_{X_k}$ to
\begin{equation}
\xi \mapsto \pi_{Z_k \vert \Zb_{1:k-1}}( \xi \vert 
T^1(x_1),\ldots,T^{k-1}(\xb_{1:k-1})).	
\end{equation}
We already know that 
$\pi_{Z_k \vert \Zb_{1:k-1}}( \zb_k \vert \zb_{1:k-1})$ can
only depend (nontrivially) on $\zb_k$ and on 
$\zb_j$ for $j \in \neigh (k, \Gcb^k)$.
Hence, if none of the components $T^i$, with $i\in \neigh (k, \Gcb^k)$,
depends on the $j$th input, then 
$T^k$ is constant along its
$j$th
input as well,
so that $(j,k)\in \widehat{\sparse}_T$.

For Part \ref{thm:sparsityRosenblatt:inclusion}, let $(j,k)\in \widehat{\sparse}_T$. 
Then, by definition, 
$(j,i) \in \widehat{\sparse}_T$ for all $i \in \neigh (k, \Gcb^k)$,
which also implies that $j \notin \neigh (k, \Gcb^k)$ since
$j \neq i$ for all $(j,i)\in \widehat{\sparse}_T$. 
Hence $(j,k)\in \widehat{\sparse}_S$ and this shows the inclusion
$\widehat{\sparse}_T \subset \widehat{\sparse}_S$.

These arguments show that there exists at least a {\it version} of the
KR rearrangement that is exactly at least as sparse as predicted by the theorem.
\hfill $\square$
\smallskip

The following lemma specializes the results of Theorem 
\ref{thm:sparsityRosenblatt}[Part \ref{thm:sparsityRosenblatt:direct}]
to the case of I-maps $\Gcb$ with a disconnected component, and will be
useful in the proofs of Section \ref{sec:decomp}.

\begin{lem} \label{lem:productTarget}
Let $\Xb \sim \genm_\eta$, $\Zb \sim \genm_\pi$ with 
$\genm_\eta,\genm_\pi \in \borelmp(\re^n)$ and 
$\genm_\eta$ tensor product measure, and let
$\sigma$ be any permutation of $\mathbb{N}_n$.
Moreover, assume that
$\genm_\pi$ is globally Markov with respect to $\Gcb=(\Vc,\Ec)$, and
assume that there exists a nonempty set $\Ac \subset \Vc \simeq \mathbb{N}_n$ such that
$\Zb_{\Ac} \orth \Zb_{\Vc \setminus \Ac}$ and %
$\Zb_{\Ac} = \Xb_{\Ac}$ in distribution.
Then 
the $\sigma$-generalized KR rearrangement  $T$ given by
Definition \ref{def:genKRrearr}  (for a pair $\eta,\pi$ of nonvanishing densities for
$\genm_\eta$ and $\genm_\pi$, respectively)
is low-dimensional with respect to $\Ac$, i.e.,
\begin{enumerate} 
	\item  $T^k(\xb)=x_k$ for $k \in \Aset$
	\label{lem:productTarget_partI}
	\smallskip

	\item $\partial_j T^k = 0$ for $j\in \Ac$ and $k \in \Vc \setminus \Ac$.
	\label{lem:productTarget_partII}
\end{enumerate}
\end{lem}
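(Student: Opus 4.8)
Although the lemma can be viewed as a specialization of Theorem~\ref{thm:sparsityRosenblatt}[Part~\ref{thm:sparsityRosenblatt:direct}] to an I-map with a disconnected component, it is cleanest to prove it directly by induction on $k$, following the recursive construction in Definition~\ref{def:genKRrearr}. Fix strictly positive densities $\eta,\pi$ for $\genm_\eta,\genm_\pi$; since $\genm_\eta$ is a tensor product measure and $\genm_\pi$ is fully supported, $\eta(\xb)=\prod_{i=1}^{n}\eta_{X_i}(x_i)$ and the version $T$ of Definition~\ref{def:genKRrearr} is well defined on all of $\re^n$. Write $\Ac^{<k}\coloneqq\{\sigma(j):1\le j\le k-1,\ \sigma(j)\in\Ac\}$ and $\Bc^{<k}\coloneqq\{\sigma(j):1\le j\le k-1,\ \sigma(j)\notin\Ac\}$. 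The statement to be established by induction on $k$ is: for every $j<k$, (a) if $\sigma(j)\in\Ac$ then $T^{\sigma(j)}(\xb)=x_{\sigma(j)}$, and (b) if $\sigma(j)\notin\Ac$ then $T^{\sigma(j)}$ does not depend on any input $x_i$ with $i\in\Ac$; applying this with $k=n+1$ yields Part~\ref{lem:productTarget_partI} from (a) and Part~\ref{lem:productTarget_partII} from (b).

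For the step at index $k$, recall that $\xi\mapsto T^{\sigma(k)}(x_{\sigma(1)},\ldots,x_{\sigma(k-1)},\xi)$ is the increasing rearrangement on $\re$ that pushes $\eta_{X_{\sigma(k)}}$ forward to $\xi\mapsto\pi_{Z_{\sigma(k)}\vert\Zb_{\sigma(1:k-1)}}(\xi\vert T^{\sigma(1)}(\cdot),\ldots,T^{\sigma(k-1)}(\cdot))$; here the source is the \emph{marginal} $\eta_{X_{\sigma(k)}}$ because $\genm_\eta$ is a product measure. Suppose first $\sigma(k)\in\Ac$. Marginal independence $\Zb_\Ac\orth\Zb_{\Vc\setminus\Ac}$ gives $Z_{\sigma(k)}\orth\Zb_{\Bc^{<k}}\vert\Zb_{\Ac^{<k}}$, so the target density equals $\pi_{Z_{\sigma(k)}\vert\Zb_{\Ac^{<k}}}$ evaluated at the entries $T^{\sigma(j)}(\cdot)$ with $\sigma(j)\in\Ac^{<k}$, which by (a) are exactly the $x_{\sigma(j)}$. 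Moreover, $\Zb_\Ac=\Xb_\Ac$ in distribution and $\Xb_\Ac$ has the product density $\prod_{i\in\Ac}\eta_{X_i}$, so every conditional of $\Zb_\Ac$ is again a product of $\eta_{X_i}$'s; hence $\pi_{Z_{\sigma(k)}\vert\Zb_{\Ac^{<k}}}(\xi\vert\cdot)=\eta_{X_{\sigma(k)}}(\xi)$, a constant function of the conditioning arguments. The increasing rearrangement pushing $\eta_{X_{\sigma(k)}}$ to itself is the identity (in the notation of Definition~\ref{def:increasRearr}, $F=G$ gives $G^{-1}\circ F=\mathrm{id}$), so $T^{\sigma(k)}(\xb)=x_{\sigma(k)}$, which is (a) at index $k$.

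Suppose instead $\sigma(k)\notin\Ac$. The same independence yields that the target density depends only on the entries $T^{\sigma(j)}(\cdot)$ with $\sigma(j)\in\Bc^{<k}$, and each such $T^{\sigma(j)}$ is free of all $\Ac$-inputs by (b); the source $\eta_{X_{\sigma(k)}}$ depends on nothing. Therefore $T^{\sigma(k)}(x_{\sigma(1)},\ldots,x_{\sigma(k-1)},\xi)$ depends only on $\xi$ and on the $x_{\sigma(j)}$ with $\sigma(j)\notin\Ac$, giving (b) at index $k$. The base case $k=1$ is the instance with empty conditioning sets: for $\sigma(1)\in\Ac$ one uses $\pi_{Z_{\sigma(1)}}=\eta_{X_{\sigma(1)}}$ (again from $\Zb_\Ac=\Xb_\Ac$ in distribution), and for $\sigma(1)\notin\Ac$ there is nothing to check. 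This completes the induction.

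The main obstacle will be the careful bookkeeping inside the composed target density $\pi_{Z_{\sigma(k)}\vert\Zb_{\sigma(1:k-1)}}(\,\cdot\vert T^{\sigma(1)}(\cdot),\ldots,T^{\sigma(k-1)}(\cdot))$: one must separately track which \emph{conditioning nodes} the density genuinely depends on (controlled by $\Zb_\Ac\orth\Zb_{\Vc\setminus\Ac}$, which is in fact the only role played here by the global Markov hypothesis) and which \emph{input variables} the inner maps $T^{\sigma(j)}$ depend on (controlled by the induction hypothesis), and then compose the two. A secondary point is to phrase everything in terms of the strictly positive density versions in Definitions~\ref{def:increasRearr} and~\ref{def:genKRrearr}, so that the ``a.e.'' qualifiers stay consistent and the identity-rearrangement claim holds literally on all of $\re$.
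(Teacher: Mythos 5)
Your proof is correct and takes a genuinely different route from the paper's. The paper derives the lemma as a corollary of Theorem~\ref{thm:sparsityRosenblatt}[Part~\ref{thm:sparsityRosenblatt:direct}]: because $\Zb_\Ac$ has a product distribution and is independent of $\Zb_{\Vc\setminus\Ac}$, every node in $\Ac$ is isolated in a suitable I-map for $\genm_\pi$, so the marginal graphs leave them isolated, $\widehat{\sparse}_T$ contains all pairs $(j,k)$ with $j<k$ and $k\in\Ac$, and Part~\ref{lem:productTarget_partII} then follows from a minimality-contradiction argument on the recursive definition of $\widehat{\sparse}_T$. You instead run a bare-hands induction along the $\sigma$-ordering directly on the recursion in Definition~\ref{def:genKRrearr}, cleanly separating the independence bookkeeping (which conditioning nodes the conditional density genuinely depends on, via $\Zb_\Ac\orth\Zb_{\Vc\setminus\Ac}$) from the map bookkeeping (which inputs the previously built $T^{\sigma(j)}$ actually read, via the inductive hypothesis). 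Each has its merits: the paper's is shorter given the theorem, but your induction is fully explicit and, notably, fills a step the paper elides in Part~\ref{lem:productTarget_partI} --- sparsity alone only gives $T^k(\xb)=T^k(x_k)$, and the stronger conclusion $T^k(\xb)=x_k$ additionally requires $\pi_{Z_k}=\eta_{X_k}$, which is exactly where $\Zb_\Ac=\Xb_\Ac$ in distribution enters; you make this use of the matching-marginal hypothesis explicit, whereas the paper states the identity without flagging it. Your concluding remark that the global Markov hypothesis with respect to $\Gcb$ plays no separate role beyond supplying the two stated conditions on $\Ac$ is also accurate.
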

\begin{proof}
It suffices to prove the lemma for a lower
triangular KR rearrangement; the result for an arbitrary $\sigma$ then follows
trivially.
If $\Ac = \Vc$, then $T$ is simply the identity map.
Thus we assume that
$\Vc \setminus \Ac$ is nonempty.

We begin with Part \ref{lem:productTarget_partI} of the lemma and use the results
of Theorem \ref{thm:sparsityRosenblatt}[Part \ref{thm:sparsityRosenblatt:direct}] 
to characterize the
sparsity of the rearrangement.
Let $k \in \Aset$ and notice that $\neigh(k,\Gcb^k)=\emptyset$, where 
$\Gcb^k$ is the marginal graph defined in Theorem \ref{thm:sparsityRosenblatt}.
Thus $(j,k)\in \widehat{\sparse}_T \subset \sparse_T$ for all $j=1,\ldots,k-1$,
so that $T^k(\xb)=x_k$ for all $k \in \Aset$.

Now let us focus on Part \ref{lem:productTarget_partII} and  prove that
$(j,k)\in \widehat{\sparse}_T$  for all $j\in \Ac$ and $k \in \Vc \setminus \Ac$.
We proceed by contradiction. 
Assume that there exists some pair $(j,k)\in \Ac \times (\Vc \setminus \Ac)$
such that $(j,k)\notin  \widehat{\sparse}_T$.
In particular, let $\Kcb$ be the set of $k \in \Vc \setminus \Ac$ for which
there exists at least a $j\in \Ac$ such that
$(j,k)\notin \widehat{\sparse}_T$.
Clearly $\Kcb$ is nonempty and finite.
Let $s$ be the minimum integer in $\Kcb$, and let 
$j\in \Aset$ be a corresponding index for which $(j,s)\notin \widehat{\sparse}_T$.
In this case, by Theorem \ref{thm:sparsityRosenblatt}[Part \ref{thm:sparsityRosenblatt:direct}], 
there must exist an 
$i\in \neigh(s , \Gcb^s)$ such that $(j,i) \notin \widehat{\sparse}_T$.
Now there are two cases: either $i \in \Aset$ (for which we reach a
contradiction by part \ref{lem:productTarget_partI} of the lemma) or
$i \in \Vc \setminus \Aset$.
In the latter case, we also reach a contradiction since $i<s$ and $s$ was defined
as the smallest index for which $(j,s) \notin \widehat{\sparse}_T$ for some $j\in \Aset$.
\end{proof}

\noindent
{\bf Proof of Theorem \ref{thm:decompTrans}.}
For notational convenience, we drop the subscript and superscript $i$ from
$\genm_i$, $\pi_i$, $\Zb^i$, and $\Gcb^i$.
Consider a factorization of $\pi$ of the form
\begin{equation} \label{eq:factPiDecomp}
	\pi(\zb)=\frac{1}{\mathfrak{c}} \, \psi_{\Aset \cup \Sset}(\zb_{\Aset \cup \Sset}) \, 
		\psi_{\Sset \cup \Bset }(\zb_{ \Sset \cup \Bset }),
\end{equation}
where $\psi_{\Aset \cup \Sset}$ 
is strictly positive and integrable, 
with $\mathfrak{c} = \int \psi_{ \Aset \cup \Sset} < \infty$.
A factorization like \eqref{eq:factPiDecomp} always exist since 
$\genm$ factorizes according to $\Gcb$---thus $\Gcb$ is 
an I-map for $\genm$---and since
$(\Aset, \Sset, \Bset)$
is a proper decomposition of $\Gcb$. %
For instance, one can set %
$\psi_{\Aset \cup \Sset} = \pi_{\Zb_{\Aset \cup \Sset}}$,
$\mathfrak{c}=1$, 
and $\psi_{\Sset \cup \Bset } = \pi_{\Zb_{\Bset} \vert \Zb_{\Sset}}$
since $\Zb_{\Aset} \orth \Zb_{\Bset} \vert \Zb_{\Sset}$ and since
$\pi$ is a nonvanishing density of $\genm$.
However, this is not the only possibility.
See Section \ref{sec:dataAss} for important
examples where it is not convenient to assume that %
$\psi_{\Aset \cup \Sset}$ corresponds to a marginal of $\pi$.
This proves Part \ref{thm:decompTrans_partFactorizationPi} 
of the theorem.

By \cite[Proposition 3.16]{lauritzen1996graphical}, 
we can rewrite $\psi_{\Sset \cup \Bset}$ as:
\begin{equation} \label{eq:factPsi}
\psi_{\Sset \cup \Bset }(\zb_{ \Sset \cup \Bset }) = \prod_{\Cc \in \Ccb_{\Sset \cup \Bset}} 
\psi_{\Cc}(\zb_{\Cc})
\end{equation}
for some nonvanishing
functions $(\psi_{\Cc})$, where $\Ccb_{\Sset \cup \Bset}$ denotes the set
of maximal cliques of the subgraph $\Gcb_{\Sset \cup \Bset}$.
Since $\Sset$ is a fully connected %
separator set (possibly empty) for $\Aset$ and $\Bset$, 
the maximal cliques of $\Gcb_{\Sset \cup \Bset}$ are precisely the maximal cliques of
$\Gcb$ that are a subset of $\Sset \cup \Bset$.
We are going to use  \eqref{eq:factPsi} shortly.
Define $\widetilde{\pi}:\re^n \ra \re$ as 
$\widetilde{\pi}(\zb) = \psi_{\Aset \cup \Sset}(\zb_{\Aset \cup \Sset}) 
\, \eta_{\Xb_{\Bset}}(\zb_{\Bset}) / \mathfrak{c}$, and notice that
$\widetilde{\pi}$ is a nonvanishing probability density.
Denote the corresponding measure by $\widetilde{\genm}\in \borelmp(\re^n)$. 
For an arbitrary permutation $\sigma$ of $\mathbb{N}_n$ that
satisfies
\eqref{eq:formPerm},
let $\lmap_i$ be the %
$\sigma$-generalized KR 
rearrangement 
that
pushes forward $\genm_\eta$ to $\widetilde{\genm}$ as given 
by Definition \ref{def:genKRrearr} in Appendix \ref{sec:genKR}.
By Lemma \ref{lem:productTarget},  %
$\lmap_i$ is low-dimensional with respect to
$\Bset$ (Part \ref{thm:decompTrans_partLeftMap}
of the theorem). 
To
see this, 
let $\widetilde{\Zb} \sim \widetilde{\genm}$, 
and 
notice %
that $\widetilde{\Zb}_{\Bset} \orth \widetilde{\Zb}_{\Aset \cup \Sset}$ %
and
$\widetilde{\Zb}_{\Bset} = \Xb_{\Bset}$ in distribution.
By Lemma \ref{lem:GenchangeVarTriWeak}, we can write
a density of the pullback measure $\lmap_i\pull \genm$ as:
\begin{eqnarray} \label{eq:factPullback}
	\lmap_i\pull \,\pi  & = & \pi \circ \lmap_i  \, |\det \nabla \lmap_i | \\
 				 & = &  \left( \lmap_i \pull \,\widetilde{\pi} \right) \,
 				 		\frac{ \prod_{\Cc \in \Ccb_{\Sset \cup \Bset}} \psi_{\Cc}
 				 		\circ \lmap_i^{\Cc} }{
				  		\eta_{\Xb_{\Bset}} 
				  		}  \nonumber \\
				 & = &  \eta_{\Xb_{\Aset \cup \Sset}} \,  
				 		\prod_{\Cc \in \Ccb_{\Sset \cup \Bset}} \psi_{\Cc}
				 		\circ \lmap_i^{\Cc},  \nonumber   
\end{eqnarray}
where we used the identity $\pi = \widetilde{\pi}\,\psi_{\Sset \cup \Bset} / \eta_{\Xb_{\Bset}}$ 
together with \eqref{eq:factPsi} and the fact that
$\lmap_i^k(\xb) = x_k$ for $k\in\Bset$ (Part \ref{thm:decompTrans_partLeftMap}), 
and where, for any
$\Cc= \{ c_1 , \ldots, c_{\ell} \}\in \Ccb_{\Sset \cup \Bset}$ with
$\psi_{\Cc}(\zb_{\Cc})=\psi_{\Cc}(z_{c_1}, \ldots, z_{c_{\ell}})$,
$ \lmap_i^{\Cc}$ is a map $\re^n \ra \re^\ell$ given by 
$\xb \mapsto (\lmap_i^{c_1}(\xb), \ldots, \lmap_i^{c_{\ell}}(\xb))$.

If $\Zb' \sim \lmap_i\pull \,\genm$, then 
\eqref{eq:factPullback} shows that 
$\Zb'_{\Aset} \orth \Zb'_{\Sset \cup \Bset}$ and that
$\Zb'_{\Aset} = \Xb_{\Aset}$ in distribution 
(Part \ref{thm:decompTrans_partRVimap} of the theorem).
Moreover, from the factorization in \eqref{eq:factPullback}, we can
easily construct a graph for which $\lmap_i\pull \,\genm$ factorizes:
it 
suffices to 
consider 
the scope of the factors 
$(\psi_{\Cc}\circ \lmap_i^{\Cc})$, i.e., 
the indices of the input variables that each $\psi_{\Cc}\circ \lmap_i^{\Cc}$ can depend on.
Recall that for a $\sigma$-triangular map, the $\sigma(k)$th component can only
depend
on the variables $x_{\sigma(1)},\ldots,x_{\sigma(k)}$.
For each $\Cc \in \Ccb_{\Sset \cup \Bset}$ there are two
possibilites:
Either  
$\Cc \cap \Sset = \emptyset$, in which case the scope of
$\psi_{\Cc}\circ \lmap_i^{\Cc}$ is simply $\Cc$ since
$\lmap_i^k(\xb)=x_k$ for $k \in \Bset$.
Or $\Cc \cap \Sset$ is nonempty, in which case 
let $j_{\Cc}$ be the maximum integer
$j$ such that $\sigma(j)\in \Cc \cap \Sset$, and notice that the scope of
$\psi_{\Cc}\circ \lmap_i^{\Cc}$ is simply
$\Cc \cup \{ \sigma(1),\ldots,\sigma(j_{\Cc})\}$.
Thus, we can modify $\Gcb$ to obtain an I-map for $\lmap_i\pull \, \genm$ as follows:
(1) Remove any edge that is incident to any node in $\Aset$ 
because of Part \ref{thm:decompTrans_partRVimap}.
(2) For every maximal clique $\Cc$ in $\Gcb$ that is a subset of 
$\Sset \cup \Bset$ and that has nonempty intersection with $\Sset$,  
turn $\Cc \cup \{ \sigma(1),\ldots,\sigma(j_{\Cc})\}$ into a clique.
This proves Part \ref{graphSparsification_partImap} of the theorem.

Now let $\mathfrak{R}_i$ be the set of maps $\re^n\ra \re^n$ that
are low-dimensional with respect to $\Aset$ and that push forward
$\genm_\eta$ to $\lmap_i\pull\,\genm$.
$\mathfrak{R}_i$ is nonempty.
To see this, let $\rmap$ be the 
$\sigma$-generalized KR rearrangement that 
pushes forward
$\genm_\eta$ to $\lmap_i\pull\,\genm$, for an
arbitrary permutation $\sigma$, as given by Definition \ref{def:genKRrearr} 
(for the pair of nonvanishing densities $\eta$ and $\lmap_i\pull \pi$).
By Part \ref{thm:decompTrans_partRVimap} and Lemma \ref{lem:productTarget}, 
$\rmap$ is low-dimensional with respect to $\Aset$. 
Thus $\rmap \in \mathfrak{R}_i$ (Part \ref{thm:decompTrans_partRightMap} of the theorem).

Let $\decset_i \coloneqq \lmap_i \circ \mathfrak{R}_i$ be the set of maps
that can be written as $\lmap_i \circ \rmap$ for some $\rmap \in \mathfrak{R}_i$.
By construction, each $T \in \decset_i$ pushes 
forward $\genm_\eta$ to $\genm$
(part \ref{thm:decompTrans_partT} of the theorem).
\hfill $\square$
\smallskip

In the following corollary every symbol should be interpreted as in Theorem \ref{thm:decompTrans}.

\begin{cor}
	\label{cor:lowdim_lmap}
	Given the hypothesis of Theorem \ref{thm:decompTrans}, 
	assume 	that there exists %
	$\Aind \subset \Aset$ such that 
	$\Zb^i_{\Aind}\orth \Zb^i_{\Vc \setminus \Aind}$ and 
	$\Zb^i_{\Aind} =  \Xb_{\Aind} $ in distribution.
	Then  $\lmap_i$ is low-dimensional with respect to $\Aind \cup \Bset$, 
	while each $T\in \decset_i$ 
	is low-dimensional with respect to $\Aind$.	
\end{cor}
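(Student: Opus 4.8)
The plan is to derive both assertions from Lemma \ref{lem:productTarget} applied to $\lmap_i$, after showing that the marginal--independence hypothesis on $\Zb^i$ propagates to the intermediate target $\widetilde{\genm}$ onto which $\lmap_i$ transports. I adopt the notation of the proof of Theorem \ref{thm:decompTrans}: I drop the index $i$, let $\pi=\tfrac{1}{\mathfrak{c}}\,\psi_{\Aset\cup\Sset}\,\psi_{\Sset\cup\Bset}$ be the factorization underlying the definition of $\lmap_i$, and let $\widetilde{\pi}(\zb)=\psi_{\Aset\cup\Sset}(\zb_{\Aset\cup\Sset})\,\eta_{\Xb_\Bset}(\zb_\Bset)/\mathfrak{c}$ be the density of $\widetilde{\genm}\in\borelmp(\re^n)$, so that $\lmap_i$ is the $\sigma$-generalized KR rearrangement pushing $\genm_\eta$ to $\widetilde{\genm}$.

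First I would show that $\psi_{\Aset\cup\Sset}$ splits along $\Aind$. Since $\Aind\subset\Aset$, the block $\Sset\cup\Bset$ lies in $\Vc\setminus\Aind$; writing $\pi=\pi_{\Zb_\Aind}\,\pi_{\Zb_{\Vc\setminus\Aind}}$ (the hypothesis $\Zb_\Aind\orth\Zb_{\Vc\setminus\Aind}$) and integrating the identity $\tfrac{1}{\mathfrak{c}}\psi_{\Aset\cup\Sset}(\zb_{\Aset\cup\Sset})\psi_{\Sset\cup\Bset}(\zb_{\Sset\cup\Bset})=\pi_{\Zb_\Aind}(\zb_\Aind)\pi_{\Zb_{\Vc\setminus\Aind}}(\zb_{\Vc\setminus\Aind})$ over $\zb_\Bset$ yields $\tfrac{1}{\mathfrak{c}}\psi_{\Aset\cup\Sset}(\zb_{\Aset\cup\Sset})\,h(\zb_\Sset)=\pi_{\Zb_\Aind}(\zb_\Aind)\,\pi_{\Zb_{(\Aset\cup\Sset)\setminus\Aind}}(\zb_{(\Aset\cup\Sset)\setminus\Aind})$, with $h(\zb_\Sset)=\int\psi_{\Sset\cup\Bset}(\zb_\Sset,\zb_\Bset)\,{\rm d}\zb_\Bset$. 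Because $\pi$ and $\psi_{\Aset\cup\Sset}$ are strictly positive, $\psi_{\Sset\cup\Bset}=\mathfrak{c}\,\pi/\psi_{\Aset\cup\Sset}>0$, so $h$ is finite and strictly positive a.e.; solving for $\psi_{\Aset\cup\Sset}$, using $\Sset\subseteq(\Aset\cup\Sset)\setminus\Aind$ and $\pi_{\Zb_\Aind}=\eta_{\Xb_\Aind}$ (from $\Zb_\Aind=\Xb_\Aind$ in distribution), I get $\psi_{\Aset\cup\Sset}(\zb_{\Aset\cup\Sset})=\mathfrak{c}\,\eta_{\Xb_\Aind}(\zb_\Aind)\,g(\zb_{(\Aset\cup\Sset)\setminus\Aind})$ for some $g>0$. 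Substituting, $\widetilde{\pi}(\zb)=\eta_{\Xb_\Aind}(\zb_\Aind)\,g(\zb_{(\Aset\cup\Sset)\setminus\Aind})\,\eta_{\Xb_\Bset}(\zb_\Bset)$; since $\Aind$, $(\Aset\cup\Sset)\setminus\Aind=\Vc\setminus(\Aind\cup\Bset)$ and $\Bset$ are pairwise disjoint, the three blocks of $\widetilde{\Zb}$ are mutually independent, $\int g=1$, and hence $\widetilde{\Zb}_{\Aind\cup\Bset}\orth\widetilde{\Zb}_{\Vc\setminus(\Aind\cup\Bset)}$ with $\widetilde{\Zb}_{\Aind\cup\Bset}$ of density $\eta_{\Xb_\Aind}\eta_{\Xb_\Bset}=\eta_{\Xb_{\Aind\cup\Bset}}$, i.e. $\widetilde{\Zb}_{\Aind\cup\Bset}=\Xb_{\Aind\cup\Bset}$ in distribution. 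Lemma \ref{lem:productTarget} applied to $\lmap_i$ with the set $\Aind\cup\Bset$ then gives the first claim.

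For the second claim I would invoke two elementary stability properties of Definition \ref{def:lowdim}: (i) a map low-dimensional with respect to $\Cc$ is low-dimensional with respect to any nonempty $\Cc'\subseteq\Cc$ (for $j\in\Cc'$, $k\in\Cc\setminus\Cc'$ one has $M^k(\xb)=x_k$ and $j\ne k$, and for $k\in\Vc\setminus\Cc$ the partial $\partial_j M^k$ already vanishes); and (ii) low-dimensionality with respect to a fixed $\Cc$ is preserved by composition (in $\partial_j(M_1\circ M_2)^k=\sum_l(\partial_l M_1^k)\circ M_2\cdot\partial_j M_2^l$, for $j\in\Cc$ only the $l=j$ term survives and it vanishes once $k\notin\Cc$). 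Each $T\in\decset_i$ equals $\lmap_i\circ\rmap$ with $\rmap$ low-dimensional with respect to $\Aset\supseteq\Aind$ (Theorem \ref{thm:decompTrans}[Part \ref{thm:decompTrans_partRightMap}]) and $\lmap_i$ low-dimensional with respect to $\Aind\cup\Bset\supseteq\Aind$ (just proven); so (i) makes both factors low-dimensional with respect to $\Aind$, and (ii) makes $T$ low-dimensional with respect to $\Aind$. I expect the main obstacle to be the first step: one must be careful that the division by $h(\zb_\Sset)$ and the identification of $g$ as a function of $\zb_{(\Aset\cup\Sset)\setminus\Aind}$ alone are legitimate (a.e., via strict positivity and Fubini), and that the normalizing constants conspire to give genuine distributional equality rather than mere proportionality with the reference marginals; everything else is index-set bookkeeping.
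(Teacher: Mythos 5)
Your proof is correct and follows essentially the same strategy as the paper's: reduce the first claim to Lemma~\ref{lem:productTarget} applied to the $\sigma$-generalized KR rearrangement $\lmap_i$, and obtain the second claim by a chain-rule bookkeeping argument on the composition $\lmap_i\circ\rmap$. The paper's proof is more terse: it asserts that Lemma~\ref{lem:productTarget} gives $\lmap_i$ low-dimensional with respect to $\Aind$ (and Theorem~\ref{thm:decompTrans}[Part~\ref{thm:decompTrans_partLeftMap}] gives low-dimensionality with respect to $\Bset$), then implicitly combines the two; it does not explicitly verify that the \emph{intermediate} target $\widetilde{\genm}=\lmap_i{}_\sharp\,\genm_\eta$ satisfies the hypothesis of Lemma~\ref{lem:productTarget}. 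That is precisely the step you spell out: the assumption $\Zb^i_{\Aind}\orth\Zb^i_{\Vc\setminus\Aind}$ with $\Zb^i_{\Aind}\eqdis\Xb_{\Aind}$ propagates to a splitting $\psi_{\Aset\cup\Sset}=\mathfrak{c}\,\eta_{\Xb_{\Aind}}\cdot g$, whence $\widetilde{\pi}$ factorizes over $\Aind$, $(\Aset\cup\Sset)\setminus\Aind$, and $\Bset$ with the $\Aind$ and $\Bset$ marginals matching $\genm_\eta$. This makes the argument more self-contained and lets you invoke Lemma~\ref{lem:productTarget} once with $\Aind\cup\Bset$ rather than twice with $\Aind$ and $\Bset$ and then observing closure of low-dimensionality under unions; both routes work. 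Your ``elementary stability'' properties (closure under passing to a nonempty subset, closure under composition) are the same content as the paper's direct check of Definition~\ref{def:lowdim} for $T=\lmap_i\circ\rmap$, packaged as reusable lemmas. Everything checks out, including the Fubini step and the strict positivity and integrability of $h$.
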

\begin{proof}
	By Theorem \ref{thm:decompTrans}[Part \ref{thm:decompTrans_partLeftMap}], 
	$\lmap_i$ is low-dimensional with respect to $\Bset$, while
	Lemma \ref{lem:productTarget} shows that $\lmap_i$ is also  
	low-dimensional with respect to $\Aind$.
	Moreover, notice that if $\Aind$ is nonempty, then for all 
	$T = \lmap_i \circ \rmap$ in $\decset_i$, we have
	$T^k(\xb)=x_k$ for $k\in \Aind$ since $\lmap_i^k (\xb)=x_k$ and
	$\rmap^k(\xb)=x_k$ for $k\in \Aind$  
	(Theorem \ref{thm:decompTrans}[Parts \ref{thm:decompTrans_partRightMap}]).
	Additionally, $\partial_j \, T^k = 0$ for $j \in \Aind$ and 
	$k\in \Vc \setminus \Aind$. To see this, notice that 
	$T^k(\xb)=\lmap_i^k(\rmap(\xb))$ and that the following two facts hold:
	(1) The component
	$\lmap_i^k$, for $k\in \Vc \setminus \Aind$,
	does not depend
	on   input variables whose index is in $\Aind$ since
	$\lmap_i$ is low-dimensional with respect to $\Aind$; 
	(2)
	The $\ell$th component of $\rmap$ with $\ell \notin \Aind$ also does not depend on 
	$\xb_{\Aind}$
	since $\rmap$ is low-dimensional with respect to $\Aset$ 
	(Theorem \ref{thm:decompTrans}[Parts \ref{thm:decompTrans_partRightMap}]).
	Hence, $T$ must be a low-dimensional map with respect to $\Aind$. 
\end{proof}
\smallskip

\noindent
{\bf Proof of Lemma \ref{lem:recursiveDecomp}.}
Let $\genm_\eta , \genm_i, \pi_i, \Gcb^i, \decset_i,\lmap_i, \mathfrak{R}_i$, 
and $\Gcb^{i+1}$ be defined as in 
Theorem \ref{thm:decompTrans}
for a proper decomposition 
$(\Aset_i,\Sset_i,\Bset_i)$ of  $\Gcb^i$, 
a permutation $\sigma_i$ that satisfies \eqref{eq:formPerm}, and for any
factorization \eqref{eq:formFactoriz} of $\pi_i$.

We first want to prove that 
$\Sset_i \cup \Bset_i$ is fully connected in $\Gcb^{i+1}$ if and only if
the decomposition $(\Aset_{i+1}, \Sset_{i+1}, \Bset_{i+1})$  of
Part \ref{lem:recursiveDecomp_partPos} does not exist.
Let us start with one direction.
Assume that a decomposition like the one in Part 
\ref{lem:recursiveDecomp_partPos}
does not exist, despite the
possibility to add edges to $\Gcb^{i+1}$ in $\Vc \setminus \Aset_i$. 
We want to show that in this case 
$\Sset_i \cup \Bset_i$ must be a clique in $\Gcb^{i+1}$.
Since $\Bset_i$ is nonempty, 
there are two possibilities: either $|\Sset_i \cup \Bset_i|=1$ or
$|\Sset_i \cup \Bset_i|>1$.
If $|\Sset_i \cup \Bset_i|=1$, then $\Sset_i \cup \Bset_i$ consists of a single node
and thus it is a trivial clique.
If $|\Sset_i \cup \Bset_i|>1$, then $\Sset_i \cup \Bset_i$ contains at least
two nodes.
In this case, let us proceed by contradiction and assume that $\Sset_i \cup \Bset_i$
is not fully connected in $\Gcb^{i+1}=(\Vc, \Ec^{i+1})$, i.e.,
there exist a pair of nodes
$\alpha,\beta \in \Sset_i \cup \Bset_i$ such that 
$(\alpha,\beta)\notin \Ec^{i+1}$. 
Let $\Aset_{i+1}=\Aset_i \cup \{\alpha \}$, $\Bset_{i+1}= \{\beta\}$, and
$\Sset_{i+1} = ( \Vc \setminus \Aset_{i+1} ) \setminus \Bset_{i+1}$.
Notice that $(\Aset_{i+1},\Sset_{i+1},\Bset_{i+1})$ forms a partition
of $\Vc$, with nonempty $\Aset_{i+1},\Bset_{i+1}$  and with $\Aset_{i+1}$
strict superset of $\Aset_{i}$.
Moreover $\Sset_{i+1}$ must be a separator set for  
$\Aset_{i+1}$ and $\Bset_{i+1}$ since $(\alpha,\beta)\notin \Ec^{i+1}$ and
$\Aset_i$ is disconnected from $\Sset_i \cup \Bset_i$ 
in $\Gcb^{i+1}$ (Theorem \ref{thm:decompTrans}[Part \ref{graphSparsification_partImap}]).
Now there are two cases: If $\Sset_{i+1} = \emptyset$, then 
$(\Aset_{i+1},\Sset_{i+1},\Bset_{i+1})$ is a decomposition 
that satisfies Part \ref{lem:recursiveDecomp_partPos} of the lemma (contradiction).
If $\Sset_{i+1} \neq \emptyset$, then we can always add enough edges to
$\Gcb^{i+1}$ in $\Sset_i \cup \Bset_i \supset \Sset_{i+1}$ 
in order to make $\Sset_{i+1}$ fully connected.
Also in this case, the resulting decomposition 
$(\Aset_{i+1},\Sset_{i+1},\Bset_{i+1})$ 
satisfies Part \ref{lem:recursiveDecomp_partPos} of the lemma and thus leads to
a  contradiction.

Now the reverse direction.
Assume that $\Sset_i \cup \Bset_i$ is a clique in $\Gcb^{i+1}$.
If $|\Sset_i \cup \Bset_i|=1$, then the decomposition of
Part \ref{lem:recursiveDecomp_partPos} cannot exist since both
$\Aset_{i+1}\setminus \Aset_i$ and $\Bset_{i+1}$ should  be nonempty.
Hence, let $|\Sset_i \cup \Bset_i|>1$ and  
proceed by contradiction.
That is, let $(\Aset_{i+1},\Sset_{i+1},\Bset_{i+1})$  be
a proper decomposition that
satisfies Part \ref{lem:recursiveDecomp_partPos} of the lemma.
Notice that this decomposition must have been achieved without 
adding any edge to $\Gcb^{i+1}$ in
$\Sset_i \cup \Bset_i$ since this set is already fully connected.
By hypothesis, there must exist $\alpha,\beta$ such that
$\alpha \in \Aset_{i+1}\setminus \Aset_i$ and $ \beta \in \Bset_{i+1}$.
However, both $\alpha$ and $\beta$ are also in $\Sset_i \cup \Bset_i$, and so
they must be connected by an edge in $\Gcb^{i+1}$. Hence,
$\Sset_{i+1}$ is not a separator set for $\Aset_{i+1}$
and $\Bset_{i+1}$ (contradiction).

The latter result proves directly Part \ref{lem:recursiveDecomp_partNeg} of the lemma.
Moreover, it shows that if
$\Sset_i \cup \Bset_i$ is not a clique in $\Gcb^{i+1}$, then
there exists a proper decomposition $(\Aset_{i+1},\Sset_{i+1},\Bset_{i+1})$
of $\Gcb^{i+1}$, 
where $\Aset_{i+1}$ is a strict superset of $\Aset_i$,
obtained, possibly, by adding edges to $\Gcb^{i+1}$ in order
to turn $\Sset_{i+1}$ into a clique. 
Note that even if we add edges to $\Gcb^{i+1}$, 
$\lmap_i\pull \,\genm_i$ still factorizes according to the resulting
graph, which is then an I-map for  $\lmap_i\pull \,\genm_i$.
Moreover we can really only add edges in $\Vc \setminus \Aset_i$ since
$\Aset_i$ must be a strict subset of $\Aset_{i+1}$, and thus
$\Aset_i$ remains disconnected from $\Sset_i \cup \Bset_i$ in 
$\Gcb^{i+1}$.
Let $\decset_{i+1},\lmap_{i+1},\mathfrak{R}_{i+1}$ be defined as in 
Theorem \ref{thm:decompTrans} 
for the pair of measures $\genm_\eta , \genm_{i+1} = \lmap_i\pull \,\genm_i$, the
decomposition $(\Aset_{i+1},\Sset_{i+1},\Bset_{i+1})$ of $\Gcb^{i+1}$, 
a permutation  $\sigma_{i+1}$ that satisfies \eqref{eq:formPerm}, and for any
factorization \eqref{eq:formFactoriz} %
(note that $\lmap_i\pull \,\genm_i \in \borelmp(\re^n)$ 
by Theorem \ref{thm:decompTrans}[Part \ref{thm:decompTrans_partRightMap}]).
Fix $T\in \decset_{i+1}$.
By Theorem \ref{thm:decompTrans}[Part \ref{thm:decompTrans_partT}], 
$T$ pushes
forward $\genm_\eta$ to $\genm_{i+1} = \lmap_i\pull \,\genm_i$.
Moreover, if $\Zb^{i+1} \sim \lmap_i\pull \,\genm_i$, then by 
Theorem \ref{thm:decompTrans}[Part \ref{thm:decompTrans_partRVimap}]
we have
$\Zb^{i+1}_{\Aset_i} \orth \Zb^{i+1}_{\Sset_i \cup \Bset_i}$ and 
$\Zb^{i+1}_{\Aset_i} = \Xb_{\Aset_i}$ in distribution.
Then by Corollary \ref{cor:lowdim_lmap}
it must also be that $T$
is  low-dimensional with respect to $\Aset_i$.
Thus $T \in \mathfrak{R}_i$, and this proves the inclusion 
$\mathfrak{R}_i \supset \decset_{i+1}$.

Now fix
any $T \in \lmap_i \circ \lmap_{i+1} \circ \mathfrak{R}_{i+1} 
= \lmap_i \circ \decset_{i+1}$. 
It must be that $T = \lmap_i \circ g$ for some $g \in \decset_{i+1} \subset \mathfrak{R}_i$,
so that $T \in \lmap_i \circ \mathfrak{R}_i$, which shows the inclusion
$\lmap_i \circ \mathfrak{R}_i \supset 
\lmap_i \circ \lmap_{i+1} \circ \mathfrak{R}_{i+1}$
(Part \ref{lem:recursiveDecomp_partInclusion} of the lemma).
By Corollary \ref{cor:lowdim_lmap},
we have that
$\lmap_{i+1}$ is low-dimensional with respect to $\Aset_i \cup \Bset_{i+1}$, and so
its effective dimension is  bounded above by
$|\Vc \setminus (\Aset_i \cup \Bset_{i+1})|=|(\Aset_{i+1} \setminus \Aset_{i}) \cup \Sset_{i+1}|$
(Part \ref{lem:recursiveDecomp_lmap}).
Finally, by Theorem \ref{thm:decompTrans}[Part \ref{thm:decompTrans_partRightMap}],
each $R \in \mathfrak{R}_{i+1}$ is low-dimensional with respect to 
$\Aset_{i+1}$, and so its effective dimension is bounded by 
$|\Vc \setminus \Ac_{i+1}|$
(Part \ref{lem:recursiveDecomp_rmap}).
\hfill $\square$
\smallskip

\noindent
{\bf Proof of Theorem \ref{thm:decompSmooth}.} 
For the sake of clarity, we divide the proof in two parts:
First, we show that the maps $(\submap_i)_{i \ge 0}$ are well-defined.
Then, we prove the remaining claims of the theorem.

The maps $(\submap_i)_{i \ge 0}$ are well-defined as long as, for instance, 
we show that  $\pi^i$ is a   probability density %
for all $i\ge 0$, and as long
as there exist permutations $(\sigma_i)$ that guarantee the block
upper triangular structure of \eqref{eq:thm:decompSmooth:upperTriMap}.
As for the permutations, it suffices to consider 
$\sigma = \sigma_1 = \sigma_2 = \cdots$ with
$\sigma(\mathbb{N}_{2n}) = \{ 2n, 2n-1, \ldots, 1 \}$, i.e.,  upper
triangular maps. 
(If $n>1$, then there is some freedom in the choice of $\sigma$.)
As for the targets $(\pi^i)$, we now show that  
$\pi^i$ is a nonvanishing density
and that 
the marginal
$\int \pi^i(\zb_i,\zb_{i+1}) \, {\rm d}\zb_{i} =  
\pi_{\Zb_{i+1} \vert \yb_{0:i+1}}$, for
all $ i\ge 0$, using an induction argument over $i$.
For the base case ($i=0$), just notice that
\begin{equation} \label{thm:decompSmooth:c0}
\mathfrak{c}_0 = \int \widetilde{\pi}^0(\zb_0 , \zb_1) \, {\rm d}\zb_{0:1} = 
\pi_{\Yb_0,\Yb_1}(\yb_0,\yb_1) < \infty,
\end{equation}
so that
$\pi^0 = \widetilde{\pi}^0 / \mathfrak{c}_0 > 0$
is a valid density. %
Moreover, we have the desired marginal, i.e., %
\begin{equation}
	\int \pi^0(\zb_0 , \zb_{1}) \, {\rm d}\zb_0=
	\int \pi_{\Zb_0,\Zb_1\vert\Yb_0,\Yb_1}(\zb_0 , \zb_{1}\vert \yb_0, \yb_1) \, {\rm d}\zb_0 = 
	\pi_{\Zb_1\vert\Yb_0,\Yb_1}(\zb_{1}\vert \yb_0, \yb_1).	
\end{equation}
Now assume that $\pi^i$ is a nonvanishing density %
and that the marginal
$\int \pi^i(\zb_i,\zb_{i+1}) \, {\rm d}\zb_{i} =  
\pi_{\Zb_{i+1} \vert \yb_{0:i+1}}$ for some $i>0$. 
The map $\submap_i$ is then well-defined.
In particular, by definition of KR rearrangement,
the submap $\submap_i^1$ pushes forward 
$\eta_{\Xb_{i+1}}$ to the marginal 
$\int \pi^i(\zb_i,\zb_{i+1}) \, {\rm d}\zb_{i}$.
Moreover, by Lemma
\ref{lem:GenchangeVarTriWeak}, we have: %
\begin{eqnarray} \label{thm:decompSmooth:ck}
\mathfrak{c}_{i+1} & = & 
		\int \eta_{\Xb_{i+1}}(\zb_{i+1}) \,
		\widetilde{\pi}^{i+1}( \submap_{i}^1(\zb_{i+1}) , 
		\zb_{i+2})
		\, {\rm d}\zb_{i+1:i+2} \\ \nonumber
		& = & 
		\int \pi_{\Zb_{i+2},\Yb_{i+2}\vert \Yb_{0:i+1}}(\zb_{i+2}, \yb_{i+2} \vert \yb_{0:i+1})
		\, {\rm d}\zb_{i+2} \\ \nonumber
		& = &
\pi_{\Yb_{i+2}\vert \Yb_{0:i+1}}(\yb_{i+2} \vert \yb_{0:i+1}) < \infty, \nonumber
\end{eqnarray}
where we used the change of variables 
$\xb_{i+1} = \submap_{i}^1(\zb_{i+1})$ and the fact that
$(\submap_i^1)\push \, \eta_{\Xb_{i+1}} = \pi_{\Zb_{i+1}\vert\yb_{0:i+1}}$ 
(induction hypothesis).
Thus 
$\pi^{i+1}$ is a nonvanishing density
and
by \eqref{thm:decompSmooth:ck}  we can easily verify that $\pi^{i+1}$ has the
desired marginal, i.e., 
$\int \pi^{i+1}(\zb_{i+1},\zb_{i+2}) \, {\rm d}\zb_{i+1} =  
\pi_{\Zb_{i+2} \vert \yb_{0:i+2}}$. 
This argument completes the induction step  and shows that not only the maps
$(\submap_i)_{i \ge 0}$ are well-defined---together with the maps $(T_i)_{i \ge 0}$ in 
\eqref{eq:thm:decompSmooth:upperTriMapEmb}---but also that
$(\submap_i^1)\push \, \eta_{\Xb_{i+1}} = \pi_{\Zb_{i+1}\vert\yb_{0:i+1}}$
for all $i \ge 0$ (Part \ref{thm:decompSmooth:partFilt} of the theorem).

Now we move to Part \ref{thm:decompSmooth:partFull} of the theorem and
use another induction 
argument over $k\ge 0$. 
For the base case ($k=0$), notice that
$\mathfrak{T}_0 = T_0 = \submap_0$, and that, by definition,
$\submap_0$ pushes forward $\eta_{\Xb_{0},\Xb_{1}}$ to 
$\pi^0 = \pi_{\Zb_0,\Zb_1\vert \yb_0, \yb_1}$.

Assume that $\mathfrak{T}_k$ pushes forward $\eta_{\Xb_{0:k+1}}$ to 
$\pi_{\Zb_{0:k+1}\vert \yb_{0:k+1}}$ for
some $k>0$ ($\mathfrak{T}_k$ is well-defined for all
$k$ since the maps $(T_i)_{i \ge 0}$ in 
\eqref{eq:thm:decompSmooth:upperTriMapEmb}  are also well-defined),
and notice that
\begin{equation}
	\pi_{\Zb_{0:k+2}\vert \yb_{0:k+2}} = \pi_{\Zb_{0:k+1}\vert \yb_{0:k+1}} \,
	\frac{  \pi_{\yb_{k+2} \vert \Zb_{k+2} } \,
			\pi_{\Zb_{k+2}\vert\Zb_{k+1} } }
			 {\pi_{\yb_{k+2}\vert \yb_{0:k+1}}} = 
	\pi_{\Zb_{0:k+1}\vert \yb_{0:k+1}} \,
	\frac{ \widetilde{\pi}^{k+1}  }{\mathfrak{c}_{k+1}},
\end{equation}
where we used \eqref{thm:decompSmooth:ck} and 
the definition of the collection $(\widetilde{\pi}^i)$.
Let $\mathfrak{T}_{k+1} = T_0 \circ \cdots \circ T_{k+1}$ be defined
as in Part \ref{thm:decompSmooth:partFull} of the theorem, and observe that
$\mathfrak{T}_{k+1} =
A_{k+1} \circ T_{k+1}$ with
\begin{equation}  
A_{k+1}( \xb_{0:k+2} ) = 
	\left[\begin{array}{l}
		\mathfrak{T}_k(\xb_{0:k+1})\\[\spacinglines] 
		\xb_{k+2}\\
		 \end{array}\right], \quad
		T_{k+1}( \xb_{0:k+2} ) = \left[\begin{array}{l}
		\xb_0 \\ 
		\vdots \\
		\xb_{k} \\[\spacinglines]
		\submap_{k+1}^0( \xb_{k+1} , \xb_{k+2}) \\[\spacinglines]
		\submap_{k+1}^1( \xb_{k+2}) \\
		\end{array}\right].
\end{equation}
Thus the following hold:
\begin{eqnarray} \label{thm:decompSmooth:pullbacks}
 \mathfrak{T}_{k+1}\pull\,\pi_{\Zb_{0:k+2}\vert \Yb_{0:k+2}}  
 & = & 
 T_{k+1}\pull \left( \left(\mathfrak{T}_{k}\pull\,\pi_{\Zb_{0:k+1}\vert \yb_{0:k+1}} \right) \,
 \frac{\pi^{k+1}}{\eta_{\Xb_{k+1}}}  \right) \\
 & = & 
 T_{k+1}\pull \left( \eta_{\Xb_{0:k}} \, \pi^{k+1} \right) \nonumber \\
 & = &
 \eta_{\Xb_{0:k}} \,  \submap_{k+1}\pull \,\pi^{k+1}  =  \eta_{\Xb_{0:k+2}} , \nonumber
\end{eqnarray}
where we used the fact that
by Lemma \ref{lem:GenchangeVarTriWeak} 
(applied
iteratively)
it must be that 
$(A_{k+1} \circ T_{k+1})\pull \rho = T_{k+1} \pull \, A_{k+1} \pull \rho$
for all densities $\rho$.
(Notice that $A_{k+1}$ 
is the composition of functions which are trivial embeddings into
the identity map of KR rearrangements that couple a pair
of measures in $\borelmp(\re^n \times \re^n)$, %
and thus each map in the composition satisfies
the hypothesis of Lemma \ref{lem:GenchangeVarTriWeak}.)
In particular, 
$(\mathfrak{T}_{k+1})\push\, \eta_{\Xb_{0:k+2}} = \pi_{\Zb_{0:k+2}\vert \yb_{0:k+2}}$ 
(Part \ref{thm:decompSmooth:partFull} of the theorem).

Now notice that each $\mathfrak{T}_{k}$ can also be written as
\begin{equation}  
\mathfrak{T}_{k}( \xb_{0:k+1} ) = 
	\left[\begin{array}{l}
		B_k(\xb_{0:k+1})\\[\spacinglines] 
		\overline{\submap}_k(\xb_k ,\xb_{k+1})\\
		 \end{array}\right]
\end{equation}  
for a multivariate function $B_k$---whose particular form is not relevant to this argument---and for a map, $\overline{\submap}_k$, %
defined in \eqref{eq:thm:decompSmooth:upperTriMapSmooth} as a function
on $\re^n \times \re^n$. 
Since $(\mathfrak{T}_{k})\push\,\eta_{\Xb_{0:k+1}} = 
\pi_{\Zb_{0:k+1}\vert \yb_{0:k+1}}$, the map  
$\overline{\submap}_k$ must also push forward
$\eta_{\Xb_{k},\Xb_{k+1}}$ to the
lag-$1$ smoothing marginal $\pi_{\Zb_k,\Zb_{k+1}\vert \yb_{0:k+1}}$.
This proves Part \ref{thm:decompSmooth:partSmot} of the theorem.

For Part \ref{thm:decompSmooth:partEvidence}, just notice that
\begin{equation} \label{thm:decompSmooth:data}
	\pi_{\Yb_{0:k+1}}(\yb_{0:k+1}) = 
	\pi_{\Yb_0,\Yb_1}(\yb_0,\yb_1)\,
	\prod_{i=1}^k \pi_{\Yb_{i+1}\vert \Yb_{0:i}}(\yb_{i+1}\vert \yb_{0:i})	=
	\prod_{i=0}^k \mathfrak{c}_i,
\end{equation}
where we used both \eqref{thm:decompSmooth:c0} and \eqref{thm:decompSmooth:ck}.
\hfill $\square$
\smallskip

\noindent
{\bf Proof of Lemma \ref{lem:kalmanRec}.} 
First a remark about notation:
we denote by $\Gauss( \xb ; \mub, \Sigmab )$ the
density (as a function of $\xb$) of a Gaussian  with mean $\mub$ and covariance $\Sigmab$.

Now let $k>0$ and
notice that $\pi_{\Zb_{k+1} \vert \Zb_k}(\zb_{k+1}\vert \zb_k) = 
\Gauss( \zb_{k+1}; \Fb_k\,\zb_k, \Qb_k )$,
$\pi_{\Yb_{k+1} \vert \Zb_{k+1}}(\yb_{k+1} \vert \zb_{k+1}) = 
\Gauss( \yb_{k+1}; \Hb_{k+1}\,\zb_{k+1} , \Rb_{k+1} )$ and
$\eta_{\Xb_k}(\zb_k) = \Gauss( \zb_k; 0, {\bf I} )$.
By definition of the target $\pi^k$ in Theorem \ref{thm:decompSmooth}, we have:
\begin{eqnarray*}
	\pi^k(\zb_k, \zb_{k+1}) & = &	\eta_{\Xb_k}(\zb_k) \,
									\pi_{\Yb_{k+1} \vert \Zb_{k+1}}(\yb_{k+1} \vert \zb_{k+1}) \,
									\pi_{\Zb_{k+1} \vert \Zb_k}(\zb_{k+1}\vert 
									\submap_{k-1}^1(\zb_k) )  \\
							& = &   \Gauss( \zb_k; 0, {\bf I} ) \,
									\Gauss( \yb_{k+1}; \Hb_{k+1}\,\zb_{k+1} , \Rb_{k+1} )\,
									 \\
							&  & 		
									\Gauss( \zb_{k+1};
									\Fb_k\,( \Cb_{k-1} \, \zb_k + \cb_{k-1}), \Qb_k ) \\
							& \propto & 	
									\exp( -\frac{1}{2} \zb^\top \Jb\,\zb + \zb^\top \hb ),	
\end{eqnarray*}
where $\zb = (\zb_k , \zb_{k+1})\in \re^{2n}$, and where
$\Jb \in \re^{2n \times 2n} , \hb \in \re^{2n}$ are defined as
\begin{equation}
 \Jb = 	\left[\begin{array}{cc}
	\Jb_{11} & \Jb_{12} \\[3pt] 
	\Jb_{12}^\top & \Jb_{22}
	\end{array}\right], \quad
 \hb = \left[\begin{array}{c}
	\hb_{1}  \\ 
	\hb_{2}
	\end{array}\right],
\end{equation}
with:
\begin{equation}
	\begin{cases} 
		\Jb_{11} = {\bf I} + \Cb_{k-1}^\top\,\Fb_k^\top \, \Qb_k^{-1}\,\Fb_k\,\Cb_{k-1}    \\ 
		\Jb_{12} = - \Cb_{k-1}^\top\,\Fb_k^\top\,\Qb_k^{-1} \\
		\Jb_{22} = \Qb_k^{-1} + \Hb_{k+1}^\top\,\Rb_{k+1}^{-1}\,\Hb_{k+1} \\
		\hb_1 =  \Jb_{12}\,\Fb_k\,\cb_{k-1} \\
		\hb_2 =  \Qb_k^{-1}\,\Fb_k\,\cb_{k-1} +
				 \Hb_{k+1}^\top \, \Rb_{k+1}^{-1}\,\yb_{k+1}.  %
	\end{cases}		
\end{equation}
In particular, we can rewrite $\pi^k$ in {\it information form} \cite{koller2009probabilistic} 
as
$\pi^k(\zb) = \Gauss^{-1}( \zb ; \hb , \Jb )$.
Moreover we know by Theorem \ref{thm:decompSmooth}[Part \ref{thm:decompSmooth:partFilt}], that
the submap $\submap_k^1(\zb_{k+1}) = \Cb_k\,\zb_{k+1} + \cb_{k}$ pushes
forward $\eta_{\Xb_{k+1}}$ to the filtering marginal $\pi_{\Zb_{k+1}\vert\yb_{0:k+1}}$.
Hence $(\cb_k , \Cb_k)$ should be, respectively,  
the mean and a square root of the covariance of 
$\pi_{\Zb_{k+1}\vert\yb_{0:k+1}}$---thus the output of any
square-root Kalman filter at time $k+1$.
Now we just need to determine the submap 
$\submap^0_k(\zb_k, \zb_{k+1}) = \Ab_k \,\zb_k + \Bb_k \, \zb_{k+1} + \ab_{k}$.
Given that $\submap_k$ is a block upper triangular function, the map
$\zb_k \mapsto \submap^0_k(\zb_k, \zb_{k+1})$ should push forward
$\eta_{\Xb_k}$ to $\zb_k \mapsto \pi^k_{\Zb_k \vert \Zb_{k+1}}( 
\zb_k \vert \submap_k^1(\zb_{k+1}) )$.
Notice that $\pi^k_{\Zb_k \vert \Zb_{k+1} }(\zb_k \vert \zb_{k+1}) = 
\Gauss^{-1}( \zb_k ; \hb_1 - \Jb_{12}\,\zb_{k+1}, \Jb_{11} ) = 
\Gauss(\zb_k; \Jb_{11}^{-1}(\hb_1 - \Jb_{12}\,\zb_{k+1}) , \Jb_{11}^{-1} )$.
Hence 
$\pi^k_{\Zb_k \vert \Zb_{k+1} }( \zb_k \vert \submap_k^1(\zb_{k+1}) ) = 
\Gauss(\zb_k; 
\Jb_{11}^{-1} \Jb_{12}(\Fb_k\,\cb_{k-1} - \Cb_k\,\zb_{k+1}-\cb_k) , \Jb_{11}^{-1} )$, and so:
\begin{equation}
\submap^0_k(\zb_k, \zb_{k+1}) = 
\Jb_{11}^{-1} \Jb_{12}(\Fb_k\,\cb_{k-1} - \Cb_k\,\zb_{k+1}-\cb_k) + \Jb_{11}^{-1/2} \zb_k.
\end{equation}
Simple algebra then leads to \eqref{eq:recursionKalm}. 
\hfill $\square$
\smallskip

\noindent
{\bf Proof of Theorem \ref{thm:decompJoint}.}
We use a very similar argument to Theorem \ref{thm:decompSmooth}.
We first show that the maps $(\submap_i)_{i \ge 0}$ are well-defined.
These maps are well-defined as long as, for instance,  we show that
$\pi^i$ is a probability density 
for all $i\ge 0$, and as long
as there exist permutations $(\sigma_i)$ that guarantee the generalized
block
triangular structure of \eqref{eq:thm:decompJoint:genTriMap}.
As for the permutations, it suffices to consider 
$\sigma = \sigma_1 = \sigma_2 = \cdots$ with
$\sigma(\mathbb{N}_{\dhyp+2n}) = \{1,\ldots,\dhyp, \dhyp+2n, \dhyp+2n-1, \ldots, \dhyp+1 \}$.
As for the targets $(\pi^i)$, we now
use a (complete) induction argument over $i$
to show that,
for
all $ i\ge 0$,   
$\pi^i$ is a nonvanishing density and
$\int \pi^i(\zb_{\vhyps},\zb_i,\zb_{i+1}) \, {\rm d}\zb_{i} =  
A_i\pull\,\pi_{\vhyp, \Zb_{i+1} \vert \yb_{0:i+1}}(\zb_{\vhyps},\zb_{i+1})$ 
for a map $A_i$ defined
on $\re^{\dhyp} \times \re^n$ as
		\begin{equation}   
			A_i( \xb_{\vhyps} , \xb_{i+1} ) = 
			\left[\begin{array}{l}
			\mathfrak{T}^{\vhyp}_{i-1}(  \xb_{\vhyps} )\\[\spacinglines] 
			\xb_{i+1}
			\end{array}\right],%
		\end{equation}
with $\mathfrak{T}^{\vhyp}_{i-1}(\xb_{\vhyps}) = \xb_{\vhyps}$ if $i=0$.

For the base case ($i=0$), just notice that
$\mathfrak{c}_0 =  \pi_{\Yb_0,\Yb_1}(\yb_0,\yb_1) < \infty$,
so that
$\pi^0 = \widetilde{\pi}^0 / \mathfrak{c}_0 > 0$ is a valid density.
Moreover, we have the desired marginal, i.e., %
\begin{equation}
	\int \pi^0(\zb_{\vhyps},\zb_0 , \zb_{1}) \, {\rm d}\zb_0=
	\pi_{\vhyp,\Zb_1\vert\Yb_0,\Yb_1}(\zb_{\vhyps},\zb_{1}\vert \yb_0, \yb_1) =
	A_0\pull\,\pi_{\vhyp,\Zb_1\vert\yb_0,\yb_1}(\zb_{\vhyps},\zb_{1}),	
\end{equation}
since $A_0$ is the identity map on $\re^{\dhyp}\times \re^n$.
Now assume that %
$\pi^j$ is a nonvanishing density
for all 
$j \le i$ (complete induction) with $i>0$, and that the marginal
$\int \pi^i(\zb_{\vhyps},\zb_i,\zb_{i+1}) \, {\rm d}\zb_{i} =  
A_i\pull\,\pi_{\vhyp, \Zb_{i+1} \vert \yb_{0:i+1}}(\zb_{\vhyps},\zb_{i+1})$.
Under this hypothesis,
the maps $(\submap_j)_{j \le i}$ are well-defined, and so are
$A_i,A_{i+1}$ since $\mathfrak{T}^{\vhyp}_{i} = 
\submap_0^{\vhyp} \circ \cdots \circ  \submap_{i}^{\vhyp}$.
Before checking the integrability of $\pi^{i+1}$, 
notice that by definition of $\submap_i$ (a KR
rearrangement), the map
$B_i$, given by %
		\begin{equation}   
			B_i( \xb_{\vhyps} , \xb_{i+1} ) = 
			\left[\begin{array}{l}
			\submap^{\vhyp}_{i}(  \xb_{\vhyps} )\\[\spacinglines] 
			\submap^1_i( \xb_{\vhyps} , \xb_{i+1}  )
			\end{array}\right],%
		\end{equation}
pushes forward $\eta_{\Xb_{\vhyp},\Xb_{i+1}}$ to the marginal
$\int \pi^i(\zb_{\vhyps},\zb_i,\zb_{i+1}) \, {\rm d}\zb_{i}$, which equals
$A_i\pull\,\pi_{\vhyp, \Zb_{i+1} \vert \yb_{0:i+1}}$
(inductive hypothesis), i.e., %
$(B_i)\push \,\eta_{\Xb_{\vhyp},\Xb_{i+1}} = 
A_i\pull\,\pi_{\vhyp, \Zb_{i+1} \vert \yb_{0:i+1}}$.	
In particular, it must also be that
$(A_i \circ B_i)\push\,\eta_{\Xb_{\vhyp},\Xb_{i+1} }
=
\pi_{ \vhyp, \Zb_{i+1} \vert \yb_{0:i+1}}$, where
$A_i \circ B_i$ corresponds precisely to the
map $\widetilde{\submap}_i$ defined in
\eqref{eq:thm:decompJoint:TriMapFilt}, so that
$(\widetilde{\submap}_i) \push\,\eta_{\Xb_{\vhyp},\Xb_{i+1} }
=
\pi_{ \vhyp, \Zb_{i+1} \vert \yb_{0:i+1}}$.

Now we can prove that $\mathfrak{c}_{i+1} < \infty$ using the following identities:
\begin{eqnarray} \label{thm:decompJoint:ck}
\mathfrak{c}_{i+1} & = & 
	\int 
	\eta_{\Xb_{\vhyp},\Xb_{i+1}}(\zb_{\vhyps},\zb_{i+1})  \\
	&  & 
	\widetilde{\pi}^{i+1}(\mathfrak{T}^{\vhyp}_{i}(\zb_{\vhyps}),
	\submap_{i}^1(\zb_{\vhyps},\zb_{i+1}), \zb_{i+2})
	\, {\rm d}\zb_{\vhyps} \, {\rm d}\zb_{i+1:i+2} \nonumber \\ 
	&  = & 
	\int 
	(\widetilde{\submap}_i)\push\,\eta_{\Xb_{\vhyp},\Xb_{i+1}}(\xb_{\vhyps},\xb_{i+1}) \,
	\pi_{\Zb_{i+2} \vert \Zb_{i+1}, \vhyp}(\zb_{i+2}\vert \xb_{i+1}, \xb_{\vhyps}) 
		\nonumber \\
	&  &
	\pi_{\Yb_{i+2} \vert \Zb_{i+2} , \vhyp }(\yb_{i+2}\vert \zb_{i+2}, \xb_{\vhyps}) 
	\, {\rm d}\xb_{\vhyps} \, {\rm d}\xb_{i+1} \,{\rm d}\zb_{i+2}
		\nonumber \\
	&  = &
	\int
	\pi_{ \vhyp, \Zb_{i+1} \vert \yb_{0:i+1}}(\xb_\theta, \xb_{i+1}) \\ \nonumber
	&   & %
	\pi_{\Zb_{i+2},\Yb_{i+2} \vert \Zb_{i+1}, \vhyp}(\zb_{i+2}, \yb_{i+2} \vert \xb_{i+1}, \xb_\theta)
	\, {\rm d}\xb_{\vhyps} \, {\rm d}\xb_{i+1} \,{\rm d}\zb_{i+2}
		\nonumber \\ 
	&  = & 
	\pi_{\Yb_{i+2}\vert \Yb_{0:i+1}}(\yb_{i+2} \vert \yb_{0:i+1}) < \infty,  \nonumber
\end{eqnarray}
where we used the change of variables:
		\begin{equation}    \label{thm:decompJoint:changeVar}
			\left[\begin{array}{l}
				\xb_{\vhyps}  \\[\spacinglines] 
			    \xb_{i+1}
			\end{array}\right]
			= 
			\left[\begin{array}{l}
				\mathfrak{T}^{\vhyp}_{i}(  \zb_{\vhyps} )\\[\spacinglines] 
				\submap^1_i( \zb_{\vhyps} , \zb_{i+1}  )
			\end{array}\right] =
			\widetilde{\submap}_i( \zb_{\vhyps} , \zb_{i+1}  ),
		\end{equation}
and the fact that 
$(\widetilde{\submap}_i) \push\,\eta_{\Xb_{\vhyp},\Xb_{i+1}} = 
\pi_{ \vhyp, \Zb_{i+1} \vert \yb_{0:i+1}}$
(induction hypothesis).
(The change of variables in \eqref{thm:decompJoint:changeVar} is
valid 
for the following reason:
the map $\widetilde{\submap}_i$ can be factorized as the composition of
$i+1$ (generalized) triangular functions, all that fit the hypothesis of
Lemma \ref{lem:GenchangeVarTriWeak}, so that \eqref{thm:decompJoint:changeVar}
should really be interpreted as a sequence of $i+1$ change of
variables---each associated with one
map in the composition and justified by Lemma \ref{lem:GenchangeVarTriWeak}.)
Therefore %
$\pi^{i+1}$ is a nonvanishing density.
Following the same derivations as in \eqref{thm:decompJoint:ck},
it is not hard to show that $\pi^{i+1}$ has also the
desired marginal, i.e., 
\begin{equation}
\int \pi^{i+1}(\zb_{\vhyps},\zb_{i+1} , \zb_{i+2}) \, {\rm d}\zb_{i+1} 
	=
	A_{i+1}\pull\,\pi_{ {\vhyp} ,\Zb_{i+2} \vert \yb_{0:i+2}}
	(\zb_{\vhyps}, \zb_{i+2} ).%
\end{equation}
This argument completes the induction step  and shows that not only the maps
$(\submap_i)_{i \ge 0}$ are well-defined---together with the maps $(T_i)_{i \ge 0}$ in 
\eqref{eq:thm:decompJoint:genTriMapEmb}---but also that
$(\widetilde{\submap}_i^1)\push \, \eta_{\Xb_{\vhyp},\Xb_{i+1}} = 
\pi_{\vhyp,\Zb_{i+1}\vert\yb_{0:i+1}}$
for all $i \ge 0$ (Part \ref{thm:decompJoint:partFilt} of the theorem). 

Now we prove Part \ref{thm:decompJoint:partFull} of the theorem using another induction 
argument on $k\ge 0$. For the base case ($k=0$), notice that
$\mathfrak{T}_0 = T_0 = \submap_0$, and that, by definition,
$\submap_0$ pushes forward $\eta_{\Xb_{\vhyp},\Xb_0,\Xb_{1}}$ to 
$\pi^0=\pi_{ \vhyp,\Zb_0,\Zb_1\vert \yb_0, \yb_1}$.

Assume that $\mathfrak{T}_k$ pushes forward $\eta_{\Xb_{\vhyp},\Xb_{0:k+1}}$ to 
$\pi_{\vhyp,\Zb_{0:k+1}\vert \yb_{0:k+1}}$ for
some $k>0$ ($\mathfrak{T}_k$ is well-defined for all
$k$ since the maps $(T_i)_{i \ge 0}$ in 
\eqref{eq:thm:decompJoint:genTriMapEmb}  are also well-defined),
and notice that
\begin{equation}
	\pi_{{\vhyp},\Zb_{0:k+2}\vert \yb_{0:k+2}} = 
	\pi_{{\vhyp},\Zb_{0:k+1}\vert \yb_{0:k+1}} \,
	\frac{  \pi_{\yb_{k+2} \vert \Zb_{k+2} ,  {\vhyp}} \,
			\pi_{\Zb_{k+2}\vert\Zb_{k+1} , {\vhyp} } }
			 {\pi_{\yb_{k+2}\vert \yb_{0:k+1}}}=
	\pi_{{\vhyp},\Zb_{0:k+1}\vert \yb_{0:k+1}} \, 
	\frac{ \widetilde{\pi}^{k+1}  }{\mathfrak{c}_{k+1}}, \nonumber
\end{equation}
where we used \eqref{thm:decompJoint:ck} and 
the definition of the collection $(\widetilde{\pi}^i)$.
Let $\mathfrak{T}_{k+1} = T_0 \circ \cdots \circ T_{k+1}$ be defined
as in Part \ref{thm:decompJoint:partFull} of the theorem, and observe that
$\mathfrak{T}_{k+1} =
C_{k+1} \circ T_{k+1}$ with
\begin{equation}  
C_{k+1}(\xb_{\vhyps}, \xb_{0:k+2} ) = 
	\left[\begin{array}{l}
		\mathfrak{T}_k(\xb_{\vhyps},\xb_{0:k+1})\\[\spacinglines] 
		\xb_{k+2}\\
		 \end{array}\right], \quad
		T_{k+1}(\xb_{\vhyps},\xb_{0:k+2} ) = \left[\begin{array}{l}
		\submap_{k+1}^{\vhyp}( \xb_{\vhyps}) \\[\spacinglines]
		\xb_0 \\ 
		\vdots \\
		\xb_{k} \\[\spacinglines]
		\submap_{k+1}^0( \xb_{\vhyps}, \xb_{k+1} , \xb_{k+2}) \\[\spacinglines]
		\submap_{k+1}^1( \xb_{\vhyps}, \xb_{k+2}) \\
		\end{array}\right]. \nonumber
\end{equation}
Thus the following hold:
\begin{eqnarray} %
 \mathfrak{T}_{k+1}\pull\,\pi_{\vhyp,\Zb_{0:k+2}\vert \yb_{0:k+2}}  
 & = & 
 T_{k+1}\pull \left( \left( \mathfrak{T}_{k}\pull\,\pi_{\vhyp,\Zb_{0:k+1}\vert \yb_{0:k+1}} \right) \,
 \frac{\pi^{k+1}}{\eta_{\Xb_{\vhyp},\Xb_{k+1}}}  \right) \\
 & = & 
 T_{k+1}\pull \left( \eta_{\Xb_{0:k}} \, \pi^{k+1} \right) \nonumber \\
 & = &
 \eta_{\Xb_{0:k}} \,  \submap_{k+1}\pull \,\pi^{k+1}  =  
 \eta_{\Xb_{\vhyp},\Xb_{0:k+2}} , \nonumber
\end{eqnarray}
where we used the fact that
by Lemma \ref{lem:GenchangeVarTriWeak} (applied
iteratively) it must be that 
$(C_{k+1} \circ T_{k+1})\pull \rho = T_{k+1} \pull \, C_{k+1} \pull \rho$
for all densities $\rho$. 
(Notice that $C_{k+1}$ 
is the composition of functions which are trivial embeddings into
the identity map of KR rearrangements that couple a pair
of measures in $\borelmp(\re^\dhyp \times \re^n \times \re^n)$, %
and thus each map in the composition satisfies
the hypothesis of Lemma \ref{lem:GenchangeVarTriWeak}.)
Thus 
$(\mathfrak{T}_{k+1})\push\,\eta_{\Xb_{\vhyp},\Xb_{0:k+2}} = 
\pi_{{\vhyp},\Zb_{0:k+2}\vert \yb_{0:k+2}}$, and this concludes
the induction argument and the proof of Part \ref{thm:decompJoint:partFull} of the theorem.

The proof of Part \ref{thm:decompJoint:partEvidence} follows from
$\mathfrak{c}_0 =  \pi_{\Yb_0,\Yb_1}(\yb_0,\yb_1)$,
\eqref{thm:decompJoint:ck}, and \eqref{thm:decompSmooth:data}. \hfill $\square$
\smallskip

\section{Algorithms for inference on state-space models}
\label{sec:algo}
\hrevone{
Here we digest the smoothing and joint state-parameter inference methodologies discussed in Section
  \ref{sec:dataAss} into a handful of algorithms, described with pseudocode.
Algorithms \ref{alg:computeMap} and \ref{alg:regresMap} below are
building blocks: they describe, respectively, how to approximate a
transport map given an (unnormalized) target
density, and how to project a given transport map onto a
set of monotone transformations.
Algorithm \ref{alg:assimilate} shows how to build a
recursive approximation of  $\pi_{\vhyp,\Zb_{0:k+1} \vert 
\yb_{0:k+1} }$---i.e., the full Bayesian
solution to the problem of sequential inference in state-space models with static parameters---using a decomposable transport map.
See details in Section \ref{sec:joint}.
For simplicity, we always use a standard normal reference process $\eta_{\Xb}$, 
although more general choices are possible. 
Algorithm \ref{alg:SampleSmoothing} shows how to sample from 
the resulting approximation of the joint distribution 
$\pi_{\vhyp,\Zb_{0:k+1} \vert 
\yb_{0:k+1} }$,
whereas
Algorithm \ref{alg:SampleFiltering}  
focuses on a particular ``filtering'' marginal, i.e.,
$\pi_{ \vhyp ,\Zb_{k+1} \vert \yb_{0:k+1}}$.
The problem of sequential inference on state-space models {\it without} static parameters (see Section \ref{sec:filt}) can be tackled via
a simplified version of Algorithm \ref{alg:assimilate}, wherein 
the formal  dependence on $\vhyp$ is dropped.
The actual implementation of these algorithms is
available online at \url{http://transportmaps.mit.edu}.
}
\begin{algorithm}[H]
  \caption{
    {\bf (Computation of a monotone map)} \\
    Given an
    unnormalized target density $\bar{\pi}$
    and a parametric triangular monotone map $T[{\bf c}]$ 
    of the form \eqref{eq:monotone}, defined by an arbitrary 
    set of coefficients
    ${\bf c}\in\mathbb{R}^N$, find
    the optimal coefficients ${\bf c}^\star$ 
    according to \eqref{OptimDirectSAA}.
  }
  \label{alg:computeMap}
  \begin{algorithmic}[1]
    \Procedure{ComputeMap}{$\bar{\pi}$, 
      $T[{\bf c}]$, $m$}
    \State Generate samples 
    $({\bm x}_i)_{i=1}^m \overset{\text{i.i.d.}} \sim \mathcal{N}(0,{\bf I})$
    \State Solve (e.g., via a quasi-Newton or Newton method),
    $$
    {\bf c}^\star = \argmin_{{\bf c}\in\mathbb{R}^N} 
    -\frac{1}{m}\sum_{i=1}^m 
    \left( 
      \,\log \bar{\pi}(T[{\bf c}]({\bm x}_i)) + 
      \sum_k \log \partial_k T[{\bf c}]^k({\bm x}_i) 
    \right)
    $$
    \State \Return $T[{\bf c}^\star]$
    \EndProcedure
  \end{algorithmic}
\end{algorithm}

\begin{algorithm}[H]
  \caption{
    {\bf (Regression of a monotone map)} \\
    Given a map $M$ and a parametric triangular monotone map 
    $T[{\bf c}]$ of the form \eqref{eq:monotone} ,
    defined by an arbitrary set of coefficients
    ${\bf c}\in\mathbb{R}^N$, find
    the coefficients ${\bf c}^\star$ 
    minimizing the discrete $L^2$ norm between the two maps.
  }
  \label{alg:regresMap}
  \begin{algorithmic}[1]
    \Procedure{RegressionMap}{$M$, $T[{\bf c}]$, $m$}
    \State Generate samples 
    $({\bm x}_i)_{i=1}^m \overset{\text{i.i.d.}} \sim \mathcal{N}(0,{\bf I})$
    \State Solve
    \begin{equation*}
      \begin{aligned}
        {\bf c}^\star = \argmin_{{\bf c}\in\mathbb{R}^N} 
        \frac{1}{m} \sum_{i=1}^m 
        \left( M({\bm x}_i) - T[{\bf c}]({\bm x}_i) \right)^2
      \end{aligned}
    \end{equation*}
    \State \Return $T[{\bf c}^\star]$
    \EndProcedure
  \end{algorithmic}
\end{algorithm}

\begin{algorithm}
  \caption{
  {\bf (Joint parameter and state inference)} \\
    Given observations $({\bm y}_i)_{i=0}^{k+1}$, construct a
    transport map 
    approximation
    of the smoothing distribution 
    $\pi_{\vhyp,\Zb_{0},\ldots, \Zb_{k+1} \vert \yb_{0},\ldots, \yb_{k+1} }$
    in terms of a list of maps $(\frak{M}_j)_{j=0}^{k}$.
  }
  \label{alg:assimilate}
  \begin{algorithmic}[1]
    \begin{varwidth}[t]{\linewidth}
      \Procedure{Assimilate}{$({\bm y}_i)_{i=0}^{k+1}$, $m$}
        \For{$i\gets 0$ to $k$}
        \Comment see Thm. \ref{thm:decompJoint} 
      \If{$i=0$} %
      \State Define $\widetilde{\mathfrak{T}}^{\vhyp}_{i-1}$ to be the identity map
      \State Define $\pi^i$ as in 
      \eqref{eq:target_stepmap_0}  
      \Else
      \State $\widetilde{\mathfrak{T}}^{\vhyp}_{i-1}[{\bf c}^\star]\gets$
      \Call{RegressionMap}{
        $\widetilde{\mathfrak{T}}^{\vhyp}_{i-2} \circ \submap_{i-1}^{\vhyp}$, 
        $\widetilde{\mathfrak{T}}^{\vhyp}_{i-1}[{\bf c}]$, $m$
      }
      \State Define $\pi^i$ as in \eqref{eq:target_stepmap} 
      \EndIf
      \State $\frak{M}_{i}[{\bf c}^\star] \gets$ 
      \Call{ComputeMap}{$\pi^i$, $\frak{M}_i[{\bf c}]$, $m$}
      \State Append $\frak{M}_i$ to the list $(\frak{M}_j)_{j=0}^{i-1}$
      \EndFor
      \State \Return $(\frak{M}_j)_{j=0}^{k}\,, \widetilde{\mathfrak{T}}^{\vhyp}_{k-1}$
      \EndProcedure
    \end{varwidth}
  \end{algorithmic}
\end{algorithm}

\begin{algorithm}[H]
  \caption{
  {\bf (Sample the smoothing distribution)} \\
    Generate a sample from the smoothing distribution 
    $\pi_{\vhyp,\Zb_{0},\ldots, \Zb_{k+1} \vert \yb_{0},\ldots, \yb_{k+1} }$%
    using the maps computed in Algorithm \ref{alg:assimilate}.
  }
  \label{alg:SampleSmoothing}
  \begin{algorithmic}
    \Procedure{SampleSmoothing}{
      $(\frak{M}_j)_{j=0}^{k}$}%
    \State Generate 
    ${\bm x} \sim \mathcal{N}(0,{\bf I})$, with ${\bf I}$ the identity in $d_\theta + k \cdot d_{\bf z}$ dimensions
    \For{$j\gets k$ to $0$} 
    \Comment see Thm. \ref{thm:decompJoint} Part. \ref{thm:decompJoint:partFull}
    \State ${\bm x}_{\theta} \gets \frak{M}_j^{\bm\Theta}({\bm x}_{\theta})$
    \State ${\bm x}_{j} \gets \frak{M}_j^{0}({\bm x}_{\theta}, {\bm x}_{j}, {\bm x}_{j+1})$
    \State ${\bm x}_{j+1} \gets \frak{M}_j^{1}({\bm x}_{\theta}, {\bm x}_{j+1})$

    \EndFor
    \State \Return ${\bm x}$
    \EndProcedure
  \end{algorithmic}
\end{algorithm}

\begin{algorithm}[H]
  \caption{
  {\bf (Sample the filtering distribution)} \\
    Generate a sample from the %
    marginal distribution %
    $\pi_{ \vhyp ,\Zb_{k+1} \vert \yb_{0}, \ldots, \yb_{k+1}}$ %
    using the maps computed in Algorithm \ref{alg:assimilate}.
  }
  \label{alg:SampleFiltering}
  \begin{algorithmic}
    \Procedure{SampleFiltering}{
      $\frak{M}_k$, 
      $\widetilde{\mathfrak{T}}^{\vhyp}_{k-1}$}
    \State Generate 
    ${\bm x} \sim \mathcal{N}(0,{\bf I})$, with ${\bf I}$ the identity in $d_\theta+d_{\bf z}$ dimensions
    \State Define
    $$ 
    \widetilde{\submap}_k( \xb_{\vhyps} , \xb_{k+1} ) := 
    \left[
      \begin{array}{l}
        \widetilde{\mathfrak{T}}^{\vhyp}_{k-1} ( \frak{M}_k^{\vhyp} (  \xb_{\vhyps} ))\\ 
        \submap_k^1(  \xb_{\vhyps}, \xb_{k+1})
      \end{array}
    \right]
    $$
    \State ${\bm y} \gets \widetilde{\submap}_k( \xb_{\vhyps} , \xb_{k+1} )$
    \Comment see Thm. \ref{thm:decompJoint} Part. \ref{thm:decompJoint:partFilt}
    \State \Return ${\bm y}$
    \EndProcedure
  \end{algorithmic}
\end{algorithm}

\section{Additional results for the stochastic volatility model}
\label{sec:add_res}

\hrevone{We revisit the numerical example of Section~\ref{sec:numerics} and
re-run both the joint state\slash parameter inference problem and the
long-time smoothing problem with \textit{linear} rather than
nonlinear maps. The results are less accurate, but substantially
faster; see Table~\ref{tab:cost_vs_accuracy} and the discussion of
this comparison in Section~\ref{sec:numerics}.}

\begin{figure}[H]
  \centering
  \includegraphics[width=0.90\textwidth, bb=25bp 0bp 750bp 325bp, clip]{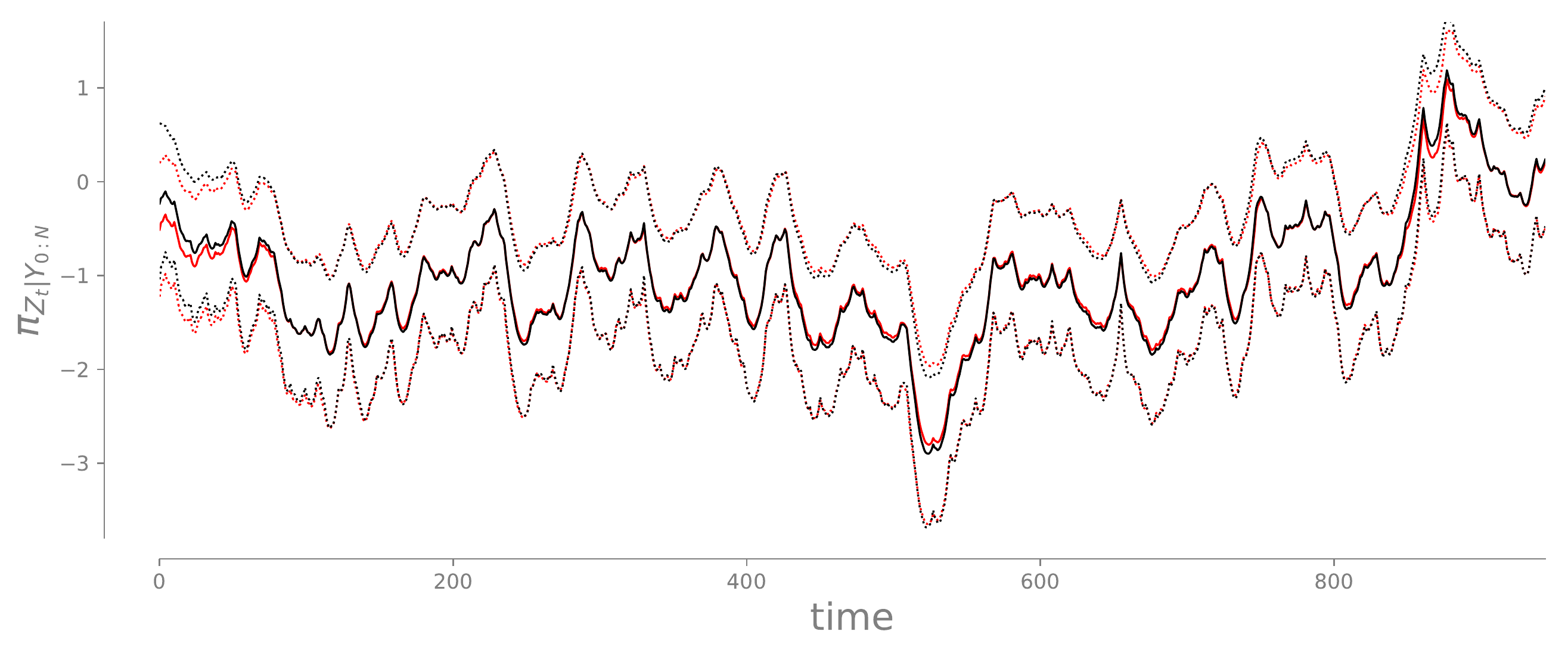}
  \caption{
    Same as Figure \ref{fig:stoc-vol:smoothing-vs-unbiased}, but using linear maps.
    Compared to a high-order map, there seems to be only a 
    minimal loss of accuracy, more
    prominent at earlier times.
  }
  \label{fig:stoc-vol:smoothing-vs-unbiased-linear}
\end{figure}

\begin{figure}[H]
  \centering
  \includegraphics[width=0.90\textwidth, bb=25bp 0bp 750bp 325bp, clip]{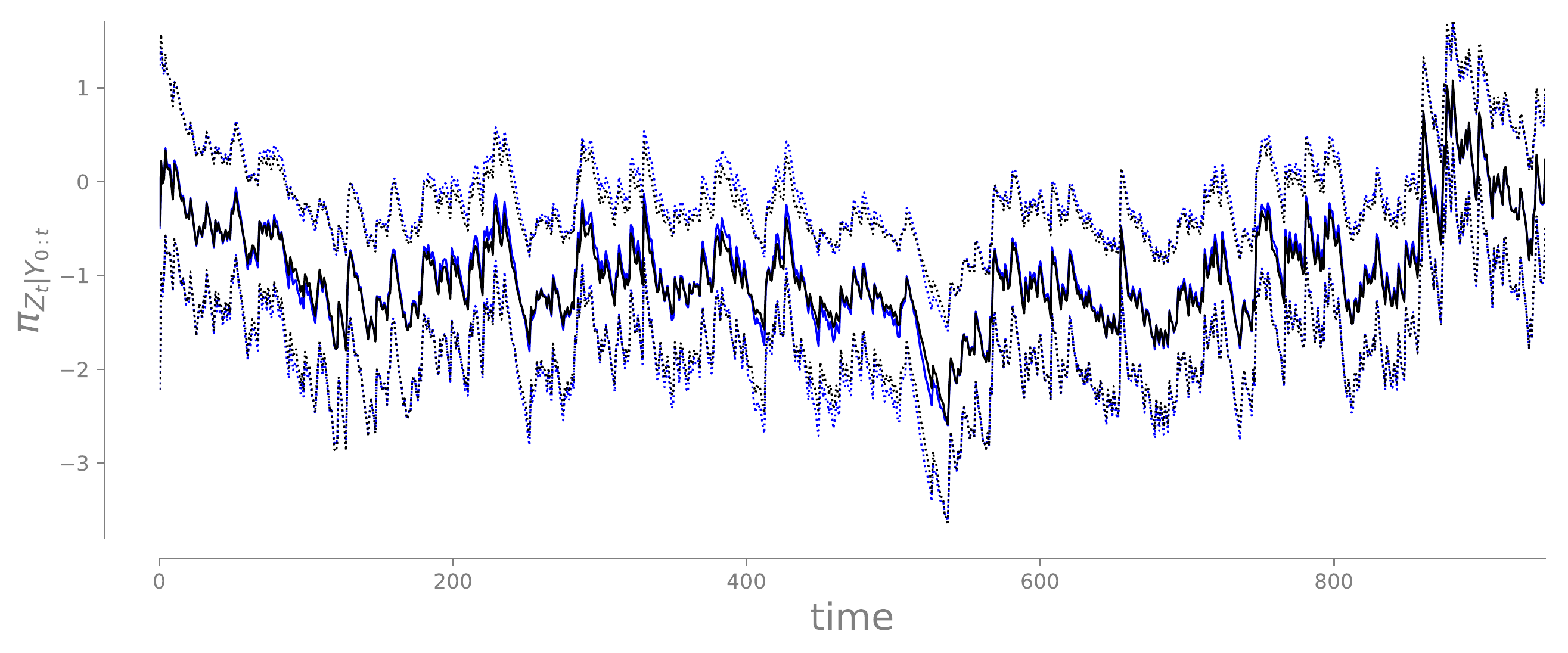}
  \caption{ Comparison of the $\{5,95\}$--percentiles (dashed lines) and the mean (solid line)
      of the numerical approximation of the
      filtering marginals using {\it linear} transport maps
      (blue lines) with those of a ``reference'' solution obtained
      via seventh-order maps (as shown in Figure \ref{fig:stoc-vol:filtering}).
      The two solutions look remarkably similar despite
      the enormous difference in computational cost (see Table \ref{tab:cost_vs_accuracy}).
  }
  \label{fig:stoc-vol:filtering-vs-unbiased-linear}
\end{figure}

\begin{figure}[H]
  \centering
  \includegraphics[width=1.0\textwidth, bb=65bp 20bp 775bp 280bp, clip]{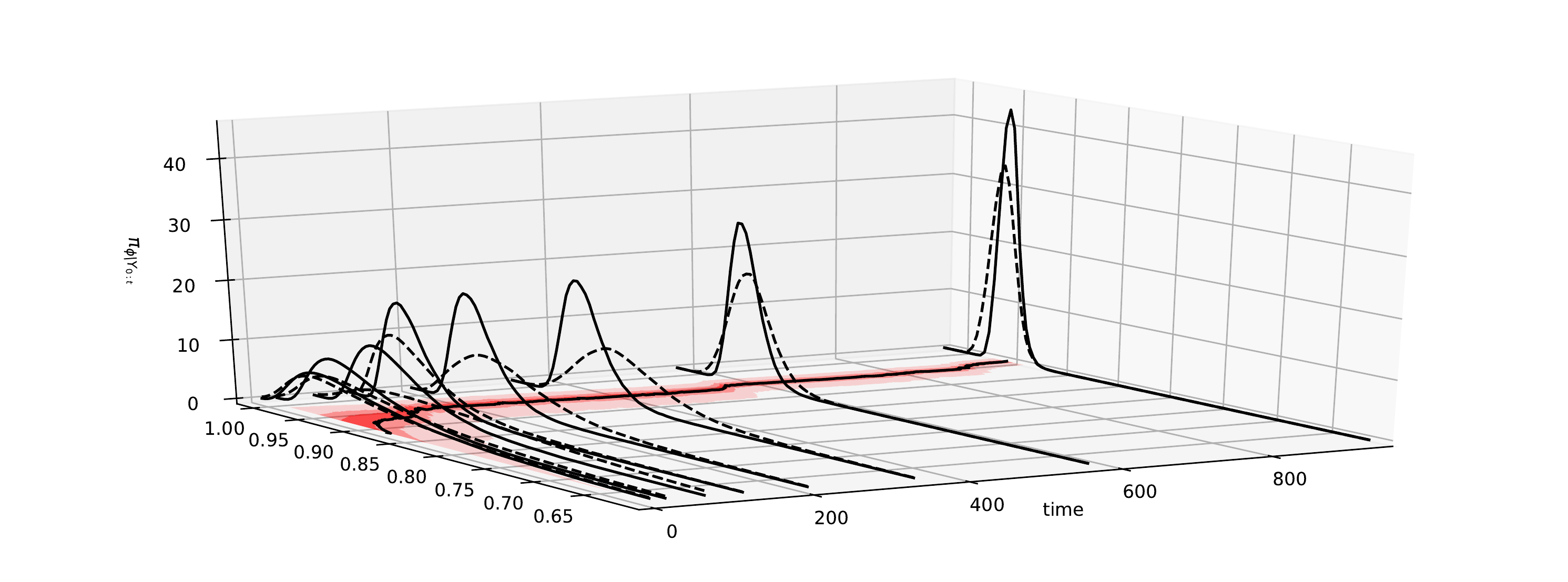}
  \caption{
    Same as Figure \ref{fig:stoc-vol:3d-phi-vs-unbiased}, but using
    linear maps. Here, the loss of accuracy is more dramatic than for
    the smoothing distribution of the state in
    Figure \ref{fig:stoc-vol:smoothing-vs-unbiased-linear}.
    Even though the approximate marginal captures the bulk of the
    true parameter marginals, for this specific
    problem of static parameter inference, a linear map is largely
    inadequate; hence the need for a higher-order  nonlinear transformation.
  }
  \label{fig:stoc-vol:3d-phi-vs-unbiased-linear}
\end{figure}

\begin{figure}[H]
  \centering
  \includegraphics[width=1.0\textwidth, bb=0bp 0bp 925bp 355bp, clip]{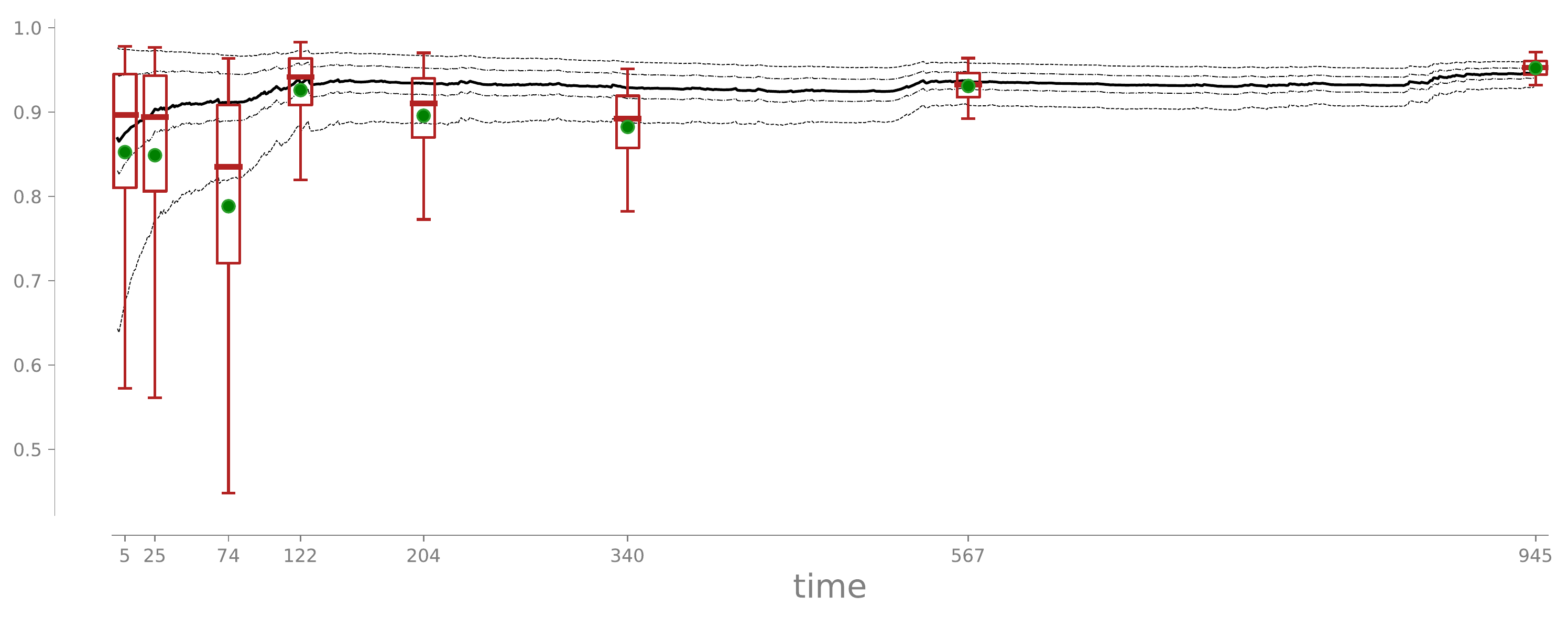}
  \caption{The horizontal plane of 
  Figure \ref{fig:stoc-vol:3d-phi-vs-unbiased-linear} ({\it black lines})
  overlaid with a selected number of box-and-whisker plots 
   associated with the marginals of a ``reference'' MCMC solution.
  The ends of the whiskers represent the $\{5,95\}$--percentiles, while the green dots correspond to the means of the reference
  distribution. 
  Linear maps are insufficient to correctly characterize the
  parameter marginals, especially the transition at time 74 (cf.\ Figures \ref{fig:stoc-vol:3d-phi-vs-unbiased} and 
  \ref{fig:stoc-vol:3d-phi-vs-unbiased-linear})
  \label{fig:stoc-vol:2d-phi-vs-unbiased-linear}
  }
\end{figure}

\begin{figure}[H]
  \centering
  \includegraphics[width=1.0\textwidth, bb=65bp 20bp 775bp 280bp, clip]{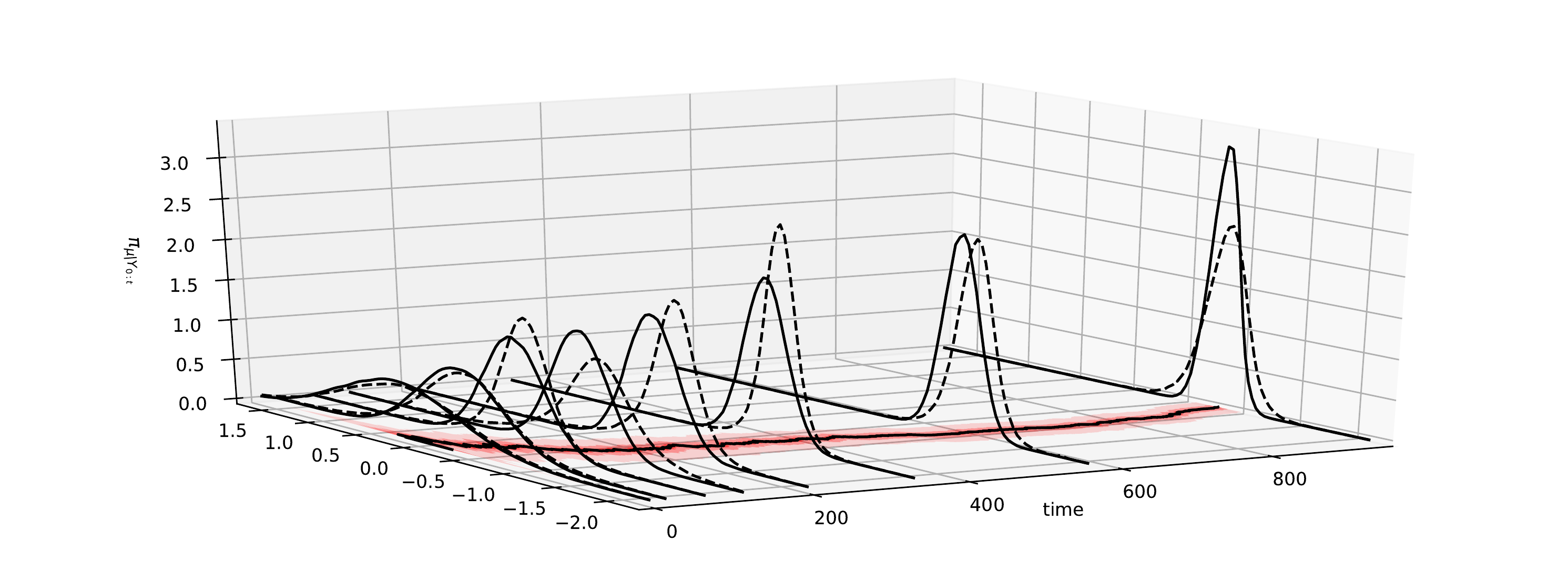}
  \caption{ Same as Figure \ref{fig:stoc-vol:3d-mu-vs-unbiased}, but
    using linear maps. Once again, the linear map provides plausible,
    but somewhat inaccurate, results for sequential parameter inference. A nonlinear transformation is
    better suited for this problem.  }
  \label{fig:stoc-vol:3d-mu-vs-unbiased-linear}
\end{figure}

\begin{figure}[H]
  \centering
  \includegraphics[width=1.0\textwidth, bb=0bp 0bp 925bp 355bp, clip]{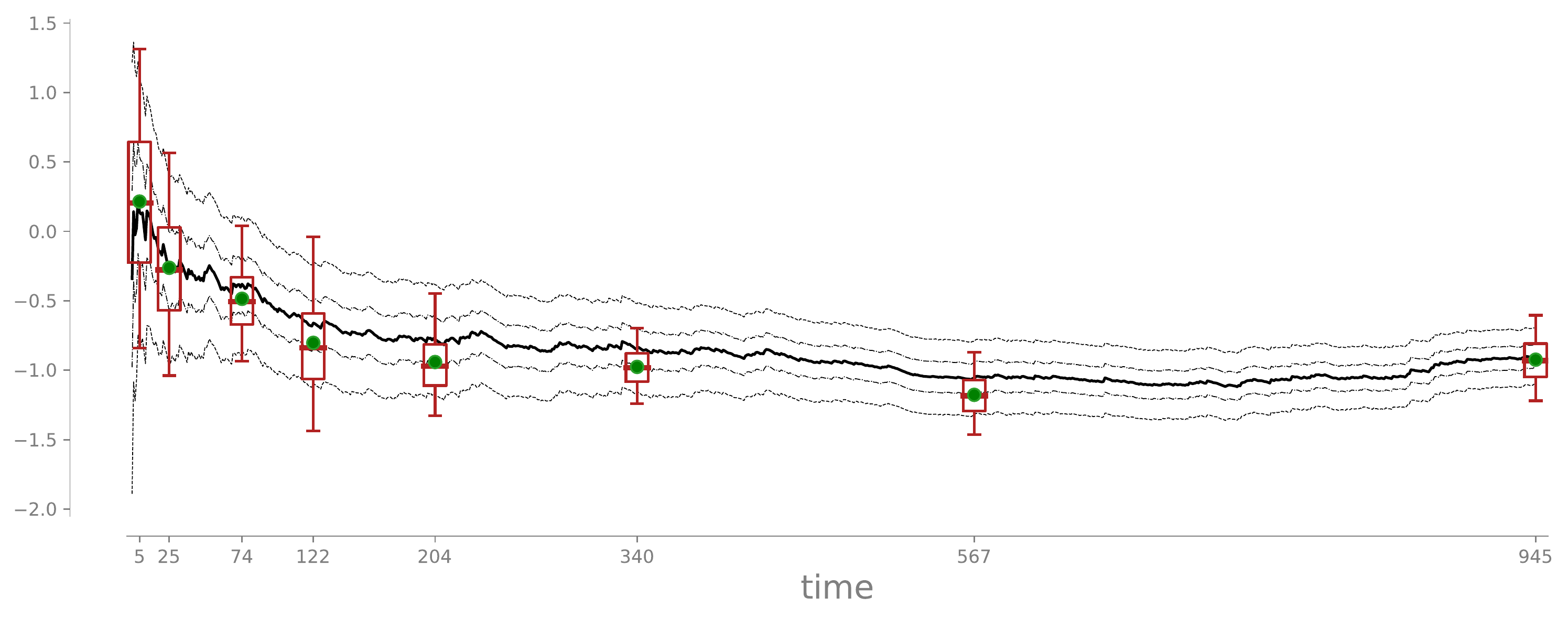}
  \caption{
    The horizontal plane of 
  Figure \ref{fig:stoc-vol:3d-mu-vs-unbiased-linear} ({\it black lines})
  overlaid with a selected number of box-and-whisker plots 
   associated with the marginals of a ``reference'' MCMC solution.
  See Figure \ref{fig:stoc-vol:2d-phi-vs-unbiased-linear} caption
  for more details. 
}
  \label{fig:stoc-vol:2d-mu-vs-unbiased-linear}
\end{figure}

\clearpage
\begin{sidewaysfigure}
  \begin{center}
    \includegraphics[width=.9\textwidth, bb=25bp 25bp 800bp 290bp, clip]{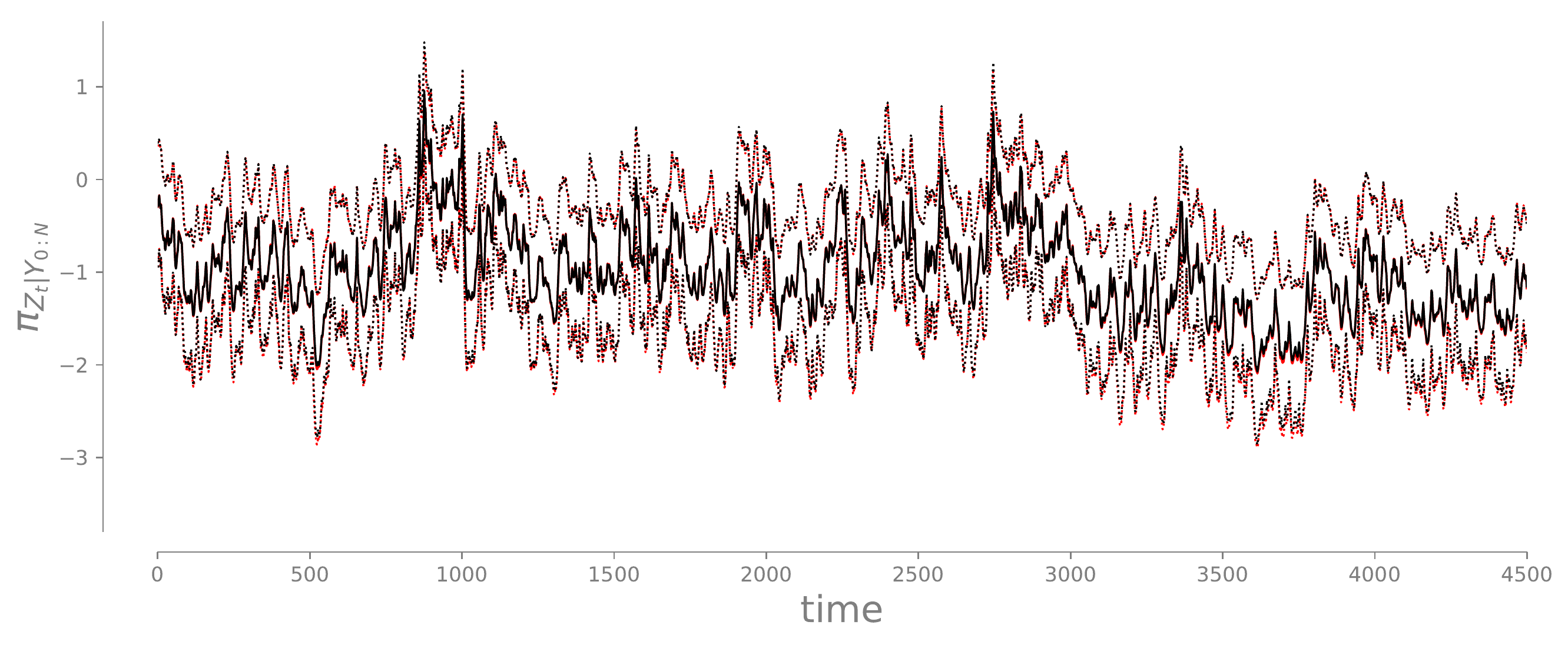}

    \vspace{10pt}

    \includegraphics[width=.9\textwidth, bb=25bp 0bp 800bp 290bp, clip]{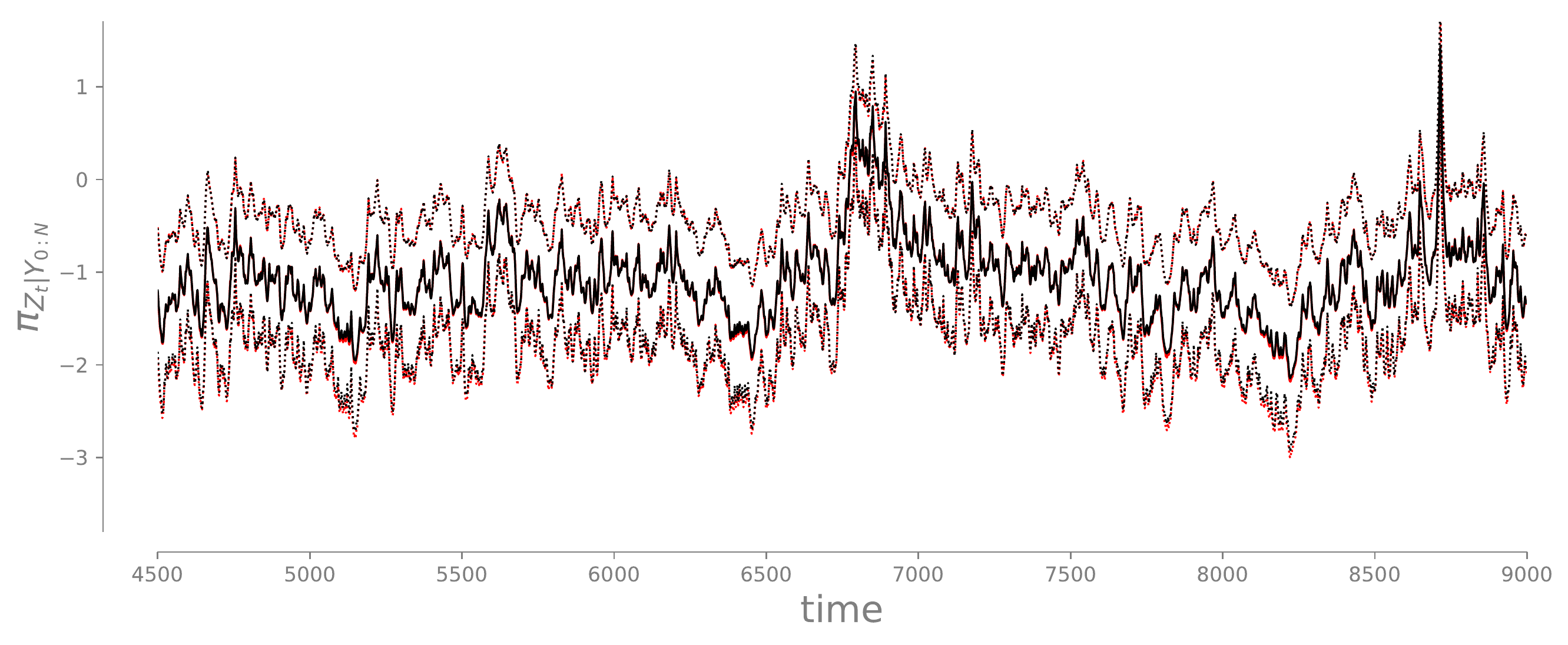}
    \caption{
      Same as Figure \ref{fig:stoc-vol:very-long-smooth-vs-unbiased}, but 
      using linear maps.
      Long-time smoothing with no static parameters via linear maps
      yields accurate characterizations of the marginal distributions
      across all times, at a fraction of the cost of a high-order
      nonlinear transformation (see Table \ref{tab:cost_vs_accuracy}).
    }
    \label{fig:stoc-vol:very-long-smooth-vs-unbiased-linear} 
  \end{center}
\end{sidewaysfigure}
\clearpage

\vskip 0.2in
\bibliographystyle{plainnat}
\bibliography{mapBib}

\end{document}